\documentclass[12pt]{article}
\usepackage{amsthm,amsmath,amsfonts,amssymb,cases}
\usepackage{calc} 
\usepackage{comment}
\usepackage{enumerate}
\usepackage{tikz}
\usepackage{pgfplots}
\usepackage[shortlabels]{enumitem}
\usepackage{hyperref}
\usepackage{cleveref}
\usepackage{float}
\usepackage{xcolor}

\usepackage[style=alphabetic,giveninits=true,maxnames=4, backend=bibtex,doi=false,url=false,isbn=false,eprint=false]{biblatex}
\newcommand{\R}{{\mathord{\mathbb R}}}
\newcommand{\Z}{{\mathord{\mathbb Z}}}
\newcommand{\N}{{\mathord{\mathbb N}}}
\newcommand{\C}{{\mathord{\mathbb C}}}

\newcommand{\HS}{{\mathord{\mathbb H}}}
\newcommand{\E}{{\mathord{\bf E}}}

\newcommand{\muu}[1]{\begin{align*} #1 \end{align*}}
\newcommand{\muun}[1]{\begin{align} #1 \end{align}}

\newcommand{\SP}[1]{\left\langle #1  \right\rangle }
\renewcommand{\epsilon}{\varepsilon}

\newcommand{\vertiii}[1]{{\left\vert\kern-0.25ex\left\vert\kern-0.25ex\left\vert #1 
    \right\vert\kern-0.25ex\right\vert\kern-0.25ex\right\vert}}

\newcommand{\ben}{\begin{displaymath}}
\newcommand{\een}{\end{displaymath}}
\newcommand{\beqn}{\begin{equation}}
\newcommand{\eeqn}{\end{equation}}
\newcommand{\beqna}{\begin{eqnarray*}}
\newcommand{\eeqna}{\end{eqnarray*}}

\newcommand{\nn}{\nonumber} 

\def\tr{\operatorname{tr}}

\def\supp{\operatorname{supp}}

\newcommand{\sfrac}[2]{\textrm{\footnotesize $\frac{#1}{#2}$}}

\newcommand{\inn}[1]{\langle {#1} \rangle }

\newtheorem{lemma}{Lemma}
\newtheorem{theorem}[lemma]{Theorem}
\newtheorem{remark}[lemma]{Remark}
\newtheorem{proposition}[lemma]{Proposition}
\newtheorem{corollary}[lemma]{Corollary}
\newtheorem{definition}[lemma]{Definition}
\newtheorem{hyp}{Hypothesis}

\numberwithin{equation}{section}
\numberwithin{lemma}{section}

\makeatletter
\newsavebox{\@brx}
\newcommand{\llangle}[1][]{\savebox{\@brx}{\(\m@th{#1\langle}\)}%
  \mathopen{\copy\@brx\kern-0.5\wd\@brx\usebox{\@brx}}}
\newcommand{\rrangle}[1][]{\savebox{\@brx}{\(\m@th{#1\rangle}\)}%
  \mathclose{\copy\@brx\kern-0.5\wd\@brx\usebox{\@brx}}}
\makeatother

\newlength{\hght}
\newlength{\dpth}

\begin{document}
\title{On asymptotic  expansions   of the density of states   for  Poisson distributed random Schr\"odinger operators}
\author{\vspace{5pt} Jan Fischer$^1$\footnote{ E-mail: jan.fischer@uni-jena.de} \quad  David Hasler$^1$\footnote{
E-mail: david.hasler@uni-jena.de } \quad  Jannis Koberstein$^1$\footnote{E-mail: jannis.koberstein@uni-jena.de}
 \\
\vspace{-4pt} \small{$1.$ Department of Mathematics,
Friedrich Schiller  University  Jena} \\ \small{Jena, Germany } }
\date{}
\maketitle

\begin{abstract}
We study a random   Schr\"odinger operator with a  potential distributed according 
to a Poisson process. Asymptotic expansions for traces of resolvents  in the limit of small disorder are derived.  Explicit  
estimates for the expansion coefficients are given and  we show that their infinite volume limits  
are  finite as   the spectral parameter approaches the spectrum 
of the free Laplacian.  As an application we derive bounds on the integrated density of states. 
\end{abstract}

\section{Introduction}

Self-adjoint operators which are given  as the sum of a Laplacian and a  random multiplication operator  belong 
to a class of so-called random Schr\"odinger operators. 
These  operators  are used to model  properties of disordered media.
In case of a finite difference   Laplacian on a lattice the random operator
is often referred to as the   Anderson model.
The pioneering work of Anderson  \cite{Anderson.1958} led to an intensive investigation of random Schr\"odinger  operators. 
 In the  large disorder regime 
almost sure existence of pure point spectrum could be established with exponentially localized eigenfunctions \cite{FrohlichSpencer.1983,AizenmanMolchanov.1993}. The  small disorder regime is less well understood.

The  integrated density of states (IDS) for a random Schr\"odinger operator  is defined as the expected  number of eigenvalues 
per unit volume  in an energy interval.
This quantity  is of fundamental importance in condensed matter physics and determines 
physical properties of the system. 
The IDS  is  a priori defined  for a  finite box, and 
for infinitely extended models,  e.g. on $\R^d$ or $\Z^d$,    it  is   obtained by a   limiting process as the  side length of the finite box tends to infinity, cf. \cite{Pastur.1973}.
For random Schr\"odinger operators the existence of this limit has been shown for a large class of models 
\cite{KirschMartinelli.1982}.
Proving upper bounds on the IDS in terms of 
 the energy interval has attracted much attention, since these bounds, known as  Wegner estimates \cite{Wegner.1981}, 
 constitute an essential ingredient in the so-called  multiscale analysis proof for  localization of eigenfunctions \cite{FrohlichSpencer.1983}. 
In particular, 
upper bounds have been obtained which are uniform in the volume and  proportional to the length of the energy interval
as well for  lattices \cite{Wegner.1981} as  for continuous models, cf.  \cite{CombesHislopKlopp.2003,CombesHislopKlopp.2007} and  references in \cite{KirschMetzger.2007}.
These bounds  imply Lipschitz continuity of the IDS and    give  upper 
bounds which are inverse proportional to the  disorder strength. Whereas  this provides  a very good upper 
bound  at  large disorder,     for small disorder   the behaviour of the IDS  is not well understood.
In the small disorder regime  it is conjectured \cite{Schenker.2004} that  the  proportionality constant should be uniform 
in the disorder strength.  In fact, in the limit of small disorder it is natural to expect  the   IDS   to  converge  to that of the free Laplacian. In  \cite{Schenker.2004} a result in this direction was  obtained, 
where    H\"older continuity with   exponents less or equal to  $\frac{1}{2}$ was shown 
 for the IDS on a lattice. 
At least  in the special case of a Cauchy distribution on the  lattice the  IDS  can be explicitly 
calculated and is thus shown to satisfy   aforementioned conjecture 
\cite{Lloyd.1969,CarmonaLacroix.1990}.

 In this paper, we study the IDS for  a random Schr\"odinger operator  on $\R^d$  where the randomness, parameterized by $\lambda \geq 0$, is given by a 
Poisson distributed random potential \cite{KirschMartinelli.1982,KirschMartinelli.1982c,KirschMartinelli.1983,CombesHislop.1994}, which are used e.g.   to  model amorphous materials like glass or rubber.   
For these random Schr\"odinger operators the existence of the infinite volume limit of the IDS has been established, cf. 
 \cite{KirschMartinelli.1982}  and references therein.
Moreover,    there are Wegner estimates for Poisson distributed random potentials  \cite{CombesHislop.1994},
cf. references in \cite{KirschMetzger.2007}, which however  are not optimal in volume.

We investigate the  IDS   using   expectations of normalized traces of  resolvents,
which are in relation by means of the Borel transform  \cite{Teschl.2014}.
We are interested in the  small disorder regime, i.e. small $\lambda \geq 0$, and expand 
the  
   resolvent in a Neumann series. This yields  an expansion of  the expectation of the normalized trace 
of the resolvent.  The first part of the paper  is concerned with  properties  of these expansion coefficients.
We show that their  infinite volume limit 
exists
and that their  boundary values exist as the spectral parameter approaches 
any point of the positive real axis.
For the latter result we use  the method of analytic dilation,  cf.  \cite{CombesThomas.1973,SR4} and references therein. 
In the second part of the paper we derive an error estimate for the difference between the expectation of the 
 normalized traces of resolvents and the sum of the first $N$  expansion coefficients.
In \cite{MagnenPoirotRivasseau.1996,MagnenPoirotRivasseau.1998,Poirot.1999} related results about expansions of resolvents up to  finite order have been studied.
In these works  a random Schr\"odinger operator in two  dimensions  with a random potential given by a Gaussian random field was studied.  In  \cite{elgart}  the density of states for the Anderson Model was estimated by similar method in a spectral regime where localization 
occurs, that is, in contrast to our paper, a regime outside of the spectrum of the free Laplacian.
 Part of the methods  which we use  are inspired by 
 expansion techniques introduced in \cite{elgart,ErdosSalmhoferYau.2007a,ErdosSalmhoferYau.2007b,ErdosSalmhoferYau.2008a}.

We then combine these two  results. As a first application we show that there exists 
 an asymptotic expansion of the integrated density of states as $\lambda \downarrow 0$. 
The zeroth order term in this asymptotic expansion is exactly the integrated density of states of the free Laplacian. 
In this context we want to mention related  results  in the literature.
In  \cite{KLOPP1997267} an asymptotic expansion  
of the IDS for a Poisson distributed random potential was 
studied in the limit where intensity of Poisson process vanishes. In contrast to our result 
the expansion in  \cite{KLOPP1997267} was studied in the sense of distributions. 
Furthermore, in \cite{wangdos},  an  asymptotic expansion of  the IDS was studied for a magnetic Schr\"odinger operator
in the limit of a  large magnetic field.

As a second application we estimate  the change  of the IDS  as the random potential is added to the free Laplacian, i.e., the difference between $\lambda > 0$
and $\lambda = 0$.  In contrast to standard Wegner estimates this provides in addition a  lower
bound for the IDS. Furthermore, we obtain  upper bounds on the IDS in the small disorder regime,  which 
 are  optimal for small $\lambda \geq 0$, and thus complementary to  Wegner-type estimates,  which are 
well in the large $\lambda$  regime. In fact, combining our result  with a Wegner estimate
 we can infer  H\"older continuity with any exponent less than $\frac{2}{3}$ uniform in the disorder strength. This exponent is an improvement 
 compared to the  exponent $\frac{1}{2}$, which was  obtained in \cite{Schenker.2004} for a
random Schr\"odinger operators on the lattice.

Let us   mention  related results about random Schr\"odinger operators at  small disorder.
Expectations of products  of  resolvents were 
studied in the context of time evolution  \cite{ErdosSalmhoferYau.2007a,ErdosSalmhoferYau.2007b,ErdosSalmhoferYau.2008a,Black}, 
on the Bethe lattice  \cite{AcostaKlein.1992,Klein.1998,FroeseHaslerSpitzer.2006,AizenmanSimsWarzel.2006,FroeseHaslerSpitzer.2007,AizenmanWarzel.2013}, 
and  for quantum trees \cite{AnantharamIngremeauSabriWinn.2021}. Moreover, we want to mention related deterministic results  about  distributions of eigenvalues for perturbations of periodic Schr\"odinger operators \cite{FeldmannKnoerrerTrubowitz.1990,FeldmannKnoerrerTrubowitz.1991}

Let us give an  outline of this paper. In Section \ref{Model} we introduce the model and state the main results 
about expectations of matrix elements of resolvents. 
In Section \ref{asympsec} we give a first application of the main results and derive an asymptotic 
expansion for traces of resolvents. In Section  \ref{dosapprox} we give a second  application 
and derive upper and lower bounds for  the integrated density of states at small disorder. 
In Section \ref{PPL} we  recall formulas about   expectations of products of the random potential, which will be needed for the proofs of the main results.
In Section  \ref{sec:prooasymexp} we derive elementary integral formulas for the expansion 
coefficients for the normalized traces of resolvents.
In Section \ref{sefurtherintformuilas} we evaluate the integral formulas.
In Section \ref{sec:boundsonexpcoeff}  we use the evaluated integral formulas to derive 
bounds, which will be needed to establish the infinite volume limit shown in  Section \ref{secnine}.
In Section \ref{sec:boundaryvalues} we show the existence of boundary values for
the infinite volume limits of the  expansion coefficients, which establishes the first main result. 
In Section  \ref{errorestimate}   we derive  an  error estimate used to show the  second main result.  
In the appendix we collect a few elementary estimates and in Appendix 
\ref{appendixDboundonres}  we show the trace class properties of  the resolvent.

\section{Model and Statements of Main Results} 

\label{Model}
In this section we  introduce  the model and state the main results about the expectation 
of the resolvent.  We note that the definition of the model follows  the one given in \cite{ErdosSalmhoferYau.2008a} closely.
We consider   a finite box of size $L$,  with $L  > 0$, and define the  box .
 $\Lambda_L := \left[ - \frac{1}{2} L , \frac{1}{2} L \right)^d \subset \R^d$. 
Let $\inn{ \cdot , \cdot }_{L^2(\Lambda_L)}$ and $\| \cdot \|_{L^2(\Lambda_L)}$ denote the canonical scalar product and the norm of the Hilbert space  $L^2(\Lambda_L)$. If it is clear from the 
context what the inner product or  the norm is,   we shall occasionally drop the subscript $L^2(\Lambda_L)$.
 The kinetic energy will be given by the periodic Laplacian. To introduce this operator it is convenient to 
work in terms of Fourier series. Let  
$\Lambda_L^*  = ( \frac{1}{L} \Z  )^d=\{ (k_1 ,  \ldots ,  k_d) : \forall j=1, \ldots , d,  \exists m_j \in \Z,: k_j =\frac{m_j}{L} \}$ denote  
the so-called dual lattice. We introduce the notation 
\begin{equation} \label{defofdiscretefouriersum} 
\int_{\Lambda_L^*} f(p) dp  := \frac{1}{|\Lambda_L |} \sum_{p \in \Lambda_L^*} f(p) . 
\end{equation} 
The sum  $\int_{\Lambda_L^*}f(p)dp$ can be interpreted as a Riemann-sum, which  converges to the integral $\int f(p) dp $ 
as   $L \to \infty$, provided $f$ has  sufficient  decay at infinity and sufficient regularity.  
For any $f \in L^1(\Lambda_L)$ we define the Fourier series 
$$
\hat{f}(p) = \int_{\Lambda_L} e^{ - 2 \pi i p \cdot x}  f(x) dx \text{ for } p \in \Lambda_L^*  . 
$$
  For $g    \in \ell^1(\Lambda_L^*)$ we define 
$$
\check{g}(x) = \int_{\Lambda^*_L} e^{  2 \pi i  p \cdot x } g(p) dp  = \frac{1}{|\Lambda_L |} \sum_{p \in \Lambda_L^*} e^{  2 \pi i  p \cdot x } g(p) \text{ for }  x \in \Lambda_L. 
$$
Note that 
$\check{(\cdot)}$ extends uniquely to a continuous linear map  on $ \ell^2(\Lambda_L^*)$.  
We shall denote this extension again by the same symbol. This extension maps $\ell^2(\Lambda_L^*)$ unitarily to $L^2(\Lambda_L)$
and is the inverse of $\hat{(\cdot)}|_{L^2(\Lambda_L)}$, see for example \cite[Theorem 8.20]{folland}.  From a physics background one might be familiar with the notation where  $L^2(\Lambda_L)$ represents the position space,  while $\ell^2(\Lambda_L^*)$ would be called momentum space,  or Fourier space.
For $ p \in \Lambda_L^*$  and $x \in \Lambda_L$ we define 
\begin{align} \label{eq:deofonb}
 \varphi_p(x) ={|  \Lambda_L|}^{-1/2} e^{ i 2 \pi p \cdot x } .
\end{align}
Observe that 
$
\{ \varphi_p :p \in \Lambda_L^* \} 
$
is an orthonormal Basis (ONB) of $L^2(\Lambda_L)$ \cite[Theorem 8.20]{folland} and that 
$\hat{f}(p) = | \Lambda_L|^{1/2}  \inn{  \varphi_p , f}_{L^2(\Lambda_L)} $.

When defining the Laplacian restricted to a finite box,  we have to choose boundary condition to describe how our operator behaves at the border of the box.  
We are presented with the choice between 
Dirichlet, Neumann and periodic boundary conditions.
Note that in  the  sense of forms,  periodic boundary conditions lie  between Dirichlet and Neumann boundary conditions,  where the latter two are better suited,  when working in position space.  
While we will be starting in position space, the majority of this work
will be performed in momentum space,  where  using periodic boundary conditions is technically convenient.  
Therefore,  we are going to introduce the Laplacian with periodic boundary conditions on $L^2(\Lambda_L)$ by means of Fourier series.   
In this paper we shall adopt the physics convention that for a vector $a =(a_1,...,a_d)$ we use the notation $a^2 := |a|^2$ where $|a| :=  \sqrt{ \sum_{j=1}^d |a_j|^2}$.  
Using this notation,  we define the energy function
\begin{equation}
\label{energiyfunction}
\nu : \R^d  \to \R_+,  \quad p \mapsto \nu(p) := \frac{1}{2} p^2,   
\end{equation} 
which we can now use to
 define $-\Delta_L$ as the linear operator with domain 
\begin{align*}
D(-\Delta_L) = \{ f \in L^2(\Lambda_L) : \nu \hat{f} \in \ell^2(\Lambda_L^*) \},
\end{align*} 
and the mapping rule
\begin{equation} \label{eq:deoflaplac0} 
 -\Delta_L : D(-\Delta_L)  \to L^2(\Lambda_L), \quad f \mapsto -\Delta_L f   =  (2 \pi )^2 ( \nu  \hat{f})^\vee . 
\end{equation} 
Observe that  $-\Delta_L$ is selfadjoint, since it is  unitary equivalent to  a multiplication operator by a real-valued function,  and that we have the identity 
\begin{equation} \label{eq:deoflaplac} 
\left[ - \frac{1}{2 (2 \pi)^2}   \Delta_L  f \right]^\wedge(p)  = \nu (p) \hat{f}(p)  .
\end{equation} 
By  
\begin{equation} \label{defofmainH}  
 H_{L} := H_{\lambda,L} := - \frac{\hbar^2}{ 2 m } \Delta_L + \lambda V_{L}
\end{equation} 
we denote a random Schr\"odinger operator acting in $L^2(\Lambda_L)$ with   a random potential
$
V_{L} = V_{L,\omega}(x)
$, defined below,  and a  coupling constant $\lambda \geq 0$.   As in  \cite{ErdosSalmhoferYau.2008a}  we choose units for the mass $m$ and Planck's constant  so that $\frac{\hbar^2}{2m} = [2 (2\pi)^2]^{-1}$.
For a function $h : \R^d \to \C$ we denote by $h_\#$ the $L$--periodic extension of $h|_{\Lambda_L}$ to $\R^d$. 
The potential is given by
\begin{align} \label{defofpot101} 
V_{L,\omega}(x) := \int_{\Lambda_L} B_\#(x - y) d \mu_{L,\omega}(y),
\end{align} 
where $B$,  having the physical interpretation as   a single-site potential profile,   is  assumed  to be a  real-valued Schwartz function on $\R^d$.
Moreover, we assume that either $B$  has   compact support
or that  $B$  is  symmetric with respect to the reflections of the coordinate axis,  i.e.  for  $j=1,...,d$
\muun{ \label{SC}
\mathfrak{S}_j:  (x_1,...,x_j,...,x_d) &\mapsto (x_1,...,-x_j,...,x_d) 
\\ 
B \circ \mathfrak{S}_j &= B. \nn
}  
\begin{remark} {\rm 
Each of the two conditions is mathematically convenient in the sense that they ensure sufficiently fast decay of the Fourier transform.  While the reflection symmetry condition is satisfied by rotationally invariant potentials, which occur naturally. }
\end{remark}
Furthermore,  $\mu_{L,\omega}$ is a Poisson point measure  on $\Lambda_L$ with homogeneous unit density and 
with independent identically distributed (i.i.d.) random masses.  
More precisely, for almost all events $\omega$, it consists of $M$ points $\{ y_{L,\gamma}(\omega) \in \Lambda_L : \gamma=1,2,...,M_L\}$, where $M_L = M_L(\omega)$ is a Poisson variable with expectation $|\Lambda_L|$, and $\{ y_{L,\gamma}(\omega) \}$ are i.i.d. random variables uniformly distributed on $\Lambda_L$.   Both are independent of  the random i.i.d. weights  $\{ v_\gamma : \gamma=1,2,... \}$ and 
the random measure is given by 
$$
\mu_{L,\omega} = \sum_{\gamma=1}^{M_L(\omega)} v_\gamma(\omega) \delta_{y_{L,\gamma}(\omega)} , 
$$ 
where $\delta_y$ denotes the Dirac mass at  the point $y$.
Note that the case where $M_L(\omega )=0$ corresponds to a vanishing potential.
Furthermore we denote the common  distribution weights  $\{ v_\gamma \}$  by  ${\bf P}_v$ and assume that the moments 
\muun {\label{moments}
m_k :=  {\bf E}_v v_\gamma^k
}
 satisfy
\begin{equation} \label{2.4} 
 m_{k } < \infty     , \quad  \text{ for all }  k \in \N.
\end{equation} 
For some of the results we will use that the first moment $m_1 = {\bf E}_v v_\gamma$ vanishes.
Note that for convenience the notation  will not reflect the   dependence on the specific  choice of the random distribution ${\bf P}_v$. 
Observe that we can write   \eqref{defofpot101} as  
\begin{equation}\label{bumpnotationres} 
 V_{L,\omega}(x) =  \sum_{\gamma=1}^M V_{L,\gamma}(x)   \quad \text{ with } \quad V_{L,\gamma}(x) := v_\gamma B_\#(x-y_{L,\gamma}) .
\end{equation} 
The expectation with respect to the joint measure of $\{ M , y_{L,\gamma}, v_\gamma\}$ is denoted by ${\bf E}_L$.
Sometimes we will use the notation  
\begin{equation}\label{esy:3.24} 
{\bf E}_L = {\bf E}_M {\bf E}_{y_L}^{\otimes M } {\bf E}_v^{\otimes M } 
\end{equation} 
referring to the expectation of $M$, $\{ y_{L,\gamma} \}$ and $\{ v_\gamma \}$ separately. 
In particular, ${\bf E}_y^{\otimes M }$ stands for the normalized integral 
\begin{equation}\label{esy:3.240} 
\frac{1}{|\Lambda_L|^M} \int_{(\Lambda_L)^M} dy_1 \cdots dy_M . 
 \end{equation} 
If it is clear from the context over which probability measure the expectation is taken, we shall use the short hand notation ${\bf E}$  for ${\bf E}_L,  \, {\bf E}_M,  \, {\bf E}_{y_L}$, and  ${\bf E}_v$, respectively.
Since the potential is almost surely bounded it follows  by standard perturbation theorems, e.g. the Kato-Rellich theorem \cite{SR2},  that   the operator \eqref{defofmainH}  is almost 
surely self-adjoint  for all $\lambda \geq 0$.

The main object of interest is the expectation of the 
trace of the imaginary part of the  resolvent of $H_L$ for  a spectral parameter in  $\C \setminus \R$. 
We will show in the appendix that this quantity 
indeed exists for $d \leq 3$. This is the content of the following lemma. 

Let $\mathcal{B} (L^2(\Lambda_L))$ be the space of bounded linear operators.  We define the space of trace class operators as
\muu{
\mathcal{B}_1 (L^2(\Lambda_L)) := \{ A \in  \mathcal{B} (L^2(\Lambda_L)) : {\rm tr} (|A|) < \infty \},
}
where ${\rm tr}$ denotes the trace and  $|A|= \sqrt{A^*A}$.
\begin{lemma} \label{existenceoftracebaby}  Let $d \leq 3$. Then ${\rm Im} (H_L - z)^{-1}$ is trace class for 
$z \in  \C \setminus \R$ and any $L \geq 1$
almost surely.
Furthermore, for any $\lambda \geq 0$ the expression 
\begin{align}\label{GFct-2} 
|\Lambda_L|^{-1} {\bf E}_L {\rm tr} {\rm Im} (H_L - z)^{-1} 
\end{align} 
is uniformly bounded in $L \geq 1$. 
\end{lemma}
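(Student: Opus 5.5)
The plan is to reduce everything to the free resolvent by writing the imaginary part of the resolvent as a product. For $z = E + i\eta$ with $\eta \neq 0$ we have
\begin{equation*}
{\rm Im}\,(H_L - z)^{-1} = \eta\, (H_L - z)^{-1} (H_L - \bar z)^{-1} = \eta\, |(H_L-\bar z)^{-1}|^2 ,
\end{equation*}
which is a positive (or negative) operator. Trace class therefore follows once $(H_L - \bar z)^{-1}$ is Hilbert--Schmidt, and then ${\rm tr}\,|{\rm Im}(H_L-z)^{-1}| = |\eta|\, \|(H_L-\bar z)^{-1}\|_{2}^2$. Write $H_0 := -\frac{1}{2(2\pi)^2}\Delta_L$. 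Since $V_{L,\omega}$ is almost surely bounded, with $\|V_{L,\omega}\|_\infty \le \sum_{\gamma=1}^{M} |v_\gamma|\, \|B_\#\|_\infty$ and $\|B_\#\|_\infty \le C_B$ uniformly in $L \geq 1$ (because $B$ is Schwartz), we set $b := \lambda \|V_{L,\omega}\|_\infty$. Choose $\mu := |E| + b + 1$. Then $(H_L - \bar z)^{-1} = (H_L + \mu)^{-1}\,(H_L+\mu)(H_L-\bar z)^{-1}$, and the second factor has norm at most $\sup_{t \ge -b} |t+\mu|/|t-\bar z| \le C(1 + (|E| + b + 1)/|\eta|)$. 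For the first factor, the second resolvent identity gives $(H_L+\mu)^{-1} = (H_0+\mu)^{-1}\bigl(1 - \lambda V (H_L+\mu)^{-1}\bigr)$. Here $\|(H_L+\mu)^{-1}\| \le 1$ since $H_L + \mu \ge 1$, so $\|(H_L+\mu)^{-1}\|_2 \le (1+b)\, \|(H_0+\mu)^{-1}\|_2 \le (1+b)\, \|(H_0+1)^{-1}\|_2$.

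The key computation is the Hilbert--Schmidt norm of the free resolvent in the ONB $\varphi_p$:
\begin{equation*}
\|(H_0+1)^{-1}\|_2^2 = \sum_{p\in\Lambda_L^*} (\nu(p)+1)^{-2} = |\Lambda_L| \int_{\Lambda_L^*} \frac{dp}{(\frac12 p^2+1)^2} .
\end{equation*}
For $d \le 3$ the function $(\frac12 p^2+1)^{-2}$ is integrable on $\R^d$. Comparing the lattice sum with the integral over unit cells of side $1/L \le 1$ gives the bound $C_d\, |\Lambda_L|$ uniformly in $L\ge1$: on each cell the summand is comparable to its values nearby, up to a fixed constant. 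This is exactly where $d\le3$ enters, and it yields the almost sure trace class property.

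Combining the estimates,
\begin{equation*}
|\Lambda_L|^{-1}\, {\rm tr}\,|{\rm Im}(H_L-z)^{-1}| \le C(d,z)\,(1+b)^4 \quad\text{with}\quad b \le \lambda C_B \sum_{\gamma=1}^{M}|v_\gamma| .
\end{equation*}
It remains to show ${\bf E}_L \bigl(\sum_{\gamma\le M}|v_\gamma|\bigr)^4$ is bounded uniformly in $L$. This is the main obstacle, because $M$ is Poisson with mean $|\Lambda_L|$, so this crude moment grows like $|\Lambda_L|^4$ and is not uniform.

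To fix this I will localize and avoid the global sup norm. I will use the sharper bound
\begin{equation*}
\|(H_L+\mu)^{-1}\|_2 \le \|(H_0+\mu)^{-1}\|_2 + \lambda\, \|(H_0+\mu)^{-1} V\|_2 .
\end{equation*}
The operator $(H_0+\mu)^{-1}V$ has Hilbert--Schmidt norm squared $\int_{\Lambda_L} |V(x)|^2\,dx \cdot \|(H_0+\mu)^{-1}\|_2^2/|\Lambda_L|$, since the kernel of $(H_0+\mu)^{-1}$ is translation invariant. Moreover ${\bf E}\int |V|^2 = O(|\Lambda_L|)$ by the Campbell formula, using $m_2<\infty$ and $B\in L^2$; the $m_1^2$ term is also $O(|\Lambda_L|)$ since $\int_{\Lambda_L} B_\# = \hat B(0)$. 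For the remaining norm factor I need to avoid dependence on $\|V\|_\infty$. I will choose $\mu$ large only relative to $E$ and handle the ratio by taking $\eta$ fixed and using $\|(H_L-\bar z)^{-1}\| \le |\eta|^{-1}$, writing $(H_L-\bar z)^{-1} = (H_0+\mu)^{-1}(H_0+\mu)(H_L-\bar z)^{-1}$ and expanding $(H_0+\mu)(H_L-\bar z)^{-1} = 1 + (\mu+\bar z - \lambda V)(H_L-\bar z)^{-1}$. Then
\begin{equation*}
\|(H_L-\bar z)^{-1}\|_2 \le \|(H_0+\mu)^{-1}\|_2\,\bigl(1 + |\mu+\bar z|/|\eta|\bigr) + \lambda\,\|(H_0+\mu)^{-1}V\|_2/|\eta| .
\end{equation*}
Squaring, taking expectations and dividing by $|\Lambda_L|$ gives a bound uniform in $L\ge1$, using only $m_1,m_2<\infty$.
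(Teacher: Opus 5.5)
Your final argument is correct, but it takes a different route from the paper. The paper deduces the lemma from Proposition~\ref{uniformboundontrace}, using the pointwise bound $|\eta|\,|{\rm Im}(x-E-i\eta)^{-1}|\le(1+2E^2+2\eta^2)\langle x\rangle^{-2}$. That bound is combined with Proposition~\ref{lem:proofofexdens00-0}, which controls $|\Lambda_L|^{-1}{\bf E}_L{\rm tr}\,\langle -c\Delta_L+\lambda V\rangle^{-\alpha}$ as follows: Neumann bracketing into unit cubes, replacing $V$ on each cube by its local supremum, and bounding ${\bf E}_L\|V1_Q\|_\infty^\beta$ uniformly in $L$ via the restriction property of the Poisson process (Lemma~\ref{boundintraceexppot}).

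You instead proceed in three steps:
\begin{itemize}
\item You write ${\rm tr}\,|{\rm Im}(H_L-z)^{-1}|=|\eta|\,\|(H_L-\bar z)^{-1}\|_2^2$.
\item You peel off one free resolvent with the resolvent identity.
\item You use the exact identity $\|(H_0+\mu)^{-1}V\|_2^2=|\Lambda_L|^{-1}\|(H_0+\mu)^{-1}\|_2^2\int_{\Lambda_L}|V|^2$, which comes from the translation invariance of the periodic free kernel.
\end{itemize}
After that you only need two inputs: the second-moment bound ${\bf E}|V(x)|^2\le m_2\|B\|_2^2+m_1^2\|B\|_1^2$, and the lattice bound $\sum_{p}(\nu(p)+\mu)^{-2}\le C|\Lambda_L|$ for $d\le 3$. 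This reproduces the same $C(1+E^2+\eta^2)(1+\lambda^2)/|\eta|$ dependence as the paper.

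What each route buys: yours is shorter and avoids any local sup-norm estimate of the potential. The paper's route gives the stronger bound on ${\rm tr}\,\langle H_L\rangle^{-\alpha}$ for $\alpha\in(3/2,2]$. That is a Schatten-exponent-below-$2$ statement, which a Hilbert--Schmidt computation does not directly reach, and it is used later to construct the density of states measure (Lemmas~\ref{existofdosint-0} and \ref{exofdosmeasure}).

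Three small remarks.
\begin{itemize}
\item Since $B_\#$ is the periodic extension of $B|_{\Lambda_L}$, one has $\int_{\Lambda_L}B_\#=\int_{\Lambda_L}B$ rather than $\hat B(0)$. This is still bounded by $\|B\|_1$, so nothing changes.
\item Any fixed $\mu>0$ works in the last step.
\item Your first, sup-norm-based attempt can be dropped. The final estimate already yields almost sure Hilbert--Schmidt-ness, because finite expectation implies almost sure finiteness.
\end{itemize}
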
 
\begin{proof} This is an immediate consequence of  Proposition  \ref{uniformboundontrace}.
\end{proof} 

We  study the quantity \eqref{GFct-2} as $L \to \infty$,  as $z$ approaches the real axis, and as $\lambda$ 
becomes small. 
For this   we will  use a Neumann-Type expansion to express the interacting  resolvent as the sum of products of powers of the potential and the resolvent of the periodic Laplacian, $\Delta_L$. This  is the content of   Lemma \ref{lem:resexp} below. 
For notational compactness   we  shall denote   the resolvent of $-\Delta_L$  by 
$$
R_L(z ) :=  \left(- \frac{\hbar^2 }{2 m} \Delta_L - z \right)^{-1} , 
$$
where $z \in \C \setminus [0,\infty)$.
Note that    we use a   notation for the   resolvent of the free Laplacian, which   one might expect  for the interacting  one.  
Iterating the second resolvent identity \ref{ResId}  yields the following lemma.
\begin{lemma}
\label{lem:resexp} 
For $z \in \C \setminus \R$, $L > 0$, and $n \in \N$ we have
\begin{align} \label{eq:resexp} 
( H_L  - z)^{-1} & = \sum_{j=0}^n R_L(z) [ \lambda V_L R_L(z)  ]^j +  [R_L(z)  \lambda V_L  ]^{n+1}  ( H_L  - z)^{-1} . 
\end{align}
\end{lemma}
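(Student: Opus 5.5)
The identity is a finite Neumann expansion and follows by induction on $n$ from the second resolvent identity. Fix $z \in \C\setminus\R$ and $L>0$, and work with a fixed realization $\omega$ for which $V_L$ is a bounded multiplication operator; this holds almost surely, since $M_L(\omega)<\infty$ and $B$ is a Schwartz function. Then $H_L = H_{0,L} + \lambda V_L$ with $H_{0,L} := -\frac{\hbar^2}{2m}\Delta_L$, and both operators are self-adjoint on the common domain $D(-\Delta_L)$ by Kato--Rellich. Since $z\notin\R$, both resolvents $R_L(z)$ and $(H_L-z)^{-1}$ exist as bounded operators. Note also that $z \in \C\setminus[0,\infty)$, so $R_L(z)$ is defined as in the text.

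\textbf{Base case.} First we establish the second resolvent identity in the form
\begin{equation*}
(H_L-z)^{-1} = R_L(z) - R_L(z)\,\lambda V_L\,(H_L-z)^{-1}.
\end{equation*}
Apply $R_L(z)$ on the left to $(H_{0,L}-z) = (H_L - z) - \lambda V_L$, which holds on $D(-\Delta_L)$, and then multiply on the right by $(H_L-z)^{-1}$. Because $(H_L-z)^{-1}$ maps into $D(H_L)=D(-\Delta_L)$, every composition is well defined. This is exactly the case $n=0$ of \eqref{eq:resexp}, provided we account for signs. Comparing with the statement, the remainder there carries no explicit $(-1)^{n+1}$, so the sign must be absorbed into the convention for $\lambda V_L$. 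Equivalently, the identity should be read with $(-\lambda V_L R_L(z))^j$; otherwise the claimed formula fails already at first order. I will check the paper's version of \ref{ResId} and carry whatever sign it uses consistently through the induction.

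\textbf{Inductive step.} Assume the identity holds for $n$. Insert the $n=0$ identity for the factor $(H_L-z)^{-1}$ appearing in the remainder $[R_L(z)\lambda V_L]^{n+1}(H_L-z)^{-1}$. This produces two pieces. The first is $[R_L(z)\lambda V_L]^{n+1}R_L(z)$, which we regroup by associativity as $R_L(z)[\lambda V_L R_L(z)]^{n+1}$, giving the new $j=n+1$ term of the sum. The second is $[R_L(z)\lambda V_L]^{n+2}(H_L-z)^{-1}$, which is the new remainder. All operators involved are bounded, so the regrouping is purely algebraic and needs no justification beyond associativity.

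The only genuine subtlety is the sign bookkeeping just discussed. Everything else, including the domain issues, is routine because $V_L$ is bounded almost surely and $z \notin \R$.
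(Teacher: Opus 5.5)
Your argument is correct and follows the paper's route. The paper simply invokes the iterated second resolvent identity, Lemma \ref{resolventenformel_interiert} with $A=-\frac{\hbar^2}{2m}\Delta_L-z$ and $B=\lambda V_L$, and your induction is exactly that iteration; your base identity $(H_L-z)^{-1}=R_L(z)-R_L(z)\lambda V_L(H_L-z)^{-1}$ gives the remainder in the order written in \eqref{eq:resexp}, whereas the cited appendix lemma puts the full resolvent on the left. You are also right that \eqref{eq:resexp} as printed drops signs and fails already for $n=0$ whenever $\lambda V_L\neq 0$: since the appendix lemma itself carries $(-B)$, the intended identity is $(H_L-z)^{-1}=\sum_{j=0}^n R_L(z)[-\lambda V_L R_L(z)]^j+[-R_L(z)\lambda V_L]^{n+1}(H_L-z)^{-1}$, which your base case already pins down, so you do not need to hedge about conventions.
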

\begin{proof}
This follows directly from Lemma \ref{resolventenformel_interiert}  with $A=- \frac{\hbar}{2m}\Delta_L - z$ and $B=\lambda V_L$, since  by self-adjointness $\sigma (H_L) \subset \R$, see for example  \cite{SR1}.
\end{proof}
To determine the integrated density of states we will work with  the trace of  $R_L(z) [  V_L  R_L(z)  ]^n$. 
For a set $\Omega \subset \R^d$ we shall denote the trace of trace class operators in $L^2(\Omega)$ by $ {\rm tr}_{L^2(\Omega)}$.
For notational compactness  we shall write ${\rm tr}$ for  $ {\rm tr}_{L^2(\Lambda_L)}$.
We define the normed trace $\widetilde{{\rm tr}}  : \mathcal{B}_1 (L^2 (\Lambda_L)) \to \C $ as
\muu{
A \mapsto  \widetilde{{\rm tr}} (A) : =|\Lambda_L|^{-1} {\rm tr} (A).
}
The next proposition shows that the trace indeed exists for $d \leq 3$. 

\begin{proposition}\label{thm:dosasy0-0-0}      Let  $d \leq 3$.
Let $z \in \C \setminus [0,\infty)$. Then for all $n \in \N$ the  operators $   |R_L(z) [  V_L  R_L(z)  ]^n|$ 
are almost surely trace class  in $L^2(\Lambda_L)$ and the expectation of the trace is bounded. 
Furthermore,  ${\rm Im} R_L(z)$ is trace class
and 
$ \widetilde{{\rm tr}} {\rm Im} R_L(z) = \int_{\Lambda_L^*}  \frac{{\rm Im} z }{|\nu(p) - z |^2} dp $.
\end{proposition}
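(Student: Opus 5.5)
The plan is to work throughout in the orthonormal basis $\{\varphi_p\}$ of \eqref{eq:deofonb}, in which $R_L(z)$ is diagonal with eigenvalues $(\nu(p)-z)^{-1}$. This is because, with our choice of units, $-\frac{\hbar^2}{2m}\Delta_L\varphi_p = \nu(p)\varphi_p$ by \eqref{eq:deoflaplac}.

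\emph{The free part.} Since ${\rm Im}\,R_L(z) = \frac{1}{2i}(R_L(z)-R_L(z)^*)$ is diagonal with eigenvalues ${\rm Im} z/|\nu(p)-z|^2$, its trace norm is $\sum_{p\in\Lambda_L^*} |{\rm Im} z|/|\nu(p)-z|^2$. This sum is finite iff $\sum_p (1+p^2)^{-2}<\infty$ over the lattice $(\frac1L\Z)^d$, which holds exactly when $4>d$, i.e.\ $d\le 3$. Dividing by $|\Lambda_L|$ and using the notation \eqref{defofdiscretefouriersum} gives the stated formula. Note that $z\notin[0,\infty)$ only guarantees $|\nu(p)-z|\ge c_z(1+\nu(p))$ for some $c_z>0$; if ${\rm Im} z=0$ then ${\rm Im} R_L(z)=0$ trivially.

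\emph{The general term.} I will use Hölder's inequality for Schatten norms. Write $R_L(z)[V_LR_L(z)]^n$ as a product of $n+1$ resolvents interlaced with $n$ bounded operators $V_L$. Since $V_L$ is almost surely a finite sum of bumps, $\|V_L\|_\infty \le \|B\|_\infty \sum_{\gamma\le M}|v_\gamma|$. For $d\le3$ and $n\ge1$, $R_L(z)\in\mathcal B_2$ because $\sum_p|\nu(p)-z|^{-2}<\infty$. Hence
\[
\| R_L(z)[V_LR_L(z)]^n\|_1 \le \|R_L(z)\|_2^2\,\|R_L(z)\|_\infty^{n-1}\,\|V_L\|_\infty^n ,
\]
with $\|R_L(z)\|_2^2 = |\Lambda_L|\int_{\Lambda_L^*}|\nu(p)-z|^{-2}dp$ and $\|R_L(z)\|_\infty\le {\rm dist}(z,[0,\infty))^{-1}$. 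The case $n=0$ is the one where the trace class claim fails for $R_L(z)$ itself; there the statement should be read as concerning ${\rm Im} R_L(z)$, which was handled above.

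\emph{Expectation.} It remains to show that ${\bf E}_L\big(\sum_{\gamma\le M}|v_\gamma|\big)^n<\infty$. Expanding the power and using independence, this is bounded by ${\bf E}_M M^n \max_{k\le n}{\bf E}_v|v|^k$. The moments of $|v|$ are finite by \eqref{2.4}, using even moments together with Hölder's inequality, and the Poisson variable $M$ has all moments finite.

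I expect no step to be a real obstacle. The only delicate point is the summability threshold $d\le 3$, and in particular checking that the Riemann-sum normalisation makes $\widetilde{\rm tr}$ match $\int_{\Lambda_L^*}dp$. If one also wants the bound uniform in $L$, the rough estimate $\|V_L\|_\infty^n$ gives growth in $|\Lambda_L|$, since ${\bf E}M^n\sim|\Lambda_L|^n$. The proposition, however, only asks for finiteness, so the rough estimate suffices.
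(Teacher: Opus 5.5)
Your proof is correct. The free part and the trace-class step are the same as in the paper: Lemma \ref{HilbertSchmidt} is your diagonal computation, and Lemma \ref{fourierspacekernel} obtains trace class from the Hilbert--Schmidt ideal property, which is your Schatten--H\"older bound.

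The genuine difference is the expectation step. The paper never bounds $V_L$ in operator norm. Instead it writes $\widetilde{\rm tr}\, R_L(z)[V_LR_L(z)]^n$ as an absolutely convergent momentum-space sum, bounded by ${\rm dist}(z,[0,\infty))^{-n+1}\|\widehat V_L\|_{*,1}^n\int_{\Lambda_L^*}|\nu(k)-z|^{-2}dk$. It then applies Lemma \ref{boundonexppotenyt}, which gives ${\bf E}_L\|\widehat V_L\|_{*,1}^m\le C_m|\Lambda_L|^m\|\widehat B_\#\|_{*,1}^m{\bf E}_v|v|^m$.

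Here is what each route buys.
\begin{itemize}
\item Your route needs only $B\in L^\infty$. It controls ${\bf E}_L\,{\rm tr}|R_L(z)[V_LR_L(z)]^n|$ directly, which is literally what the statement asserts.
\item The paper's route needs $\widehat B_\#\in\ell^1(\Lambda_L^*)$. For the periodization of a Schwartz function this can fail (jumps at $\partial\Lambda_L$), and this is where the symmetry or compact-support assumption on $B$ and Lemma \ref{lemestfourdisc} come in.
\item In exchange, the paper's route yields an integrable bound for the momentum-space representation of the trace. That is exactly what is needed next to interchange ${\bf E}_L$ with the momentum sum in Theorem \ref{lem:formulaforexpcoeff-1}, and your trace-norm bound alone does not provide it.
\end{itemize}
Both estimates grow like $|\Lambda_L|^n$, so, as you observe, neither is uniform in $L$.

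Two minor points. First, bounding each product $\prod_a{\bf E}_v|v|^{|a|}$ by $\max_{k\le n}{\bf E}_v|v|^k$ requires Lyapunov's inequality. The convexity bound $(\sum_{\gamma\le M}|v_\gamma|)^n\le M^{n-1}\sum_{\gamma\le M}|v_\gamma|^n$, which the paper uses, is cleaner. Second, $R_L(z)$ itself is trace class when $d=1$, so the failure you mention for $n=0$ is specific to $d=2,3$.
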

\begin{proof}[Proof of \ref{thm:dosasy0-0-0}]  The proposition follows directly from the Lemmas 
 \ref{finitetracesumm},  \ref{HilbertSchmidt},   \ref{fourierspacekernel}, and \ref{boundonexppotenyt}.
 \end{proof}

We  can now study the following expressions for $z\in \C \setminus (0,\infty)$. 
Thus  for   $z \in \C \setminus [0,\infty)$,  $\psi_1, \psi_2 \in L^2(\Lambda_L)$,  and  $n \in \N_0$ we define  
 \begin{align} \label{defoffinT} 
  T_{n,L}[z;\psi_1,\psi_2] 
& :=  {\bf E}_L \inn{ \psi_{1},    R_L(z) [  V_L  R_L(z)  ]^n \psi_{2} }  . 
\end{align}
By identifying the Hilbert space with its dual we define by  $ T_{n,L}[z]$ the unique linear transformation 
such that $ T_{n,L}[z;\psi_1,\psi_2]  = \langle  \psi_1, T_{n,L}[z] \psi_2 \rangle$ for all 
$\psi_1, \psi_2 \in L^2(\Lambda_L)$. 
In other words 
 \begin{align} \label{defoffinT} 
  T_{n,L}[z] 
& =  {\bf E}_L   ( R_L(z) [  V_L  R_L(z)  ]^n )  .
\end{align}
By  Proposition \ref{thm:dosasy0-0-0}   we can define  for $d \leq 3$ the normalized trace of the 
expectation of  $R_L(z) [  V_L  R_L(z)  ]^n$ for $n \in \N$ 
 \begin{align} \label{defoffinT} 
  \widetilde{{\rm tr}}   T_{n,L}[z]   =  {\bf E}_L \widetilde{{\rm tr}}  R_L(z) [  V_L  R_L(z)  ]^n.
\end{align}
Observe that by  Proposition  \ref{thm:dosasy0-0-0} we can interchange expectation and trace summation and  write 
for $n \in \N$ 
 \begin{align} \label{defoffinT-0} 
&  {\bf E}_L  \widetilde{{\rm tr}} R_L(z) [  V_L  R_L(z)  ]^n \\
& = 
{\rm tr} \widetilde{T}_{n,L}[z]  =    |\Lambda_L|^{-1}  \sum_{\Lambda_L^*}   T_{n,L}[z;\varphi_p,\varphi_p] = \int_{\Lambda_L^*}   T_{n,L}[z;\varphi_p,\varphi_p]  dp .  \nonumber 
\end{align}
We note that  the term $n=0$ requires some care, since for $d \leq 3$ the   trace only exists for the imaginary part.
By means of the definition and   Proposition  \ref{thm:dosasy0-0-0} we can write  
\begin{align} \label{zerotraceimterm} 
\tilde{{\rm tr}} {\rm Im}{T}_{0,L}[z]    = \widetilde{ {\rm tr}} {\rm Im} R_L(z)  
  = \int_{\Lambda_L^*}  \frac{{\rm Im} z }{|\nu(p) - z |^2} dp .
\end{align}

We are now going to work towards the next theorem, Theorem \ref{lem:formulaforexpcoeff},  which will express   \eqref{defoffinT}  in momentum space and evaluate the expectation over the resulting integral kernels.
For this, we  introduce  the discrete delta function  in momentum space  for $u \in \Lambda_L^*$
\begin{equation}\label{delta-0} 
\delta_*(u) = 1_{\{0\}}(u) 
\end{equation} 
and a  normalized discrete delta function 
\begin{equation} \label{deltaL}  
\delta_{*,L}( u  ) = |\Lambda_L|  \delta_*(u),
\end{equation} 
where we used  the notation that for a set $A$ we write $1_A(x) = 1$, if $x \in A$, and $1_A(x) = 0$, if $x \notin A$.  \\

We now need to consider  the expectation of the product of the   random variable  $V_{L}$ i.e.
$
 \textbf{E}_v \left[ \prod_{j=1}^n v_{\gamma_j}\right]
$,  for they will appear from \eqref{defoffinT}.
This topic will be discussed in more detail in Section  
\ref{pairing}. The short idea is that we can use independence of the variables as long as they are different, while for the ones that are the same,  we count how often each of the variables occurs: Multiple appearances of the same random variable will result in a higher moment of the respective variable after evaluating the expected value.

To express the result of this procedure  we introduce the following partition function.  It (in a sense) filters all possible partition and leave only one remaining,  which represents the configuration of the momenta.  Meaning that every set $a$ in the partition $A$ has to correspond  to exactly one site in the sense that for  all $ a\in A $ there is a unique $y_a \in \Lambda_L$ such that $y_\gamma = y_a$ for all $\gamma \in a$.
To do so let $\mathcal{A}_n$,  denote the set of partitions of the set $\{1,...,n\}$.  
For $A \in \mathcal{A}_n$ and $k= (k_1,...,k_{n+1})$ we define the partition function
\begin{align} \label{eq:defofdeltapi0}
 \mathcal{P}_{A,L} : (\Lambda_L^*)^{n+1} & \to \R,  \\  k  &\mapsto
\mathcal{P}_{A,L}(k)  :=  \prod_{a \in A} \left\{ m_{|a|}  \delta_{*, L} \left( \sum_{l \in a} (k_l - k_{l+1}) \right) 
  \prod_{l \in a}  \widehat{B}_\#(k_l - k_{l+1} ) \right\}. \nn
 \end{align}
Recall  that  $m_{|a|}$ is the $|a|$-th moment defined in \eqref{moments}.
To state the results we  will  define the infinite volume Fourier transform and inverse Fourier transform for $f \in L^1 (\R^d) $
\begin{align} \label{defoffourier} 
\hat{f}(k) = \int_{\R^d} f(x) e^{ - 2 \pi i k \cdot  x } dx  , \quad k \in \R^d , 
\end{align}
and 
\begin{align*}
\check{f}(x) =\int_{\R^d}   f(k) e^{  2 \pi i k \cdot  x } dk, \quad x \in \R^d , 
\end{align*}
respectively.

For the next theorem we transition over to the momentum space, where we  give an explicit expression for the  expansion coefficients.  Remember that we write $k= (k_1, \ldots k_{n+1}) \in (\Lambda_L^*)^{n+1}$ and that the real energy function $\nu$ defined in \eqref{energiyfunction} is non-negative.

\begin{theorem} \label{lem:formulaforexpcoeff} 
For all $z \in \C \setminus [0,\infty)$ and $n \in \N$ we have 
\begin{align*}
&\widetilde{\tr}  T_{n,L}[z] \\
& =   \int_{(\Lambda_L^*)^{n+1}} \sum_{A  \in \mathcal{A}_n} \mathcal{P}_{A,L}(k_1, \ldots , k_{n+1})   
\prod_{j=1}^{n+1}  (\nu( k_j) -  z)^{-1} \delta_{*,L}(k_1 - k_{n+1}) d (k_1, \ldots , k_{n+1}).
\end{align*}
\end{theorem}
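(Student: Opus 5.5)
Starting from \eqref{defoffinT-0}, we write $\widetilde{\tr} T_{n,L}[z] = \int_{\Lambda_L^*} {\bf E}_L \inn{\varphi_{p}, R_L(z)[V_L R_L(z)]^n \varphi_{p}} dp$. The interchange of expectation and trace is justified by Proposition \ref{thm:dosasy0-0-0}. The plan is to compute this matrix element in the ONB $\{\varphi_k\}$, then take the expectation in the order ${\bf E}_v$, ${\bf E}_y$, ${\bf E}_M$ as in \eqref{esy:3.24}.

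\textbf{Momentum-space kernel.} By \eqref{eq:deoflaplac}, $R_L(z)\varphi_k = (\nu(k)-z)^{-1}\varphi_k$. For the potential \eqref{bumpnotationres}, a direct computation using periodicity of $B_\#$ gives
\[
\inn{\varphi_{k}, V_{L,\gamma}\varphi_{k'}} = |\Lambda_L|^{-1} v_\gamma \widehat{B}_\#(k-k') e^{-2\pi i (k-k')\cdot y_\gamma}.
\]
Here $\widehat{B}_\#$ is the Fourier series of $B_\#|_{\Lambda_L}$. We insert resolutions of the identity between all factors. Setting $k_1 = p$ and $k_{n+1}=p$, the latter enforced by $\delta_{*,L}(k_1-k_{n+1})$ together with the normalization of $\int_{\Lambda_L^*}$, we obtain
\[
\int_{(\Lambda_L^*)^{n+1}} \delta_{*,L}(k_1-k_{n+1})\prod_{j=1}^{n+1}(\nu(k_j)-z)^{-1} \sum_{\gamma_1,\dots,\gamma_n=1}^{M} \prod_{l=1}^n v_{\gamma_l}\widehat{B}_\#(k_l-k_{l+1})e^{-2\pi i(k_l-k_{l+1})\cdot y_{\gamma_l}}\, dk.
\]
The factors of $|\Lambda_L|$ are absorbed because each inner sum $\sum_{k}$ equals $|\Lambda_L|\int_{\Lambda_L^*}$, which cancels the $|\Lambda_L|^{-1}$ coming from each matrix element. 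The sums are absolutely convergent almost surely, since $\widehat{B}_\#$ decays rapidly and the resolvent factors supply $|k|^{-2}$ decay. This justifies the interchanges with the expectation via Fubini, using $m_k<\infty$ and finiteness of Poisson moments; this matches the content of Lemmas \ref{fourierspacekernel} and \ref{boundonexppotenyt}.

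\textbf{Expectation.} A tuple $(\gamma_1,\dots,\gamma_n)$ induces a partition $A\in\mathcal{A}_n$, where $l$ and $l'$ lie in the same block iff $\gamma_l=\gamma_{l'}$. For fixed $M$, the number of tuples inducing $A$ is $M(M-1)\cdots(M-|A|+1)$. For such a tuple:
\begin{itemize}
\item independence of the weights gives ${\bf E}_v\prod_l v_{\gamma_l} = \prod_{a\in A} m_{|a|}$;
\item the uniform position average gives ${\bf E}_y e^{-2\pi i u\cdot y}=\delta_*(u)$ for $u\in\Lambda_L^*$, so each block contributes $\delta_*\bigl(\sum_{l\in a}(k_l-k_{l+1})\bigr)$.
\end{itemize}
Finally, the factorial moment of the Poisson variable is ${\bf E}_M[M(M-1)\cdots(M-|A|+1)] = |\Lambda_L|^{|A|}$. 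This factor turns each $\delta_*$ into $\delta_{*,L}$ via \eqref{deltaL}, producing exactly $\mathcal{P}_{A,L}(k)$. Summing over $A$ yields the claimed formula.

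\textbf{Main obstacle.} The step needing the most care is the analytic justification of interchanging ${\bf E}_M$, ${\bf E}_y$, ${\bf E}_v$ with the infinite momentum sums, given that the trace exists only for $d\le 3$ and the summands are not obviously absolutely summable. I would bound the absolute series by replacing everything by absolute values. This gives $\prod_l |v_{\gamma_l}|\,|\widehat{B}_\#|$ and $|\nu-z|^{-1}$, and the resulting bound is finite by Schwartz decay of $\widehat{B}_\#$ (guaranteed by either compact support or the reflection symmetry \eqref{SC}). Here $n\ge1$ is used: at least two resolvent factors appear together with a decaying $\widehat{B}_\#$ linking momenta, so the sum over $k_1=k_{n+1}$ converges in $d\le3$. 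The expectation of this bound is controlled by $M^n$ moments, which are finite.
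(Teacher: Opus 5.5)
\textbf{Verdict: same route as the paper, but with a genuine error.} Your approach matches the paper's. You expand the trace in the plane-wave basis (Lemma \ref{fourierspacekernel}) and justify Fubini by an absolute bound plus Poisson moments (Lemma \ref{boundonexppotenyt}). You then evaluate the expectation by grouping site labels into a partition, using the weight moments, the uniform position average and the Poisson factorial moment (Lemmas \ref{lem:combprob1}, \ref{lem:combprob2}, \ref{expProd00123}). The error is in the volume bookkeeping of your momentum-space step.

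\textbf{The miscount.} In $\inn{\varphi_p, R_L(z)[V_LR_L(z)]^n\varphi_p}$ there are $n$ matrix elements of $V_L$, each carrying $|\Lambda_L|^{-1}$. There are only $n-1$ intermediate sums, over $k_2,\dots,k_n$. So, contrary to your claim, one factor $|\Lambda_L|^{-1}$ is not cancelled.

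That leftover factor is absorbed exactly when the auxiliary variable $k_{n+1}$ is introduced with the unnormalized Kronecker delta, since $\int_{\Lambda_L^*}\delta_{k_1,k_{n+1}}\,dk_{n+1}=|\Lambda_L|^{-1}$. With $\delta_{*,L}(k_1-k_{n+1})$, which integrates to $1$, your intermediate expression equals $|\Lambda_L|$ times $\widetilde{\tr}\,R_L(z)[V_LR_L(z)]^n$.

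The correct identity is the one the paper's proof actually produces, \eqref{defoffinTnoexp-0} and \eqref{eq:defofCpi-0}, with $\delta_{k_1,k_{n+1}}$. The $\delta_{*,L}(k_1-k_{n+1})$ in the displayed statement is a misprint, and your miscount lands exactly on it.

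\textbf{Independent checks.} The block deltas in $\mathcal{P}_{A,L}$ have arguments summing to $k_1-k_{n+1}$. So they already force $k_1=k_{n+1}$, and an extra $\delta_{*,L}(k_1-k_{n+1})$ merely contributes $\delta_{*,L}(0)=|\Lambda_L|$. Concretely, for $n=1$ one has ${\bf E}_L\widetilde{\tr}\,R_L(z)V_LR_L(z)=m_1\widehat{B}_\#(0)\int_{\Lambda_L^*}(\nu(k)-z)^{-2}dk$. The right-hand side with $\delta_{*,L}$ is $|\Lambda_L|$ times this. Once you replace $\delta_{*,L}(k_1-k_{n+1})$ by $\delta_{k_1,k_{n+1}}$ in both your intermediate and final formulas, the rest of your argument goes through.

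\textbf{A smaller point on decay.} $\widehat{B}_\#$ does not have Schwartz decay in general, because the periodization of $B|_{\Lambda_L}$ need not be smooth across $\partial\Lambda_L$. In the symmetric case Lemma \ref{lemestfourdisc} gives only $\prod_j\langle k_j\rangle^{-2}$ decay, and in the compact case one needs $L\ge L_0$. What the Fubini step really uses is $\widehat{V}_L$ summable with ${\bf E}_L\|\widehat{V}_L\|_{*,1}^n<\infty$, combined with the bound \eqref{defoffinTnoexp-abs}. There the $n-1$ middle resolvents are bounded by ${\rm dist}(z,[0,\infty))^{-1}$, and $|\nu(k_1)-z|^{-2}$ is summable for $d\le 3$ (Lemma \ref{finitetracesumm}).
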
 

The proof of Theorem \ref{lem:formulaforexpcoeff}  will be  given in Section \ref{sec:prooasymexp}. 
In fact, it follows directly from Theorem  \ref{lem:formulaforexpcoeff-1}. 
The next theorem shows that 
for each expansion coefficient  the infinite volume 
limit   exists and  it gives for each expansion coefficient an upper bound dependent on  the spectral parameter.

\begin{theorem}  \label{exinflim} For   $n \in \N_0$ and 
 $z \in \C \setminus [0,\infty)$ the   limit 
$ \lim_{L \to \infty} \widetilde{\rm tr} {\rm Im} T_{n,L}[z] $ exists in $\C$. 
\end{theorem}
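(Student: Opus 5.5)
The plan is to start from the explicit momentum-space formula of Theorem \ref{lem:formulaforexpcoeff}, which applies for $n\ge 1$, and from \eqref{zerotraceimterm} for $n=0$. In both cases I would recognise $\widetilde{\rm tr}\, T_{n,L}[z]$ as a finite sum, over partitions $A\in\mathcal{A}_n$, of Riemann sums. Each Riemann sum converges as $L\to\infty$ to an absolutely convergent integral over $\R^{dm}$ for a suitable $m$. Since $\mathcal{A}_n$ is finite and taking imaginary parts commutes with limits, this gives the existence of $\lim_{L\to\infty}\widetilde{\rm tr}\,{\rm Im}\, T_{n,L}[z]$.

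\emph{The case $n=0$.} Here $\int_{\Lambda_L^*}{\rm Im}\, z\,|\nu(p)-z|^{-2}\,dp$ is a Riemann sum with mesh $1/L$ for the smooth function $f(p)={\rm Im}\, z\,|\nu(p)-z|^{-2}$. Since $z\notin[0,\infty)$, we have $|\nu(p)-z|\ge c_z(1+p^2)$, so $|f(p)|\le C_z(1+p^2)^{-2}$, which is integrable for $d\le 3$. The gradient of $f$ satisfies the same kind of bound. Comparing the value at each lattice point with the average of $f$ over the corresponding cube of side $1/L$ therefore gives convergence to $\int_{\R^d} f\,dp$.

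\emph{The case $n\ge1$.} Fix $A$ and change variables to $k_1$ and the differences $q_l=k_l-k_{l+1}$, $l=1,\ldots,n$. The factor $\delta_{*,L}(k_1-k_{n+1})$ forces $\sum_l q_l=0$, and each block $a\in A$ forces $\sum_{l\in a}q_l=0$. I would solve these linear constraints explicitly, eliminating one $q_l$ per block. Each $\delta_{*,L}$ carries a factor $|\Lambda_L|$ that cancels the $|\Lambda_L|^{-1}$ of the eliminated sum, so the term becomes a normalized Riemann sum $\int_{(\Lambda_L^*)^{m}} F_{A,L}$ over the free variables ($k_1$ and the remaining $q$'s). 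Here $F_{A,L}$ is a product of moments $m_{|a|}$, factors $\widehat{B}_\#(q_l)$, and resolvents $(\nu(k_j)-z)^{-1}$, with $k_j=k_1-\sum_{l<j}q_l$. This bookkeeping of the delta constraints, including any redundancy between the cyclic constraint and the block constraints, has to be done carefully so that exactly the right powers of $|\Lambda_L|$ cancel.

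Next I would replace $\widehat{B}_\#$ by the infinite-volume $\widehat B$. On the lattice, $\widehat{B}_\#(q)=\int_{\Lambda_L}B(x)e^{-2\pi i q\cdot x}dx$. If $B$ has compact support this equals $\widehat B(q)$ for $L$ large. In the reflection-symmetric case I would integrate by parts coordinatewise; the symmetry makes the odd boundary terms cancel. This gives $|\widehat{B}_\#(q)|\le C_N(1+|q|)^{-N}$ uniformly in $L\ge1$, and $\widehat{B}_\#(q)-\widehat B(q)\to0$ uniformly because $B$ is Schwartz.

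For the majorant, the free $q$-variables are then controlled by the rapid decay of $\widehat B$. The remaining free momentum $k_1$ is controlled only by the resolvents. Using $|\nu(k_j)-z|^{-1}\le C_z(1+k_j^2)^{-1}$ and $(1+(k_1-s)^2)^{-1}\le C(1+s^2)(1+k_1^2)^{-1}$, the $n+1\ge2$ resolvent factors yield a majorant of the form $C\,(1+k_1^2)^{-2}\prod_{\rm free}(1+|q|)^{-N'}$. This is integrable for $d\le3$ and is uniform in $L$. Its derivatives obey the same bound, so the cube-comparison argument gives convergence of the Riemann sums to the corresponding integral over $\R^{dm}$.

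I expect the main obstacle to be producing this single summable majorant uniformly in $L$, together with the regularity needed for the Riemann-sum comparison. In particular, I would first check whether, for partitions with many constraints, the number of resolvent factors depending on the unconstrained momentum is always at least two, which is what gives decay $|k_1|^{-4}$ and hence integrability in $d\le3$.
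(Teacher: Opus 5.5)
Your route is the paper's own (Proposition~\ref{propexofcinfDOS}): pass to difference variables as in \eqref{changevar1}, eliminate one variable per block via $M_A$, and let the Riemann sums over $u_0$ and $(v_l)_{l\in I_A}$ converge under a uniform majorant. On the bookkeeping you flag: the cyclic constraint must be read as the Kronecker delta $\delta_{k_1,k_{n+1}}$ of Lemma~\ref{fourierspacekernel} and \eqref{eq:defofCpi-0}. It is implied by the block constraints and simply drops out. The normalized $\delta_{*,L}$ printed in Theorem~\ref{lem:formulaforexpcoeff} would instead produce a spurious factor $|\Lambda_L|$.

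\textbf{The gap is in your majorant.} The bound $|\widehat B_\#(q)|\le C_N(1+|q|)^{-N}$ uniformly in $L$ is false for reflection-symmetric, non-compactly supported $B$. In the integration by parts only the first boundary term cancels. The second involves $\partial_jB$, which is odd in $x_j$, so the contributions from $x_j=\pm L/2$ add rather than cancel. Equivalently, $B_\#$ is continuous but has a kink on $\partial\Lambda_L$: for $d=1$ and $B(x)=e^{-x^2}$, the derivative of $B_\#$ jumps by $2Le^{-L^2/4}$. Hence for fixed $L$ one has $|\widehat B_\#(q)|\sim c_L q^{-2}$ with $c_L\neq0$ as $|q|\to\infty$ in $\Lambda_L^*$. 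Uniformly in $L$ you only have $|\widehat B_\#(q)|\le C\prod_{i=1}^d\langle q_i\rangle^{-2}$ (Lemma~\ref{lemestfourdisc}). With this decay, the weights $1+s^2$ created by $(1+(k_1-s)^2)^{-1}\le C(1+s^2)(1+k_1^2)^{-1}$ cannot in general be absorbed. For example, take $A=\{\{1,2,3\}\}$ and $d=1$. The two intermediate resolvents produce $(1+q_1^2)(1+(q_1+q_2)^2)$. This exactly cancels the decay of $\widehat B_\#(q_1)\widehat B_\#(-q_1-q_2)$ and leaves only $\langle q_2\rangle^{-2}$, which is not integrable in $q_1$.

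\textbf{The fix is an observation you have already derived.} The constraints give $\sum_{l=1}^n q_l=0$ (Lemma~\ref{sum0}), so $k_{n+1}=k_1$ exactly, and the first and last resolvents are both $(\nu(k_1)-z)^{-1}$. This gives the following majorant, with no shift anywhere:
\begin{itemize}
\item $|\nu(k_1)-z|^{-2}\le C_z\langle k_1\rangle^{-4}$;
\item the $n-1$ intermediate resolvents are each bounded by ${\rm dist}(z,[0,\infty))^{-1}$;
\item the eliminated factors $\widehat B_\#(-[M_A(v)]_j)$, $j\in J_A$, are bounded by $\|B\|_1$ (Lemma~\ref{fourierseries}(b));
\item each free factor is bounded by $\prod_i\langle v_{l,i}\rangle^{-2}$.
\end{itemize}
The product is integrable for $d\le3$ and uniform in $L$. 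This also settles the worry at the end of your proposal.

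Relatedly, your derivative-based cube comparison can only be applied after replacing $\widehat B_\#$ by $\widehat B$. Off the lattice, $q\mapsto\int_{\Lambda_L}B(x)e^{-2\pi i q\cdot x}dx$ is the Fourier transform of the discontinuous function $B1_{\Lambda_L}$ and in general decays only like $|q_j|^{-1}$. Justifying that replacement on the lattice already requires dominated convergence with the above majorant and $\|\widehat B_\#-\widehat B\|_{*,\infty}\to0$ (Lemma~\ref{fourierseries}(c)). The paper therefore applies dominated convergence directly to the step-function extensions $\mathcal{E}_L[\cdot]$ of the whole integrand and needs no derivative bounds at all.

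Your argument is fine as written for $n=0$. It also works for compactly supported $B$, where $\widehat B_\#=\widehat B$ as soon as ${\rm supp}\,B\subset\Lambda_L$.
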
 

Theorem    \ref{exinflim}  follows from  Proposition \ref{propexofcinfDOS}  proven in Section \ref{sec:9}.  
It  allows us to  define the infinite volume 
 limit of the expansion coefficients  
\begin{equation} \label{eq:defoflimL} 
  \widetilde{\rm tr} {\rm Im}{T}_{n,\infty}[z ] := \lim_{L \to \infty}    \widetilde{\rm tr} {\rm Im}  {T}_{n,L}[z]  ,
\end{equation}
for all  $z \in \C \setminus [0,\infty)$. 
For the infinite volume limit we have the following bounds. 
\begin{proposition}    \label{exinflim-bound}
For $n \in \N_0$ and any compact set $I \subset  \R$ there exists a constant 
$K_{n,I}$ such that  the inequality 
\begin{align} \label{boundonTs} 
| \widetilde{\rm tr} {\rm Im} {T}_{n,L}[z]  | \leq \frac{ K_{n,I} ( 1 + {\rm dist}(z,[0,\infty))^{-2} ) }{ {\rm dist}(z,[0,\infty))^{n-1} }    
\end{align} 
holds for  $z \in \C  \setminus [0,\infty)$ with ${\rm Re} z \in I$ and $L \geq 1$. 
\end{proposition}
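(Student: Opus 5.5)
The plan is to start from the explicit momentum-space formula of Theorem \ref{lem:formulaforexpcoeff}. For $n=0$ we use instead \eqref{zerotraceimterm}. Since $\widetilde{\rm tr}\,{\rm Im}\,T_{n,L}[z]=\frac{1}{2i}\bigl(\widetilde{\rm tr}\,T_{n,L}[z]-\overline{\widetilde{\rm tr}\,T_{n,L}[z]}\bigr)$ and all other factors in the integrand are real, it suffices to bound, for each fixed partition $A\in\mathcal{A}_n$ (finitely many), the quantity
$$\int_{(\Lambda_L^*)^{n+1}} |\mathcal{P}_{A,L}(k)|\,\Bigl|{\rm Im}\prod_{j=1}^{n+1}(\nu(k_j)-z)^{-1}\Bigr|\,\delta_{*,L}(k_1-k_{n+1})\,dk .$$
The imaginary part is handled with the telescoping identity $\prod_j a_j-\prod_j\bar a_j=\sum_{j}\bar a_1\cdots\bar a_{j-1}(a_j-\bar a_j)a_{j+1}\cdots a_{n+1}$ applied to $a_j=(\nu(k_j)-z)^{-1}$. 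Here $|a_j-\bar a_j|=2|{\rm Im} z|\,|\nu(k_j)-z|^{-2}$. Hence each of the $n+1$ resulting terms contains exactly one factor $|{\rm Im}z|\,|\nu(k_j)-z|^{-2}$. The remaining $n$ factors are bounded by $|\nu(k_i)-z|^{-1}\le {\rm dist}(z,[0,\infty))^{-1}$.

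Next I would deal with the momentum integrations. I change variables to $k_j$ (the momentum carrying the distinguished factor) together with the differences $u_l=k_l-k_{l+1}$, $l=1,\dots,n$. Each block $a\in A$ contributes $\delta_{*,L}(\sum_{l\in a}u_l)\prod_{l\in a}\widehat B_\#(u_l)$. The sum of all block constraints already forces $k_1=k_{n+1}$, so the trace delta is the redundant one. After a bookkeeping of powers of $|\Lambda_L|$ (each $\delta_{*,L}$ eliminates one normalized sum $\int_{\Lambda_L^*}$ exactly), the $u$-integrals become Riemann sums of products of $|\widehat B_\#|$ over the free differences. These are bounded uniformly in $L\ge1$, since $\widehat B$ decays faster than any polynomial: this is where the compact support or the reflection symmetry \eqref{SC} of $B$ is used to control $\widehat B_\#$ on $\Lambda_L^*$. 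The moments $m_{|a|}$ are finite by \eqref{2.4}. What remains is the single free momentum $k_j$, giving the factor
$$F_L(z):=|{\rm Im}z|\int_{\Lambda_L^*}\frac{dk}{|\nu(k)-z|^2}=\widetilde{\rm tr}\,{\rm Im}R_L(z),$$
which is controlled by Proposition \ref{thm:dosasy0-0-0} and Lemma \ref{existenceoftracebaby}. Altogether this yields $|\widetilde{\rm tr}\,{\rm Im}T_{n,L}[z]|\le C_n\,F_L(z)\,{\rm dist}(z,[0,\infty))^{-n}$.

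The main obstacle is a uniform bound on $F_L(z)$ for ${\rm Re}z\in I$, uniformly in $L\ge1$, of the form
$$F_L(z)\le C_I\,{\rm dist}(z,[0,\infty))\,\bigl(1+{\rm dist}(z,[0,\infty))^{-2}\bigr).$$
Combined with the previous bound this gives exactly \eqref{boundonTs}. I would compare the Riemann sum with the integral, using that $|\nu(k)-z|^{-2}$ is slowly varying on scale $1/L$ relative to its size, with error controlled by ${\rm dist}^{-1}$ factors. I would then estimate the integral in polar coordinates, $\eta\int r^{d-1}dr/((r^2/2-E)^2+\eta^2)$, separating $\eta={\rm dist}$ large from $\eta$ small. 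For large $\eta$ with $E$ in a compact set, the integral is $O(\eta^{d/2-1})\le O(\eta^{1/2})\le O(\eta)$ since $d\le3$. For small $\eta$ it is bounded by $C(1+\eta^{-1/2})$, the worst case being $d=1$ near $E=0$; this is dominated by $\eta\cdot\eta^{-2}$. The case ${\rm Re}z<0$ is easier, since then ${\rm dist}=|z|$ and all denominators are bounded below by $|z|$. For $n=0$ the same bound on $F_L$ directly gives the claim.
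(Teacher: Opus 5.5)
Your argument is correct in substance and reaches the same bound, but it produces the squared resolvent differently from the paper.

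\textbf{How the paper gets the square.} Through Lemma~\ref{thm:dosasy0-0}, the paper never puts the imaginary part inside the sum. It uses $|\widetilde{\rm tr}\,{\rm Im}\,T_{n,L}[z]|\le\sum_{A}|\widetilde C_{n,A,L}[z]|$. It then observes that in the variables of \eqref{eq:defofCpi2} one has $\sum_{l=1}^n[M_Av]_l=0$, i.e.\ $k_1=k_{n+1}$. So the first and last resolvents already carry the same momentum $u_0$ and combine to $|\nu(u_0)-z|^{-2}$. The $n-1$ middle resolvents are bounded by ${\rm dist}(z,[0,\infty))^{-1}$.

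\textbf{Comparison of the two routes.} The cyclicity of the trace does the job of your telescoping identity for free. Your route gives the slightly sharper factor $|{\rm Im}\,z|\,{\rm dist}(z,[0,\infty))^{-n}$. However, it does so only for the imaginary part, and at the cost of $n+1$ terms with a moving base momentum. The paper's bound holds for each $\widetilde C_{n,A,L}[z]$ itself, and the same mechanism supplies the integrable majorant in Proposition~\ref{propexofcinfDOS}.

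\textbf{Your ``main obstacle'' is not one.} Lemma~\ref{lem:discest} with $\tau=0$ gives $\int_{\Lambda_L^*}|\nu(k)-z|^{-2}\,dk\le C_I(1+{\rm dist}(z,[0,\infty))^{-2})$ uniformly in $L\ge1$. Combined with $|{\rm Im}\,z|\le{\rm dist}(z,[0,\infty))$, this finishes the proof; it is exactly how the paper handles $n=0$.

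Do not try to compare the lattice sum with the continuum integral. For $\eta\lesssim L^{-1}$ the summand is not slowly varying on the lattice scale. A lattice point near the sphere $\nu(k)=E$ makes $F_L(z)$ of order $\eta^{-1}$, not the continuum value $\eta^{-1/2}$. What is uniform in $L$ is the crude split in the proof of Lemma~\ref{lem:discest}: the bound $\eta^{-2}$ on a bounded region with bounded normalized lattice-point count, and $\langle k\rangle^{-4}$ decay outside. The resulting $\eta^{-1}$ is exactly what the target allows.

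\textbf{Three smaller corrections.}

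First, the trace constraint must be the Kronecker delta $\delta_{k_1,k_{n+1}}$ of Lemma~\ref{fourierspacekernel} and \eqref{eq:defofCpi-0}. With the normalized $\delta_{*,L}$ printed in Theorem~\ref{lem:formulaforexpcoeff}, the redundant constraint would leave a stray factor $|\Lambda_L|$. Your power counting is right only after this correction.

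Second, $\mathcal P_{A,L}$ need not be real if $B$ has compact support but is not even (take a block of size three), so ``all other factors are real'' is false. Your reduction to ${\rm Im}\prod_j(\nu(k_j)-z)^{-1}$ still holds, because $\overline{\widetilde{\rm tr}\,T_{n,L}[z]}=\widetilde{\rm tr}\,T_{n,L}[\bar z]$ (take adjoints) and the momentum formula at $\bar z$ has the same $\mathcal P_{A,L}$.

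Third, in the compactly supported case the uniform bound on $\|\widehat B_\#\|_{*,1}$ is only available for $L\ge L_0$, since the periodization of $B|_{\Lambda_L}$ may jump for smaller $L$. The paper's argument has the same limitation. On $1\le L<L_0$ you can instead use $|\widetilde{\rm tr}\,R_L(z)[V_LR_L(z)]^n|\le\|V_L\|_\infty^n\,{\rm dist}(z,[0,\infty))^{-n+1}\int_{\Lambda_L^*}|\nu(k)-z|^{-2}\,dk$, together with the boundedness of ${\bf E}_L\|V_L\|_\infty^n$ on that range of $L$.
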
 
Proposition    \ref{exinflim-bound}  will  be proven in Section \ref{sec:9}.  
To state the   main result  about the 
expansion coefficients of the resolvent  \eqref{eq:defoflimL},      
  we need some regularity assumptions for the profile function and the wave functions with respect to which 
  we calculate the matrix element of the resolvent. These regularity assumptions are   
  formulated in terms of so-called analytic dilations, cf. \cite{CombesThomas.1973,SR4}.  Let us first introduce the group of dilations.
 
 \begin{definition}
 \label{defutheta}
The group of unitary operators $u(\theta)$ on $L^2(\R^d)$ given by 
\muun{
\label{equtheta}
(u(\theta) \psi)(x) = e^{ \frac{d \theta}{2} } \psi(e^\theta x ) \quad \text{for} \ \theta \in \R , 
}
 is called the {\bf group of dilation operators on} $\R^d$. 
 \end{definition}
 
 It is straightforward to verify that $\R \mapsto \mathcal{B}(L^2(\R^d))$, $\theta \mapsto u(\theta)$ is indeed a strongly continuous one parameter group of  unitary operators, cf. \cite[Page 265]{SR1}. 
 Observe that from the definition of the Fourier transform and the dilation operator it is easy to see that  for any $\psi \in L^2(\R^d)$ and all $\theta \in \R$ 
 \begin{equation} \label{eq:dilfour}  
 (u(\theta) \psi)^{\wedge} = u(-\theta) \hat{\psi}. 
 \end{equation} 
Let us first state the hypothesis about the profile function needed for Theorem  \ref{thm:boundaryvelueexpcoeff}, below.

\begin{hyp} \label{H1} There exists $\vartheta_B > 0$ such that the Fourier transform, $\hat{B}$, of the Schwartz function $B$ satisfies that for $p=1 $ and $p=\infty$ the  function $\R \to L^p(\R^d),  \theta \mapsto  \widehat{B}_{\theta} := (u(\theta) B)^\wedge$  has  an extension 
to an analytic function
$D_{\vartheta_B} := \{ z \in \C : |z| < \vartheta_B\}\to L^p(\R^d).$
\end{hyp}

We can now state the first main result, which establishes the existence of the boundary value of
 the expansion coefficients in Theorem    \ref{thm:boundaryvelueexpcoeff} we find from 

\begin{theorem} \label{thm:boundaryvelueexpcoeff} Let $n\in \N_0$. Suppose  Hypothesis \ref{H1} holds. Then for any $E > 0$ the following limit exists    as a finite complex number 
$$
\widetilde{\rm tr} {\rm Im} {T}_{n,\infty}[E \pm  i 0^+] := \lim_{\eta \downarrow 0 } \widetilde{\rm tr}  {\rm Im} {T}_{n,\infty}[E \pm  i \eta] .
$$
Moreover,  $z \mapsto  \widetilde{\rm tr}  {\rm Im} {T}_{n,\infty}[z  ] $  as a function on $\HS := \{ z \in \C : {\rm Im } z  > 0 \}$  has a continuous extension to a function on $\HS  \cup  \{ z \in \C : {\rm Re}   (z) > 0 \text{ and } {\rm Im } z = 0 \}$.
\end{theorem}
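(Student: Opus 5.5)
The plan is to write the infinite volume coefficient as an explicit integral over momenta, and then deform the momentum contours by analytic dilation so that the singularities $(\nu(k)-z)^{-1}$ move away from the real axis. For $n=0$, \eqref{zerotraceimterm} gives in the limit $\int_{\R^d} \frac{{\rm Im} z}{|\nu(p)-z|^2}\,dp$. This is ${\rm Im}\int (\nu(p)-z)^{-1}dp$, interpreted after subtracting a $z$-independent divergent part, or equivalently ${\rm Im}\, z$ times a $\pm$-resolvent product. In polar coordinates it is an explicit one-dimensional integral, whose boundary value as $z\to E\pm i0$ with $E>0$ is $\pm\pi$ times the surface measure of the sphere $\{\nu=E\}$ (Sokhotski–Plemelj). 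So the case $n=0$ is direct.

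For $n\ge1$ I would start from Theorem \ref{lem:formulaforexpcoeff} and the infinite volume limit, Proposition \ref{propexofcinfDOS}. The delta functions $\delta_{*,L}$ turn the Riemann sums into integrals over the free momenta. The limit $\widetilde{\rm tr}\,T_{n,\infty}[z]$ is then a finite sum over partitions $A\in\mathcal A_n$ of integrals
\[
\int \prod_{a\in A} m_{|a|} \prod_{l} \widehat B(k_l-k_{l+1}) \prod_{j=1}^{n+1}(\nu(k_j)-z)^{-1}\,d\kappa ,
\]
with $\kappa$ the independent momenta and $k_{n+1}=k_1$. The imaginary part is taken as $(T-T^*)/2i$, with $T^*$ evaluated at $\bar z$. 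For each such integral I would substitute $k_j=e^{-\theta}q_j$ for all momenta simultaneously, first for real $\theta$. The Jacobian is $e^{-m d\theta}$, with $m$ the number of free momenta. The propagators become $(e^{-2\theta}\nu(q_j)-z)^{-1}$, and each profile factor $\widehat B(e^{-\theta}(q_l-q_{l+1}))$ equals, up to a power $e^{\pm d\theta/2}$, the dilated $\widehat B_{\theta}$ by \eqref{eq:dilfour}. Hypothesis \ref{H1} lets us continue these factors analytically to $|\theta|<\vartheta_B$ as $L^1\cap L^\infty$-valued functions.

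By uniqueness of analytic continuation the integral is independent of $\theta$ for real $\theta$ and $z\in\C\setminus[0,\infty)$. I would then choose $\theta=i\vartheta$ with small $\vartheta$ of sign chosen so that, for ${\rm Im} z>0$, the denominators $e^{-2i\vartheta}\nu(q)-z$ stay away from zero uniformly as $z\to E+i0$. Indeed $e^{-2i\vartheta}\nu(q)$ lies on a ray in the lower half plane, so $|e^{-2i\vartheta}\nu(q)-z|\ge c(E,\vartheta)(1+\nu(q))$ for $z$ in a neighborhood of $E$ with ${\rm Im} z\ge0$. For the $\bar z$ term one uses the opposite rotation $\theta=-i\vartheta$. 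The deformed integrand is then continuous in $z$ up to the real axis near $E$. The integral converges by dominated convergence, using the $L^1$ and $L^\infty$ bounds on $\widehat B_\theta$ along a spanning tree of the momentum graph, which is the same power counting as in the proof of Proposition \ref{exinflim-bound} but with the distance to the spectrum replaced by $c(E,\vartheta)$. This gives existence of the limit and the continuous extension to $\HS\cup(0,\infty)$. The same argument applies to $E-i0$ from the lower half plane.

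I expect the main obstacle to be justifying the analytic continuation in $\theta$ rigorously. Hypothesis \ref{H1} only gives analyticity of $\widehat B_\theta$ as a Banach-valued function, so one must show that the multi-dimensional integral is analytic in $\theta$ for $z$ off the spectrum. I would do this by locally uniform convergence and Morera, with the same tree bounds. A second difficulty is that for $d\le3$ and small $n$ (the $n=0,1$ terms) the integrals only converge after taking imaginary parts, since $\int(\nu-z)^{-1}$ diverges. I would handle this by writing ${\rm Im}$ as ${\rm Im} z$ times products $(\nu-z)^{-1}(\nu-\bar z)^{-1}$, dilating $z$ and $\bar z$ factors in opposite directions, or alternatively by subtracting the $z$-independent divergent constant before dilating.
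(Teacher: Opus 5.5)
Your proposal is correct and is essentially the paper's argument. The case $n=0$ is read off directly from the explicit integral $\int_{\R^d}{\rm Im}\,z\,|\nu(p)-z|^{-2}\,dp$. For $n\ge1$ you dilate all momenta, continue to $\theta=i\vartheta$ with $0<\vartheta<\vartheta_B$ small so that $e^{-2i\vartheta}\nu$ lies in the lower half-plane (the content of Lemma \ref{lem:triivialomplex}), and let $\eta\downarrow0$ by dominated convergence using $L^1/L^\infty$ bounds on $\widehat B_\theta$ and the $(1+\nu(u_0))^{-2}$ decay, exactly as in Steps 1 and 2 of the paper.

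The only inaccuracies are in side remarks. Your second ``difficulty'' does not arise: for every $n\ge1$, including $n=1$ (where the coefficient is $m_1\widehat B(0)\int(\nu-z)^{-2}$), the first and last propagators carry the same momentum $u_0$, so each $\widetilde C_{n,A,\infty}[z]$ converges absolutely for $d\le3$ without taking imaginary parts. That is fortunate, because rotating the $z$ and $\bar z$ factors of a single product in opposite directions is impossible when they share an integration variable. Also, the $n=0$ boundary value is $\pi\int_{\R^d}\delta(\nu(p)-E)\,dp$, i.e.\ the surface area of $\{\nu=E\}$ divided by $|\nabla\nu|=\sqrt{2E}$, not the surface area itself.
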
 
The proof of  Theorem \ref{thm:boundaryvelueexpcoeff}  will be given in Section     \ref{sec:10}.  
We will use that notation $$\R_> :=\{ x \in \R : x > 0 \}  , \qquad \R_{\geq} := \{ x \in \R : x \geq 0 \} . $$

The next theorem is  the second main result about  expectations of matrix elements  of the resolvent.   It shows  
that the expectations of matrix elements of the resolvent have an asymptotic expansion with expansion coefficients given by \eqref{defoffinT}. The estimate is uniform 
in the size  $L \geq 1$  of the box and yields an asymptotic expansion  as the spectral parameter approaches the positive real axis  in the weak coupling limit where $\eta= \lambda^{2-\epsilon}$.

\begin{theorem} \label{thm:maintec000}  Assume  $m_1 = \E_v  v_\gamma = 0$.  Then there exists an $L_0 \geq 1$ with the following property. 
Let $E > 0$.  Then the following holds.  
\begin{itemize}
\item[(a)] For each $n \in \N$ there exists a constant $K_{n,d,E,B} $  such that for  $\eta \in  (0,1]$ and $L \geq L_0$  we have  
\begin{align}
& \left|  \E_L  \widetilde{\tr}  {\rm Im}  (H_{  \lambda ,  L} - E  \mp  i  \eta  )^{-1}      -  \sum_{j=0}^{n-1}\widetilde{\tr} {\rm Im} T_{j,L}[E  \pm    i \eta  ]   \lambda^j  \right| \nonumber \\
&  \leq K_{n,d,E,B}  \left( \frac{ \lambda^2  }{\eta}   \right)^{n/2} \langle \lambda \rangle \langle \eta^{-1} \rangle \left(1 +   \ln(   \eta^{-1} + 1 )\right)^n    \eta^{-{3/2}}   . \label{eq:boundonexp}  
\end{align} 
The constant $K_{n,d,E,B}$ can be chosen as a function of $E$  to depend
continuously on compact subsets of $\R_>$. 
\item[(b)] 
For  any  $\epsilon \in ( 0,2)$ and  $N  > 0$ we have 
$$
 \E_L  \widetilde{\tr}  {\rm Im}  (H_{  \lambda ,  L} - E  \mp  i   \lambda^{2 - \epsilon}   )^{-1}  =  
 \sum_{n=0}^{ \lceil (20 +4 N)/\epsilon \rceil} \widetilde{\tr} {\rm Im} T_{n,L}[E \pm    i \lambda^{2 - \epsilon} ]    \lambda^n + O(\lambda^N)   
$$
for  $\lambda \downarrow 0$ uniformly in $L \geq L_0$ and $E$ in compact subsets of $\R_>$.
\end{itemize} 
\end{theorem}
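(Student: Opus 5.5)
Starting from Lemma \ref{lem:resexp} with $n-1$ in place of $n$, we take imaginary parts, normalized traces and expectations. This gives
\begin{align*}
\E_L \widetilde{\tr}\, {\rm Im}(H_{\lambda,L}-z)^{-1} - \sum_{j=0}^{n-1}\lambda^j \widetilde{\tr}\,{\rm Im} T_{j,L}[z] = \E_L \widetilde{\tr}\,{\rm Im}\big([R_L(z)\lambda V_L]^{n}(H_{\lambda,L}-z)^{-1}\big),
\end{align*}
with $z = E \pm i\eta$. Interchanging expectation and trace is justified by Proposition \ref{thm:dosasy0-0-0} and Lemma \ref{existenceoftracebaby}. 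Part (a) thus reduces to bounding the remainder term. We will not bound it by a naive operator norm, which would cost $\lambda^n \eta^{-n-1}$. Instead we use a Cauchy--Schwarz splitting in the spirit of \cite{ErdosSalmhoferYau.2008a}. Write the remainder as $\lambda^n \widetilde{\tr}(X^* Y)$ with $X^* = R_L(z)[V_L R_L(z)]^{m}$ and $Y = [V_L R_L(z)]^{n-m} V_L (H_{\lambda,L}-z)^{-1}$, where $m \approx n/2$. Then
\[
|\E_L \widetilde{\tr}(X^*Y)| \le \big(\E_L \widetilde{\tr}\, X^*X\big)^{1/2}\big(\E_L \widetilde{\tr}\, Y^*Y\big)^{1/2}.
\]
In $Y$ we use $\|(H_{\lambda,L}-z)^{-1}\| \le \eta^{-1}$ together with the Ward identity $|(H-z)^{-1}|^2 = \eta^{-1}{\rm Im}(H-z)^{-1}$. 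This converts one factor into the imaginary part of the full resolvent, whose normalized trace is controlled uniformly in $L$ by Lemma \ref{existenceoftracebaby}. This is where the prefactor $\eta^{-3/2}\langle\eta^{-1}\rangle\langle\lambda\rangle$ arises.

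The core of the proof is a bound on the "diagonal" expectations $\E_L \widetilde{\tr}\,(R[VR]^m)^*(R[VR]^m)$, and analogously for the $Y$ part, once the full resolvent has been removed. These expressions are precisely of the form handled in Theorem \ref{lem:formulaforexpcoeff}. Expanding them in momentum space gives a sum over partitions $A \in \mathcal{A}_{2m}$ of integrals of products of $2m+2$ free propagators $(\nu(k_j)-z)^{-1}$, $(\nu(k_j)-\bar z)^{-1}$, weighted by $\mathcal{P}_{A,L}$.

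Because $m_1=0$, every block of each partition has size at least $2$. Hence there are at most $m$ independent momentum integrations, one for each delta constraint being removed. We integrate out the momenta one propagator at a time using the elementary bounds collected in the appendix:
\[
\int_{\Lambda_L^*} \frac{|\widehat B(p-q)|}{|\nu(p)-z|}\,dp \lesssim 1+\ln(\eta^{-1}+1), \qquad \int_{\Lambda_L^*} \frac{|\widehat B(p-q)|}{|\nu(p)-z|^2}\,dp \lesssim \eta^{-1}.
\]
These hold uniformly for $L \ge L_0$ with $E$ in compact sets, by comparison of the Riemann sums with integrals, and they use the decay of $\widehat B$. The remaining unintegrated propagators are bounded by $\eta^{-1}$ each. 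Counting powers gives $\E_L\widetilde{\tr}\,X^*X \lesssim (\lambda^2/\eta)^{m}(1+\ln(\eta^{-1}+1))^{2m}\eta^{-1}$ per power of $\lambda$. Combining the two factors yields \eqref{eq:boundonexp}.

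We expect the main obstacle to be this power counting, done uniformly over all partitions. For crossing or higher-order blocks one must choose a spanning tree of the momentum graph so that every integration variable sees a propagator. It is also necessary that at most half of the propagators remain bounded only by $\eta^{-1}$. We will first attempt this for pairings, then treat larger blocks by merging deltas, since merging only reduces the number of free $\eta^{-1}$ factors. The factor $\langle\lambda\rangle$ comes from moments of $V_L$ appearing in an odd split, and the constant depends continuously on $E$ because all estimates do.

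For (b), set $\eta = \lambda^{2-\epsilon}$ in (a), so that $\lambda^2/\eta = \lambda^{\epsilon}$. The remainder is then $O(\lambda^{n\epsilon/2 - c}\,|\ln\lambda|^n)$ with $c \le 5$, coming from $\eta^{-3/2}\langle\eta^{-1}\rangle \lesssim \lambda^{-5}$. Choosing $n = \lceil (20+4N)/\epsilon\rceil$ makes the exponent exceed $N$ with room to absorb the logarithms. The uniformity in $L \ge L_0$ and in $E$ is inherited from (a).
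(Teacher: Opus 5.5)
Your reduction to the remainder $\lambda^n\,\E_L\widetilde{\tr}\,{\rm Im}\big([R_L(z)V_L]^n(H_{\lambda,L}-z)^{-1}\big)$ is fine, and so is part (b). There is an off-by-one in the split: as written, $X^*Y$ contains $n+1$ potentials, so $Y$ should be $[V_LR_L(z)]^{n-m-1}V_L(H_{\lambda,L}-z)^{-1}$.

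The genuine gap is the treatment of $Y$. Because you split the potentials symmetrically, $Y$ still has random potentials in front of the full resolvent. The Ward identity then cannot be used as you describe. To pass from $\widetilde{\tr}\,Y^*Y=\widetilde{\tr}\big(((H_{\lambda,L}-z)^{-1})^*W^*W(H_{\lambda,L}-z)^{-1}\big)$, with $W=[V_LR_L(z)]^{n-m-1}V_L$, to $\eta^{-1}\widetilde{\tr}\,{\rm Im}(H_{\lambda,L}-z)^{-1}$, you must use $W^*W\le\|W\|^2$. This fails for three reasons:
\begin{enumerate}
\item $\|W\|$ is random and correlated with $H_{\lambda,L}$, so the expectation does not factor.
\item The moments of $\|W\|$ are not uniformly bounded in $L$.
\item $\|W\|$ is only bounded by $\|V_L\|_\infty^{n-m}\eta^{-(n-m-1)}$. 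With $m\approx n/2$ this gives roughly $\lambda^n\eta^{-3n/4}$, which does not even tend to zero for $\eta=\lambda^{2-\epsilon}$ when $\epsilon<2/3$.
\end{enumerate}
Suppose instead you discard the full resolvent via $\|(H_{\lambda,L}-z)^{-1}\|\le\eta^{-1}$. Then neither the Ward identity nor Lemma \ref{existenceoftracebaby} plays any role. You are left with $\E_L\widetilde{\tr}\,WW^*$, a loop with two propagator-free slots and no doubled propagator at the loop momentum. That is not of the form of Theorem \ref{lem:formulaforexpcoeff} and needs its own argument for the loop integration; when $W=V_L$ (small $n$) it is not even finite. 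Your two incompatible explanations of the factor $\langle\lambda\rangle$ (from the full resolvent, and from ``moments of $V_L$ in an odd split'') reflect this confusion.

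The missing idea is to make the Cauchy--Schwarz split completely asymmetric, as in Lemma \ref{lemmerorbound}: take $X^*=[R_L(z)V_L]^n$ and $Y=(H_{\lambda,L}-z)^{-1}$.
\begin{enumerate}
\item The $Y$ factor: $\E_L\widetilde{\tr}\,Y^*Y=\eta^{-1}\E_L\widetilde{\tr}\,{\rm Im}(H_{\lambda,L}-z)^{-1}\le C\eta^{-2}(1+\lambda^2)$ by the Ward identity and Proposition \ref{uniformboundontrace}, with no random prefactor. This is exactly where $\eta^{-1}\langle\lambda\rangle$ comes from.
\item The $X$ factor: $\mathfrak{E}_{n,L}[z]=\E_L\widetilde{\tr}\big(([R_L(z)V_L]^n)^*[R_L(z)V_L]^n\big)$ is purely free, with $2n$ potentials. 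Its loop momentum $u_0$ appears in both the first and last propagator, giving $|\nu(u_0)-z|^{-2}$. This is summable uniformly in $L$ at cost $1+\eta^{-2}$ by Lemma \ref{lem:discest}.
\end{enumerate}
No balancing $m\approx n/2$ is needed. The condition $m_1=0$ forces $|A|\le n$ for $A\in\mathcal{A}_{2n}$, so at most $|A|-1\le n-1$ propagators must be bounded by $\eta^{-1}$. Read literally, your sentence ``at most $m$ independent momentum integrations'' has this backwards: it is the number of constraints that is at most the number of blocks, so there are at least $2n-|A|+1$ free integrations. The square root of $\mathfrak{E}_{n,L}$ then already yields $\eta^{-(n+1)/2}\langle\eta^{-1}\rangle(1+\ln(\eta^{-1}+1))^n$.

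Finally, the integrations need no spanning-tree or pairings-first analysis. After the substitution $u_l=q_{l+1}-q_l$ and eliminating the largest index of each block via $M_A$, Lemma \ref{lemmalinindep} shows that every propagator with $j-1\in I_A$ contains a fresh variable $u_{j-1}$ with coefficient one. Lemma \ref{ausintegrieren} therefore integrates these one after another at logarithmic cost (Proposition \ref{ln000}), for all partitions at once, crossing or not.
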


The proof of  Theorem \ref{thm:maintec000}  will be given in Section     \ref{sec:proooferror}.

\begin{remark} {\rm  The energy  $E > 0$ lies inside the spectrum of the free Laplacian which  is the main  challenge we aim to address in this paper. 
Note that for negative energies $E < 0$ outside of the free Laplacian, the corresponding  estimates and limits of Theorems \ref{thm:boundaryvelueexpcoeff} and \ref{thm:maintec000}  are in fact 
straightforward  to establish   by means of  elementary resolvent identities for $E < 0$. }
\end{remark}

\begin{remark} {\rm  It would be interesting to investigate to what extent the  expansion  \eqref{eq:boundonexp}   in Theorem \ref{thm:maintec000} could be improved 
using  so-called tadpole renormalization \cite{elgart} and an analysis of crossing graphs,  see also  \cite{ErdosSalmhoferYau.2008a}.  In \cite{ErdosSalmhoferYau.2008a}
such an analysis  was successfully  used to establish quantum diffusion in a diffusive  scaling limit  for a model similar to the one introduced in the present paper.  
We did not perform such an analysis in this work,  
since we wanted to  focus   on    the expansion coefficients  and their approximation to the density of states.
}
\end{remark} 

\begin{remark} 
{\rm  An estimate similar to  \eqref{eq:boundonexp}   was obtained in  \cite{MagnenPoirotRivasseau.1998} for a related model. In  that work
a Laplacian with  a Gaussian random potential in two dimensions was studied and it was shown that the difference between the expectation of the  resolvent  and a renormalized free resolvent  is  operator norm bounded by   $\left( \sfrac{\lambda^{2 + \delta}}{\eta} \right)^3 \frac{1}{\eta}$ for some 
$\delta > 0$,  provided $0 \leq \eta \leq \lambda^2$.   The bound   \eqref{eq:boundonexp}  in Theorem \ref{thm:maintec000} can be made 
much smaller than $\left( \sfrac{\lambda^{2 + \delta}}{\eta} \right)^3 \frac{1}{\eta}$ in  the regime where $\eta = \lambda^{2 - \epsilon}$  by  choosing  $n$ sufficiently large.
 }
\end{remark}

\section{Asymptotic Expansion}

\label{asympsec}

In this section we combine the  results of the main Theorems \ref{thm:boundaryvelueexpcoeff} and  \ref{thm:maintec000}  
 to   derive concrete  asymptotic expansions   for  infinite volume limits of  matrix elements of resolvents. 
For this we shall make the following assumption that for all $z \in \C \setminus \R$  the limit 
\begin{align} \label{defoflimit-00}
\mathcal{R}(\lambda;z) :=\lim_{L \to \infty} |\Lambda_L|^{-1}  \tr {\bf E}_L {\rm Im} (H_{\lambda,L}-z)^{-1} 
\end{align} 
exists. Such a working assumption is natural  and is  used in \cite{Schenker.2004,CombesHislopKlopp.2007}  as well. 
 Moreover, we  note that results of this type have been shown in \cite{KirschMartinelli.1982} for a class of  operators
 containing the present operator  up to the choice of the boundary conditions and the periodic extension of the random potential. 

\begin{corollary}\label{asymptotictheorem} Assume Hypothesis \ref{H1},  $\E_v v_\gamma = 0$.   Assume that the limit \eqref{defoflimit-00}
exists. 
Let $\epsilon \in (0,2)$
and $\eta = \lambda^{2 - \epsilon}$.
Let $N \in \N$ and $I$ be a compact subset of $\R_>$.  Then there exists 
a constant $C$ such that for all  $\lambda > 0$ and $E \in I$
\begin{align} \label{expansionasymptoticmain}
 &  \Bigg|  \mathcal{R}(\lambda;E \pm i \eta)  -     \sum_{n=0}^{N-1} \widetilde{\rm tr} {\rm Im} {T}_{n,\infty}[  E \pm i \eta ] \lambda^n
\Bigg|   \leq  C \lambda^{N}  .
\end{align} 
\end{corollary}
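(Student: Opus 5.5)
The plan is to start from the finite-volume expansion of Theorem \ref{thm:maintec000}(b) and pass to the limit $L \to \infty$ term by term, using Theorem \ref{exinflim} for the coefficients and the assumed limit \eqref{defoflimit-00} for the left-hand side. Fix $N$, $\epsilon$ and the compact set $I \subset \R_>$, and put $\eta = \lambda^{2-\epsilon}$. Theorem \ref{thm:maintec000}(b), applied with $N$, gives $M := \lceil (20+4N)/\epsilon \rceil$ and constants $C_0, \lambda_0 > 0$ such that for $0<\lambda\le\lambda_0$, $E\in I$ and all $L\ge L_0$,
\begin{align*}
\Bigg| \E_L \widetilde{\tr} {\rm Im} (H_{\lambda,L} - E \mp i\eta)^{-1} - \sum_{n=0}^{M} \widetilde{\tr}{\rm Im} T_{n,L}[E\pm i\eta]\lambda^n \Bigg| \le C_0 \lambda^N .
\end{align*}
The essential point is that $C_0$ does not depend on $L$.

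First I let $L \to \infty$ with $\lambda$ and $E$ fixed. The first term converges to $\mathcal{R}(\lambda;E\pm i\eta)$ by \eqref{defoflimit-00}; here I identify $\E_L \widetilde{\tr}$ with $|\Lambda_L|^{-1}\tr\E_L$ via Lemma \ref{existenceoftracebaby} and Proposition \ref{thm:dosasy0-0-0}. There are finitely many coefficients, and each converges to $\widetilde{\tr}{\rm Im}T_{n,\infty}[E\pm i\eta]$ by Theorem \ref{exinflim} and \eqref{eq:defoflimL}. One small point needs care: the theorems treat the sign conventions ${\rm Im}(H-E\mp i\eta)^{-1}$ versus $T[E\pm i\eta]$ slightly differently. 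Since ${\rm Im}$ of the conjugate resolvent only flips sign, both sides transform consistently and the conjugate case is harmless. So the bound passes to the limit with the same constant $C_0\lambda^N$, but with $M+1$ terms instead of $N$.

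Second, I drop the terms $N\le n\le M$. This is the main obstacle. Proposition \ref{exinflim-bound} only gives $|\widetilde{\tr}{\rm Im}T_{n,\infty}[E\pm i\eta]| \lesssim \eta^{-(n+1)}$, so $\lambda^n$ times this is $\lambda^{n-(2-\epsilon)(n+1)}$, which is not small. I will use Theorem \ref{thm:boundaryvelueexpcoeff} instead, which requires Hypothesis \ref{H1}. The map $z\mapsto \widetilde{\tr}{\rm Im}T_{n,\infty}[z]$ extends continuously to $\HS\cup\{{\rm Re}\,z>0,{\rm Im}\,z=0\}$. It is therefore bounded on the compact set $\{E+i\eta: E\in I, 0\le\eta\le 1\}$, and likewise for the lower half-plane, by taking complex conjugates. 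This gives a constant $K_n$ with $|\widetilde{\tr}{\rm Im}T_{n,\infty}[E\pm i\eta]| \le K_n$ for all $E \in I$ and $\eta\in(0,1]$. Hence $\sum_{n=N}^{M} K_n\lambda^n \le C_1\lambda^N$ for $\lambda\le 1$, and the bound \eqref{expansionasymptoticmain} follows for $\lambda\le\min(\lambda_0,1)$.

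Finally, for $\lambda\ge\min(\lambda_0,1)$ I need a crude bound. We have $|\mathcal{R}(\lambda;E\pm i\eta)|\le \lim_L |\Lambda_L|^{-1}\E_L\tr|{\rm Im}(H-z)^{-1}|$, and this is finite by Lemma \ref{existenceoftracebaby}. It remains to get that bound uniformly in $\lambda$ and $\eta=\lambda^{2-\epsilon}\ge c>0$. Since the operator is positive up to sign, I would trace the proof of Proposition \ref{uniformboundontrace}, or use monotonicity in $\eta$ together with the trace bound at ${\rm Im}\,z = c$. For $\lambda\ge 1$ one has $\eta\ge1$, and the coefficient terms are bounded by Proposition \ref{exinflim-bound}, at most polynomially in $\lambda$ on this range. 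If this crude large-$\lambda$ bound is problematic, I will state the corollary for $\lambda\in(0,\lambda_1]$, which is the intended asymptotic content, and enlarge $C$ accordingly.
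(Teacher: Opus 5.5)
Your proposal is essentially the paper's proof. You apply Theorem \ref{thm:maintec000}(b) with $\lceil (20+4N)/\epsilon\rceil$ terms and an $L$-uniform constant, pass to $L\to\infty$ via \eqref{defoflimit-00} and Theorem \ref{exinflim}, and then discard the orders $N\le n\le\lceil(20+4N)/\epsilon\rceil$ using the boundedness of $\widetilde{\rm tr}\,{\rm Im}\,T_{n,\infty}[E\pm i\eta]$ for $E\in I$ that comes from Theorem \ref{thm:boundaryvelueexpcoeff}; your compactness argument on $\{E+i\eta : E\in I,\ 0\le\eta\le 1\}$ is a clean way to extract that bound.

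The reduction to $0<\lambda\le1$ is asserted just as briefly in the paper, and your hesitation there is justified. The crude bound of Proposition \ref{uniformboundontrace} is not uniform in $\lambda$: for $\lambda\ge1$ it only gives $|\mathcal{R}(\lambda;E\pm i\eta)|=O(\lambda^{4-\epsilon})$. That is dominated by $C\lambda^N$ only once $N$ is large enough, so for small $N$ the claim for all $\lambda>0$ needs an extra argument in both your version and the paper's.
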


\begin{proof} In view of the bounds \eqref{boundonTs}  and  \eqref{ln-1} and the spectral theorem
it suffices to assume  $0   <   \lambda\leq 1$. 
Now  using Theorem   \ref{thm:maintec000} (b) and then taking  subsequently the limit  
 $L \to \infty$  by  means of \eqref{defoflimit-00} and  Theorem   \ref{exinflim}
we see that  
 \begin{align}  \label{bounddensity-0} 
 & \left|  \mathcal{R}(\lambda;E\pm i \eta)   - 
  \sum_{n=0}^{ \lceil (20 +4 N)/\epsilon \rceil}\widetilde{\rm tr} {\rm Im} {T}_{n,\infty}[E \pm    i \eta ]   \lambda^n 
 \right| \leq   C_0 \lambda^{N} 
\end{align} 
for some constant $C_0$ and $\lambda \geq 0$. 
Now by means of  Proposition  \ref{exinflim-bound} and Theorem \ref{thm:boundaryvelueexpcoeff}  we know that
\begin{align}  \label{regofotherTs-2-0}
C_{n,I} := \sup_{\eta \in \R \setminus\{0 \}, E \in I  }  |\widetilde{\rm tr}  {\rm Im} {T}_{n,\infty}[E +    i \eta ] | < \infty   , \qquad n \in \N . 
\end{align} 
Now      using  \eqref{regofotherTs-2-0},  \eqref{bounddensity-0},   and 
 the triangle inequality we arrive at 
\begin{align*} 
 & \left| \mathcal{R}(\lambda;E\pm i \eta)    
-    \sum_{n=0}^{N-1} \lambda^n  \widetilde{\rm tr} {\rm Im} {T}_{n,\infty}[E \pm    i \eta   ]     \right|    \\
& \leq  \left|  \mathcal{R}(\lambda;E \pm i \eta)  
-    \sum_{n=0}^{ \lceil (20 +4 N)/\epsilon \rceil}  \lambda^n   \widetilde{\rm tr}  {\rm Im} {T}_{n,\infty}[E \pm    i \eta   ] \right|  +  \sum_{n=N}^{ \lceil (20 +4 N)/\epsilon \rceil}   C_{n,I}  \lambda^n \nn  \\ 
&   \leq   C_0     \lambda^N  +   \sum_{n=N}^{ \lceil (20 +4 N)/\epsilon \rceil}   C_{n,I}  \lambda^n  \leq C \lambda^{N}    , \nn 
\end{align*}  
this shows the claim. 
\end{proof}

\begin{remark}{\rm 
We note that  Corollary  \ref{asymptotictheorem} gives  an   asymptotic expansion 
of  $ \mathcal{R}(\lambda;E + i \eta)$ with   $\lambda$-dependent expansion coefficients, which are 
 continuous and bounded. Explicitly,  it follows immediately from  Theorem   \ref{thm:boundaryvelueexpcoeff}  and  Corollary   \ref{asymptotictheorem}  that   for $\eta = \lambda^{2 - \epsilon}$
\begin{align} \label{asympprop} 
\lim_{\lambda \to 0}\frac{ \mathcal{R}(\lambda;E + i \eta) -\sum_{n=0}^{N-1} \widetilde{\rm tr} {\rm Im} {T}_{n,\infty}[E+i\eta] \lambda^n }{  \lambda^{N}}  =  \widetilde{\rm tr} {\rm Im} {T}_{N,\infty}[E+i 0^+] . 
\end{align} 
In fact, it follows from \eqref{asympprop}, that whenever the r.h.s of  \eqref{asympprop} 
is nonzero, the functions 
$$
\left( \mathcal{R}(\lambda;E + i \eta) -\sum_{n=0}^{N-1}  \widetilde{\rm tr}   {\rm Im}{T}_{n,\infty}[E+i\eta] \lambda^n \right)   \quad \text{ and } \quad  \widetilde{\rm tr}
{\rm Im} {T}_{N,\infty}[E+i 0^+]  \lambda^N 
$$
 are  asymptotically equivalent   according to the standard definition, cf.  \cite{deBruijn.1981}. 
In fact, the expansion \eqref{expansionasymptoticmain} given in  Corollary  \ref{asymptotictheorem} is a so-called  extended asymptotic expansion with respect to the asymptotic sequence $( \lambda^n )_{n \in \N_0}$  as $\lambda \downarrow 0$, cf. \cite{EstradaKanwal.2002}.

}
\end{remark}

\section{Estimates of the Integrated Density of States}

\label{dosapprox} 

In this section we show how the main results can be used to show H\"older continuity 
of the density of states.

Let us first introduce the following notation. For  $E \in \R$ and $\eta \neq  0$ define the function 
\begin{align} \label{eq:convfunc-0} 
 \gamma_{\eta,E} : \R \to \R, \quad t \mapsto \gamma_{\eta,E}(t)  := \frac{1}{\pi}\frac{\eta}{\eta^2  + (E-t)^2}  
\end{align} 
Moreover we set $\gamma_{\eta} := \gamma_{\eta,0}$. 
We note that 
\begin{align} \label{relgammimres} 
{\rm Im}(t -z)^{-1} = \pi  \gamma_{{\rm Im}z ,{\rm Re} z }(t) ,
\end{align}   
and 
for $\eta > 0$ we have the normalization$\int_\R \gamma_{\eta,E}(t) dt = 1$.
Moreover we introduce the spaces 
\begin{align*}
\mathcal{C}_{\langle \cdot \rangle^2} &:= \{ f : \R \to \C : \text{ continuous, }  \sup_{x \in \R} |f(x) \langle x \rangle^2 |< \infty   \}  \\ 
\mathcal{M}_{\langle \cdot \rangle^2} & := \{ f : \R \to \C : \text{ Borel measurable, }  \sup_{x \in \R} |f(x) \langle x \rangle^2 |< \infty   \}  . 
\end{align*}
In particular we have $\gamma_{\eta,E} \in \mathcal{C}_{\langle \cdot \rangle^2}  \subset \mathcal{M}_{\langle \cdot \rangle^2} $.

\begin{lemma}  \quad \label{existofdosint-0} 
There exists  a unique Borel measure $\mu_{\lambda,L}$ on $\R$ such that for all
 $ f \in \mathcal{M}_{\langle \cdot \rangle^{2}} (\R)$ 
\begin{align} \label{relfinspecmeas} 
\int_\R f(s)  d \mu_{\lambda,L}(s)  = \widetilde{\rm tr} {\bf E}_L f(H_{\lambda,L } ) 
\end{align} 
The Borel measure $\mu_{\lambda,L}$ is uniquely determined by requiring that \eqref{relfinspecmeas} 
holds for all $f \in C_c(\R)$. 
\end{lemma}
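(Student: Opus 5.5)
The plan is to build $\mu_{\lambda,L}$ as the expectation of the normalized spectral measures of $H_{\lambda,L}$, and to control everything with the weight $\langle s\rangle^{-2}$. For almost every $\omega$, $H_{\lambda,L}$ is self-adjoint. I will show that $(H_{\lambda,L}+i)^{-1}\bigl((H_{\lambda,L}+i)^{-1}\bigr)^* = (H_{\lambda,L}^2+1)^{-1}$ is trace class, with $|\Lambda_L|^{-1}{\bf E}_L \tr (H_{\lambda,L}^2+1)^{-1}<\infty$. This follows from Lemma \ref{existenceoftracebaby} with $z=i$, since
\[
{\rm Im}(H-i)^{-1} = (H^2+1)^{-1} .
\]
By that lemma, $(H^2+1)^{-1}$ is a.s. trace class and its normalized expected trace is finite.

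Given this, for fixed $\omega$ I diagonalize. Since $(H^2+1)^{-1}$ is trace class and positive, $H$ has purely discrete spectrum with eigenvalues $E_j(\omega)$ (with multiplicity) and an orthonormal eigenbasis, and
\[
\sum_j \langle E_j\rangle^{-2}<\infty .
\]
I then define the counting measure $\nu_\omega := \sum_j \delta_{E_j(\omega)}$ and set
\[
\mu_{\lambda,L}(S) := |\Lambda_L|^{-1} {\bf E}_L \, \nu_\omega(S)
\]
for Borel sets $S$. This requires measurability of $\omega\mapsto \nu_\omega(S)$. I would obtain it first for $f\in C_c(\R)$, where $\omega\mapsto \tr f(H_\omega)$ is measurable because $f$ can be approximated by linear combinations of resolvents (Stone–Weierstrass), and resolvent matrix elements are measurable by the Neumann-type expansion in Lemma \ref{lem:resexp}. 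Monotone class arguments then extend measurability to all bounded Borel $f$. Countable additivity of $\mu_{\lambda,L}$ follows from monotone convergence.

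To check \eqref{relfinspecmeas}, I start with $f\ge0$ in $\mathcal{M}_{\langle\cdot\rangle^2}$ and use
\[
\tr f(H_\omega) = \sum_j f(E_j) = \int f\,d\nu_\omega ,
\]
which holds in $[0,\infty]$ by the eigenbasis. Taking expectations by Tonelli gives the identity, and both sides are finite because
\[
0\le f\le \|f\langle\cdot\rangle^2\|_\infty \langle\cdot\rangle^{-2},
\]
together with the bound from Lemma \ref{existenceoftracebaby}. In particular $\int\langle s\rangle^{-2}\,d\mu_{\lambda,L}<\infty$. A general complex $f$ splits into four nonnegative parts, and linearity yields \eqref{relfinspecmeas}; note that $f(H)$ is then trace class, since $|f(H)|\le C(H^2+1)^{-1}$.

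For uniqueness, suppose two Borel measures, each finite against $\langle s\rangle^{-2}$, agree on $C_c(\R)$. Then they are Radon measures on $\R$, since they are locally finite, and the Riesz representation theorem forces them to coincide. The main obstacle is the measurability of $\omega\mapsto\tr f(H_\omega)$. If the Stone–Weierstrass route becomes cumbersome, I would instead use that $\omega\mapsto (H_\omega-z)^{-1}$ is weakly measurable, which follows from the resolvent expansion with a bounded potential on each event $\{M=m\}$, and then apply the functional calculus via strong limits.
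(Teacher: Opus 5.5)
Your proposal is correct and follows essentially the paper's route. The paper also defines $\mu_{\lambda,L}$ as the normalized intensity of the spectral trace measure, summing over a fixed orthonormal basis $\varphi_j$ the intensity measures of $\langle\varphi_j,1_{(\cdot)}(H_{\lambda,L,\omega})\varphi_j\rangle$; this is your $\sum_j\delta_{E_j(\omega)}$ written basis-independently. It gets finiteness on $\mathcal{M}_{\langle\cdot\rangle^2}$ from the same $\langle\cdot\rangle^{-2}$ trace bound (Proposition \ref{lem:proofofexdens00-0} with $\alpha=2$, which is what your use of Lemma \ref{existenceoftracebaby} at $z=i$ amounts to), and it deduces uniqueness from local finiteness. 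The differences are cosmetic: you pass through the compact-resolvent eigenbasis and treat the measurability of $\omega\mapsto{\rm tr}\, f(H_\omega)$ explicitly, whereas the paper works with a fixed basis and simply asserts that the spectral measures form a random measure.
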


\begin{proof}
Let $\varphi_j$, $j \in \N$,  be an o.n.b. of $L^2(\Lambda_L)$. By the spectral theorem 
 there exists  for each $j$ and $\omega$   a unique Radon  measure $\nu_{j,\omega,\lambda,L}$ such that 
$$
\int f(s) \nu_{j,\omega,\lambda,L}(s) = \langle \varphi_j , f(H_{\lambda,L,\omega}) \varphi_j \rangle   
$$
for all  bounded measurable $f : \R \to \C$, cf.  \cite[VII.2]{SR1}] \cite[Chapter 7]{folland}. 
It follows immediately from the definitions that $\nu_{j,\omega,\lambda,L}$ is a random measure 
\cite[Definition 24.3]{Klenke.2014}. Thus there exists a measure $\overline{\nu}_{j,\lambda,L}$, called  the intensity measure, such that 
$$
\int f(s)    d \overline{\nu}_{j,\lambda,L}  (s) = {\bf E} \langle \varphi_j , f(H_{\lambda,L,\omega}) \varphi_j  \rangle 
$$
cf. \cite[Theorem 24.4]{Klenke.2014}.
Since sums of measures are measures, it follows 
that  ${\mu}_{\lambda, L} := |\Lambda_L|^{-1} \sum_{j=1}^\infty \overline{\nu}_{j,\lambda,L}$ is a measure.
In view of  Proposition \ref{lem:proofofexdens00-0}, for all $f  \in \mathcal{M}_{\langle \cdot \rangle^2}$ we have 
 $$
\int f(s)    d {\mu}_{\lambda,L}  (s) = |\Lambda_L|^{-1} {\bf E} {\rm tr} f(H_{\lambda,L,\omega})  < \infty  .  
$$
In particular, ${\mu}_{\lambda,L}$ is a locally finite Borel measure on the real line. 
As such it is uniquely determined by its values on continuous functions with compact support.
\end{proof} 
From the lemma it follows in particular, that    \eqref{relfinspecmeas}  holds for functions $x \mapsto  {\rm Im} (x-z)^{-1}$ with $z \in \C \setminus \R$.
 In fact, such functions determine the measure  $\mu_{\lambda,L}$ completely, which 
is the content of the next lemma. 

\begin{lemma} \label{lemuniquenessfinite} 
The measure $\mu_{\lambda,L}$ is  the unique Borel measure such that  
\begin{align} \label{measuniqufinitbor-0} 
\int_\R  {\rm Im}(s - z )^{-1}  d \mu_{\lambda,L}(s)  = {\bf E}_L  \widetilde{\rm tr} {\rm Im} ( H_{\lambda,L} - z )^{-1}   
\end{align} 
for all $z \in \C \setminus \R$. 
\end{lemma}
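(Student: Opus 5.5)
Existence is immediate from Lemma \ref{existofdosint-0}. For $z \in \C\setminus\R$ the function $s \mapsto {\rm Im}(s-z)^{-1} = \pi\gamma_{{\rm Im} z,{\rm Re} z}(s)$ is continuous and bounded by a constant times $\langle s\rangle^{-2}$, so it lies in $\mathcal{M}_{\langle\cdot\rangle^2}$. Applying \eqref{relfinspecmeas} to it gives \eqref{measuniqufinitbor-0}, once we check that ${\rm Im}$ commutes with the functional calculus, trace and expectation. This holds because $f(s)={\rm Im}(s-z)^{-1}$ is real, so $f(H_{\lambda,L}) = {\rm Im}(H_{\lambda,L}-z)^{-1}$ by the spectral theorem.

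The substance is uniqueness. Let $\mu$ be any Borel measure on $\R$ satisfying \eqref{measuniqufinitbor-0} for all $z\in\C\setminus\R$. The first step is to show $\mu$ is locally finite and integrates $\langle s\rangle^{-2}$. Take $z = i$: then ${\rm Im}(s-i)^{-1} = (1+s^2)^{-1}$. So $\int \langle s\rangle^{-2} d\mu(s)$ equals the finite quantity ${\bf E}_L\widetilde{\rm tr}\,{\rm Im}(H_{\lambda,L}-i)^{-1}$, which is finite by Lemma \ref{existenceoftracebaby}. Hence $\mu$ is locally finite, and so is $\mu_{\lambda,L}$ for the same reason.

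Next, for $E\in\R$ and $\eta>0$, both measures give the same value of
$$
\int_\R \gamma_{\eta,E}(s)\, d\mu(s)=\frac1\pi {\bf E}_L\widetilde{\rm tr}\,{\rm Im}(H_{\lambda,L}-E-i\eta)^{-1}.
$$
Hence the Poisson integrals of $\mu$ and $\mu_{\lambda,L}$ coincide. I would then recover $\int f\,d\mu$ for $f\in C_c(\R)$ by the Stieltjes inversion argument. By Fubini, which is justified since $\mu$ is positive and everything is nonnegative, we get $\int f(E)\,(\gamma_\eta * \mu)(E)\,dE = \int (\gamma_\eta * f)(s)\,d\mu(s)$. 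As $\eta\downarrow 0$, $\gamma_\eta*f \to f$ pointwise.

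The main obstacle is justifying dominated convergence for this limit, since $\mu$ may have infinite total mass. For $|s|$ large and $\eta\le 1$, with $\supp f\subset[-R,R]$, we have $|(\gamma_\eta*f)(s)|\le \|f\|_\infty\, 2R\,\pi^{-1}\eta/(\eta^2+(|s|-R)^2)\le C\langle s\rangle^{-2}$. On bounded sets the function is bounded by $\|f\|_\infty$. So $C\langle s\rangle^{-2}$ is an integrable majorant by the first step. It follows that $\int f\,d\mu = \int f\,d\mu_{\lambda,L}$ for all $f\in C_c(\R)$. The uniqueness clause of Lemma \ref{existofdosint-0}, i.e.\ the Riesz uniqueness for locally finite Borel measures on $\R$, then gives $\mu=\mu_{\lambda,L}$.
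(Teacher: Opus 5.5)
Your proposal is correct and follows the paper's argument: approximate $f\in C_c(\R)$ by $\gamma_\eta * f$, apply dominated convergence with the majorant $C\langle\cdot\rangle^{-2}$ (integrable for any competitor $\mu$ because the identity at $z=i$ gives $\int\langle s\rangle^{-2}\,d\mu<\infty$), and reduce $\int(\gamma_\eta*f)\,d\mu$ to the Poisson integrals $\int\gamma_{\eta,E}\,d\mu$. The only difference is that you pass to those integrals via Fubini, justified by Tonelli on $|f|$ with the same majorant, whereas the paper approximates $\gamma_\eta*f$ by Riemann sums $\sum_j f(y_j)\gamma_{\eta,y_j}\Delta y$ and uses dominated convergence a second time.
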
 
\begin{proof}
By  Proposition \ref{lem:proofofexdens00-0}, we know that the right hand side of  \eqref{measuniqufinitbor-0} 
is finite. In particular $\int_\R  \langle x \rangle^{-2} d \mu_{\lambda,L}(x) < \infty$.  
A Borel measure on $\R$  is uniquely determined by  its integral with continuous functions with compact support. 
Now let $f \in C_c(\R)$. Then for any $\eta >  0$, we have $f * \gamma_\eta \in C_{\langle \cdot \rangle^2}(\R)$
and $f * \gamma_\eta \to f$ pointwise as $\eta \downarrow 0$. Moreover, 
$| f* \gamma_\eta | \leq C \langle \cdot  \rangle^{-2}$   uniformly in $\eta \in (0,1]$, where the right hand side 
is a  majorant. 
So it follows by dominated convergence that $\lim_{\eta \downarrow} \int f * \gamma_\eta  = \int f $. 
Thus functions of the form  $f * \gamma_\eta$ with $f \in C_c(\R)$ and $\eta > 0$ determine the measure completely.
Next we write the integral in the definition of the convolution $f * \gamma_\eta(x) =  \int f(y ) \gamma_{\eta,y}(x) dy $ as a Riemann sum
$ \sum_{j =1}^N f(y_j) \gamma_{\eta,y_j}(x) \Delta y $, where $y_0 = \inf \supp f$, $\Delta y = N^{-1} {\rm diam}(\supp f)$
and $y_j = y_0 + j \Delta y$.  By standard estimates of Riemann sums for 
uniformly continuous functions the Riemann sum converges for  $N \to \infty$  
pointwise to  $f * \gamma_\eta(x)$. On the other hand elementary estimates show that there  exists a constant 
such that for all $x \in \R$ and $\eta \in (0,1]$ 
$$ 
\left| \sum_{j =1}^N f(y_j) \gamma_{\eta,y_j}(x) \Delta y  \right| \leq \frac{C}{\langle x \rangle^2}
$$
Thus by dominated convergence  $\int \sum_{j =1}^N f(y_j) \gamma_{\eta,y_j}(x) \Delta y  d\mu(x) \to \int f * \gamma_{\eta,0} d\mu(x)$.  
This shows that it is sufficient to consider finite linear combinations of the functions $f_{\eta, E}$. This shows 
the claim. 
 \end{proof}

As in Section \ref{asympsec} we shall make the assumption that the limit  \eqref{defoflimit-00} exists for all $z \in \C \setminus \R$.

\begin{lemma} \label{exofdosmeasure}  Assume that the limit  \eqref{defoflimit-00} exists  for all $z \in \C \setminus \R$. 
Then for all 
$f \in \mathcal{C}_{\langle \cdot \rangle^2}(\R)$
also the limit 
$
 \lim_{L \to \infty} {\bf E}_L \widetilde{\rm tr} f(H_L) 
$
exists, and there exists a unique Borel measure $\mu_{\lambda,\infty}$ on $\R$ such that 
\begin{align} \label{relfinspecmeas-2} 
\int_\R  f(s)  d \mu_{\lambda,\infty}(s)  =  \lim_{L \to \infty} {\bf E}_L \widetilde{\rm tr} f(H_L)  .
\end{align} 
for all $f \in \mathcal{C}_{\langle \cdot \rangle^2}(\R)$. The Borel measure $\mu_{\lambda,\infty}$ is uniquely determined by requiring that \eqref{relfinspecmeas-2} 
holds for all $f \in C_c(\R)$. 
\end{lemma}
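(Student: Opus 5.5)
The plan is to use vague convergence of the finite-volume measures together with a uniform tightness-type bound coming from the Lorentzian test functions.

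First, uniform mass bounds. By Lemma \ref{existenceoftracebaby}, the quantity ${\bf E}_L \widetilde{\rm tr}\, {\rm Im}(H_L - i)^{-1}$ is bounded uniformly in $L \geq 1$. By Lemma \ref{existofdosint-0} it equals $\int \langle s\rangle^{-2}\, d\mu_{\lambda,L}(s)$, since ${\rm Im}(s-i)^{-1} = (1+s^2)^{-1}$. Hence
\begin{align*}
\sup_{L \geq 1} \int \langle s\rangle^{-2}\, d\mu_{\lambda,L}(s) =: C_0 < \infty .
\end{align*}
The measures $\langle s\rangle^{-2} d\mu_{\lambda,L}$ are therefore uniformly bounded finite measures.

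Second, pass to a subsequential limit. By Helly/Banach--Alaoglu we extract a subsequence $L_k$ along which $\langle s\rangle^{-2} d\mu_{\lambda,L_k}$ converges vaguely to some finite measure $\rho$. We then define $\mu_{\lambda,\infty} := \langle s\rangle^{2}\rho$.

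Third, upgrade to test functions in $\mathcal{C}_{\langle\cdot\rangle^2}$. This requires tightness of $\langle s\rangle^{-2} d\mu_{\lambda,L}$, i.e.\ that no mass escapes to $\pm\infty$. I would get it from the Lorentzians themselves. Writing $\pi\gamma_{\eta,0}(s) = \eta/(\eta^2+s^2)$, we have for $|s|\geq R$ and any $\eta\ge1$ the bound $\langle s\rangle^{-2} \le C\,\eta^{-1}\cdot \eta^2/(\eta^2+s^2)\cdot(\ldots)$; more usefully, $\int \frac{\eta^2}{\eta^2+s^2} \langle s\rangle^{-2}$-type quantities are controlled by $\eta\,{\bf E}_L\widetilde{\rm tr}\,{\rm Im}(H_L-i\eta)^{-1}$. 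So I need $\eta\,{\bf E}_L\widetilde{\rm tr}\,{\rm Im}(H_L-i\eta)^{-1}$ to behave well as $\eta\to\infty$, uniformly in $L$. I am not certain this follows directly from Proposition \ref{lem:proofofexdens00-0}. A more robust route is to avoid tightness altogether. The existence of the limit \eqref{defoflimit-00} for all $z$ determines $\int {\rm Im}(s-z)^{-1} d\mu$ for every subsequential limit, provided these integrals pass to the limit. For that I use the identity
\begin{align*}
{\rm Im}(s-z)^{-1} - c\, {\rm Im}(s-i)^{-1} = O(\langle s\rangle^{-4})
\end{align*}
with suitable $c$ (indeed $\eta/((s-E)^2+\eta^2) - \eta/(1+s^2) = O(\langle s\rangle^{-3})$). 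This splits each Lorentzian into a piece that is $o(\langle s\rangle^{-2})$ at infinity, which converges under vague convergence plus the uniform bound $C_0$, and a multiple of $\langle s\rangle^{-2}$, whose total mass converges by hypothesis \eqref{defoflimit-00} at $z=i$. Consequently every subsequential limit (with mass at infinity recorded via the $\langle s\rangle^{-2}$ coefficient, i.e.\ the $L\to\infty$ value at $z=i$) satisfies \eqref{measuniqufinitbor-0} with right side $\mathcal{R}(\lambda;z)$.

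Fourth, uniqueness. The argument of Lemma \ref{lemuniquenessfinite} applies verbatim in the limit: the Lorentzians determine the measure. Hence all subsequential limits coincide, and the full family converges. For general $f\in\mathcal{C}_{\langle\cdot\rangle^2}$ we approximate $f$ by finite linear combinations of Lorentzians uniformly in the weighted norm $\sup|f\langle s\rangle^2|$, using the Riemann-sum/convolution argument of Lemma \ref{lemuniquenessfinite} together with Stone--Weierstrass on the one-point compactification. Combined with the uniform bound $C_0$, this yields existence of $\lim_L {\bf E}_L\widetilde{\rm tr} f(H_L)$ and \eqref{relfinspecmeas-2}. Uniqueness from $C_c$ follows from local finiteness.

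The main obstacle is step three: possible escape of mass to infinity, i.e.\ making sure the weighted limit functional has no point mass at $\infty$. The weighted density argument via Stone--Weierstrass hinges on it. If needed I would establish it from the uniform estimate in Proposition \ref{lem:proofofexdens00-0}, using large $\eta$: for $\eta\to\infty$ the free resolvent dominates and the contribution of $|s|>R$ is small uniformly in $L$.
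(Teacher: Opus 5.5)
Step 3 contains a genuine gap. The ``more robust route'' does not avoid tightness: \eqref{defoflimit-00} together with $\sup_L\int\langle s\rangle^{-2}d\mu_{\lambda,L}<\infty$ is compatible with mass escaping to infinity.

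Set $\rho_L:=\langle s\rangle^{-2}d\mu_{\lambda,L}$, suppose $\rho_{L_k}\to\rho$ vaguely, and let $m_\infty:=\mathcal{R}(\lambda;i)-\rho(\R)\geq 0$. Your own splitting ${\rm Im}(s-z)^{-1}=g_z(s)+{\rm Im}z\,\langle s\rangle^{-2}$, with $g_z\langle s\rangle^{2}\in C_0(\R)$, gives
\begin{align*}
\mathcal{R}(\lambda;z)=\int_\R {\rm Im}(s-z)^{-1}\,d\mu_{\lambda,\infty}(s)+m_\infty\,{\rm Im}z .
\end{align*}
So the escaped mass survives as a term $m_\infty{\rm Im}z$, and ``recording it via the value at $z=i$'' does not remove it. Your claim that every subsequential limit satisfies the Lorentzian identity with right side $\mathcal{R}(\lambda;z)$ holds only if $m_\infty=0$.

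This is not a technicality. The abstract family $\mu_L:=(1+L^2)\delta_L$ satisfies both of your inputs: the Lorentzian integrals converge to ${\rm Im}z$ and $\int\langle s\rangle^{-2}d\mu_L=1$. Yet $\int\sin(s)\langle s\rangle^{-2}d\mu_L=\sin L$ has no limit, so the conclusion of the lemma fails for it.

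Step 4 fails for the same reason. For $R=\sum_j c_j\gamma_{\eta_j,E_j}$, the function $R\langle s\rangle^2$ tends to the constant $\sum_j c_j\eta_j/\pi$ at $\pm\infty$. Hence no such $R$ approximates $f(s)=\sin(s)\langle s\rangle^{-2}$ in the norm $\sup|(\cdot)\langle s\rangle^2|$. A bound on $\int\langle s\rangle^{-2}d\mu_{\lambda,L}$ alone therefore cannot control $\int|f-R|\,d\mu_{\lambda,L}$.

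Your large-$\eta$ heuristic is also mis-scaled. The relevant quantity is $\eta^{-1}{\bf E}_L\widetilde{\rm tr}\,{\rm Im}(H_L-i\eta)^{-1}=\int(s^2+\eta^2)^{-1}d\mu_{\lambda,L}(s)$, not $\eta$ times the trace. Its uniform vanishing as $\eta\to\infty$ is equivalent to the tightness you want; it does not follow from the free resolvent ``dominating''.

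The missing idea is that Proposition \ref{lem:proofofexdens00-0} holds with an exponent $\alpha\in(3/2,2)$ strictly below $2$. Extending Lemma \ref{existofdosint-0} to nonnegative functions by monotone convergence, $\int\langle s\rangle^{-\alpha}d\mu_{\lambda,L}={\bf E}_L\widetilde{\rm tr}\langle H_L\rangle^{-\alpha}\leq C(1+\lambda^2)$ uniformly in $L\geq 1$. Hence
\begin{align*}
\sup_{L\geq 1}\int_{|s|\geq R}\langle s\rangle^{-2}\,d\mu_{\lambda,L}(s)\leq C(1+\lambda^2)\langle R\rangle^{\alpha-2}\longrightarrow 0 \quad (R\to\infty),
\end{align*}
and likewise $\int(s^2+\eta^2)^{-1}d\mu_{\lambda,L}\leq C(1+\lambda^2)\eta^{\alpha-2}$ for $\eta\geq 1$. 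This is exactly the tightness you need.

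With it your scheme closes:
\begin{itemize}
\item $\rho_{L_k}\to\rho$ weakly against $C_b(\R)$, so $\int f\,d\mu_{\lambda,L_k}\to\int f\,d\mu_{\lambda,\infty}$ for every $f\in\mathcal{C}_{\langle\cdot\rangle^2}$.
\item $m_\infty=0$, so the Lorentzian identity holds.
\item The argument of Lemma \ref{lemuniquenessinf} identifies all subsequential limits, so the whole family converges.
\end{itemize}

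The paper uses the same estimate without compactness. It approximates $f\in C_c(\R)$ by Lorentzian combinations $R$ with $|f-R|\leq\epsilon\langle x\rangle^{-\alpha}$, which is possible precisely because $\alpha<2$. It then approximates $f\in\mathcal{C}_{\langle\cdot\rangle^2}$ by $f\chi(\tau\,\cdot)$ with an error of the same form, runs an $\epsilon/3$ Cauchy argument, and obtains $\mu_{\lambda,\infty}$ from the Riesz theorem. Once tightness is in place, your subsequence-plus-uniqueness route is an equally valid way to organize the proof.
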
 
We shall call $\mu_{\lambda,\infty}$ the density of states measure.

\begin{proof} Let $\alpha \in (3/2,2)$.  
 First we show that for any $f \in C_c(\R)$  and $\epsilon > 0$ there exists
 an element $R \in \mathcal{R} := {\rm lin} \{  \gamma_{\eta,E}  : \,   \eta > 0 , \,  E \in \R   \}$ in the linear span such that for all $x \in \R$ 
 \begin{align} \label{densitywdecay} 
  | f(x) - R(x) | \leq \frac{\epsilon}{\langle x \rangle^\alpha} .
 \end{align} 
  To see \eqref{densitywdecay}, let  $\tilde{\epsilon} > 0$.
Then 
\begin{align} 
&|{f}(x) - \int {f}(y) \gamma_\eta(x-y) dy  | = | \int [  f(x) - f(x+ \eta z ) ] \gamma_1(z) dz |  \nn \\
& \leq 2 \| f \|_\infty \int_{|z| \geq R}   \gamma_1(z) dz  +  \sup_{x \in \R} \sup_{|z| \leq R} |f(x) - f(x + \eta z ) | \label{trivialintest}  . 
\end{align} 
Now  choosing  $R > 0$ sufficiently large such that 
the first term in line  \eqref{trivialintest} is less than $\tilde{\epsilon}/2$ and then choosing (by the uniform continuity of $f$)   $\eta > 0$ 
sufficiently small such that also the second term in line  \eqref{trivialintest} is less than  $\tilde{\epsilon}/2$,
we obtain that 
\begin{align} 
&|{f}(x) - \int {f}(y) \gamma_\eta(x-y) dy  |  < \tilde{\epsilon} , \nn 
\end{align} 
for $\eta > 0$ sufficiently small. 
  Now  by  the uniform continuity of ${f}$ and $\gamma_\eta$ and  by well-known estimates  concerning  Riemann integration \cite{Rudin.1976} 
 there exist  $N$, $\delta >0$, and  $y_0 \in \R$ such that with  $y_j = y_0 + j \delta$  we have
   $\sup_x | \int {f}(y) \gamma_\eta(x-y) dy  -   \sum_{j=0}^N {f}(y_j) \gamma_\eta(x-y_j) \delta    |  < \tilde{\epsilon}$.  
  Thus by the triangle inequality for all $x \in \R$ we find that 
 \begin{align}  |{f}(x) -   \sum_{j=0}^N {f}(y_j) \gamma_\eta(x-y_j) \delta  | < 2 \tilde{\epsilon}.
 \end{align} 
On the other hand let $a > 0$ such that ${\rm supp} f \subset [-a,a]$. For $x \notin [-a-1,a+1]$ we find for all $0 < \eta \leq 1$ 
   $$
|{f}(x) -   \sum_{j=0}^N {f}(y_j) \gamma_\eta(x-y_j) \delta  |  = |   \sum_{j=0}^N {f}(y_j) \gamma_\eta(x-y_j) \delta  | \leq
\frac{    \| f \|_\infty  2a  }{({\rm dist}(x,\{-a,a\} ))^2 } 
$$
Since $\alpha < 2$ there exists a sufficiently large   $R > a +1$ so   that  for $|x| \geq R$ 
\begin{align}
 \frac{    \| f \|_\infty 2a  }{ ({\rm dist}(x,\{-a,a\} ))^2 } 
 \leq   \frac{\epsilon}{\langle x \rangle^\alpha} 
\end{align} 
Furthermore,  we can now choose $\tilde{\epsilon} = \epsilon/ (2 \langle R \rangle^\alpha)$. 
With these choices  we find 
   $$
|{f}(x) -   \sum_{j=0}^N {f}(y_j) \gamma_\eta(x-y_j) \delta  | \leq 1_{|x| <  R}  2 \tilde{\epsilon}  +  1_{|x| \geq R} 
\frac{\epsilon}{\langle x \rangle^\alpha}  \leq \frac{\epsilon}{\langle x \rangle^\alpha} 
$$
Hence we have shown inequality \eqref{densitywdecay}.

  Next we use the density \eqref{densitywdecay}  to  show that for any $f \in C_c(\R)$ the limit 
  \begin{align} \label{limofEanyf}
  \lim_{L \to \infty} {\bf E} \widetilde{\rm tr} f(H_L) 
  \end{align} 
  exists. Let $\epsilon > 0$. Then there exists an $R \in \mathcal{R}$ such that  $| f(x) - R(x) | \leq \frac{\epsilon/3}{\langle x \rangle^\alpha}$. By the assumption  \eqref{defoflimit-00} it follows by  the Cauchy criterion for convergent sequences
  and linearity, 
  that there exists an  $L_0 \geq 0$ such that for all $L,L' \geq L_0$ 
  \begin{align}
  | {\bf E}   \widetilde{\rm tr}   R (H_{\lambda,L})  -  {\bf E} \widetilde{\rm tr}   R (H_{\lambda,L'})  | 
  \leq \epsilon/3 .
  \end{align} 
  Thus for all $L,L' \geq L_0$ we find using the triangle inequality 
  \begin{align}
& | {\bf E} \widetilde{\rm tr} f(H_L) -{\bf E} \widetilde{\rm tr} f(H_{L'})  | \nonumber  \\
 & \leq  | {\bf E} \widetilde{\rm tr} f(H_L) -{\bf E} \widetilde{\rm tr} R(H_{L})  |  + 
  | {\bf E} \widetilde{\rm tr} R(H_L) -{\bf E} \widetilde{\rm tr} R(H_{L'})  | + | {\bf E} \widetilde{\rm tr} f(H_{L'}) -{\bf E} \widetilde{\rm tr} R(H_{L'})  | \nonumber \\
 & \leq   {\bf E} \widetilde{\rm tr} | f(H_L) -  R(H_{L})  |  + 
  | {\bf E} \widetilde{\rm tr} R(H_L) -{\bf E} \widetilde{\rm tr} R(H_{L'})  | +  {\bf E} \widetilde{\rm tr} |  f(H_{L'}) - R(H_{L'})  | \nonumber
\\ & \leq C \epsilon/3 + \epsilon/3 + C \epsilon/3 , \label{epsilon3arg} 
  \end{align} 
where we used \eqref{densitywdecay} and   Proposition \ref{lem:proofofexdens00-0}   in the last line.
  This shows the convergence by the Cauchy criterion. This shows the convergence for $f \in C_c(\R)$.
  Next we show the convergence for $f \in \mathcal{C}_{\langle \cdot \rangle^2}(\R) := \{ g \in C(\R):  \| \langle \cdot \rangle^2 g \|_\infty < \infty\}$.  
  To this end we use a similar $\epsilon/3$-argument as above. 
  First observe that there exists an   $R \in C_c(\R)$ such that 
  \begin{align} | f(x) - R(x)  |   \leq  
  \frac{\epsilon/3}{\langle x \rangle^\alpha} , 
  \end{align} 
  which can be found by choosing $R = f \chi(\tau \cdot)$
  for a symmetric cutoff function  $\chi \in C_c(\R)$ with $0 \leq \chi \leq 1$,  $\chi =1$
 in a neighborhood of the origin, 
monotonically decreasing on $[0,\infty)$,  and
  $\tau >0$ sufficiently small.  Now an argument analogous to 
  \eqref{epsilon3arg}, shows  that the limit  \eqref{limofEanyf}
  exists for all   $f \in \mathcal{C}_{\langle \cdot \rangle^2}(\R)$.
   \\
Now define the linear functional for  $f \in \mathcal{C}_{\langle \cdot \rangle^2}(\R)$ 
 $$
L(f) :=     \lim_{L \to \infty} {\bf E} \widetilde{\rm tr} f(H_L) .
$$
By the spectral theorem the functional is positive. It follows as a conclusion of the  Riesz representation theorem, that
there exists a unique Borel measure $\mu_{\lambda,\infty}$ on $\R$  such that  for all  $f \in \mathcal{C}_{\langle \cdot \rangle^2}(\R)$ 
\begin{align} \label{riesz}  
L(f) = \int_\R f(s) d \mu_{\lambda,\infty}(s) .
\end{align} 
To see this, we note that by the classical Riesz representation theorem 
the existence of a Borel measure follows if we consider only   $f \in C_c(\R)$
and  then   \eqref{riesz}   follows  for $f \in C_c(\R)$.
Next  we use Proposition \ref{lem:proofofexdens00-0} to obtain the bound for all $f \in \mathcal{C}_{\langle \cdot \rangle^2}(\R)$
\begin{align} \label{condfuncbound} 
| L(f  )  | \leq     \lim_{L \to \infty} | {\bf E} \widetilde{\rm tr} f(H_L) |  =    \lim_{L \to \infty} |  {\bf E} \widetilde{\rm tr} f  \langle \cdot \rangle^\alpha  \langle \cdot \rangle^{-\alpha} (H_L)|   \leq  
C \| f  \langle \cdot \rangle^\alpha \|_\infty 
\end{align} 
Assume $f \geq 0$, and let   $R_\tau = f \chi(\tau \cdot)$ as above. Then it follows 
by monotone convergence and  continuity of $L$ (by means of  \eqref{condfuncbound})  
\begin{align}
\int f(s) d\mu_{\lambda,\infty}(s) = \lim_{\tau \to  0} \int R_\tau(s) d\mu_{\lambda,\infty}(s) =  \lim_{\tau \to  0} L(R_\tau) = L(f)  
\end{align} 
The general case of complex valued $f$ then follows by decomposing it into positive and negative real and imaginary parts and using linearity. 
\end{proof} 
 
Note that \eqref{relfinspecmeas-2}  holds in particular for functions of the form $x \mapsto {\rm Im} (x-z)^{-1}$ with $z \in \C \setminus \R$. 
In fact these functions determine the density of states measure uniquely, this is the content of the following lemma.

\begin{lemma} \label{lemuniquenessinf} 
The measure $\mu_{\lambda,\infty}$ is  the unique Borel measure such that  
\begin{align} \label{measuniqufinitbor} 
\int_\R  {\rm Im}(s - z )^{-1}  d \mu_{\lambda,\infty}(s)  = \lim_{L \to \infty} {\bf E}_L \widetilde{\rm tr}  {\rm Im} ( H_{\lambda,L} - z )^{-1}   
\end{align} 
for all $z \in \C \setminus \R$. 
\end{lemma}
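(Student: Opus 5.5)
Existence follows at once from Lemma \ref{exofdosmeasure}: the function $s \mapsto {\rm Im}(s-z)^{-1} = \pi \gamma_{{\rm Im} z, {\rm Re} z}(s)$ is continuous and satisfies $|{\rm Im}(s-z)^{-1}| \le C_z \langle s\rangle^{-2}$. It therefore lies in $\mathcal{C}_{\langle \cdot \rangle^2}(\R)$, and \eqref{relfinspecmeas-2} gives \eqref{measuniqufinitbor} with the limit on the right-hand side existing. The content of the lemma is therefore uniqueness. Accordingly, let $\mu$ be any Borel measure on $\R$ satisfying \eqref{measuniqufinitbor} for all $z \in \C\setminus\R$. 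The goal is to show $\int f\, d\mu = \int f \, d\mu_{\lambda,\infty}$ for all $f \in C_c(\R)$, since a Borel measure is determined by these integrals once it is locally finite.

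The argument mirrors the proof of Lemma \ref{lemuniquenessfinite}. First, taking $z = i$ in \eqref{measuniqufinitbor} shows $\int \langle s\rangle^{-2} d\mu(s) = \int {\rm Im}(s-i)^{-1} d\mu(s) < \infty$, because the right-hand side is finite (it equals the corresponding $\mu_{\lambda,\infty}$ integral). So both $\mu$ and $\mu_{\lambda,\infty}$ integrate $\langle \cdot \rangle^{-2}$, and in particular both are locally finite. Second, by linearity $\mu$ and $\mu_{\lambda,\infty}$ agree on every element of $\mathcal{R} = {\rm lin}\{\gamma_{\eta,E}\}$, using $\gamma_{\eta,E} = \pi^{-1}{\rm Im}(\cdot - E - i\eta)^{-1}$ for $\eta>0$. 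Third, given $f \in C_c(\R)$ and $\epsilon>0$, the approximation \eqref{densitywdecay} established in the proof of Lemma \ref{exofdosmeasure} provides $R \in \mathcal{R}$ with $|f - R| \le \epsilon \langle \cdot \rangle^{-\alpha}$ for some $\alpha \in (3/2,2)$. This yields
\[
\Big|\int f\,d\mu - \int f\, d\mu_{\lambda,\infty}\Big| \le \epsilon \int \langle s\rangle^{-\alpha} d\mu(s) + \epsilon \int \langle s\rangle^{-\alpha} d\mu_{\lambda,\infty}(s).
\]
Letting $\epsilon \downarrow 0$ gives the claim.

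The one point needing care is the finiteness of $\int \langle s \rangle^{-\alpha} d\mu$ with $\alpha<2$, since we only know the $\langle\cdot\rangle^{-2}$ moment. I would resolve this by redoing the density estimate \eqref{densitywdecay} with $\alpha = 2$ replaced by a bound $\epsilon\, C\langle x\rangle^{-2}$. Away from ${\rm supp} f$, the Riemann sum $\sum_j f(y_j)\gamma_\eta(x-y_j)\delta$ already decays like $\eta\,\|f\|_\infty 2a\,{\rm dist}(x,[-a,a])^{-2}$, which is small times $\langle x\rangle^{-2}$ once $\eta$ is small. So choosing $\eta$ small controls the tail, and the compact part is handled exactly as before. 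Alternatively, the error can be dominated by $C\langle x\rangle^{-2}$ uniformly along a sequence of approximants converging pointwise, and dominated convergence then applies, as in the proof of Lemma \ref{lemuniquenessfinite}. This tail estimate is the only step I expect to require genuine work; the rest is bookkeeping.
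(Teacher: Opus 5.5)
Your proposal is correct and follows essentially the paper's route: the paper proves this lemma by the same argument as Lemma \ref{lemuniquenessfinite}. That argument uses the finiteness of the $\langle\cdot\rangle^{-2}$ moment (your $z=i$ observation) and approximates $f\in C_c(\R)$ first by $f*\gamma_\eta$ and then by Riemann sums of the $\gamma_{\eta,y_j}$, passing to the limit by dominated convergence with a uniform majorant $C\langle x\rangle^{-2}$, which is exactly your ``alternative''. Your primary variant is also valid: sharpening \eqref{densitywdecay} to $\alpha=2$ via the factor $\eta$ in the tail bound $\gamma_\eta(t)\le \eta/(\pi t^2)$ correctly handles the fact that a competing measure is a priori only known to integrate $\langle\cdot\rangle^{-2}$.
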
 
The proof is analogous to that of Lemma \ref{lemuniquenessfinite}  with the obvious modifications. 
Now combining the two lemmas we obtain the following result 

\begin{proposition}\label{portmanteux} 
 Assume that the limit 
 \eqref{defoflimit-00}  exists  for all $z \in \C \setminus \R$. 
Then for all 
$f \in \mathcal{C}_{\langle \cdot \rangle^2}(\R)$ 
\begin{align} \label{limirofmeasures}
\int_\R  f(s)  d \mu_{\lambda,\infty}(s)  
=  \lim_{L \to \infty}  \int_\R  f(s)   d  \mu_{\lambda,L}(s)  =  \lim_{L \to \infty} {\bf E}_L \widetilde{\rm tr} f(H_L) 
  .
\end{align} 
For every bounded open set $O \subset \R$ and every bounded closed set $C \subset \R$  
\begin{align}
\liminf_{L \to \infty} \mu_{\lambda,L}(O) & \geq \mu_{\lambda,\infty}(O)   \\
\limsup_{L \to \infty} \mu_{\lambda,L}(C)  & \leq \mu_{\lambda,\infty}(C)  .
\end{align} 
\end{proposition}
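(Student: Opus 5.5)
The identity \eqref{limirofmeasures} is essentially a combination of the two preceding lemmas. By Lemma \ref{existofdosint-0}, for every $f \in \mathcal{C}_{\langle \cdot \rangle^2}(\R) \subset \mathcal{M}_{\langle \cdot \rangle^2}$ and every $L$ we have $\int_\R f \, d\mu_{\lambda,L} = {\bf E}_L \widetilde{\rm tr} f(H_L)$. By Lemma \ref{exofdosmeasure}, under the assumption that \eqref{defoflimit-00} exists, the right-hand side converges as $L \to \infty$ to $\int_\R f \, d\mu_{\lambda,\infty}$. Chaining these two equalities gives \eqref{limirofmeasures}. In particular, it holds for all $f \in C_c(\R)$, so $\mu_{\lambda,L} \to \mu_{\lambda,\infty}$ vaguely. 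We also record the uniform bound $\sup_{L \ge 1} \mu_{\lambda,L}(K) < \infty$ for bounded $K$. It follows from Proposition \ref{lem:proofofexdens00-0}, or equivalently from Lemma \ref{existenceoftracebaby} applied with $f = {\rm Im}(\cdot - i)^{-1} \geq c\, 1_K$.

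The two inequalities are a portmanteau-type argument adapted to vague convergence of locally finite measures. For the open set, let $O$ be bounded and open. By inner regularity of the Radon measure $\mu_{\lambda,\infty}$ (locally finite Borel measures on $\R$ are regular), for each compact $K \subset O$ Urysohn's lemma provides $g \in C_c(\R)$ with $1_K \le g \le 1_O$. Then
$$\liminf_{L\to\infty} \mu_{\lambda,L}(O) \ge \lim_{L\to\infty} \int g \, d\mu_{\lambda,L} = \int g \, d\mu_{\lambda,\infty} \ge \mu_{\lambda,\infty}(K).$$
Taking the supremum over compact $K \subset O$ yields the first inequality.

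For the closed set, let $C$ be bounded and closed, hence compact. Choose bounded open sets $U_m \downarrow C$, for instance $U_m = \{x : {\rm dist}(x,C) < 1/m\}$, and functions $g_m \in C_c(\R)$ with $1_C \le g_m \le 1_{U_m}$. Then
$$\limsup_{L\to\infty} \mu_{\lambda,L}(C) \le \lim_{L\to\infty} \int g_m \, d\mu_{\lambda,L} = \int g_m \, d\mu_{\lambda,\infty} \le \mu_{\lambda,\infty}(U_m).$$
Since $\mu_{\lambda,\infty}(U_1) < \infty$, continuity from above gives $\mu_{\lambda,\infty}(U_m) \to \mu_{\lambda,\infty}(C)$, which proves the second inequality.

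The argument is mostly routine. The only points needing care are the use of boundedness of $O$ and $C$, which is what makes the approximating functions compactly supported and the measures of the neighbourhoods finite, and the appeal to regularity of $\mu_{\lambda,\infty}$. Both are guaranteed because $\mu_{\lambda,\infty}$ is locally finite; this follows from \eqref{relfinspecmeas-2} with $f = \langle \cdot \rangle^{-2}$. I therefore expect no substantial obstacle. The main subtlety is making explicit that only test functions in $C_c(\R) \subset \mathcal{C}_{\langle \cdot \rangle^2}(\R)$ are needed, so that Lemma \ref{exofdosmeasure} applies directly.
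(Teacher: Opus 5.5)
Your proof is correct and follows the paper's route: the identity is obtained by chaining Lemma \ref{existofdosint-0} with Lemma \ref{exofdosmeasure}, and the two inequalities come from a portmanteau argument. The only difference is that the paper localizes with a cutoff $\chi\in C_c(\R)$, $\chi=1$ on a ball containing $O$ and $C$, and then cites the Portmanteau theorem for the weakly convergent finite measures $\chi\,d\mu_{\lambda,L}$; you instead prove the two inequalities directly via Urysohn functions and inner/outer approximation, which is exactly what that theorem encodes.
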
 
\begin{proof} First observe that  \eqref{limirofmeasures} is an immediate consequence of Lemma  \ref{exofdosmeasure} and Lemma \ref{existofdosint-0}. 
Suppose that the open $O$ and closed $C$ are contained in the ball $B$. Let  $\chi \in C_c(\R)$  with $0 \leq \chi$
and $\chi =1$ on $B$. Define the measures $d\nu_{\lambda, L,\chi}(s) = \chi(s) d \mu_{\lambda,L}(s)$ for $L  \geq 1$.    
Then it follows from \eqref{limirofmeasures} that the measure $\nu_{\lambda,L,\chi}$ converges weakly 
to $\nu_{\lambda,\infty,\chi}$. Now by the Portmanteau theorem we find that 
\begin{align*}
\liminf_{L \to \infty} \mu_{\lambda,L}(O) & = \liminf_{L \to \infty} \nu_{\lambda,L,\chi}(O) \geq  \nu_{\lambda,\infty,\chi}(O) =  \mu_{\lambda,\infty}(O)  \\
\limsup_{L \to \infty} \mu_{\lambda,L}(C)  & = \limsup_{L \to \infty} \nu_{\lambda,L,\chi}(C) \leq  \nu_{\lambda,\infty,\chi}(C) =  \mu_{\lambda,\infty}(C)  
\end{align*} 
This shows the claim. 
\end{proof} 
Henceforth we will 
assume that the limit 
 \eqref{defoflimit-00}  exists  for all $z \in \C \setminus \R$. 
Furthermore, we will make use of the following identity for $f \in C_c(\R)$
\begin{align} 
&\int  f(s) \frac{1}{|\Lambda_L|} {\rm tr} {\rm Im} (  H_{\lambda,L} - s - i \eta  )^{-1} ds
\nonumber  \\
& = \int  f(s) \int {\rm Im}   ( x  - s - i \eta  )^{-1} d \mu_{\lambda,L}(x) ds  \label{eqintdensrel} \\
& = \int  \int   f(s) {\rm Im}   ( x  - s - i \eta  )^{-1} ds d \mu_{\lambda,L}(x)   \label{eqintdensrel-2} \\
& =
\int  (  f  * {\rm Im}   ( \cdot  - i \eta  )^{-1} )(x) d  \mu_{\lambda,L}(x)  ,\label{eqintdensrel-3-f} 
\end{align}
where the first identity follows from the definition of the measure $\mu_{\lambda,L}$,
the second follows from Fubini and the last from the definition of the convolution of 
two functions.

From the main result we immediately obtain the following result using the corresponding relation for 
the case $\lambda=0$. 
For notational compactness we introduce the error term in the main theorem 
\begin{align}
\mathcal{E}_{n,d,E,B} ( \lambda  ,  \eta  ) :=  K_{n,d,E,B}  \left( \frac{ \lambda^2  }{\eta}   \right)^{n/2} \langle \eta^{-1} \rangle \langle \lambda \rangle  \left(1 +   \ln(   \eta^{-1} + 1 )\right)^n    \eta^{-{3/2}}  . \label{eq:boundonexp-2}  
\end{align} 

\begin{corollary}   Assume  $m_1 = \E_v v_\gamma = 0$. Let  $\chi \in C(\R) \cap L^1(\R)$
with ${\rm supp}\chi$ a compact subset of  $(0,\infty)$. 
Then for $\lambda \geq 0$
we find 
\begin{align*} 
& \left| \int   \chi    *   \gamma_{\eta}(e)  d  \mu_{\lambda,\infty}(e)   - \int \chi * \gamma_{\eta}(e)         d \mu_{0,\infty}(e)  -  \frac{1}{\pi} \int \chi(e) \sum_{j=1}^{n-1} \widehat{{\rm tr}} {\rm Im}  T_j[ e + i \eta]   \lambda^j de 
\right|  \\
&  \leq  \| \chi \|_1  \frac{1}{\pi} \sup_{E \in \supp \chi} \mathcal{E}_{n,d,E,B} ( \lambda  ,  \eta  ) .
\end{align*}  
\end{corollary}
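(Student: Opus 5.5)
Fix $\eta>0$ and $\lambda\ge0$. Choose the sign in Theorem \ref{thm:maintec000} so that the spectral parameter has imaginary part $+\eta$. The plan is to rewrite each of the three integrals on the left as an $s$-integral of $\chi(s)$ against a normalized trace of an imaginary part of a resolvent, and then apply Theorem \ref{thm:maintec000}(a) pointwise in $s\in\supp\chi$.

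\textbf{Step 1 (finite volume).} For $f=\chi\in C_c(\R)$, combine \eqref{eqintdensrel}--\eqref{eqintdensrel-3-f} with \eqref{relgammimres}. Since $\gamma_\eta$ is even, $\chi*\gamma_\eta(x)=\int \chi(s)\gamma_{\eta,s}(x)\,ds$. By Lemma \ref{existofdosint-0} and Fubini this gives
\begin{align*}
\int \chi*\gamma_\eta\, d\mu_{\lambda,L} = \frac1\pi\int \chi(s)\,{\bf E}_L\widetilde{\rm tr}\,{\rm Im}(H_{\lambda,L}-s-i\eta)^{-1}\,ds .
\end{align*}
Fubini is justified because $\chi$ has compact support and $x\mapsto \chi*\gamma_\eta(x)$ is bounded by $C\langle x\rangle^{-2}$.

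\textbf{Step 2 (infinite volume).} Since $\chi*\gamma_\eta\in\mathcal C_{\langle\cdot\rangle^2}$, Proposition \ref{portmanteux} gives $\int\chi*\gamma_\eta\,d\mu_{\lambda,\infty}=\lim_L$ of the left-hand side of Step 1. On the right-hand side we pass $\lim_{L}$ inside the $s$-integral by dominated convergence. The integrand converges pointwise by \eqref{defoflimit-00}. It is uniformly bounded for $s\in\supp\chi$ and $L\ge1$ by Lemma \ref{existenceoftracebaby} (trivially at least by $\eta^{-1}$ per unit density, via Proposition \ref{lem:proofofexdens00-0}). The same argument with $\lambda=0$ gives $\int\chi*\gamma_\eta\,d\mu_{0,\infty}=\frac1\pi\int\chi(s)\lim_L \widetilde{\rm tr}\,{\rm Im}\,T_{0,L}[s+i\eta]\,ds$, since $H_{0,L}$ has resolvent $R_L$. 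Likewise, for $j\ge1$, Theorem \ref{exinflim} together with the uniform bound of Proposition \ref{exinflim-bound} (with $I\supset\supp\chi$) yields $\int\chi(s)\widetilde{\rm tr}\,{\rm Im}\,T_{j,L}[s+i\eta]\,ds\to\int\chi(s)\widetilde{\rm tr}\,{\rm Im}\,T_{j,\infty}[s+i\eta]\,ds$. Here the $\widehat{\rm tr}\,T_j$ in the statement is read as the infinite-volume coefficient $\widetilde{\rm tr}\,T_{j,\infty}$.

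\textbf{Step 3 (the estimate).} The difference on the left of the claim therefore equals
\begin{align*}
\frac1\pi\int\chi(s)\lim_{L\to\infty}\Big({\bf E}_L\widetilde{\rm tr}\,{\rm Im}(H_{\lambda,L}-s-i\eta)^{-1}-\sum_{j=0}^{n-1}\widetilde{\rm tr}\,{\rm Im}\,T_{j,L}[s+i\eta]\lambda^j\Big)ds .
\end{align*}
For each fixed $s\in\supp\chi\subset(0,\infty)$ and $L\ge L_0$, the bracket is bounded in modulus by $\mathcal E_{n,d,s,B}(\lambda,\eta)$ by Theorem \ref{thm:maintec000}(a). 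This is where $m_1=0$ and $E=s>0$ are used. The bound persists in the limit $L\to\infty$. Taking the supremum over $s\in\supp\chi$ and integrating $|\chi|$ gives $\|\chi\|_1\pi^{-1}\sup_{E\in\supp\chi}\mathcal E_{n,d,E,B}$. This supremum is finite because $K_{n,d,E,B}$ depends continuously on $E$ on the compact set $\supp\chi$.

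The main obstacle is the rigorous bookkeeping of the interchange of the $L\to\infty$ limit with the $s$-integration. Theorem \ref{thm:maintec000}(a) is stated only for $\eta\in(0,1]$. For $\eta>1$, either the corollary is read with $\eta\le1$, or one notes that the same dominated-convergence argument works and the error bound must be supplied from the trivial bounds, which are absorbed by the factor $\langle\eta^{-1}\rangle$.
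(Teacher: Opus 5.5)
Your proof is correct and is essentially the paper's argument. You convert the convolution integrals via \eqref{relgammimres} and \eqref{eqintdensrel-3-f} into $s$-integrals of resolvent traces, pass to $L\to\infty$ with Proposition~\ref{portmanteux}, Theorem~\ref{exinflim} and Proposition~\ref{exinflim-bound}, and apply Theorem~\ref{thm:maintec000}(a) pointwise in $s\in\supp\chi$. The only difference is that the paper bounds the finite-volume difference uniformly in $L\ge L_0$ before taking the limit, which spares your extra dominated-convergence step for the interacting term.

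On the $\eta$-range, your first reading is the right one: the paper's proof also tacitly needs $\eta\in(0,1]$ to invoke Theorem~\ref{thm:maintec000}. Your alternative for $\eta>1$ is not an argument, since $\langle\eta^{-1}\rangle\to1$ as $\eta\to\infty$ and so it cannot absorb anything.
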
 
\begin{proof} Using \eqref{limirofmeasures} of Proposition \ref{portmanteux}  
for the first two terms and dominated convergence by means of     Theorem \ref{exinflim}  and the bound of Proposition  \ref{exinflim-bound}
for the third term
we find  
\begin{align*} 
& \left| \int   \chi    *   \gamma_{\eta}(e)  d  \mu_{\lambda,\infty}(e)   - \int \chi * \gamma_{\eta}(e)         d \mu_{0,\infty}(e)  -  \frac{1}{\pi} \int \chi(e) \sum_{j=1}^{n-1} \widehat{{\rm tr}} {\rm Im}  T_j[ e + i \eta] \lambda^j  de 
\right|  \\
& =\lim_L \left| \int   \chi    *   \gamma_{\eta}(e)  d  \mu_{\lambda,L}(e)   - \int \chi * \gamma_{\eta}(e)         d \mu_{0,L}(e)  -  \frac{1}{\pi} \int \chi(e) \sum_{j=1}^{n-1} \widehat{{\rm tr}} {\rm Im}  T_{j,L}[ e + i \eta]  \lambda^j  de 
\right|  \\
&  \leq  \| \chi \|_1  \frac{1}{\pi} \sup_{E \in {\rm supp}\chi } \mathcal{E}_{n,d,E,B} ( \lambda  ,  \eta  ) ,
\end{align*} 
where in the last line we  used  \eqref{relgammimres}  \eqref{eqintdensrel-3-f}  and Theorem  \ref{thm:maintec000} .
\end{proof} 

Next we want express   the convolution in the previous corollary in terms of the original function, up to some error.
For this we use the identity of the following Lemma. 

\begin{lemma} \label{trivialidentcauchy}  For $f \in L^1(\R)$ and $\epsilon > 0$ 
\begin{align}
(\gamma_{\epsilon } * f)(x) =  \frac{1}{2\pi i}\int_\R  \left( \frac{1}{x-t-i \epsilon} - \frac{1}{x-t+ i \epsilon} 
\right) f(t) d t 
= f(x) +\frac{1}{\pi }\int_\R \frac{ f(x+\epsilon t)-f(x)  }{t^2 +1} \label{identcauchy-0} 
d t
\end{align}
\end{lemma}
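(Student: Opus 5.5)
The identity splits into two equalities, and I would prove each by direct computation.

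\emph{First equality.} This is algebraic. For real $x,t$ and $\epsilon>0$,
\begin{align*}
\frac{1}{x-t-i\epsilon}-\frac{1}{x-t+i\epsilon} = \frac{2i\epsilon}{(x-t)^2+\epsilon^2} .
\end{align*}
Dividing by $2\pi i$ gives $\frac{1}{\pi}\frac{\epsilon}{\epsilon^2+(x-t)^2}=\gamma_{\epsilon}(x-t)$, using the definition \eqref{eq:convfunc-0}. Integrating against $f(t)$ yields $(\gamma_\epsilon * f)(x)$ by the definition of convolution. The integral converges absolutely for every $x$, since $\gamma_\epsilon$ is bounded by $(\pi\epsilon)^{-1}$ and $f\in L^1(\R)$.

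\emph{Second equality.}
\begin{enumerate}[1.]
\item Substitute $t=x+\epsilon s$ (the sign is irrelevant by symmetry of $\gamma_\epsilon$). This gives
\begin{align*}
(\gamma_\epsilon * f)(x)=\frac{1}{\pi}\int_\R \frac{f(x+\epsilon s)}{s^2+1}\,ds ,
\end{align*}
because $\gamma_\epsilon(\epsilon s)\,\epsilon = \gamma_1(s)$.
\item Use the normalization $\frac1\pi\int_\R (s^2+1)^{-1}ds=1$, already noted after \eqref{relgammimres}. This lets us write $f(x)=\frac{1}{\pi}\int_\R \frac{f(x)}{s^2+1}\,ds$.
\item Subtract the expression in step 2 from the one in step 1. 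The result is $f(x)$ plus the stated remainder integral.
\end{enumerate}

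\emph{Main point of care.} The only issue is splitting the integral into two pieces. We need $\int \frac{f(x+\epsilon s)}{s^2+1}ds$ and $\int \frac{f(x)}{s^2+1}ds$ to be separately finite.
\begin{itemize}
\item The second is trivially finite, since $f(x)$ is a constant in $s$.
\item The first is finite for every $x$ by the bound in the first equality.
\end{itemize}
Linearity of the integral then justifies the subtraction.

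If $f$ is only an $L^1$ class rather than a fixed function, the identity is read for a fixed representative and every $x\in\R$. For the later applications, where $f$ is continuous, it holds pointwise.
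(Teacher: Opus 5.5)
Your proposal is correct and follows the paper's proof essentially verbatim: the first equality is the partial-fraction identity (the paper cites \eqref{relgammimres}), and the second is the substitution $t = x+\epsilon s$ together with $\int_\R (s^2+1)^{-1}\,ds = \pi$. Your remark that $\int_\R |f(x+\epsilon s)|(s^2+1)^{-1}\,ds \leq \|f\|_1/\epsilon$ is finite for every $x$, which justifies splitting the integral, is a detail the paper leaves implicit.
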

\begin{proof} The first identity follows from \eqref{relgammimres}  and the definition of the convolution.
Using a change of variables  and  $\int_\R (t^2+1)^{-1} dt = \pi$ we find  
\begin{align*}
\text{ L.H.S. of \eqref{identcauchy-0} }   = \frac{1}{\pi }\int_\R  \frac{\epsilon}{(x-t)^2 +\epsilon^2} f(t)  d t 
= \frac{1}{\pi }\int_\R \frac{1}{t^2 +1}  f(x+\epsilon t) 
d t
 = \text{ R.H.S. of \eqref{identcauchy-0} } 
\end{align*}

\end{proof} 

Now using Lemma \ref{trivialidentcauchy} in Theorem  \ref{thm:maintec000} we find 

\begin{lemma}  \label{corollarygenapprox}  Assume  $m_1 = \E_v v_\gamma = 0$. Let $n \in \N$ and  $\eta  >0 $.
Then  for $\chi \in C(\R) \cap L^1(\R)$ with ${\rm supp} \chi$ a compact subset of $(0,\infty)$  and  $\lambda \geq 0$ 
\begin{align*} 
& \left| \int   \chi(e)  d  \mu_{\lambda,\infty}(e)   - \int ( \chi * \gamma_\eta)        d \mu_{0,\infty}(e)  - \frac{1}{\pi}  \int \chi(e) \sum_{j=1}^{n-1} \widetilde{{\rm tr}} {\rm Im}  T_j[ e + i \eta] \lambda^j de 
\right|  \\
&  \leq   \frac{1}{\pi } \| \chi\|_1 \sup_{E \in {\rm supp}\chi}  \mathcal{E}_{n,d,E,B} ( \lambda  ,  \eta  )  + \left|  \frac{1}{\pi} \int \int_\R \frac{\chi(x+\eta  t)-\chi(x)}{t^2 + 1 } dt d \mu_{\lambda,\infty}(x)    \right| .
\end{align*} 
\end{lemma}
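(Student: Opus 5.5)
The plan is to derive the estimate directly from the preceding Corollary by rewriting its first term, $\int \chi * \gamma_\eta \, d\mu_{\lambda,\infty}$, with the help of Lemma \ref{trivialidentcauchy}. The Corollary already controls
\begin{align*}
\Big| \int \chi * \gamma_\eta \, d\mu_{\lambda,\infty} - \int \chi * \gamma_\eta \, d\mu_{0,\infty} - \tfrac{1}{\pi}\int \chi(e)\sum_{j=1}^{n-1} \widetilde{\rm tr}\,{\rm Im}\, T_j[e+i\eta]\lambda^j \, de \Big|
\end{align*}
by $\frac1\pi \|\chi\|_1 \sup_{E\in{\rm supp}\chi}\mathcal{E}_{n,d,E,B}(\lambda,\eta)$. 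The notation $\widehat{\rm tr}$ there is read as $\widetilde{\rm tr}$, and $T_j$ denotes the infinite volume coefficient $T_{j,\infty}$.

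Since $\chi \in C(\R)\cap L^1(\R)$, Lemma \ref{trivialidentcauchy} applies pointwise and gives, for every $x$,
\begin{align*}
(\chi * \gamma_\eta)(x) = \chi(x) + \frac{1}{\pi}\int_\R \frac{\chi(x+\eta t)-\chi(x)}{t^2+1}\, dt .
\end{align*}
Because $\gamma_\eta$ is even, $\chi*\gamma_\eta = \gamma_\eta * \chi$, so the order of convolution is irrelevant. Integrating this identity against $\mu_{\lambda,\infty}$ splits the first term into $\int \chi \, d\mu_{\lambda,\infty}$ plus the double integral $\frac1\pi \int\int \frac{\chi(x+\eta t)-\chi(x)}{t^2+1}\,dt\, d\mu_{\lambda,\infty}(x)$.

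Substituting this into the Corollary and using the triangle inequality to move the double integral to the right-hand side yields exactly the stated bound. The term $\int (\chi*\gamma_\eta)\, d\mu_{0,\infty}$ stays unchanged.

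The only point needing care is that all integrals against $\mu_{\lambda,\infty}$ are finite, so that the splitting by linearity is legitimate.
\begin{itemize}
\item $\chi$ is continuous with compact support, so it lies in $\mathcal{C}_{\langle\cdot\rangle^2}$. By Lemma \ref{exofdosmeasure}, $\int |\chi| \, d\mu_{\lambda,\infty} < \infty$.
\item $\chi*\gamma_\eta$ is continuous and bounded by $C_\eta\langle x\rangle^{-2}$, because $\chi$ has compact support and $\gamma_\eta(x)\lesssim \langle x\rangle^{-2}$ for fixed $\eta$. Hence it also lies in $\mathcal{C}_{\langle\cdot\rangle^2}$ and is integrable.
\end{itemize}
The remainder function is the difference of these two, so it is integrable as well. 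No Fubini argument is needed, since the inner $t$-integral is simply treated as a function of $x$. I expect no genuine obstacle: the argument is bookkeeping on top of the preceding Corollary.
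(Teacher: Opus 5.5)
Your proposal is correct and is essentially the paper's own argument: both rest on the identity of Lemma \ref{trivialidentcauchy}, combined with the triangle inequality and the expansion estimate of Theorem \ref{thm:maintec000} in the infinite-volume limit. The only difference is packaging. The paper re-derives the content of the preceding corollary inline, using Fubini, the limit $L\to\infty$, and a tacit identification of the $j=0$ term with $\int \chi*\gamma_\eta \, d\mu_{0,\infty}$; you invoke that corollary directly, which is slightly cleaner.
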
 
\begin{proof}
Using  Lemma \ref{trivialidentcauchy}  and then \eqref{limirofmeasures} of Proposition  \ref{portmanteux} 
\begin{align}
\int \chi(e)  d  \mu_{\lambda,\infty}(e) 
& +   \frac{1}{\pi } \int  \int_\R \frac{ \chi(e+\eta t)-\chi(e)  }{t^2 +1}  d  \mu_{\lambda,\infty}(e) \nonumber \\
 &= \frac{1}{2\pi i}\int_\R \int  \left( \frac{1}{e -t-i \eta} - \frac{1}{e-t+ i \eta} d  \mu_{\lambda,\infty}(e) 
\right) \chi(t) d t \nonumber \\
& = \frac{1}{ \pi }\int_\R\lim_{L \to \infty} {\bf E}\widetilde{\rm tr} {\rm Im}  (H_{\lambda,L}- t  - i \eta )^{-1}  \chi(t) dt  \label{someestinfordos} 
\end{align} 
Using \eqref{someestinfordos} and the triangle inequality as well as   dominated convergence by means of     Theorem \ref{exinflim} 
 and the bound of Proposition  \ref{exinflim-bound}
\begin{align*} 
& \left| \int   \chi(e)  d  \mu_{\lambda,\infty}(e)   - \frac{1}{\pi}   \int \chi(e) \sum_{j=0}^{n-1} \widetilde{{\rm tr}} {\rm Im}  T_{j,\infty}[ e + i \eta]  \lambda^j de 
\right|  \\
&  \leq   \left|   \frac{1}{ \pi }\int_\R \lim_{L \to \infty} \left( {\bf E}\widetilde{\rm tr} {\rm Im}  (H_{\lambda,L}- t  - i \eta )^{-1}  -  \sum_{j=0}^{n-1} \widetilde{{\rm tr}} {\rm Im}  T_{j,L}[ t + i \eta]  \lambda^j \right)  \chi(t) dt  
\right|  \\
& + |  \frac{1}{\pi } \int  \int_\R \frac{ \chi(e+\eta t)-\chi(e)  }{t^2 +1}  dt d  \mu_{\lambda,\infty}(e)  | \\
& \leq   \frac{1}{\pi }  \| \chi\|_1 \sup_{E \in {\rm supp}\chi } \mathcal{E}_{n,d,E,B} ( \lambda  ,  \eta  )  +  \left|  \frac{1}{\pi } \int  \int_\R \frac{ \chi(e+\eta t)-\chi(e)  }{t^2 +1}  dt  d  \mu_{\lambda,\infty}(e)  \right| ,
\end{align*} 
where in the  last line we used Theorem  \ref{thm:maintec000}. This shows the claim. 
\end{proof}

\begin{corollary} \label{maincorollary1}   Assume  $m_1 = \E_v v_\gamma = 0$. 
Let $\mathcal{K} \subset (0,\infty)$ be a compact set. Let  $n \in \N$,  $s \in [0,1)$, and  $\eta \in (0,1]$.
Then  for  any  interval $I \subset \mathcal{K}$ and  $\lambda \geq 0$ 
\begin{align} 
& \left|  \mu_{\lambda,\infty}(I)   -  \mu_{0,\infty}(I)  - \frac{1}{\pi}  \int_I \sum_{j=1}^{n-1} \widetilde{{\rm tr}} {\rm Im}  T_j[ e + i \eta] \lambda^j de 
\right| \nonumber  \\
&  \leq  \sup_{E \in \mathcal{K}, \eta \in (0,1]}  \left| \sum_{j=1}^{n-1} \widetilde{{\rm tr}} {\rm Im}  T_j[ E + i \eta] \lambda^j  \right|  4 \delta
+   \frac{1}{\pi }  ( |I| + \delta )  \sup_{E \in \mathcal{K}} \mathcal{E}_{n,d,E,B} ( \lambda  ,  \eta  )   \nonumber \\
& +\mu_{0,\infty}([\inf I -\delta,\inf I+\delta] \cup [\sup I -\delta , \sup I  + \delta]) \nonumber \\
& + \eta C _2(\lambda)(|I|+\delta) + C_{1,s}(\lambda)  (\eta/\delta)^s + (\eta/\delta)^s C_{1,s}(0)  + \eta C_{2}(0) 2 \delta 
 \label{someboundondos} 
\end{align} 
 for  all $\delta  > 0$ such that $[\inf I - \delta , \sup I + \delta]  \subset  \mathcal{K}$, where we defined 
\begin{align}
&   C_{1,s}(\lambda) :=    \frac{ \mu_{\lambda,\infty}(-R,R)     }{\pi }  2^{1-s}\int_\R \frac{|t|^s}{t^2+1} dt \label{constsome1}  \\
& C_2(\lambda) :=    \frac{4}{\pi }   \int_{|x| \geq R}  x^{-2} d  \mu_{\lambda,\infty}(x) . \label{constsome2} 
\end{align} 
for some $R$ such that $\mathcal{K} \subset (-R/2,R/2)$. 
\end{corollary}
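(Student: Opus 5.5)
The plan is to apply Lemma \ref{corollarygenapprox} not to the indicator $1_I$ but to continuous trapezoidal cutoffs sandwiching it, and then to estimate the resulting error terms one by one. Let $I$ have endpoints $a<b$. Define $\chi_+\in C_c(\R)$ with $\chi_+=1$ on $[a,b]$, vanishing outside $[a-\delta,b+\delta]$, and linear in between. Define $\chi_-$ with $\chi_-=1$ on $[a+\delta,b-\delta]$ and support in $[a,b]$; if $|I|<2\delta$, take $\chi_-=0$. Then $0\le\chi_-\le 1_I\le\chi_+$, both supports lie in $\mathcal K$, and $\|\chi_\pm\|_1\le |I|+\delta$. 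Since $\mu_{\lambda,\infty}$ is positive,
\[
\int\chi_-\,d\mu_{\lambda,\infty}\le\mu_{\lambda,\infty}(I)\le\int\chi_+\,d\mu_{\lambda,\infty}.
\]
So it suffices to bound $\bigl|\int\chi_\pm d\mu_{\lambda,\infty}-\mu_{0,\infty}(I)-\frac1\pi\int_I\sum_{j\ge1}\cdots\bigr|$ for both signs.

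For each $\chi=\chi_\pm$, Lemma \ref{corollarygenapprox} gives
\[
\int\chi\,d\mu_{\lambda,\infty}\approx\int\chi*\gamma_\eta\,d\mu_{0,\infty}+\frac1\pi\int\chi(e)\sum_{j=1}^{n-1}\widetilde{\rm tr}\,{\rm Im}\,T_j[e+i\eta]\lambda^j\,de,
\]
where we use that the $j=0$ term is $\frac1\pi\int\chi\,\widetilde{\rm tr}\,{\rm Im}\,T_0=\int\chi*\gamma_\eta\,d\mu_{0,\infty}$. The error is $\frac1\pi\|\chi\|_1\sup\mathcal E+|\frac1\pi\iint(\chi(x+\eta t)-\chi(x))(t^2+1)^{-1}\,dt\,d\mu_{\lambda,\infty}(x)|$. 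This already produces the $\frac1\pi(|I|+\delta)\sup\mathcal E$ term. Three replacements then remain.

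\emph{(i) Replacing $\chi$ by $1_I$ in the $T_j$-integral.} Since $\chi-1_I$ is supported in two strips of total length $\le 4\delta$ and bounded by $1$, this costs at most $4\delta\sup|\sum_j\widetilde{\rm tr}\,{\rm Im}\,T_j\lambda^j|$.

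\emph{(ii) Replacing $\int\chi*\gamma_\eta\,d\mu_{0,\infty}$ by $\mu_{0,\infty}(I)$.} Write $\chi*\gamma_\eta=\chi+\frac1\pi\int(\chi(\cdot+\eta t)-\chi)(t^2+1)^{-1}dt$ by Lemma \ref{trivialidentcauchy}. Then $|\int\chi\,d\mu_{0,\infty}-\mu_{0,\infty}(I)|$ is bounded by $\mu_{0,\infty}$ of the two $\delta$-neighbourhoods of the endpoints. The remaining difference term is the same kind of object as in the $\lambda$-error, now with $\lambda=0$, and it produces $C_{1,s}(0)(\eta/\delta)^s+2\delta\eta C_2(0)$.

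\emph{(iii) The difference term for general $\lambda$.} This is the main technical step. Let $D(x):=\frac1\pi\int(\chi(x+\eta t)-\chi(x))(t^2+1)^{-1}dt$, and split $x$ into $|x|<R$ and $|x|\ge R$.

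For $|x|<R$: $\chi$ is Lipschitz with constant $1/\delta$ and bounded by $1$, so $|\chi(x+\eta t)-\chi(x)|\le\min(1,\eta|t|/\delta)\le 2^{1-s}(\eta|t|/\delta)^s$ (interpolation, with a harmless constant). Integrating against $d\mu$ on $(-R,R)$ gives exactly $C_{1,s}(\lambda)(\eta/\delta)^s$.

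For $|x|\ge R$: $\chi(x)=0$, and $\chi(x+\eta t)\neq0$ forces $x+\eta t\in\mathcal K\subset(-R/2,R/2)$, hence $|\eta t|\ge|x|/2$. Therefore $|D(x)|\le\frac1\pi\int_{|t|\ge|x|/(2\eta)}\chi(x+\eta t)t^{-2}\,dt\le\frac{4\eta}{\pi x^2}\int\chi(u)\,du$, by substituting $u=x+\eta t$ and using $t^{-2}=\eta^2/(u-x)^2\le 4\eta^2/x^2$. Integrating over $|x|\ge R$ yields $\eta C_2(\lambda)\|\chi\|_1\le\eta C_2(\lambda)(|I|+\delta)$. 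For $\lambda=0$ the same computation is used, with the stated cruder factor $2\delta$, or one just absorbs it.

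Finally, combine the upper and lower sandwich bounds for both $\chi_\pm$ and take the worse of the two, which gives \eqref{someboundondos}. The main care point is the tail estimate in (iii), where the geometry $\mathcal K\subset(-R/2,R/2)$ is used, together with checking that the constants match the stated ones.
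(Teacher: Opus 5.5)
Your architecture matches the paper's. You sandwich $1_I$ between the trapezoids $\chi_\pm$, apply Lemma \ref{corollarygenapprox} to each, split the difference term at $|x|=R$, and use interpolation inside and $\mathcal K\subset(-R/2,R/2)$ in the tail. Steps (i) and (iii) are correct and give the terms $4\delta\sup|\cdot|$, $\frac1\pi(|I|+\delta)\sup\mathcal E$, $C_{1,s}(\lambda)(\eta/\delta)^s$ and $\eta C_2(\lambda)(|I|+\delta)$.

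\textbf{The gap is in step (ii).} You expand $\int\chi_\pm*\gamma_\eta\,d\mu_{0,\infty}$ with Lemma \ref{trivialidentcauchy} applied to $\chi_\pm$ itself. Your own tail estimate then gives $\eta\,C_2(0)\,\|\chi_\pm\|_1\le\eta\,C_2(0)(|I|+\delta)$, not $2\delta\,\eta\,C_2(0)$.
\begin{itemize}
\item The factor $2\delta$ is not ``cruder'' than $|I|+\delta$; it is smaller once $|I|>\delta$.
\item The term cannot be absorbed into $\eta C_2(\lambda)(|I|+\delta)$, because $C_2(0)$ and $C_2(\lambda)$ are integrals against different measures.
\item Nor is this a lossy estimate you can tighten. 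For $|x|\ge R$ one has $\chi_+*\gamma_\eta(x)\ge c\,\eta\,\|\chi_+\|_1\,x^{-2}$ with some $c>0$ depending on $R$. So for a general positive measure the tail part of $\int\chi_+*\gamma_\eta\,d\mu-\int\chi_+\,d\mu$ really is of order $\eta\|\chi_+\|_1\int_{|x|\ge R}x^{-2}d\mu$.
\end{itemize}
As written, your argument proves \eqref{someboundondos} only with $2\delta\,\eta C_2(0)$ replaced by $(|I|+\delta)\,\eta C_2(0)$.

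The paper gets the factor $2\delta$ through different bookkeeping. Using $\chi_-\le 1_I\le\chi_+$ and positivity of $\gamma_\eta$, it bounds $|\int(1_I-\chi_\pm)*\gamma_\eta\,d\mu_{0,\infty}|$ by $\int(\chi_+-\chi_-)*\gamma_\eta\,d\mu_{0,\infty}$. It then applies Lemmas \ref{trivialidentcauchy} and \ref{diffestforint} to $\chi_+-\chi_-$. This function is still bounded by $1$ and $\delta^{-1}$-Lipschitz, but has $\|\chi_+-\chi_-\|_1\le 2\delta$ (with the paper's tent-shaped $\chi_-$).

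Be aware, though, that this compares $\int\chi_\pm*\gamma_\eta\,d\mu_{0,\infty}$ with $\int 1_I*\gamma_\eta\,d\mu_{0,\infty}$, not with $\mu_{0,\infty}(I)$. The remaining difference $\int 1_I*\gamma_\eta\,d\mu_{0,\infty}-\mu_{0,\infty}(I)$ is not estimated there, and its tail is again of order $\eta|I|\int_{|x|\ge R}x^{-2}d\mu_{0,\infty}$.

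So the two routes trade off. Yours reaches $\mu_{0,\infty}(I)$ honestly but pays $(|I|+\delta)$; the paper's gets $2\delta$ but lands on the smoothed quantity. You have two ways to close the proof:
\begin{itemize}
\item State the last term as $\eta C_2(0)(|I|+\delta)$. This is harmless for the subsequent application, where $\eta=\lambda^{2-\epsilon}$ and $|I|$ is bounded.
\item Otherwise, supply an additional estimate for $\int 1_I*\gamma_\eta\,d\mu_{0,\infty}-\mu_{0,\infty}(I)$.
\end{itemize}
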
 
Note that 
the constants in  \eqref{constsome1}  \eqref{constsome2}  can be estimated by means of  Proposition \ref{lem:proofofexdens00-0}.
\begin{proof} We set  $a = \inf I$ and $b = \sup I$.
Take a continuous  cutoff function $\chi_+ \geq 1_I $. Then we find from  Lemma  \ref{corollarygenapprox} 
\begin{align*} 
&   \int  1_I(e)  d  \mu_{\lambda,\infty}(e)   - \int ( \chi_+ * \gamma_\eta)        d \mu_{0,\infty}(e)  - \frac{1}{\pi}  \int \chi_+(e) \sum_{j=1}^{n-1} \widetilde{{\rm tr}} {\rm Im} T_{j,\infty}[ e + i \eta]\lambda^j   de   \\
&  \leq  \int  \chi_+(e)  d  \mu_{\lambda,\infty}(e)   - \int ( \chi_+ * \gamma_\eta)        d \mu_{0,\infty}(e)  - \frac{1}{\pi}  \int \chi_+(e) \sum_{j=1}^{n-1} \widetilde{{\rm tr}} {\rm Im}  T_{j,\infty}[ e + i \eta] \lambda^j  de   \\
&  \leq  \frac{1}{\pi}\| \chi_+\|_1 \sup_{E \in \mathcal{K}} \mathcal{E}_{n,d,E,B} ( \lambda  ,  \eta  )  + \left|  \frac{1}{\pi} \int \int_\R \frac{\chi_+(x+\eta  t)-\chi_+(x)}{t^2 + 1 } dt d \mu_{\lambda,\infty}(x)    \right| 
\end{align*} 
For a bound in the opposite direction choose a continuous $\chi_-$ with  $0 \leq \chi_- \leq 1_I$. Then an analogous estimate 
 using Lemma  \ref{corollarygenapprox} 
\begin{align*} 
&   \int  1_I(e)  d  \mu_{\lambda,\infty}(e)   - \int ( \chi_- * \gamma_\eta)        d \mu_{0,\infty}(e)  - \frac{1}{\pi}  \int \chi_-(e) \sum_{j=1}^{n-1} \widetilde{{\rm tr}}  {\rm Im}  T_{j,\infty}[ e + i \eta] \lambda^j  de   \\
&  \geq  \int  \chi_-(e)  d  \mu_{\lambda,\infty}(e)   - \int ( \chi_- * \gamma_\eta)        d \mu_{0,\infty}(e)  - \frac{1}{\pi}  \int \chi_-(e) \sum_{j=1}^{n-1} \widetilde{{\rm tr}} {\rm Im}  T_{j,\infty}[ e + i \eta]\lambda^j   de   \\
&  \geq - ( \frac{1}{\pi} \| \chi_-\|_1 \sup_{E \in \mathcal{K}} \mathcal{E}_{n,d,E,B} ( \lambda  ,  \eta  )  + |  \frac{1}{\pi} \int \int_\R \frac{\chi_-(x+\eta  t)-\chi_-(x)}{t^2 + 1 } dt d \mu_{\lambda,\infty}(x)    |  ) 
\end{align*} 
Now we choose  
$ \chi_-(x) = [\min\{ \delta^{-1}(x-  a ) , 1 , \delta^{-1} (b -x)  \} ]_+$
and $\chi_+(x) =   [\min\{ \delta^{-1}(x-  a + \delta ) , 1 , \delta^{-1} (b +\delta  -x)  \} ]_+$.  
That is $\chi_+, \chi_- \in C(\R)$ are piecewise linear  functions,  absolutely continuous, and satisfy
 $0 \leq \chi_\pm \leq 1$, 
   $\chi'_\pm = O( \delta^{-1})$. Moreover  $\chi_+$  has support within a $\delta$ distance of $I$, and 
 $\chi_-$ is  one on  the interval $(\inf I + \delta, \sup I - \delta)$. 
Thus, by construction and the assumptions  $\supp \chi_\pm \subset \mathcal{K}$.  
Then we find from Lemma \ref{diffestforint}   that 
\begin{align*}
 &\left|  \int \int_\R \frac{\chi_\pm(x+\eta  t)-\chi_\pm(x)}{t^2 + 1 } dt d \mu_{\lambda,\infty}(x)    \right|   \\
 & 
 \leq \left|  
 \int_{|x| <  R}  \int_\R \frac{\chi_\pm(x+\eta  t)-\chi_\pm(x)}{t^2 + 1 } dt d \mu_{\lambda,\infty}(x)    \right|  
 + \left|  
 \int_{|x| \geq   R}  \int_\R \frac{\chi_\pm(x+\eta  t)-\chi_\pm(x)}{t^2 + 1 } dt d \mu_{\lambda,\infty}(x)    \right|  
  \\
 & \leq  \left|  
 \int_{|x| <  R}  \int_\R \frac{ (\eta |t| \|\chi_\pm'\|_\infty)^s (2 \| \chi_\pm\|_\infty)^{1-s}}{t^2 + 1 } dt d \mu_{\lambda,\infty}(x)    \right| + \left| \int_{|x| \geq   R}  \int_\R \frac{\chi_\pm(s )}{(s-x)^2 + \eta^2 } \eta ds d \mu_{\lambda,\infty}(x) \right| \\
&  \leq ( \eta/ \delta)^s 2^{1-s} \mu_{\lambda,\infty}(-R,R) \int_\R |t|^s(t^2+1)^{-1} dt +  4 \eta  \| \chi_\pm \|_1
\int 1_{|x| \geq R}  x^{-2} d  \mu_{\lambda,\infty}(x) 
\end{align*} 
where we used that for $x \geq R$ and $s \in \supp \chi_\pm \subset  (- R/2,R/2)$ 
we have $(s-x)^2 \geq x^2 + s^2 - 2 s x \geq \frac{1}{2}{x^2}  - s^2 \geq \frac{1}{4} x^2 + \frac{1}{4} R^2 - R^2/4 =  \frac{1}{4} x^2$. 
Now to estimate the difference we use again Lemma  \ref{trivialidentcauchy} as well as Lemma   \ref{diffestforint} 
and an analogous argument (where we first use monotonicity) gives 
\begin{align*} 
&   |  \int ( 1_I  - \chi_\pm )  * \gamma_\eta)        d \mu_{0,\infty}(e)  |  \leq |  \int ( \chi_+  - \chi_- )  * \gamma_\eta)        d \mu_{0,\infty}(e)  |  \\
& =  |  \int( \chi_+ - \chi_- )(x)  d \mu_{0,\infty}(e)  
  +\frac{1}{\pi } \int \int_\R \frac{( \chi_+ - \chi_- )  (x+\eta t)- ( \chi_+ - \chi_- ) (x)  }{t^2 +1}  dt d \mu_{0,\infty}(x)   | \\
  & \leq  \mu_{0,\infty}([a-\delta,a+\delta] \cup [b-\delta , b + \delta]) \\
& \quad + \frac{1}{\pi}\left[ ( \eta / \delta)^s 2^{1-s} \mu_{0,\infty}(-R,R) \int_\R |t|^s(t^2+1)^{-1} dt +  4 \eta  \| \chi_+-\chi_- \|_1
\int_{|x| \geq R}  x^{-2} d  \mu_{0,\infty}(x) \right] . 
\end{align*}
Moreover, by  Theorem \ref{thm:boundaryvelueexpcoeff}
\begin{align*}
& \int  \left| ( 1_I - \chi_\pm)(e) \sum_{j=1}^{n-1} \widetilde{{\rm tr}} {\rm Im}  T_j[ e + i \eta] \lambda^j  \right|  de  
 \leq 
\int_{[a -\delta, a +\delta] \cup [ b -\delta, b +\delta]}   \left| \sum_{j=1}^{n-1} \widetilde{{\rm tr}} {\rm Im}  T_j[ e + i \eta]\lambda^j   \right|  de  \\
&\leq    \sup_{E \in \mathcal{K}, \eta \in (0,1]}  \left| \sum_{j=1}^{n-1} \widetilde{{\rm tr}} {\rm Im}  T_j[ e + i \eta] \lambda^j  \right|  4 \delta 
\end{align*} 
Now \eqref{someboundondos}  follows by collecting  inequalities and a  simple application the triangle inequality. 
\end{proof}

\begin{lemma} \label{diffestforint}  Let $f \in C(\R)$  be a absolutely continuous with $\| f' \|_\infty < \infty$. Then for any $s  \in [0,1]$, 
$\eta \geq  0$, and $x \in \R$ 
\begin{align} 
 & | f(x+\eta  t)-f(x)| \leq ( 2 \| f \|_\infty)^{1-s} ( \eta |t| \|f' \|_\infty)^{s}   . 
\end{align} 
\end{lemma}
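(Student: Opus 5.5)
The plan is to interpolate between two elementary bounds. Both hold for every $x \in \R$, $t \in \R$ and $\eta \geq 0$.

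First, the trivial bound from the triangle inequality:
\begin{align*}
|f(x+\eta t) - f(x)| \leq 2 \| f\|_\infty .
\end{align*}

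Second, the Lipschitz bound. Since $f$ is absolutely continuous, the fundamental theorem of calculus for absolutely continuous functions gives
\begin{align*}
f(x+\eta t) - f(x) = \int_x^{x+\eta t} f'(u)\, du ,
\end{align*}
and hence
\begin{align*}
|f(x+\eta t) - f(x)| \leq \eta |t| \, \| f'\|_\infty .
\end{align*}
This holds regardless of the sign of $t$, and trivially when $\eta t = 0$.

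Next I would combine the two bounds. Write $D := |f(x+\eta t)-f(x)|$, $A := 2\|f\|_\infty$ and $B := \eta|t| \|f'\|_\infty$. Then $D \leq \min\{A,B\}$, and for $s \in [0,1]$ we have $\min\{A,B\} \leq A^{1-s} B^s$. To see this, note that both $A^{1-s}$ and $B^s$ bound the corresponding powers of the minimum, so
\begin{align*}
D = D^{1-s} D^{s} \leq A^{1-s} B^{s} .
\end{align*}
The endpoint cases $s=0$ and $s=1$ are just the two individual bounds, with the convention $0^0 = 1$. The case $D=0$ is trivial.

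There is no real obstacle in this argument. The only points needing care are the use of absolute continuity, which justifies the integral representation (and ensures $f'$ exists a.e.), and the interpretation of $0^0$ at the endpoint values of $s$.
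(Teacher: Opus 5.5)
Your proposal is correct and follows the same route as the paper: the paper also combines the Lipschitz bound $\eta|t|\,\|f'\|_\infty$ (from absolute continuity) with the trivial bound $2\|f\|_\infty$ and interpolates. Your explicit justification of the interpolation step, $D = D^{1-s}D^{s} \leq A^{1-s}B^{s}$, is exactly the detail the paper leaves implicit.
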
 
\begin{proof} Using that $f$ is absolutely continuous we find 
\begin{align} 
 &   | f(x+\eta  t)-f(x)| \leq     \int_0^{|t|\eta} | f'(x+ \xi) |d\xi   \leq  \eta | t |  \|f' \|_\infty 
\end{align} 
Furthermore, we have the trivial bound  $| f(x+\eta  t)-f(x)| \leq 2 \| f \|_\infty$.
Interpolation of these two bounds now yields the claim.
\end{proof} 

\begin{remark} {\rm 
As an application of Corollary  \ref{maincorollary1} one can obtain various bounds. 
For instance, let  $\lambda_0 > 0$ and let $\mathcal{K}$ 
be a compact subset of $(0,\infty)$. Then for any $\epsilon \in(0,1]$ 
 there exists a  constant $C$ 
such that  for all $I \subset \mathcal{K}$ 
and $\lambda \in [0,\lambda_0]$ 
\begin{align} \label{hoelderupperlowerbound} 
& \left|  \mu_{\lambda,\infty}(I)   -  \mu_{0,\infty}(I) 
\right|   \leq  C \lambda^{1-\epsilon}   . 
\end{align} 
To derive   \eqref{hoelderupperlowerbound} we choose the constants in Corollary  \ref{maincorollary1}
as follows:  $\delta = c_\delta \lambda^{1-\epsilon/2}$, $\eta = \lambda^{2-\epsilon}$ $s=1-\epsilon/2$,
and $n \in \N$ with $n > 10/\epsilon$. We choose $c_\delta > 0$ sufficiently small to ensure that 
$\delta \leq \inf \mathcal{K}/2$ for all $\lambda \in [0,\lambda_0]$, so that 
$[\inf \mathcal{K} - \delta , \sup \mathcal{K} + \delta ]$ is a compact subset of $(0,\infty)$. 
Moreover, we used that $\mu_{0,\infty}$ the  density of states measure of the free Laplacian 
has a density which is uniformly bounded on compact subsets of $(0,\infty)$, \cite[Page 32]{Stollmann.2001}.
We note that \eqref{hoelderupperlowerbound}  provides an upper and lower bound 
on the integrated density of states in terms of the density of states of the free Laplacian.
 }
\end{remark}

	If one is   merely interested in an upper bound one can proceed as follows. 
	
\begin{lemma} \label{anarctanspecbound} 
	Let $I=[a,b]$ and $\eta>0$. Then we have 
\begin{align} \label{ineqwitharctan1}
	\langle \phi, 1_I(H) \phi\rangle  \,\arctan(|I|/\eta)\le \int_I  \mathrm{Im}\langle \phi, (H-E-i\eta)^{-1} \phi \rangle dE
\end{align}
for any self-adjoint operator $H$ in a Hilbert space $\mathcal{H}$ and  $\phi \in \mathcal{H}$.
\end{lemma}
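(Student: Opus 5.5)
Let $\mu_\phi$ denote the spectral measure of $H$ associated with $\phi$, so that $\langle \phi, f(H)\phi\rangle = \int_\R f(x)\, d\mu_\phi(x)$ for bounded Borel $f$. By \eqref{relgammimres},
\[
\mathrm{Im}\langle \phi, (H-E-i\eta)^{-1}\phi\rangle = \int_\R \frac{\eta}{(x-E)^2+\eta^2}\, d\mu_\phi(x).
\]
The integrand is nonnegative, so Tonelli's theorem lets us swap the $E$-integral over $I$ with the $x$-integral. This gives
\[
\int_I \mathrm{Im}\langle \phi, (H-E-i\eta)^{-1}\phi\rangle\, dE = \int_\R g(x)\, d\mu_\phi(x), \qquad g(x) := \int_a^b \frac{\eta\, dE}{(x-E)^2+\eta^2}.
\]
The inner integral is elementary:
\[
g(x) = \arctan\!\Big(\frac{b-x}{\eta}\Big) + \arctan\!\Big(\frac{x-a}{\eta}\Big).
\]

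The key pointwise bound is $g(x) \ge \arctan(|I|/\eta)$ for $x \in I$. Write $u=(b-x)/\eta\ge 0$ and $v=(x-a)/\eta\ge 0$, so $u+v=|I|/\eta$. Since $\arctan$ is concave on $[0,\infty)$ with $\arctan 0 = 0$, it is subadditive there:
\[
\arctan u+\arctan v \ge \arctan(u+v).
\]
(Alternatively, $t\mapsto \arctan t+\arctan(c-t)$ is concave on $[0,c]$, so its minimum is at an endpoint, where it equals $\arctan c$.) Outside $I$ we still have $g \ge 0$, because the integrand is positive.

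Combining the two bounds,
\[
\int_\R g\, d\mu_\phi \ge \int_I g\, d\mu_\phi \ge \arctan(|I|/\eta)\,\mu_\phi(I) = \arctan(|I|/\eta)\,\langle \phi, 1_I(H)\phi\rangle,
\]
which is \eqref{ineqwitharctan1}. The only step needing care is the pointwise lower bound on $I$ via subadditivity of $\arctan$; everything else is the spectral theorem plus Tonelli, justified by positivity.
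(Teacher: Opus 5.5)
Your proof is correct and follows essentially the same route as the paper. Both integrate the Lorentzian against the spectral measure of $\phi$ to obtain $\arctan((b-x)/\eta)-\arctan((a-x)/\eta)$, and both bound this from below by $1_I(x)\arctan(|I|/\eta)$ using concavity of $\arctan$. Your subadditivity argument is equivalent to the paper's endpoint-minimum argument for $x\mapsto\arctan x-\arctan(x-t)$, which you also give as an alternative. Your Tonelli step simply makes explicit the interchange of integrals that the paper attributes to spectral calculus.
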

\begin{proof}
First note that by spectral calculus
\begin{align} \label{speccalc33356}
	 \int_I  \mathrm{Im}\langle \phi, (H-E-i\eta)^{-1} \phi \rangle dE = \left\langle \phi, \arctan\left(\frac{b-H}{\eta}\right) -\arctan\left(\frac{a-H}{\eta}\right) \phi\right\rangle
\end{align}
Now we use the following inequality for $\lambda \in \R$ 
\begin{align} \label{ineqarctan} 
	 \arctan\left(\frac{b-\lambda}{\eta}\right) -\arctan\left(\frac{a-\lambda}{\eta}\right) \ge  1_I(\lambda) \arctan\left(\frac{|I|}{\eta}\right)
\end{align}
To see this, we first observe that by monotonicity of $\arctan$ the left hand side is always non-negative.
So the inequality 
holds  for   $\lambda \notin [a,b]$.  On the other hand  let  $\lambda \in [a,b]$.  
Define for a fixed $t>0$ and $x\in [0,t]$ 
\begin{align*}
	f_t(x) =\arctan(x) -\arctan(x-t) . 
\end{align*}
The  function $x \mapsto f_t(x)$ is continuous and strictly concave on $(0,t)$. Thus the  minimum is attained on the boundary and 
\begin{align*}
	\min_{x\in[0,t]}f_t(x) = f_t(0) = f_t(t) =\arctan{t}.
\end{align*}
Hence for all $x\in [0,t]$
\begin{align} \label{ineqofarctan-2} 
	\arctan(x) -\arctan(x-t) \ge \arctan(t) .
\end{align}
Now inserting  $t= |I|/\eta$ and $x= (b-\lambda)/\eta$ into  \eqref{ineqofarctan-2}  shows \eqref{ineqarctan}
 in the case  $\lambda \in I$. 
 Inequality \eqref{ineqwitharctan1} now follows by means of the spectral theorem
  from \eqref{speccalc33356} and \eqref{ineqarctan}. 
\end{proof}
\begin{corollary}\label{newbound-3}
	Assume $m_1=\mathbf{E}_v v_\gamma=0$. Let $n\in\mathbb{N}$. Let $\eta>0$. Then for a compact interval $[a,b]\subset (0,\infty)$ we find for $\lambda\ge0$
\begin{align}
	\mu_{\lambda,\infty}([a,b]) \le \frac{1}{\arctan((b-a)/\eta)}\int_{a}^{b} \left[\left|\sum_{j=0}^{n-1} \widetilde{\mathrm{tr}}\mathrm{Im} T_{j,\infty}[E+i\eta]\lambda^j\right| +\mathcal{E}_{n,d,E,B}(\lambda,\eta)\right] dE
\end{align}
\end{corollary}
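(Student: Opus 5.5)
The plan is to apply Lemma~\ref{anarctanspecbound} at finite volume, average over the randomness, pass to the infinite volume limit, and only then compare with the expansion of Theorem~\ref{thm:maintec000}(a).

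\emph{Finite volume.} Fix $L$ and $\omega$, and let $\varphi_j$ be an orthonormal basis of $L^2(\Lambda_L)$. Apply Lemma~\ref{anarctanspecbound} with $H=H_{\lambda,L,\omega}$ and $\phi=\varphi_j$, sum over $j$, and divide by $|\Lambda_L|$. All terms are non-negative, so we may exchange the sum over $j$ with the $E$-integral and with ${\bf E}_L$ by Tonelli. This gives
\[
\arctan((b-a)/\eta)\,\mu_{\lambda,L}([a,b]) \le \int_a^b {\bf E}_L \widetilde{\rm tr}\,{\rm Im}(H_{\lambda,L}-E-i\eta)^{-1}\,dE ,
\]
where the left side uses Lemma~\ref{existofdosint-0} with $f=1_{[a,b]}\in\mathcal{M}_{\langle\cdot\rangle^2}$. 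The integrand on the right is finite and bounded uniformly in $L$ and in $E\in[a,b]$, by Lemma~\ref{existenceoftracebaby}.

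\emph{Infinite volume.} On the left, Proposition~\ref{portmanteux} applied to the closed bounded set $C=[a,b]$ gives $\mu_{\lambda,\infty}([a,b])\le\limsup_L\mu_{\lambda,L}([a,b])$. On the right, the assumed limit \eqref{defoflimit-00} holds pointwise in $E$, and the uniform bound lets us use dominated convergence. Together these yield
\[
\arctan((b-a)/\eta)\,\mu_{\lambda,\infty}([a,b]) \le \int_a^b \mathcal{R}(\lambda;E+i\eta)\,dE .
\]
Since $\mathcal{R}(\lambda;E+i\eta)\ge 0$, we may replace it by its absolute value.

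\emph{Comparison with the expansion.} For each fixed $E\in[a,b]$ and $L\ge L_0$, use the triangle inequality together with Theorem~\ref{thm:maintec000}(a) (the sign convention $E+i\eta$ matching the $\mp/\pm$ there):
\[
{\bf E}_L\widetilde{\rm tr}\,{\rm Im}(H_{\lambda,L}-E-i\eta)^{-1} \le \Bigl|\sum_{j=0}^{n-1}\widetilde{\rm tr}\,{\rm Im}\,T_{j,L}[E+i\eta]\lambda^j\Bigr| + \mathcal{E}_{n,d,E,B}(\lambda,\eta).
\]
Now let $L\to\infty$. The left side tends to $\mathcal{R}(\lambda;E+i\eta)$ by \eqref{defoflimit-00}, and the finite sum converges to the one with $T_{j,\infty}$ by Theorem~\ref{exinflim}. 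Integrating over $[a,b]$ and dividing by $\arctan((b-a)/\eta)>0$ gives the claim.

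\emph{Main obstacle.} The delicate step is the limit interchange. It is safest to pass $L\to\infty$ pointwise in $E$ first, as in the previous step, and then integrate; this avoids needing uniform-in-$E$ control of the convergence. Measurability of the integrand in $E$ is ensured by continuity of $E\mapsto T_{j,\infty}[E+i\eta]$, which holds for fixed $\eta>0$. The bound of Theorem~\ref{thm:maintec000}(a) is stated for $\eta\in(0,1]$, so for larger $\eta$ one either restricts to $\eta\le1$ or notes that the constant extends. Integrability of $E\mapsto\mathcal{E}_{n,d,E,B}$ follows because $K_{n,d,E,B}$ depends continuously on $E$ in compact subsets of $\R_>$.
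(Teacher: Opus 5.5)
Your skeleton matches the paper's: Lemma~\ref{anarctanspecbound} at finite volume with Tonelli, then Theorem~\ref{thm:maintec000}(a) with the triangle inequality, then $L\to\infty$ via Theorem~\ref{exinflim}. Comparing with the expansion only after passing to $\mathcal{R}(\lambda;E+i\eta)$ pointwise in $E$ is a harmless reordering. The step that fails is the left-hand side of your infinite-volume step. For a closed set, Proposition~\ref{portmanteux} states $\limsup_{L}\mu_{\lambda,L}([a,b])\le\mu_{\lambda,\infty}([a,b])$. You use the opposite inequality, $\mu_{\lambda,\infty}([a,b])\le\limsup_{L}\mu_{\lambda,L}([a,b])$, and that does not follow from weak convergence: mass lying just outside $[a,b]$ at finite $L$ can land on an endpoint in the limit (e.g.\ $\delta_{b+1/L}\to\delta_b$). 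So as written you have bounded $\limsup_L\mu_{\lambda,L}([a,b])$, not $\mu_{\lambda,\infty}([a,b])$. The paper's proof also simply cites Proposition~\ref{portmanteux} at this point, so the same care is needed there.

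The cleanest repair bypasses Portmanteau altogether. The function $s\mapsto{\rm Im}(s-E-i\eta)^{-1}=\eta/((s-E)^2+\eta^2)$ belongs to $\mathcal{C}_{\langle\cdot\rangle^2}$. Hence Lemma~\ref{exofdosmeasure} and \eqref{defoflimit-00} give $\mathcal{R}(\lambda;E+i\eta)=\int_\R{\rm Im}(s-E-i\eta)^{-1}\,d\mu_{\lambda,\infty}(s)$. Tonelli together with the pointwise inequality \eqref{ineqarctan} then yields
\begin{align*}
\int_a^b\mathcal{R}(\lambda;E+i\eta)\,dE=\int_\R\Bigl[\arctan\frac{b-s}{\eta}-\arctan\frac{a-s}{\eta}\Bigr]\,d\mu_{\lambda,\infty}(s)\ge\arctan\Bigl(\frac{b-a}{\eta}\Bigr)\,\mu_{\lambda,\infty}([a,b]).
\end{align*}
This is exactly the infinite-volume inequality you wanted, with no finite-volume step on the left.

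Alternatively, keep your route but use the open-set half of Proposition~\ref{portmanteux}: $\mu_{\lambda,\infty}([a,b])\le\mu_{\lambda,\infty}((a-\epsilon,b+\epsilon))\le\liminf_L\mu_{\lambda,L}([a-\epsilon,b+\epsilon])$. Run your argument on $[a-\epsilon,b+\epsilon]\subset(0,\infty)$ and let $\epsilon\downarrow0$. This uses continuity of $\arctan$ and local integrability of the integrand near $[a,b]$, which follows from continuity in $E$ of $\widetilde{\rm tr}\,{\rm Im}\,T_{j,\infty}[E+i\eta]$ and of $K_{n,d,E,B}$.

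After either fix, your pointwise comparison with the expansion goes through as written, modulo the restriction $\eta\in(0,1]$ you already flagged.
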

\begin{proof}
 Apply  Lemma  \ref{anarctanspecbound}  to the averaged trace and use Fubini to interchange the integral with expectation
\begin{align}
	|\Lambda_L|^{-1} \mathbf{E}\mathrm{tr} 1_{[a,b]}(H_{\lambda,L}) &\le  \frac{1}{\arctan((b-a)/\eta)}|\Lambda_L|^{-1}\mathbf{E}\int_a^b\mathrm{tr}\, \mathrm{Im}(H_{\lambda,L}-E-i\eta)^{-1} dE \nn  \\
&= \frac{1}{\arctan((b-a)/\eta)}\int_a^b |\Lambda_L|^{-1} \mathbf{E}\mathrm{tr}\, \mathrm{Im}(H_{\lambda,L}-E-i\eta)^{-1} dE  \label{intboundforlemma} 
\end{align}
Now we use   Theorem \ref{thm:maintec000}  and   the triangle inequality to find 
\begin{align*} 
	 |\Lambda_L|^{-1} \mathbf{E}\mathrm{tr}\, \mathrm{Im}(H_{\lambda,L}-E-i\eta)^{-1} \le \left|\sum_{j=0}^{n-1}\widetilde{\mathrm{tr}}\mathrm{Im} T_{j,L}[E+i\eta]\lambda^j\right| +\mathcal{E}_{n,d,E,B}(\lambda,\eta) . 
\end{align*}
and insert this  into the integral in   \eqref{intboundforlemma}. Then, taking the limit $L\to\infty$ and using Theorem   \ref{exinflim} as well as  Proposition  \ref{portmanteux},
we complete the proof. 
\end{proof}

\begin{remark} {\rm  Let us now show an application of Corollary \ref{newbound-3}.   
	We apply the corollary with $\eta=|I|$. Recall the uniformity of the coefficients and the bound for the error term. Then there exists constants $C_{0,n}$ and $ C_{1,n}$, such that
\begin{align}
	\mu_{\lambda,\infty}(I)&\le C_{0,n}|I|+ C_{1,n} \left(\frac{\lambda^2}{|I|}\right)^{n/2} |I|^{-3/2}(1+\ln( \,|I|^{-1} +1))^n
\end{align}
for $0\le \lambda\le \lambda_0$.
Let us now assume as in   \cite{Schenker.2004} that there exists an optimal volume  Wegner estimate, 
\begin{align}
 \mu_{\lambda,\infty}(I)\le C_{\rm Wegner}\frac{|I|}{\lambda} ,  \quad  \lambda > 0  .
\end{align}
 Let $I= [E-\delta, E+\delta]$ with $0<\delta<1$ small enough. Then by Corollary \ref{newbound-3} 
\begin{align*}
	\mu_{\lambda,\infty}(I) \le  \frac{1}{\arctan(2 \epsilon/\eta)}\int_{E-\delta-\epsilon}^{E+\delta+\epsilon}  \left[\left|\sum_{j=0}^{n-1} \widetilde{\mathrm{tr}}\mathrm{Im} T_{j,\infty}[E+i\eta]\lambda^j\right| +\mathcal{E}_{n,d,E,B}(\lambda,\eta)\right] dE
\end{align*}
Choose $\epsilon=\delta^\beta$ for some $0<\beta<1$. Now we estimate for $0\le \lambda\le \lambda_0$ and $\eta=\delta^\beta$
\begin{align} \label{bound0.10}
	\mu_{\lambda,\infty}([E-\delta,E+\delta]) \le C_{n,0} \delta^{\beta}+ C_{n,1} \left(\frac{\lambda^2}{\delta^\beta}\right)^{n/2} \delta^{- \frac{3\beta}{2} }(1+\beta |\ln \delta|)^n
\end{align}
for some constants $C_{n,0}$ and $C_{n,1}$. We have to choose $\delta<1$ small enough, such that 
\begin{align} \label{constraintondelta-893}
[E-\delta-\delta^{\beta},E+\delta+\delta^{\beta}] \subset (0,\infty).
\end{align} 
We compare with the Wegner estimate. The  bound \eqref{bound0.10} will be stronger if
\begin{align*}
	\frac{\delta}{\lambda}\ge\lambda^{n} \delta^{-\beta \frac{n+3}{2} }(1+\beta |\ln \delta|)^n  \quad \text{i.e.} \quad   \delta^{\frac{2+n\beta+3\beta}{2(n+1)}}(1+\beta |\ln \delta|)^{-\frac{n}{n+1}} \ge \lambda .
\end{align*}
Now we insert in the above estimate
\begin{align*}
	\mu_{\lambda,\infty}([E-\delta,E+\delta]) \le C_{n,0} \delta^{\beta}+  C_{n,1} \delta^{\frac{2n-\beta n -3\beta }{2(n+1)}}(1+\beta |\ln \delta|)^{\frac{n}{n+1}} .
\end{align*}
We can estimate further above by multiplying with the same logarithmic factor on the first term. Then we choose $\beta$, such that both have the same order. Doing this we find $\beta= \frac{2n}{3n+5}$  giving
\begin{align*}
	\mu_{\lambda,\infty}([E-\delta,E+\delta]) \le ( C_{n,0}+  C_{n,1})\delta^{\frac{2n }{3n+5}} (1+2n/(3n+5) |\ln \delta|)^{\frac{n}{n+1}}  .
\end{align*}

For the other region we estimate using  the Wegner bound instead and get the same orders up to some constant. Hence there exists a constant $C_n$, such that
\begin{align} \label{schenkerestimate} 
	\mu_{\lambda,\infty}([E-\delta,E+\delta]) \le C_n\delta^{\frac{2n }{3n+5}}  \left(1+\frac{2n}{3n+5}  |\ln \delta|\right)^{\frac{n}{n+1}}
\end{align} 
for all $0\le \lambda\le \lambda_0$ and $\delta < 1$ sufficiently small such that \eqref{constraintondelta-893}
holds. 
We can choose $n$ large such that the exponent will get close to $\frac{2}{3}$.  
This is analogous to the result  in   \cite{Schenker.2004} for a random Schrödinger operator on a lattice.
 In fact, the exponent of $\frac{2}{3}$ obtained by   \eqref{schenkerestimate}  improves the exponent of $\frac{1}{2}$  obtained  in    \cite{Schenker.2004}.
}
\end{remark}

\section{Pairing Potential Labels}   

\label{pairing} \label{PPL}

In this section we give formulas about expectations of products of the random potential, which will be used in the following 
sections.  The main  formulas   have been shown   in \cite{ErdosSalmhoferYau.2008a}.   
We  are going to work with partitions of the set $ \{1,2,...,n\}$ of natural numbers. 
In the definition given in \eqref{defoffinT} we are faced with the expectation  value of powers of the random potentials, i.e.,  terms of the form
\begin{align}
\label{expvCP}
{\bf E}_v^{\otimes M} \ {\bf E}_{y_L}^{\otimes M} \sum_{\gamma_1,...,\gamma_n = 1}^M \prod_{j=1}^n  v_{\gamma_j}.
\end{align}
We are therefore going to consider products of the random potentials $v_{\gamma}$ at the sites $y_{L,\gamma} \in \Lambda_L$. Note,  that for  a given configuration $\{ \gamma_1,...,\gamma_n \}$, different indices $j_1, \ldots , j_l$ may label the same sites, i.e. $\gamma_{j_1} = \ldots = \gamma_{j_l}$.  This case would produce a factor
\begin{align*}
\prod_{i=1}^l  v_{\gamma_{j_i}}=v_{\gamma_{j_1}}^l , 
\end{align*}
since all  $v_{\gamma_j}$  are the same. 
Thus  the expectation value of this factor yields  an $l$-th moment. 

To control this  technically, we introduce the following notation.
Let $\mathcal{A}_n$ be the set of partitions of $ \{1,2,...,n\}$.
For a given configuration $\{ \gamma_1,...,\gamma_n \}$ we are seeking to  filter for the unique partition $A$ that sorts the indices $ 1,2,...,n$ in such a way, that the following two conditions  are satisfied. 
\begin{itemize}
\item[(a)] All indicies in each $a \in A$ correspond to the same site.
\item[(b)] All sets $a$ correspond to different sites.
\end{itemize}
To express this, we define for $A \in \mathcal{A}_n$ the two functions $\chi_A$ and $\tilde{\chi}_A$ on $\N^n$
as follows.   First,  we set  
\muun{
\label{X1}
\chi_A(z_1,....,z_n) := \prod_{a \in A} \bigg[ \prod_{(j,l) \in a \times a } 1_{\{  z_j = z_l \}} \bigg] ,
}
which ensures condition (a).   Second,  we define  
\muun{
\label{X2}
\tilde{\chi}_A(z_1,....,z_n) := \chi_A(z_1,...,z_n)  \prod_{\substack{ ( a, b )   \in A \times A :  \\ a \neq b}}  \bigg[ \prod_{(j,l) \in a \times b} 1_{\{  z_j \neq  z_l \}} \bigg]   ,
}
which ensures that condition (b) is additionally met.
Observe that for any $(z_1,...,z_n) \in \N^n$ we have 
\begin{align}
\label{upartition}
\sum_{A \in \mathcal{A}_n}  \tilde{\chi}(z_1,...,z_n)  = 1,
\end{align}
since there is exactly one unique partition that satisfies  both conditions (a) and (b).
With this small preamble, we can now express the expected value in term (\ref{expvCP}) and show the following lemma.

\begin{lemma} \label{lem:combprob1} 
For any fixed $L > 0$,  integers  $n, M  \in \N$,  and any fixed momenta $q_j \in \Lambda_L^*$, $j=1,...,n$,  the identity 
\begin{align*}
 {\bf E}_v^{\otimes M} \ {\bf E}_y^{\otimes M} \sum_{\gamma_1,...,\gamma_n = 1}^M \prod_{j=1}^n  v_{\gamma_j} \exp(2 \pi i q_j \cdot y_{L,\gamma_j}) =  \sum_{A \in \mathcal{A}_n} 1_{\{ |A| \leq M  \}} \frac{M!}{(M-|A|)!}
\prod_{a \in A} \left\{ m_{|a|} \delta_* \left( \sum_{l \in a} q_l \right)  \right\} 
\end{align*} 
holds.
\end{lemma}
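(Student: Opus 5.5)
Insert the partition of unity \eqref{upartition} into the sum over $(\gamma_1,\dots,\gamma_n)\in\{1,\dots,M\}^n$. The left side becomes
$$\sum_{A\in\mathcal{A}_n}{\bf E}_v^{\otimes M}{\bf E}_y^{\otimes M}\sum_{\gamma_1,\dots,\gamma_n=1}^M\tilde\chi_A(\gamma_1,\dots,\gamma_n)\prod_{j=1}^n v_{\gamma_j}e^{2\pi i q_j\cdot y_{L,\gamma_j}},$$
and we evaluate each $A$-term separately. For fixed $A$, the tuples $(\gamma_1,\dots,\gamma_n)$ with $\tilde\chi_A=1$ correspond bijectively to injective maps $a\mapsto\gamma_a$ from the blocks of $A$ into $\{1,\dots,M\}$; here $\gamma_j=\gamma_a$ for $j\in a$. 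The number of such maps is $M!/(M-|A|)!$ if $|A|\le M$ and zero otherwise. This produces the factor $1_{\{|A|\le M\}}\frac{M!}{(M-|A|)!}$, provided every admissible tuple contributes the same expectation.

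For an admissible tuple, group the product by blocks:
$$\prod_{j=1}^n v_{\gamma_j}e^{2\pi i q_j\cdot y_{L,\gamma_j}}=\prod_{a\in A}v_{\gamma_a}^{|a|}\exp\Big(2\pi i\Big(\sum_{l\in a}q_l\Big)\cdot y_{L,\gamma_a}\Big).$$
The indices $\gamma_a$ are pairwise distinct. Since the $v_\gamma$ and $y_{L,\gamma}$ are i.i.d. and independent of each other, the expectation factorizes over blocks, giving
$$\prod_{a\in A}{\bf E}_v v^{|a|}\;{\bf E}_y e^{2\pi i Q_a\cdot y},\qquad Q_a:=\sum_{l\in a}q_l\in\Lambda_L^*.$$
The first factor is $m_{|a|}$ by \eqref{moments}. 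The second is $|\Lambda_L|^{-1}\int_{\Lambda_L}e^{2\pi i Q_a\cdot y}\,dy$, which by orthonormality of $\{\varphi_p\}$ (equivalently, since $Q_a\in(\frac1L\Z)^d$) equals $\delta_*(Q_a)$. This value does not depend on the particular injection chosen, so summing over the admissible tuples gives the stated formula.

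There is no real obstacle. The only point requiring care is the counting step: the tuples singled out by $\tilde\chi_A$ must correspond exactly to the injections of the blocks into $\{1,\dots,M\}$, with no overcounting. This holds because the blocks of a partition are distinguishable, being labelled by their elements, so the count is $M(M-1)\cdots(M-|A|+1)$. Finiteness of all the moments by \eqref{2.4} justifies exchanging the finite sums with the expectations.
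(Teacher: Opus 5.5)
Your proof is correct and follows the intended approach. The paper does not prove this lemma itself; it refers to \cite{ErdosSalmhoferYau.2008a} and \cite{HaslerKoberstein.2025-2}. It does, however, set up $\tilde{\chi}_A$ and the partition of unity \eqref{upartition} exactly for your argument: split the $\gamma$-sum by coincidence pattern, count the $M!/(M-|A|)!$ injections of blocks into $\{1,\dots,M\}$, and factor the expectation over distinct sites into $m_{|a|}\,\delta_*(\sum_{l\in a} q_l)$. One small point: the sums are finite, so no interchange needs justifying; what the factorization actually needs is integrability of $|v|^{k}$, which follows from the finiteness of the even moments in \eqref{2.4}.
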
 
For a proof we refer the reader to  \cite{ErdosSalmhoferYau.2008a} see also \cite{HaslerKoberstein.2025-2}.
We will use the following Lemma. 

\begin{lemma}\label{lemP}  For a Poisson distributed random variable  $N$ with mean $\lambda$ we  
 have for any $k \in \N$ 
\begin{align} 
\E \left[ \prod_{j=0}^{k-1} (N-j) \right]   
& = \lambda^k . \label{eq:poissonexp} 
\end{align} 
\end{lemma}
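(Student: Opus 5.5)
The identity concerns factorial moments of a Poisson variable. The plan is to compute it directly from the probability mass function $\mathbf{P}(N=m) = e^{-\lambda}\lambda^m/m!$ for $m \in \N_0$ (here $\lambda$ is the mean, not the coupling constant).

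First note that for $m < k$ the product $\prod_{j=0}^{k-1}(m-j)$ contains the factor $m - m = 0$, so these terms vanish. For $m \ge k$ the product equals $m!/(m-k)!$. Hence
\begin{align*}
\E\left[\prod_{j=0}^{k-1}(N-j)\right] = \sum_{m=k}^\infty \frac{m!}{(m-k)!}\, e^{-\lambda}\frac{\lambda^m}{m!} = e^{-\lambda}\lambda^k \sum_{m=k}^\infty \frac{\lambda^{m-k}}{(m-k)!} = e^{-\lambda}\lambda^k e^{\lambda} = \lambda^k .
\end{align*}
The series is absolutely convergent, so the reindexing $m' = m-k$ is justified.

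Alternatively, one can use the probability generating function $G(s) = \E s^N = e^{\lambda(s-1)}$. Its $k$-th derivative at $s=1$ is $\E[N(N-1)\cdots(N-k+1)]$, the interchange of differentiation and expectation being justified because the power series has infinite radius of convergence. On the other hand, differentiating $e^{\lambda(s-1)}$ directly gives $G^{(k)}(1) = \lambda^k$. There is no real obstacle here; the only point requiring care is the vanishing of the terms with $m<k$, which is immediate.
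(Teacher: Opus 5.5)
Your proof is correct: the direct computation from the probability mass function is complete (including the vanishing of the terms with $m<k$ and the reindexing), and your generating-function argument is also valid. The paper gives no argument of its own beyond pointing to the probability-generating-function route via its citations, so your second argument is exactly the approach it indicates.
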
 
We note that  Lemma \ref{lemP}   can alternatively be proven  using  the probability-generating function,
see for example \cite[Section 1.1]{LastPenrose.2018}  or \cite{Kingman.1993}.
 The following lemma is a continuation of Lemma \ref{lem:combprob1} in the sense that we evaluate the $\E_M$ expectation as well.

\begin{lemma} \label{lem:combprob2} Let $L > 0$,  $n \in \N$, and  $q_j \in \Lambda_L^*$, $j=1,...,n$.  Then
\begin{align*}
{\bf E}_M  {\bf E}_v^{\otimes M} \ {\bf E}_{y_L}^{\otimes M} \sum_{\gamma_1,...,\gamma_n = 1}^M \prod_{j=1}^n  v_{\gamma_j} \exp(2 \pi i q_j \cdot  y_{L,\gamma_j}) =  \sum_{A \in \mathcal{A}_n} 
\prod_{a \in A} \left\{ m_{|a|} \delta_{*,L} \left( \sum_{l \in a} q_l \right)  \right\} 
\end{align*}   
holds.
\end{lemma}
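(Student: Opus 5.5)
The plan is to condition on $M$, apply Lemma \ref{lem:combprob1}, and then take the expectation over the Poisson variable $M$ using Lemma \ref{lemP}. By Lemma \ref{lem:combprob1}, for each fixed $M \in \N$ the inner expectation ${\bf E}_v^{\otimes M}{\bf E}_{y_L}^{\otimes M}$ of the sum equals
\begin{align*}
\sum_{A \in \mathcal{A}_n} 1_{\{ |A| \leq M \}} \frac{M!}{(M-|A|)!} \prod_{a \in A} \left\{ m_{|a|} \delta_*\left( \sum_{l \in a} q_l \right) \right\} .
\end{align*}
The case $M=0$ is included trivially: the sum over $\gamma_j$ is empty, so both sides vanish. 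Since $|A| \geq 1$, the indicator also kills that term.

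Next, I observe that $1_{\{|A| \leq M\}} \frac{M!}{(M-|A|)!} = \prod_{j=0}^{|A|-1}(M-j)$ for every integer $M \geq 0$. Indeed, if $M < |A|$, one factor in the product is zero. Since $\mathcal{A}_n$ is finite and the factors $m_{|a|}\delta_*(\cdot)$ do not depend on $M$, the expectation ${\bf E}_M$ can be interchanged with the sum over $A$. Finiteness of all terms follows from \eqref{2.4} and from the finiteness of Poisson moments. Lemma \ref{lemP}, applied with mean $|\Lambda_L|$, then gives ${\bf E}_M \prod_{j=0}^{|A|-1}(M-j) = |\Lambda_L|^{|A|}$.

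Finally, I distribute the factor $|\Lambda_L|^{|A|} = \prod_{a \in A} |\Lambda_L|$ over the blocks of $A$. Using \eqref{deltaL}, each block contributes $|\Lambda_L|\,\delta_*(\sum_{l\in a} q_l) = \delta_{*,L}(\sum_{l \in a} q_l)$, which yields the claimed formula.

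There is no real obstacle here. The only point needing care is justifying the interchange of ${\bf E}_M$ with the finite sum, and making sure the indicator and falling-factorial identity also cover the edge cases $M < |A|$ and $M=0$.
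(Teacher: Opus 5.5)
Your proposal is correct and follows the paper's proof: apply Lemma \ref{lem:combprob1} for fixed $M$, use Lemma \ref{lemP} to get ${\bf E}_M 1_{\{|A|\le M\}} \frac{M!}{(M-|A|)!} = |\Lambda_L|^{|A|}$, and absorb one factor $|\Lambda_L|$ into each $\delta_*$ to obtain $\delta_{*,L}$. Your separate treatment of $M=0$ and of the falling-factorial identity for $M<|A|$ fills in details the paper leaves implicit.
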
 
\begin{proof}
The statement follows from \Cref{lem:combprob1} and the identity 
$$
{\bf E}_M   1_{|A| \leq M } \frac{M!}{(M-|A|)!} =  |\Lambda_L|^{|A|},
$$
which follows from  \Cref{eq:poissonexp}. 
\end{proof} 
To transform into momentum space we shall expand with respect to the ONB $\varphi_p$ defined in \eqref{eq:deofonb}. That is we  shall  insert the following  identity, which holds in strong operator topology 
\begin{equation} \label{fourierident} 
\psi= \sum_{p \in \Lambda_L^*}  \SP{\varphi_p ,  \psi}   \varphi_p   =  \int_{\Lambda_L^*}  \SP{e^{  2 \pi i \inn{ p  , \cdot } } , \psi (\cdot)} e^{  2 \pi i \inn{ p , \cdot}  } dp,
\end{equation} 
for all $\psi \in L^2 (\Lambda_L)$, which provides  the  representation via the ONB $\{\varphi_p : p \in \Lambda_L^* \}$. 
Moreover, for $\psi\in L^2 (\Lambda_L)$ and $ W \in L^\infty (\Lambda_L)$ (since $\Lambda_L$ is compact they are both also in $L^1 (\Lambda_L)$)  we will use the identities  
\begin{align} 
 \langle e^{  2 \pi i \inn{ p  , \cdot  } },  \psi \rangle & = \int_{\Lambda_L}  e^{ - 2 \pi i p \cdot x } \psi(x) dx = \widehat{\psi}(p) \label{ftident1}  \\
\langle e^{  2 \pi i \inn{ p , \cdot }  },   W(\cdot)   e^{  2 \pi i \inn{  q , \cdot } } \rangle & =   \int_{\Lambda_L}    W(x)   e^{ - i 2 \pi (p-q)\cdot x } dx  = \widehat{W}(p-q). \label{ftident2} 
\end{align} 
 First we use the probabilistic structure of the potential \eqref{bumpnotationres} and calculate  the Fourier transform   
\muun{ \label{bumpnotationresdd}   
\widehat{ V}_{L,\omega}(p) 
&=  \int_{\Lambda_L} \sum_{\gamma=1}^M  v_\gamma B_\#(x-y_{L,\gamma})  e^{  -2 \pi i p \cdot x } d x = \sum_{\gamma=1}^M v_\gamma \int_{\Lambda_L}  B_\#(z)  e^{ - 2 \pi i p \cdot (z+y_{L,\gamma})} d z \nn \\
&=  \sum_{\gamma=1}^M  v_\gamma \widehat{B}_\#(p) e^{ -2 \pi  i p \cdot  y_{L,\gamma}} .
}
In the following sections we will make use of the following lemma. 
 
\begin{lemma} \label{expProd00123} For the model introduced in Section \ref{Model} we have 
\begin{align}
   { \bf E}_L \left(   \prod_{j=1}^n  \hat{V}_{L,\omega}(p_j - p_{j+1} )    \right)
& =  \sum_{A \in \mathcal{A}_n}  \prod_{a \in A} \left\{ m_{|a|}  \delta_{*,L} \left( \sum_{l \in a} (p_l - p_{l+1}) \right) 
  \prod_{l \in a}  \hat{B}_\#(p_l - p_{l+1} ) \right\} \nn \\
& = \sum_{A \in \mathcal{A}_n}  \mathcal{P}_{A,L}(p_1,...,p_{n+1})
 .  \label{eq:ofexpofrandpot2} 
\end{align} 

\end{lemma}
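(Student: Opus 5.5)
The plan is to substitute the explicit Fourier representation \eqref{bumpnotationresdd} of the potential into the product and then apply Lemma \ref{lem:combprob2}. Setting $q_j := p_j - p_{j+1}$ for $j=1,\ldots,n$, \eqref{bumpnotationresdd} gives
\begin{align*}
\prod_{j=1}^n \hat{V}_{L,\omega}(q_j) = \prod_{j=1}^n \widehat{B}_\#(q_j) \sum_{\gamma_1,\ldots,\gamma_n=1}^{M} \prod_{j=1}^n v_{\gamma_j} e^{-2\pi i q_j \cdot y_{L,\gamma_j}} .
\end{align*}
The factor $\prod_j \widehat{B}_\#(q_j)$ is deterministic, so it comes out of the expectation. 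We use the factorization ${\bf E}_L = {\bf E}_M {\bf E}_{y_L}^{\otimes M} {\bf E}_v^{\otimes M}$ from \eqref{esy:3.24}. Since the sum over $\gamma$ is finite for fixed $M$, we may exchange it with ${\bf E}_v$ and ${\bf E}_y$.

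Next we apply Lemma \ref{lem:combprob2} with momenta $-q_j$ in place of $q_j$; these lie in $\Lambda_L^*$, which is a group. The lemma yields $\sum_{A\in\mathcal{A}_n}\prod_{a\in A} m_{|a|}\,\delta_{*,L}\bigl(-\sum_{l\in a} q_l\bigr)$. Because $\delta_*$ is the indicator of $\{0\}$, it is even, so the sign can be dropped.

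We then distribute the deterministic factor over the blocks. Since $A$ partitions $\{1,\ldots,n\}$, we have $\prod_{j=1}^n \widehat{B}_\#(q_j)=\prod_{a\in A}\prod_{l\in a}\widehat{B}_\#(q_l)$. This gives the first line of \eqref{eq:ofexpofrandpot2}. The second line is exactly the definition \eqref{eq:defofdeltapi0} of $\mathcal{P}_{A,L}$.

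The only point needing care is integrability, so that the ${\bf E}_M$ expectation and the outer sum over $M$ may be exchanged. The integrand is bounded in absolute value by $\prod_j|\widehat{B}_\#(q_j)| \sum_{\gamma}\prod_j |v_{\gamma_j}|$. Its expectation is finite because $M$ is Poisson, so its factorial moments are finite by Lemma \ref{lemP}, and all moments $m_k$ are finite by \eqref{2.4}. This integrability check is essentially already contained in Lemma \ref{lem:combprob2}, so no real obstacle is expected.
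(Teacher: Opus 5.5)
Your proposal is correct and follows the same route the paper sets up for this lemma: the paper defers the proof to Erd\H{o}s--Salmhofer--Yau, but places \eqref{bumpnotationresdd} and Lemma \ref{lem:combprob2} immediately beforehand for exactly this use. Your argument substitutes the Fourier representation, pulls out the deterministic factor $\prod_j \widehat{B}_\#(q_j)$, and applies Lemma \ref{lem:combprob2} with momenta $-q_j \in \Lambda_L^*$, using that $\delta_{*,L}$ is even. Distributing $\prod_j \widehat{B}_\#(q_j)$ over the blocks of $A$ then completes the proof.
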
 

For a proof we refer the reader to  \cite{ErdosSalmhoferYau.2008a} see also \cite{HaslerKoberstein.2025-2}.

\section{Expectation of Expansion Coefficients}

\label{sec:prooasymexp}

In this section, we prove Theorem   \ref{lem:formulaforexpcoeff} and lay the ground work for the proofs of Theorems     \ref{exinflim}   and   \ref{thm:boundaryvelueexpcoeff}.
We introduce the notation for the canonical norm in the $\ell^q$-spaces.  For  $f \in \ell^q(\Lambda_L^*)$ 
with $q  \in [1,  \infty]$,
we write 
  \begin{align*} \| f \|_{*, q} = \left( \int_{\Lambda_L^*} |f(k)|^q dk \right)^{1/q}  \ ( 1 \leq q < \infty)   , \quad   \| f \|_{*, \infty } = \sup_{k \in \Lambda_L^*} |f(k)| .
  \end{align*}  
We start with the following Lemma which expresses the  terms in the Neumann expansion with respect 
to the eigenbasis \eqref{eq:deofonb}.

\begin{lemma} \label{finitetracesumm} Let $d \leq 3$. 
There exists a  constant $K_{0,E}$  uniformly bounded for $E$ in compact subsets of $\R$ such that   
\begin{align} \label{traceofboundonfreeres} 
  \int_{\Lambda_L^*} |\nu( k) -  z|^{-2}  dk \leq  K_{0,E} ( 1   + {\rm dist}(z,[0,\infty))^{-2} ) 
\end{align} 
holds for all $L \geq 1$ and  $z \in \C \setminus [0,\infty)$ with ${\rm Re} z = E$.
\end{lemma}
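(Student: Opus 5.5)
We bound the lattice sum $\int_{\Lambda_L^*} |\nu(k)-z|^{-2}\,dk = |\Lambda_L|^{-1}\sum_{k\in\Lambda_L^*}|\nu(k)-z|^{-2}$ by comparing it with the corresponding integral over $\R^d$, plus a controlled discretization error. Write $z = E + i\eta$ and set $D := {\rm dist}(z,[0,\infty))$. Note that $|\nu(k)-z|\geq D$ for all $k$, and $|\nu(k)-z| \geq |\nu(k)-E|$.

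We split the momentum lattice into a high-momentum part $\{\nu(k) \geq 2|E|+1\}$ and a bounded part $\{\nu(k) < 2|E|+1\}$.

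On the high-momentum part we have $|\nu(k)-z|\geq \nu(k)-|E| \geq \tfrac12\nu(k) + \tfrac12$, so the summand is bounded by $C\langle k\rangle^{-4}$. The function $\langle k\rangle^{-4}$ is integrable on $\R^d$ for $d\leq 3$, and this is where the dimension restriction enters. To compare the lattice sum with the integral, we associate to each $k\in\Lambda_L^*$ the cube $k+[0,1/L)^d$. For $L\geq 1$ the cube has side at most $1$, so $\langle k\rangle^{-4}\leq C'\langle k'\rangle^{-4}$ for all $k'$ in the cube. Consequently
\[
|\Lambda_L|^{-1}\sum_{k}\langle k\rangle^{-4} \leq C'\int_{\R^d}\langle k\rangle^{-4}\,dk < \infty,
\]
uniformly in $L\geq 1$.

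On the bounded part we use the trivial bound $|\nu(k)-z|^{-2}\leq D^{-2}$. The number of lattice points with $|k|\leq R_E := \sqrt{2(2|E|+1)}$, divided by $|\Lambda_L| = L^d$, is bounded by the volume of the ball of radius $R_E+\sqrt d$, again by the cube-covering argument with $L\geq 1$. This part therefore contributes at most $C_E D^{-2}$.

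Adding the two parts gives
\[
\int_{\Lambda_L^*} |\nu(k)-z|^{-2}\,dk \leq K_{0,E}\,(1+D^{-2}),
\]
where $K_{0,E}$ depends on $E$ only through $R_E$ and is therefore uniformly bounded for $E$ in compact sets.

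The only point requiring care is the uniform-in-$L$ Riemann-sum comparison. The cube-covering argument handles it cleanly because $L\geq 1$ bounds the mesh $1/L$ by $1$.
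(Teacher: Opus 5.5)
Your proof is correct and takes essentially the same route as the paper, which reduces the statement to Lemma~\ref{lem:discest}: on a bounded momentum region you bound $|\nu(k)-z|^{-2}$ by ${\rm dist}(z,[0,\infty))^{-2}$, on the high-momentum region you use $\langle k\rangle^{-4}$ decay (integrable since $d\le 3$), and you compare the lattice sums with integrals uniformly in $L\ge 1$. The differences are cosmetic: you treat all signs of ${\rm Re}\, z$ at once via the threshold $2|E|+1$, whereas the paper treats $E\ge 0$ and $E\le 0$ separately, and you carry out the cube-covering comparison explicitly where the paper defers it.
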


\begin{proof}
Let $I $ be a compact interval of $\R$. 
Using Lemma  \ref{lem:discest} we find  that there exists a constant $C_I$  such that 
\begin{align} \label{eq:defofCpi-1.1}
 &   \int_{\Lambda_L^*} |\nu( k) -  z|^{-2}  dk 
 =   \int_{\Lambda_L^*}   \frac{1 }{   ( \nu(p) - {\rm Re} z )^2   +   ({\rm Im } z)^2}    dp
 \leq    C_I (   1   + {\rm dist}(z,[0,\infty))^{-2} )  
\end{align}
for  all  $L \geq 1$ and  $z \in \C \setminus [0,\infty)$ with ${\rm Re } z \in I$.  This shows \eqref{traceofboundonfreeres}.
\end{proof}

\begin{lemma}\label{HilbertSchmidt}  Let $d \leq 3$. For $z \in \C \setminus [0,\infty)$ the operator $R_L(z)$ is Hilbert-Schmidt
and 
\begin{align} \label{freereshs} 
 \widetilde{{\rm tr}} (  R_L(z)^* R_L(z) )  =  \int_{\Lambda_L^*} |\nu( k) -  z|^{-2}  dk < \infty ,
\end{align} 
furthermore for $z \in \C \setminus \R$ 
\begin{align} \label{freereshs-old} 
\frac{1}{{\rm Im} z}\widetilde{{\rm tr}}{\rm Im} R_L(z)    =  \widetilde{{\rm tr}} (  R_L(z)^* R_L(z) ) .
\end{align} 
\end{lemma}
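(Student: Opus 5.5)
The plan is to diagonalize $R_L(z)$ in the orthonormal basis $\{\varphi_p : p\in\Lambda_L^*\}$ from \eqref{eq:deofonb} and reduce everything to sums over the dual lattice.

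First, by \eqref{eq:deoflaplac} and the choice $\frac{\hbar^2}{2m}=[2(2\pi)^2]^{-1}$, the operator $-\frac{\hbar^2}{2m}\Delta_L$ acts on $\varphi_p$ as multiplication by $\nu(p)$, since $\hat\varphi_p(q)=|\Lambda_L|^{1/2}\delta_*(p-q)$. Hence $R_L(z)\varphi_p=(\nu(p)-z)^{-1}\varphi_p$ for every $p\in\Lambda_L^*$ and every $z\in\C\setminus[0,\infty)$. The operator is well defined because $\nu(p)\ge0$ forces $\nu(p)-z\neq0$. Likewise $R_L(z)^*\varphi_p=(\nu(p)-\bar z)^{-1}\varphi_p$.

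Next, the Hilbert--Schmidt norm can be computed in this basis:
\begin{align*}
{\rm tr}(R_L(z)^*R_L(z))=\sum_{p\in\Lambda_L^*}\|R_L(z)\varphi_p\|^2=\sum_{p\in\Lambda_L^*}|\nu(p)-z|^{-2}.
\end{align*}
Dividing by $|\Lambda_L|$ and using the notation \eqref{defofdiscretefouriersum} gives the identity \eqref{freereshs}. Finiteness is exactly Lemma \ref{finitetracesumm}, which is where $d\le3$ enters: the sum behaves like $\int\langle p\rangle^{-4}dp$, which converges for $d\le 3$. This proves that $R_L(z)$ is Hilbert--Schmidt.

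For \eqref{freereshs-old}, take $z\in\C\setminus\R$ and interpret ${\rm Im}\,R_L(z)=\frac{1}{2i}(R_L(z)-R_L(z)^*)$. The first resolvent identity, or directly the diagonal form, gives
\begin{align*}
{\rm Im}\,R_L(z)={\rm Im}(z)\,R_L(z)^*R_L(z),
\end{align*}
because $\frac{1}{2i}\big((\nu-z)^{-1}-(\nu-\bar z)^{-1}\big)=\frac{{\rm Im}\,z}{|\nu-z|^2}$. Since $R_L(z)^*R_L(z)$ is a product of two Hilbert--Schmidt operators, it is positive and trace class. Therefore ${\rm Im}\,R_L(z)$ is trace class as well, and its trace equals ${\rm Im}(z)\,{\rm tr}(R_L(z)^*R_L(z))$. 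Dividing by ${\rm Im}\,z$ and normalizing yields \eqref{freereshs-old}.

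There is no real obstacle here. The only points needing care are checking that $\varphi_p$ lies in $D(-\Delta_L)$, which is trivial, and invoking Lemma \ref{finitetracesumm} for finiteness of the lattice sum.
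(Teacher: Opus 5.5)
Your proposal is correct and follows essentially the same route as the paper. Like the paper, you diagonalize $R_L(z)$ in the Fourier basis $\varphi_p$, compute $\widetilde{\rm tr}(R_L(z)^*R_L(z))$ as the lattice sum (finite by Lemma \ref{finitetracesumm}), and use the resolvent identity ${\rm Im}\,R_L(z)={\rm Im}(z)\,R_L(z)^*R_L(z)$. Your remark that this identity makes ${\rm Im}\,R_L(z)$ trace class is a detail the paper leaves implicit.
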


\begin{remark}{\rm 
We see that   $d \leq   3$ is necessary to see that the sum in  \eqref{freereshs} converges. In case we had $d  > 3$, 
one would have to work with higher powers of the resolvents, which would make the analysis notationally much 
more involved. 
}
\end{remark}

\begin{proof} We note that the r.h.s of  \eqref{freereshs} 
is finite in view of   Lemma \ref{finitetracesumm}. 
For the o.n.b. $(\varphi_p)_{p \in \Lambda_L^*}$    introduced in \eqref{eq:deofonb}.
 we find from the spectral theorem 
\begin{align*}
 \langle \varphi_p , R_L(z)^* R_L(z) \varphi_p \rangle = | \nu(p) - z |^{-2}
\end{align*} 
and so 
\begin{align*} \widetilde{{\rm tr}}| R_L(z)|^2 & =
 |\Lambda_L|^{-1} \sum_{p \in \Lambda_L^* } \langle \varphi_p , | R_L(z)|^2  \varphi_p \rangle  =  |\Lambda_L|^{-1} \sum_{p \in \Lambda_L^* } | \nu(p) - z |^{-2} \\
& = \int_{\Lambda_L^*} |\nu( k) -  z|^{-2}  dk .
\end{align*} 
This proves  \eqref{freereshs}.  To show   \eqref{freereshs-old}, we use the spectral theorem and the resolvent identity 
\begin{align*}
{\rm Im} R_L(z) & = \frac{1}{2i} (R_L(z)  - R_L(z)^*) = \frac{1}{2i} (R_L(z)  - R_L(\overline{z})) = 
\frac{1}{2i}  R_L(\overline{z}) (z-\overline{z}) R_L(z)  \\
& = {\rm Im} z R_L(z)^* R_L(z) .
\end{align*} 
\end{proof}

\begin{lemma}  \label{fourierspacekernel}  Let  $z \in \C \setminus [0,\infty)$. 
Then for all  $n \in \N$ the operator $R_L(z) [  V_L  R_L(z)  ]^n$ is trace class  and 
\begin{align} \label{defoffinTnoexp-0} 
& \widetilde{{\rm tr}} R_L(z) [  V_L  R_L(z)  ]^n \\ 
& =   \int_{(\Lambda_L^*)^{n+1}} \prod_{j=1}^n \widehat{V}(k_j-k_{j+1})    
 \prod_{j=1}^{n+1}  (\nu( k_j) -  z)^{-1}  \delta_{k_1,k_{n+1} }  d (k_1, \ldots , k_{n+1})  . \nn
\end{align}
The sum converges absolutely and satisfies  the bound 
\begin{align} \label{defoffinTnoexp-abs} 
  & \int_{(\Lambda_L^*)^{n+1}} \prod_{j=1}^n| \widehat{V}(k_j-k_{j+1})  |  
 \prod_{j=1}^{n+1}  |\nu( k_j) -  z|^{-1}   \delta_{k_1,k_{n+1} }  d (k_1, \ldots , k_{n+1})  \\
& \leq   |{\rm dist}(z,[0,\infty))|^{-n+1}   \| \widehat{V} \|_{1,*}^n  \int_{\Lambda_L^*} |\nu( k) -  z|^{-2}  dk, \nn
\end{align}
where we used the notation in \eqref{defofdiscretefouriersum}. 
\end{lemma}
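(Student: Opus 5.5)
We prove Lemma \ref{fourierspacekernel} by working in the o.n.b. $(\varphi_p)_{p\in\Lambda_L^*}$ from \eqref{eq:deofonb}, in which $R_L(z)$ is diagonal, $R_L(z)\varphi_p=(\nu(p)-z)^{-1}\varphi_p$. We proceed in four steps.

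\emph{Step 1: matrix elements of $V_L$.} Since the potential $V_L=V_{L,\omega}$ is almost surely a bounded function, the identity \eqref{ftident2} gives
\[
\langle\varphi_p,V_L\varphi_q\rangle=|\Lambda_L|^{-1}\widehat V(p-q).
\]
Moreover, by \eqref{bumpnotationresdd},
\[
|\widehat V(p)|\le \sum_{\gamma=1}^{M}|v_\gamma|\,|\widehat B_\#(p)|,
\]
and $\widehat B_\#$ restricted to $\Lambda_L^*$ is summable, since $B$ is Schwartz and its periodization has rapidly decaying Fourier coefficients. Hence $\|\widehat V\|_{1,*}<\infty$ almost surely. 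In particular $V_L$ acts on $\ell^2(\Lambda_L^*)$ as convolution with the kernel $|\Lambda_L|^{-1}\widehat V(p-q)$, and by Young's inequality its operator norm is at most $\|\widehat V\|_{1,*}$.

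\emph{Step 2: trace class property.} By Lemma \ref{HilbertSchmidt}, $R_L(z)$ is Hilbert--Schmidt. Write
\[
R_L(z)[V_LR_L(z)]^n=R_L(z)\cdot\bigl(V_LR_L(z)\bigr)^{n-1}\cdot V_L\cdot R_L(z).
\]
This exhibits the operator as a product $R_L(z)\,X\,R_L(z)$ with $X$ bounded: $\|R_L(z)\|\le {\rm dist}(z,[0,\infty))^{-1}$ and $\|V_L\|\le\|\widehat V\|_{1,*}$ (or simply $\|V_L\|_\infty$). A product of two Hilbert--Schmidt operators with a bounded operator in between is trace class, so the operator is trace class for $n\ge1$.

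\emph{Step 3: kernel formula.} Compute the trace in the basis $(\varphi_p)$:
\[
{\rm tr}\,A=\sum_{p}\langle\varphi_p,A\varphi_p\rangle,\qquad A=R_L(z)[V_LR_L(z)]^n .
\]
Insert the resolution of identity \eqref{fourierident} between each factor. Each $V_L$ contributes a factor $|\Lambda_L|^{-1}\widehat V(k_j-k_{j+1})$ and each resolvent a factor $(\nu(k_j)-z)^{-1}$. The outer trace sum sets $k_1=k_{n+1}$. Counting factors gives $|\Lambda_L|^{-n}$ from the potentials and $|\Lambda_L|^{-1}$ from the normalization of $\widetilde{\rm tr}$; these combine with the $n$ free sums and the Kronecker $\delta_{k_1,k_{n+1}}$ over $n+1$ variables into $\int_{(\Lambda_L^*)^{n+1}}\cdots\,\delta_{k_1,k_{n+1}}$ in the notation \eqref{defofdiscretefouriersum}, with $\delta_{k_1,k_{n+1}}$ understood as the normalized delta $\delta_{*,L}(k_1-k_{n+1})$. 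The interchange of the finite-dimensional insertions with the infinite sums is justified by Fubini, once the absolute convergence of Step 4 is established.

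\emph{Step 4: absolute bound (main step).} Bound the integrand by its absolute value and keep two resolvent factors, $|\nu(k_1)-z|^{-1}$ and $|\nu(k_{n+1})-z|^{-1}$, which coincide under the delta and give $|\nu(k_1)-z|^{-2}$. Estimate the remaining $n-1$ resolvent factors uniformly by ${\rm dist}(z,[0,\infty))^{-1}$. Then sum successively over $k_2,\dots,k_n$: each sum $\int|\widehat V(k_j-k_{j+1})|\,dk$ yields $\|\widehat V\|_{1,*}$ by translation invariance of $\Lambda_L^*$. Since $k_{n+1}=k_1$ is fixed after these sums, the last potential factor $|\widehat V(k_n-k_1)|$ is absorbed by the summation over $k_n$, so exactly $n$ factors of $\|\widehat V\|_{1,*}$ appear. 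The remaining sum over $k_1$ is $\int_{\Lambda_L^*}|\nu(k)-z|^{-2}dk$, which is finite by Lemma \ref{finitetracesumm}. This gives \eqref{defoffinTnoexp-abs}. The delicate point is the ordering of these summations, so that the chain closes correctly with exactly $n$ factors of $\|\widehat V\|_{1,*}$ and no leftover unsummed potential factor.
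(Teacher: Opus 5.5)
Your route is the paper's: diagonalize $R_L(z)$ in the basis $\varphi_p$, get trace class from the Hilbert--Schmidt ideal, then bound the multiple sum. Step 2 is fine. The volume bookkeeping in Steps 3--4, however, is wrong, and because of it Step 4 does not prove \eqref{defoffinTnoexp-abs}.

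The $\delta_{k_1,k_{n+1}}$ in the lemma is the plain Kronecker delta, not $\delta_{*,L}$ from \eqref{deltaL}. Inserting the basis gives sums over $p,k_1,\dots,k_{n+1}$ with prefactor $|\Lambda_L|^{-1}|\Lambda_L|^{-n}$ and the factor $\delta_{p,k_1}\delta_{p,k_{n+1}}$. Summing over $p$ leaves $|\Lambda_L|^{-n-1}\sum_{k_1,\dots,k_{n+1}}(\cdots)\,\delta_{k_1,k_{n+1}}$, which is exactly $\int_{(\Lambda_L^*)^{n+1}}(\cdots)\,\delta_{k_1,k_{n+1}}$ with the Kronecker delta. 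Your identification with $\delta_{*,L}$ multiplies the right-hand side by $|\Lambda_L|$. Already for $n=1$ one has $\widetilde{\rm tr}\,R_L(z)V_LR_L(z)=|\Lambda_L|^{-1}\widehat V(0)\int_{\Lambda_L^*}(\nu(k)-z)^{-2}\,dk$, whereas your reading gives this without the factor $|\Lambda_L|^{-1}$.

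This is also why Step 4 cannot close. After imposing $k_{n+1}=k_1$ you have only $n-1$ summation variables $k_2,\dots,k_n$, but $n$ factors $|\widehat V|$ forming a closed loop. Each summation absorbs one factor, so at most $n-1$ norms $\|\widehat V\|_{*,1}$ appear, and one factor must be bounded by $\|\widehat V\|_{*,\infty}$. Your sentence saying the last factor is absorbed by the $k_n$-summation uses that summation twice. In your normalization you end up with $\|\widehat V\|_{*,\infty}\|\widehat V\|_{*,1}^{n-1}$. In general only $\|\widehat V\|_{*,\infty}\le|\Lambda_L|\,\|\widehat V\|_{*,1}$ holds, so the claimed bound is off by $|\Lambda_L|$. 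For $n=1$ your version would assert $|\widehat V(0)|\le\|\widehat V\|_{*,1}$, which fails, e.g.\ for a single wide Gaussian bump of width $\sigma>1$ and large $L$: there $|\widehat V(0)|$ is of order $\sigma^d$ while $\|\widehat V\|_{*,1}$ is of order one.

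The repair is what the paper does. With the Kronecker delta, use it only to replace $|\nu(k_{n+1})-z|^{-1}$ by $|\nu(k_1)-z|^{-1}$, and then bound $\delta_{k_1,k_{n+1}}\le 1$. Now $k_{n+1}$ is a free variable of the $(n+1)$-fold integral and the chain $|\widehat V(k_1-k_2)|\cdots|\widehat V(k_n-k_{n+1})|$ is open. Integrating over $k_{n+1},k_n,\dots,k_2$ in this order yields exactly $n$ factors $\|\widehat V\|_{*,1}$ and leaves $\int_{\Lambda_L^*}|\nu(k_1)-z|^{-2}\,dk_1$. Alternatively, you may keep the closed loop, provided you use the identity $\int_{(\Lambda_L^*)^{n+1}}F\,\delta_{k_1,k_{n+1}}\,dk=|\Lambda_L|^{-1}\int_{(\Lambda_L^*)^{n}}F|_{k_{n+1}=k_1}\,dk$: its factor $|\Lambda_L|^{-1}$ compensates $\|\widehat V\|_{*,\infty}\le|\Lambda_L|\,\|\widehat V\|_{*,1}$.

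A minor point in Step 1: $B_\#$ is the periodic extension of $B|_{\Lambda_L}$, not the periodization $\sum_m B(\cdot+mL)$. It is therefore generally not smooth across $\partial\Lambda_L$, and summability of $\widehat B_\#$ on $\Lambda_L^*$ is not automatic. It is supplied by Lemma~\ref{lemestfourdisc} and the remark after it, under the standing assumptions (reflection symmetry, or compact support with $L\ge L_0$).
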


\begin{proof}
The trace class property follows since $R_L(z)$ is Hilbert-Schmidt, cf.  Lemma \ref{HilbertSchmidt},  the fact that 
Hilbert-Schmidt operators are an ideal and that  the product of two Hilbert-Schmidt operators is trace class. 
To calculate the trace we  use  the o.n.b. $(\varphi_p)_{p \in \Lambda_L^*}$,
which constitutes an  eigenbasis  
of   $R_L(z)$.   Using the definition of Fourier transform of the potential, cf.   \eqref{ftident2}, and the notation  \eqref{delta-0} we find as an iterated summation 
\begin{align*}
&   |\Lambda_L|^{-1}  \sum_{p}  \inn{ \varphi_p,    R_L(z) [  V_L  R_L(z)  ]^n \varphi_p }  \\
& = |\Lambda_L|^{-1}  \sum_{p}\sum_{k_1} \cdots \sum_{k_{n+1}}  \langle  \varphi_p , R_L(z) \varphi_{k_1} \rangle  \langle \varphi_{k_1} , V_L  \varphi_{k_2}  \rangle \langle  \varphi_{k_2} ,  R_L(z)  \varphi_{k_2} \rangle \langle \varphi_{k_2} ,  V_L \varphi_{k_3} \rangle   \\
& 
\cdots \times  \langle  \varphi_{k_{n}} ,  R_L(z)  \varphi_{k_{n}} \rangle \langle  \varphi_{k_{n}}  , V_L \varphi_{k_{n+1}} \rangle \langle 
\varphi_{k_{n+1}} ,  R_L(z)   \varphi_p
\rangle \\
& = |\Lambda_L|^{-n-1} \sum_{p}  \sum_{k_1} \cdots \sum_{k_{n+1}}\delta_{p,k_1}
   (\nu(k_1) - z)^{-1}   \widehat{V}_L(k_1-k_2)   (\nu(k_2) - z)^{-1}   
\widehat{V}_L(k_{2} - k_{3})  \\ 
&   \cdots \times  (\nu({k}_{n}) - z)^{-1}  \widehat{V}_L({k}_{n} - {k}_{n+1}) 
  (\nu({k}_{n+1}) - z)^{-1} \delta_{p, k_{n+1}}  \\
& = |\Lambda_L|^{-n-1}  \sum_{k_1} \cdots \sum_{k_{n+1}}
   (\nu(k_1) - z)^{-1}   \widehat{V}_L(k_1-k_2)   (\nu(k_2) - z)^{-1}   
\widehat{V}_L(k_{2} - k_{3})  \\ 
&   \cdots \times  (\nu({k}_{n}) - z)^{-1}  \widehat{V}_L({k}_{n} - {k}_{n+1}) 
  (\nu({k}_{n+1}) - z)^{-1} \delta_{k_1, k_{n+1}}  . 
\end{align*}
Now writing this in terms of the notation \eqref{deltaL}  and  \eqref{defofdiscretefouriersum} 
we obtain 
\begin{align} \label{defoffinTnoexp} 
   & |\Lambda_L|^{-1}  \sum_{p}  \inn{ \varphi_p,    R_L(z) [  V_L  R_L(z)  ]^n \varphi_p }  \\  & = \prod_{j=1}^{n+1}  \int_{\Lambda_L^*} \prod_{j=1}^n \widehat{V}(k_j-k_{j+1})    
 \prod_{j=1}^{n+1}  (\nu( k_j) -  z)^{-1}   \delta_{k_1, k_{n+1} } d k_{n+1} \cdots d k_1   .
\end{align}
If  the sum converges absolutely we can sum in any order and obtain  \eqref{defoffinTnoexp-0}. 
Thus it remains to show \eqref{defoffinTnoexp-abs}. For this we estimate using    Lemma \ref{finitetracesumm}.
\begin{align*} 
  &\prod_{j=1}^{n+1}  \int_{\Lambda_L^*} \prod_{j=1}^n |\widehat{V}(k_j-k_{j+1})   | 
 \prod_{j=1}^{n+1}  |\nu( k_j) -  z|^{-1}   \delta_{k_1, k_{n+1} } d k_{n+1} \cdots d k_1   \\
& \leq  \int_{(\Lambda_L^*)^{n+1}} \prod_{j=1}^n| \widehat{V}(k_j-k_{j+1})  |  
 |\nu( k_1) -  z|^{-2}    |{\rm dist}(z,[0,\infty))|^{-n+1}    d (k_1, \ldots , k_{n+1}) \\
&  \leq  |{\rm dist}(z,[0,\infty))|^{-n+1}   \| \widehat{V} \|_{1,*}^n  \int_{\Lambda_L^*} |\nu( k) -  z|^{-2}  dk .
\end{align*}
\end{proof}

Next we want to calculate the expectation of   \eqref{defoffinTnoexp-0} by taking 
the expectation into the integral  and use  Lemma  \ref{expProd00123}.
To justify the interchange of integration we use Fubini's theorem which is 
justified by means of \eqref{defoffinTnoexp-0} and the bound of the following Lemma. 

\begin{lemma} \label{boundonexppotenyt}  For $m \in \N$ there exists a constant $C_m$ such that 
\begin{align}
{\bf E}_L ( \|\widehat{V}_{L}\|_{1,*}^{m}  )  
 \leq C_m  |\Lambda_L|^m \|     \widehat{B}_\# \|_{1,*}^{m} {\bf E}_v  (|v|^{m} ) 
\end{align} 
 for all $L \geq 1$. 
\end{lemma}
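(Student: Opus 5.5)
The plan is to bound $\|\widehat{V}_L\|_{1,*}$ pointwise in $\omega$ by a random quantity whose moments are explicit, and then take the expectation. By \eqref{bumpnotationresdd} we have, for every $p \in \Lambda_L^*$,
\[
|\widehat{V}_{L,\omega}(p)| \leq \sum_{\gamma=1}^{M} |v_\gamma|\, |\widehat{B}_\#(p)| ,
\]
since $|e^{-2\pi i p\cdot y_{L,\gamma}}|=1$. Integrating over $\Lambda_L^*$ with the normalization \eqref{defofdiscretefouriersum} gives
\[
\|\widehat{V}_L\|_{1,*} \leq \|\widehat{B}_\#\|_{1,*} \sum_{\gamma=1}^M |v_\gamma| .
\]
The positions $y_{L,\gamma}$ have dropped out completely.

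Raising this to the $m$-th power, it remains to bound ${\bf E}_M {\bf E}_v^{\otimes M} \big(\sum_{\gamma=1}^M |v_\gamma|\big)^m$. I expand the power as $\sum_{\gamma_1,\dots,\gamma_m=1}^M \prod_{j}|v_{\gamma_j}|$ and group the terms by the partition $A\in\mathcal{A}_m$ of coincident indices, exactly as in Lemma \ref{lem:combprob1} with all momenta set to zero and $v$ replaced by $|v|$.

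For a fixed $A$, the number of index tuples is $M!/(M-|A|)!$ when $|A|\le M$, and zero otherwise. The $v$-expectation factorizes as $\prod_{a\in A}{\bf E}_v|v|^{|a|}$. By Hölder's (or Jensen's) inequality, ${\bf E}_v|v|^{k}\le ({\bf E}_v|v|^m)^{k/m}$ for $k\le m$, so this product is at most ${\bf E}_v(|v|^m)$, because $\sum_a |a| = m$.

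Taking ${\bf E}_M$ and using Lemma \ref{lemP} (as in the proof of Lemma \ref{lem:combprob2}), the factor $M!/(M-|A|)!$ has expectation $|\Lambda_L|^{|A|}$. Since $L\ge 1$ we have $|\Lambda_L|\ge 1$, so $|\Lambda_L|^{|A|}\le|\Lambda_L|^m$. Summing over the finitely many partitions gives the claim with $C_m=|\mathcal{A}_m|$, the $m$-th Bell number.

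No step is a genuine obstacle. The only points needing a little care are the following.
\begin{itemize}
\item The moment assumption \eqref{2.4} concerns $m_k={\bf E}_v v^k$, while here absolute moments appear. They are finite because ${\bf E}_v|v|^{m}\le ({\bf E}_v v^{2m})^{1/2}<\infty$.
\item The Hölder step must be applied to $|v|$, not to $v$.
\item The exchange of the finite sum with the expectations is justified by positivity (Tonelli).
\end{itemize}
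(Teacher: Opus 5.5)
Your proof is correct. Its first step, dropping the phases to get $\|\widehat V_L\|_{1,*}\le \|\widehat B_\#\|_{1,*}\sum_{\gamma=1}^M|v_\gamma|$, is exactly the paper's; the two routes diverge after that.

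The paper separates the weights from $M$ in one line with the convexity bound $\bigl(\sum_{\gamma=1}^M|v_\gamma|\bigr)^m\le M^{m-1}\sum_{\gamma=1}^M|v_\gamma|^m$. This gives ${\bf E}_v(|v|^m)\,{\bf E}_M(M^m)$. It then states ${\bf E}_M(M^m)\le C_m|\Lambda_L|^m$ as a consequence of Lemma \ref{lemP}, which implicitly uses the expansion of $M^m$ into falling factorials with nonnegative coefficients (Stirling numbers of the second kind).

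You instead expand into partitions first and handle the separation with Lyapunov's inequality ${\bf E}_v|v|^k\le({\bf E}_v|v|^m)^{k/m}$ on each block. Your partition sum is that falling-factorial expansion written out at the level of index tuples. As a result you apply Lemma \ref{lemP} only to falling factorials, where it holds verbatim. You also obtain the exact identity ${\bf E}_L\bigl(\sum_{\gamma}|v_\gamma|\bigr)^m=\sum_{A\in\mathcal A_m}|\Lambda_L|^{|A|}\prod_{a\in A}{\bf E}_v|v|^{|a|}$ before any crude bound is applied.

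The paper's version is shorter and needs no moment interpolation. Yours is self-contained and reuses the combinatorics of Lemmas \ref{lem:combprob1} and \ref{lem:combprob2}. Both end with the same constant $C_m=|\mathcal A_m|$, the Bell number, because the Stirling numbers of the second kind sum to it.
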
 
\begin{proof}
The bound is  obtained by inserting 
the definition \eqref{bumpnotationres} using the triangle inequality and convexity 
\begin{align}
{\bf E}_L ( \|\widehat{V}_{L}\|_{1,*}^{m}  )  & \leq  {\bf E}_L  \left( \sum_{\gamma=1}^M |v_\gamma| \|     \widehat{B}_\# \|_{1,*}\right)^{m}    \leq  {\bf E}_L  \left( \sum_{\gamma=1}^M |v_\gamma| \right)^{m}  \|     \widehat{B}_\# \|_{1,*}^{m} \nn  \\
& \leq  {\bf E}_L  M^{m-1} \sum_{\gamma=1}^M    |v_\gamma|^{m}  \|     \widehat{B}_\# \|_{1,*}^{m}
\leq  \|     \widehat{B}_\# \|_{1,*}^{m} {\bf E}_v (|v|^{m} ) {\bf E}_M ( M^{m} ) . \label{expofV} 
\end{align} 
Now it follows from  Eq. \eqref{eq:poissonexp} of Lemma \ref{lemP}  that there exists a constant $C_m$ such that for Poisson distributed 
random variable  $N$ with mean $\lambda \geq 1$ 
\begin{align} 
{\bf E} (N^{m}   ) 
 \leq  C_m  \lambda^m   . 
\end{align} 
This gives for $L \geq 1$ that 
\begin{align} 
{\bf E}_M (M^{m}   ) 
 \leq   C_m |\Lambda_L|^m  . 
\end{align} 
Inserting this into \eqref{expofV}    yields the claim. 
\end{proof} 

This leads immediately to the following theorem. To formulate it  we introduce the following notation. 
For $A  \in \mathcal{A}_n $ and $z \in \C \setminus [0,\infty)$ we define 
\begin{align} \label{eq:defofCpi-0}
\widetilde{C}_{n,A,L}[z ] & :=      \int_{(\Lambda_L^*)^{n+1}}   \mathcal{P}_{A,L}(k_1, \ldots , k_{n+1})    
 \prod_{j=1}^{n+1}  (\nu( k_j) -  z)^{-1}   \delta_{k_1,k_{n+1}}   d (k_1, \ldots , k_{n+1})  , 
\end{align}
recalling the definition in  \eqref{eq:defofdeltapi0}.

\begin{theorem} \label{lem:formulaforexpcoeff-1} 
Let  $z \in \C \setminus [0,\infty)$.    Then 
 we have 
\begin{align} 
   {\bf E}_L
  \widetilde{\rm tr}   R_L(z) [  V_L  R_L(z)  ]^n  
 =     \sum_{A \in \mathcal{A}_n}  \widetilde{C}_{n, A, L }[ z ]  \label{TintermsofC} 
\end{align}
and the sum  \eqref{eq:defofCpi-0} converges absolutely.
\end{theorem}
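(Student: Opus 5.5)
The plan is to start from the almost-sure momentum-space formula \eqref{defoffinTnoexp-0} of Lemma \ref{fourierspacekernel}, take the expectation ${\bf E}_L$ of both sides, and move the expectation inside the sum over $(k_1,\ldots,k_{n+1}) \in (\Lambda_L^*)^{n+1}$. Once it is inside, the only random factor is $\prod_{j=1}^n \widehat{V}_L(k_j-k_{j+1})$. Lemma \ref{expProd00123} evaluates its expectation as $\sum_{A\in\mathcal{A}_n}\mathcal{P}_{A,L}(k_1,\ldots,k_{n+1})$. Substituting this and exchanging the finite sum over $A$ with the sum over momenta gives exactly $\sum_{A}\widetilde{C}_{n,A,L}[z]$ with $\widetilde{C}_{n,A,L}$ as in \eqref{eq:defofCpi-0}. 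Note that $\delta_{k_1,k_{n+1}}$ together with the $|\Lambda_L|^{-n-1}$ normalization matches the convention used there.

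The one point that needs care is the interchange of ${\bf E}_L$ with the infinite lattice sum. I will justify it by Fubini--Tonelli on the product of the probability space and counting measure on $(\Lambda_L^*)^{n+1}$. By \eqref{defoffinTnoexp-abs}, the absolute sum is bounded pointwise in $\omega$ by
\[
{\rm dist}(z,[0,\infty))^{-n+1}\,\|\widehat V_L\|_{1,*}^n \int_{\Lambda_L^*}|\nu(k)-z|^{-2}\,dk .
\]
The last factor is finite and deterministic by Lemma \ref{finitetracesumm}. Lemma \ref{boundonexppotenyt} with $m=n$ gives ${\bf E}_L\|\widehat V_L\|_{1,*}^n<\infty$, using that $\widehat{B}_\#\in\ell^1(\Lambda_L^*)$ because $B$ is Schwartz and that the moments \eqref{2.4} are finite. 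So the integrand is jointly integrable and the interchange is legitimate. Measurability of $\omega\mapsto\widehat V_L(p)$ is clear from \eqref{bumpnotationresdd}.

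For the absolute convergence of each $\widetilde{C}_{n,A,L}[z]$, I bound $|\mathcal{P}_{A,L}|$ using $\delta_{*,L}\le|\Lambda_L|$, $|m_{|a|}|<\infty$, and keep the factors $|\widehat B_\#(k_l-k_{l+1})|$. I then estimate as in the proof of Lemma \ref{fourierspacekernel}. All resolvent factors except $|\nu(k_1)-z|^{-1}|\nu(k_{n+1})-z|^{-1}$ are bounded by ${\rm dist}(z,[0,\infty))^{-1}$. On the support of $\delta_{k_1,k_{n+1}}$ the remaining two factors combine into $|\nu(k_1)-z|^{-2}$. For the variables $k_2,\ldots,k_n$ I sum successively in the differences $k_j-k_{j+1}$, each giving a factor $\|\widehat B_\#\|_{1,*}$, while $\delta_{*,L}(\cdot)\le|\Lambda_L|$ just produces an $L$-dependent constant. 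This yields a finite bound for fixed $L$, which suffices here since no uniformity in $L$ is claimed.

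The main, though mild, obstacle is the Fubini justification, and it is handled entirely by Lemmas \ref{fourierspacekernel} and \ref{boundonexppotenyt}. Theorem \ref{lem:formulaforexpcoeff} then follows immediately by inserting \eqref{TintermsofC} into \eqref{defoffinT}.
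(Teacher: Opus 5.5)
Your argument is the paper's proof: take ${\bf E}_L$ of \eqref{defoffinTnoexp-0}, justify the interchange by Fubini using \eqref{defoffinTnoexp-abs} and Lemma \ref{boundonexppotenyt}, and evaluate the expectation with Lemma \ref{expProd00123}.

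Your separate bound on each $\widetilde C_{n,A,L}[z]$ via $\delta_{*,L}\le|\Lambda_L|$ is more careful than the paper, which cites Fubini only. Fubini controls $\bigl|\sum_{A}\mathcal{P}_{A,L}\bigr|$, not each $|\mathcal{P}_{A,L}|$, whereas termwise absolute convergence is what the statement claims and what you need to split the sum over $A$. In that bound there are $n$ factors $\widehat B_\#$ but only $n-1$ free momenta once $k_{n+1}=k_1$. So bound one factor by $\|\widehat B_\#\|_{*,\infty}\le\|B\|_1$, or drop the Kronecker delta as in \eqref{defoffinTnoexp-abs}.

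One justification is wrong: $\widehat B_\#\in\ell^1(\Lambda_L^*)$ does not follow from $B$ being Schwartz. $B_\#$ is the periodic extension of $B|_{\Lambda_L}$, which in general jumps across $\partial\Lambda_L$, e.g.\ $B(L/2,x')\neq B(-L/2,x')$. Its Fourier coefficients then decay only like $|k_j|^{-1}$ along the coordinate axes, which is not summable. In that case the right-hand side of Lemma \ref{boundonexppotenyt} is infinite, so neither your Fubini majorant nor your termwise bound works as written.

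Finiteness of $\|\widehat B_\#\|_{*,1}$ must come from the standing assumption on $B$:
\begin{itemize}
\item Under reflection symmetry, Lemma \ref{lemestfourdisc} gives $|\widehat B_\#(k)|\lesssim\prod_j\langle k_j\rangle^{-2}$ for all $L\ge 1$, since the symmetry cancels the first boundary term in each integration by parts.
\item Under compact support, you need $L\ge L_0$ with $\supp B\subset(-L_0/2,L_0/2)^d$, as in the remark after that lemma.
\end{itemize}
With this in place of ``because $B$ is Schwartz'', your proof is complete.
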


\begin{proof}   Calculate the expectation of   \eqref{defoffinTnoexp-0}, using Fubini justified 
by means of \eqref{defoffinTnoexp-abs} and  Lemma \ref{boundonexppotenyt} 
we find with Lemma  \ref{expProd00123}
\begin{align} \label{defoffinTnoexp} 
&{\bf E} \widetilde{{\rm tr}} R_L(z) [  V_L  R_L(z)  ]^n \\ 
& =   \int_{(\Lambda_L^*)^{n+1}}  {\bf E}  \prod_{j=1}^n \widehat{V}(k_j-k_{j+1})    
 \prod_{j=1}^{n+1}  (\nu( k_j) -  z)^{-1}  \delta_{k_1,k_{n+1} }  d (k_1, \ldots , k_{n+1})   \nn \\
& =     \int_{(\Lambda_L^*)^{n+1}}   \sum_{A  \in \mathcal{A}_n} \mathcal{P}_{A,L}(k_1, \ldots , k_{n+1})    
 \prod_{j=1}^{n+1}  (\nu( k_j) -  z)^{-1}   \delta_{k_1,k_{n+1}}   d (k_1, \ldots , k_{n+1})   . \nn
\end{align}
Furthermore, the absolute convergence follows from  \eqref{defoffinTnoexp-abs} and  Lemma \ref{boundonexppotenyt} and Fubini's theorem. 
\end{proof}

\section{Integral Formula for Expansion Coefficients} 

\label{sefurtherintformuilas} 

Before we can show the other two theorems we will need some preparatory work. 
For this we introduce  the notation  
 \begin{align}
 \label{Klammern}
 \langle x \rangle = (1 + x^2)^{1/2},
 \end{align} which we shall use throughout the paper.

 The remaining part of this section is devoted to the proofs  of  Theorems  \ref{exinflim} and   \ref{thm:boundaryvelueexpcoeff},
 which state that the infinite volume limit for each expansion coefficient exists and that for this  infinite volume limit we can take the limit  as the spectral 
 parameter approaches the positive real axis from the complex upper or lower half plane.

Technically,  it is more convenient to first sum over the discrete delta functions and then take the limit $L \to \infty$, rather than taking  the limit first and then 
integrating out the delta distributions, cf.  Corollary \ref{remdeltdirac} below.\\ 

\label{sec:transformU} 
To  sum over the discrete delta functions, we make a change of the  summation variables. 
First we introduce the  change of variables given by
\begin{align} \label{changevar1}
u_0  = k_1 ,  \qquad  u_{s} = k_{s+1} - k_s,  \quad s=1,...,n
\end{align}
to be able to resolve the discrete delta functions one at a time.
Expressing the variables $k$  in terms of the variables $u = (u_0,  \ldots u_n) \in (\R^d)^{n+1}$ we find
$$
k_j = \sum_{l=0}^{j-1} u_l .
$$
Applying this change of variables to  \eqref{eq:defofCpi-0} and  inserting  \eqref{eq:defofdeltapi0} produces 
\begin{align} 
 \widetilde{C}_{n,A,L}[z ] & =   \int_{(\Lambda_L^*)^{n+1}}   \prod_{ a \in A} \left\{  \delta_{*,L} \left( \sum_{s  \in a } u_s  \right) \prod_{s \in a} \widehat{B}_\#(-u_s)  \right\}    \nonumber  \\ & \quad  
\prod_{j=1}^{n+1}  \left( \nu \left( \sum_{l=0}^{j-1} u_l \right)  -z  \right)^{-1} \delta_{u_0, \sum_{l=0}^{n} u_l } d (u_0,  \ldots, u_n) , \quad z \in \C \setminus [0,\infty) \label{eq:defofCpi}
\end{align}
For a given partition $A$ (hence given $a \in A$) we are now going to introduce a function $M_A$ to express the dependence of indices belonging to the same $a \in A$ which is caused by the term 
$$
\delta_{*,L}\left( \sum_{s  \in a } u_s  \right). 
$$
This term  vanishes unless  $\sum_{s  \in a } u_s  =0$, which means that for every $a \in A$ we can express one of the variables by the negative sum of the others,  or  in the case,  where $a$ consists of only one element,  that element has to be equal to $0$.    
We choose to express the highest number of each set $a \in A$ by the negative sum of all the other numbers, which gives us 
\begin{align*}
u_{\max a}=- \sum_{j \in a \setminus \{ \max a \} } u_j .
\end{align*}
  In the case where $a$ consists of only one element,  $ a \setminus \{ \max a \}= \emptyset$ which produces an empty sum, which is $0$ by convention.
To make the notation more usable we define the set of all  indices, which are the maximum of a set $a$ by

\begin{align} \label{defofJA}
J_A := \{ \max a : a \in A\}
\end{align}
as well as its complement
\begin{align}\label{defofIA}
I_A := \{ 1 ,  \ldots ,  n \} \setminus J_A.
\end{align}
Since $A$ is a partition, for any $j \in \{1 , \ldots , n\}$ there is a unique set $a \in A$ such that $j \in a$, we denote this set by $a (j)$. 
We define the map $M_A :( \R^d)^{| I_A |} \to (\R^d)^n$ as 
\begin{align}\label{changevar1next} 
[M_A (v)]_j :=
\begin{cases}
v_j &: j \in I_A
\\
- \sum\limits_{ l \in a (j)  \setminus \{ j \} }  v_l &: j \in J_A,
\end{cases}
\end{align}
where $v=(v_j)_{j \in I_A}$ with  $v_j  \in \R^d$ for all $j \in I_A$. 
Note that \eqref{changevar1next} contains the case in which $j$ is the only element of $a(j)$ and  $[M_A (v)]_j =0$ holds.
This definition implies the following lemma.

\begin{lemma} \label{sum0}
Let $A$ be a partition of  $\{ 1 ,  \ldots ,  n \}$  and $M_A$ be defined as in  \eqref{changevar1next}.
Then we have
\begin{align}\label{totsumzero} 
\sum_{j=1}^n  [M_A (v)]_j = 0
\end{align}
for all $v  \in ( \R^d)^{| I_A |}$. 
\end{lemma}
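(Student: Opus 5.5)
The identity \eqref{totsumzero} is a direct consequence of the definition \eqref{changevar1next}. The plan is to reorganize the sum over $j \in \{1,\ldots,n\}$ according to the partition $A$ and to show that each block contributes zero.

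Since $A$ is a partition of $\{1,\ldots,n\}$, we can write
\begin{align*}
\sum_{j=1}^n [M_A(v)]_j = \sum_{a \in A} \sum_{j \in a} [M_A(v)]_j ,
\end{align*}
so it suffices to show that $\sum_{j \in a} [M_A(v)]_j = 0$ for every fixed $a \in A$.

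Fix $a \in A$. By the definition \eqref{defofJA} of $J_A$, the element $\max a$ is the unique element of $a$ lying in $J_A$. Every other $j \in a \setminus \{\max a\}$ lies in $I_A$, because $j \in J_A$ would mean $j = \max b$ for some $b \in A$, and then $b = a(j) = a$ would force $j = \max a$. Consequently, for $j \in a \setminus\{\max a\}$ we have $[M_A(v)]_j = v_j$. For $j = \max a$ we have $a(j) = a$, so $[M_A(v)]_{\max a} = -\sum_{l \in a \setminus\{\max a\}} v_l$. Summing gives
\begin{align*}
\sum_{j \in a} [M_A(v)]_j = \sum_{j \in a\setminus\{\max a\}} v_j - \sum_{l \in a \setminus \{\max a\}} v_l = 0 .
\end{align*}
This also covers singletons $a = \{j\}$: there $a \setminus \{\max a\} = \emptyset$, so both the empty sum and $[M_A(v)]_j$ vanish.

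Summing over $a \in A$ then yields \eqref{totsumzero}. There is no real obstacle; the only point requiring care is the observation that each block $a$ contains exactly one element of $J_A$, namely $\max a$. This is what makes the block sums telescope to zero.
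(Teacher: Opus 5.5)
Your proof is correct and is exactly the direct argument the paper has in mind; the paper only calls the lemma straightforward and defers details to \cite{HaslerKoberstein.2025-2}. Grouping the sum by the blocks of $A$, each block contains exactly one element of $J_A$, namely $\max a$, and that entry cancels the remaining entries of the block.
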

The proof is straightforward to see. For details we refer the reader to \cite{HaslerKoberstein.2025-2}. 
With the notation  introduced  \eqref{changevar1next} summing over  the variables in  \eqref{eq:defofCpi} and evaluating the delta function we obtain the identity of the following theorem.
Here and henceforth we adopt a notation where we write the integration variables right after the integral sign for notational compactness.

\begin{theorem} \label{thm:dosasy0-0-Chat}  
 Let  $z \in \C \setminus [0,\infty)$ and  $n \in \N$. Then  
\begin{align*}  
  {\bf E}_L   \widetilde{\rm tr}    R_L(z) [  V_L  R_L(z)  ]^n   = \sum_{A \in \mathcal{A}_n}  \widetilde{C}_{n, A, L }[ z  ]   ,  
\end{align*} 
where for $A  \in \mathcal{A}_n $
\begin{align} \label{eq:defofCpi2}
& \widetilde{C}_{n,A,L}[z ]  \\
 & =   \int_{\Lambda_L^*} du_0 \prod_{l \in I_A} \left( \int_{\Lambda_L^*}  dv_l \right)   \prod_{j  = 1}^n   \widehat{B}_\#(-[M_A(v)]_j)               
 \prod_{j=1}^{n+1}  \left( \nu \left( u_0 +   \sum_{l=1}^{j-1} [M_A v]_l  \right) - z  \right)^{-1}  . \nonumber
\end{align}
and the sum converges absolutely. 
\end{theorem}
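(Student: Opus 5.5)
The first identity is exactly Theorem \ref{lem:formulaforexpcoeff-1}, which also gives absolute convergence of the sum \eqref{eq:defofCpi-0}. It remains to rewrite each $\widetilde{C}_{n,A,L}[z]$ in the form \eqref{eq:defofCpi2}. I will start from the representation \eqref{eq:defofCpi} obtained by the change of variables \eqref{changevar1}. This change is a bijection of $(\Lambda_L^*)^{n+1}$ onto itself, since $\Lambda_L^*$ is an additive group and the map is unimodular and triangular. It preserves the normalized counting measure $\int_{\Lambda_L^*}$ in each coordinate. Absolute convergence is preserved as well, so we may sum in any order by Fubini.

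Next I resolve the delta functions. For each block $a\in A$ the factor $\delta_{*,L}(\sum_{s\in a}u_s)=|\Lambda_L|\,1_{\{\sum_{s\in a}u_s=0\}}$ appears. I perform the sum over $u_{\max a}$ first. The normalized sum $\int_{\Lambda_L^*}du_{\max a}=|\Lambda_L|^{-1}\sum_{u_{\max a}}$ combines with the factor $|\Lambda_L|$ to give exactly evaluation at $u_{\max a}=-\sum_{l\in a\setminus\{\max a\}}u_l$, which is $[M_A(v)]_{\max a}$ with $v_l:=u_l$ for $l\in I_A$. The blocks are disjoint and $\max a$ is eliminated only via its own block, so these evaluations can be done independently. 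After them, $(u_1,\dots,u_n)=M_A(v)$ with $v$ ranging freely over $(\Lambda_L^*)^{|I_A|}$. In the product $\prod_{a}\prod_{s\in a}\widehat B_\#(-u_s)$ each $j\in\{1,\dots,n\}$ appears exactly once, so it becomes $\prod_{j=1}^n\widehat B_\#(-[M_A(v)]_j)$. Likewise $k_j=u_0+\sum_{l=1}^{j-1}[M_A v]_l$. Singletons $a=\{j\}$ are covered by the convention that the empty sum is $0$.

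Finally I treat the remaining factor $\delta_{u_0,\sum_{l=0}^n u_l}$, which is the unnormalized Kronecker delta, i.e. $1_{\{\sum_{l=1}^n u_l=0\}}$. By Lemma \ref{sum0}, $\sum_{j=1}^n [M_A(v)]_j=0$ identically, so this factor equals $1$ after substitution and simply disappears. This leaves the integrals over $u_0$ and $v_l$, $l\in I_A$, giving \eqref{eq:defofCpi2}. The $j=n+1$ resolvent factor has argument $u_0$, consistent with $k_{n+1}=k_1$.

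The only delicate point is the bookkeeping of normalizations. Each block must contribute exactly one factor $|\Lambda_L|$ and one normalized summation. The closing $\delta_{k_1,k_{n+1}}$ must be the unnormalized Kronecker delta, so that it contributes no stray power of $|\Lambda_L|$; this matches its use in Lemma \ref{fourierspacekernel}. Absolute convergence of the resulting expression is inherited from Theorem \ref{lem:formulaforexpcoeff-1}, because we only applied Fubini and evaluated point masses on absolutely convergent sums.
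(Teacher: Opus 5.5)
Your proposal is the paper's own argument: the paper likewise passes to the variables \eqref{changevar1}, sums out each block delta through $u_{\max a}$ by means of $M_A$, and drops the closing Kronecker delta via Lemma \ref{sum0}, and your bookkeeping of the normalizations (in particular treating $\delta_{k_1,k_{n+1}}$ as unnormalized, as in Lemma \ref{fourierspacekernel}) is correct. One caveat, inherited from the paper: $\mathcal{P}_{A,L}$ carries the constant $\prod_{a\in A} m_{|a|}$, which is silently dropped in \eqref{eq:defofCpi} and hence in \eqref{eq:defofCpi2}, so when you change variables in \eqref{eq:defofCpi-0} you must keep this factor in front of the integral (as the paper itself does later in \eqref{eq:defofCpiVarChange}); without it the identity is false in general, e.g.\ for $n=1$ and $m_1=0$.
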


\section{Bounds on Expansion Coefficients} 

\label{sec:boundsonexpcoeff}

In this section we will use  the explicit formula in  \eqref{eq:defofCpi2} to obtain 
 bounds, which are better than the bounds in   \eqref{defoffinTnoexp-abs}.

\begin{lemma}\label{thm:dosasy0-0}   \label{boundonCsDOS} 
For $n \in  \N$  and  any compact set  $I \subset \R$
there exists a  constant $K_{n,I}$   such that   
\begin{align} \label{boundonTs-2-0}
& | \widetilde{C}_{n,A,L}[z]  | \leq   \frac{ K_{n,I}  ( 1  + {\rm dist}(z,\R_\geq)^{-2} )  }{ {\rm dist}(z,\R_\geq)^{n-1}}  ,  \quad   n \geq 1  
\end{align} 
holds for all $L \geq 1$,   $z \in \C \setminus [0,\infty)$ with ${\rm Re} z \in I$,  and  $A \in  \mathcal{A}_n $.
\end{lemma}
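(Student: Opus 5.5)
We bound the explicit integral formula \eqref{eq:defofCpi2} of Theorem \ref{thm:dosasy0-0-Chat} for a fixed partition $A$. Take absolute values inside the sum. For the $n+1$ resolvent factors we use the trivial bound $|\nu(k)-z|^{-1}\le {\rm dist}(z,\R_\geq)^{-1}$ on $n-1$ of them. The two factors $j=1$ and $j=n+1$ we keep. By Lemma \ref{sum0} the argument of the last factor is $u_0+\sum_{l=1}^{n}[M_Av]_l=u_0$, so it coincides with the first factor. This yields
\begin{align*}
|\widetilde{C}_{n,A,L}[z]| \le {\rm dist}(z,\R_\geq)^{-(n-1)} \int_{\Lambda_L^*} |\nu(u_0)-z|^{-2}\, du_0 \ \prod_{l\in I_A}\Big(\int_{\Lambda_L^*} dv_l\Big) \prod_{j=1}^n \big|\widehat{B}_\#(-[M_A v]_j)\big| .
\end{align*}
The $u_0$-integral decouples from the $v$-integral. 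Lemma \ref{finitetracesumm} bounds it by $K_{0,E}(1+{\rm dist}(z,\R_\geq)^{-2})$, uniformly for ${\rm Re}\,z$ in the compact set $I$ and $L\ge 1$.

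It remains to bound the $v$-integral uniformly in $L\ge1$. For $j\in J_A$ we estimate $|\widehat{B}_\#(-[M_Av]_j)|\le \|\widehat{B}_\#\|_{*,\infty}$. For $j\in I_A$ the argument is just $-v_j$, so the integral factorizes into $\prod_{l\in I_A}\|\widehat{B}_\#\|_{*,1}$. Hence the $v$-integral is at most $\|\widehat{B}_\#\|_{*,1}^{|I_A|}\|\widehat{B}_\#\|_{*,\infty}^{|J_A|}$.

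The main obstacle is to show that $\|\widehat{B}_\#\|_{*,1}=|\Lambda_L|^{-1}\sum_{p\in\Lambda_L^*}|\widehat{B}_\#(p)|$ and $\|\widehat{B}_\#\|_{*,\infty}$ are bounded uniformly in $L\ge1$.

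\emph{Compactly supported $B$.} For $L$ large enough that ${\rm supp}\,B\subset\Lambda_L$, the periodic Fourier coefficient equals the full Fourier transform: $\widehat{B}_\#(p)=\widehat{B}(p)$ for $p\in\Lambda_L^*$. Since $\widehat{B}$ is Schwartz, the Riemann sum over the lattice $(L^{-1}\Z)^d$ is bounded by a constant times $\sup_{|q|\le 1}\int|\widehat{B}(p+q)|\,dp$ plus a decay term. For the finitely many small $L$ (we need $L\ge1$ only) we use smoothness of $B_\#$ directly.

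\emph{Reflection-symmetric $B$.} By Poisson summation,
\begin{align*}
\widehat{B}_\#(p)=\int_{\Lambda_L}B(x)e^{-2\pi i p\cdot x}\,dx ,
\end{align*}
which is the Fourier transform of the restriction $B|_{\Lambda_L}$. The symmetry \eqref{SC} makes $B|_{\Lambda_L}$ even in each coordinate, so its periodic extension has no jump in value across the faces. Integrating by parts coordinatewise, the boundary terms of the first derivative are controlled by the Schwartz decay of $B$ at $|x|\sim L/2$. This gives $|\widehat{B}_\#(p)|\le C\prod_i\langle p_i\rangle^{-2}$ uniformly in $L\ge1$, and such a bound is summable with Riemann-sum normalization in each coordinate. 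The sup norm is bounded trivially by $\|B\|_{L^1}$.

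Combining the three estimates gives \eqref{boundonTs-2-0} with
\begin{align*}
K_{n,I}=\sup_{E\in I}K_{0,E}\,\max_{A}\|\widehat{B}_\#\|_{*,1}^{|I_A|}\|\widehat{B}_\#\|_{*,\infty}^{|J_A|}.
\end{align*}
This constant is finite and uniform in $L\ge1$ and $A\in\mathcal{A}_n$, since $\mathcal{A}_n$ is finite.
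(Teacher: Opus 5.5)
Your argument follows the paper's proof step by step. You bound the $n-1$ middle resolvents trivially, use Lemma~\ref{sum0} to identify the last resolvent with the first, and take sup norms for $j\in J_A$ and $\ell^1$ norms for $j\in I_A$. Lemma~\ref{finitetracesumm} handles the $u_0$-sum. You then need $L$-uniform bounds on $\|\widehat B_\#\|_{*,\infty}\le\|B\|_1$ (Lemma~\ref{fourierseries}(b)) and on $\|\widehat B_\#\|_{*,1}$. The paper takes the latter from Lemma~\ref{lemestfourdisc} and the remark after it; you sketch it yourself, and your reflection-symmetric sketch is essentially the content of Lemma~\ref{lemestfourdisc}. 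The identity $\widehat B_\#(p)=\int_{\Lambda_L}B(x)e^{-2\pi i p\cdot x}\,dx$ holds by definition, since $B_\#=B$ on $\Lambda_L$; it is not Poisson summation.

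One step is wrong: the sentence handling the compact-support case for small $L$. First, $1\le L<L_0$ is an interval, not finitely many values. More importantly, $B_\#$ is the periodic extension of the \emph{restriction} $B|_{\Lambda_L}$. When ${\rm supp}\,B\not\subset\Lambda_L$ this is in general not smooth, or even continuous, since it jumps across the faces of the box. Then $\widehat B_\#(p)$ decays only like $|p_i|^{-1}$, and $\|\widehat B_\#\|_{*,1}$ can be infinite. Already for $d=1$, $L=1$ and $B(x)=x$ near $[-\frac12,\frac12]$, $B_\#$ is a sawtooth with $|\widehat B_\#(m)|=(2\pi|m|)^{-1}$, so the $\ell^1$ route gives nothing there.

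The paper has the same limitation. Its $L$-uniform bound \eqref{secondRest} in the compact case is only asserted for $L\ge L_0$ with ${\rm supp}\,B\subset(-L_0/2,L_0/2)^d$, and Theorem~\ref{thm:maintec000} is accordingly formulated for $L\ge L_0$. The clean fix is to state the compact, non-symmetric case for $L\ge L_0$. Covering $1\le L<L_0$ would need a different estimate of the $v$-integral that does not use $\|\widehat B_\#\|_{*,1}$.
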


\begin{proof} 
We use  identity    \eqref{eq:defofCpi2}.
Let $z \in \C \setminus [0,\infty)$ and $A \in \mathcal{A}_n$.  
Then for  all 
resolvents except the first and the last we use the trivial bound    \eqref{ln-2}   (recall  that by Lemma  \ref{sum0} we
have   $\sum_{l=1}^{n} [M_Av]_l = 0$)
we find   summing over  all $v_l$ variables that 
\begin{align} \label{eq:exoflimit02}
&  \left|   \widetilde{C}_{n,A,L}[z   ]   \right|    \\
& =   \int_{\Lambda_L^*} dq  \left| \prod_{l \in I_A} \left( \int_{\Lambda_L^*}  dv_l \right)   \prod_{j  = 1}^n   \widehat{B}_\#(-[M_A(v)]_j)    \prod_{j=1}^{n+1}  \left( \nu \left( q +   \sum_{l=1}^{j-1} [M_A v]_l  \right) - z \right)^{-1} \right|   \nonumber \\
& \leq    {\rm dist}(z, \R_\geq)^{-n+1} \| \hat{B}_\# \|_{*,\infty}^{n-|I_A|}  \int_{\Lambda_L^*} dq  \prod_{l \in I_A} \left( \int_{\Lambda_L^*}  dv_l \right)   \prod_{\substack{ j  = 1 \\ j \in I_A} }^n  | \widehat{B}_\#(v_j)   |
|  \nu(q ) - z |^{-2}       \nn \\
& \leq  {\rm dist}(z,\R_\geq )^{-n+1} \| \hat{B}_\# \|_{*,\infty}^{n-|I_A|} \| \hat{B}_\# \|_{*,1}^{|I_A|}   \int_{\Lambda_L^*}  
| \nu (q ) - z |^{-2}    dq    \nn \\  
& \leq  {\rm dist}(z,\R_\geq)^{-n+1} \| \hat{B}_\# \|_{*,\infty}^{n-|I_A|}   \| \hat{B}_\# \|_{*,1}^{|I_A|}  C_I  ( 1  + {\rm dist}(z,\R_\geq)^{-2} ) , \nn 
\end{align}
for some constant $C_I$,  which exists by  Lemma  \ref{lem:discest} (with $\tau=0$),
uniformly in $L \geq 1$ and ${\rm Re } z \in I$. Note that  $\| \hat{B}_\# \|_{*,1}$ as well as $\| \hat{B}_\# \|_{*,\infty}$ are both dependent on $L$ indicated by the indexed $\#$.
Now to find  $L$-independent bounds,  we use Lemma \ref{lemestfourdisc}, which gives us an $L$-independent bound on $\| \hat{B}_\# \|_{*,1}$.  By Lemma  \ref{fourierseries}
\end{proof}

\section{Infinite Volume Limit of Expansion Coefficients} 

\label{secnine}

In this section we take the infinite volume limit of the expressions in \eqref{thm:dosasy0-0}.

\begin{proposition} \label{propexofcinfDOS} Let $d \leq 3$. Let  $z \in \C \setminus [0,\infty)$ and $A \in \mathcal{A}_n$.
Let   $n \in \N$. Then  the following limits exist 
\begin{align} \label{eq:exoflimit2DOS}
&   \widetilde{C}_{n,A, \infty }[z ] :=  \lim_{L \to \infty} \widetilde{C}_{n,A,L}[z]   \\
& =   \int_{\R^d} du_0 \prod_{l \in I_A} \left( \int_{\R^d}  dv_l \right)   \prod_{j  = 1}^n   \widehat{B}(-[M_A(v)]_j)   \prod_{j=1}^{n+1}  \left( \nu \left( u_0 +   \sum_{l=1}^{j-1} [M_A (v)]_l  \right) - z \right)^{-1}  ,  \nonumber
\end{align}
where $M_A$ is defined in  \eqref{changevar1next} and the integral converges absolutely. 
\begin{align} \label{eq:exoflimit2DOS}
& \widetilde{\rm tr}  {T}_{n,\infty}[z]   :=  \lim_{L \to \infty}  {\bf E}_L  \widetilde{\rm tr}  R_L(z) [  V_L  R_L(z)  ]^n  
= \sum_{A \in \mathcal{A}_n}     \widetilde{C}_{n,A, \infty }[z ] 
\end{align}
The following limit exists and equals 
\begin{align} \label{tracexpcoeff00-0-0} 
 \widetilde{\rm tr}  {\rm Im} {T}_{0,\infty}[z] & := 
 \lim_{L \to \infty} \widetilde{\rm tr}  {\rm Im} {T}_{0,L}[z]  = 
 \lim_{L \to \infty}\int_{\Lambda_L^*}  \frac{ {\rm Im } z }{ (\nu(p) - {\rm Re} z )^2 +  ({\rm Im } z)^2} dp \\
& = \int_{\R^d}  \frac{ {\rm Im } z }{ (\nu(p) - {\rm Re} z )^2 +  ({\rm Im } z)^2} dp  . \nn
\end{align} 
\end{proposition}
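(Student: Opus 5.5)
The plan is to regard the finite-volume expression \eqref{eq:defofCpi2} as a Riemann sum for the integral in \eqref{eq:exoflimit2DOS}, and to pass to the limit by a discrete dominated-convergence argument. Concretely, for fixed $z \in \C\setminus[0,\infty)$ and $A \in \mathcal{A}_n$, I will write $\widetilde C_{n,A,L}[z] = \int_{(\R^d)^{1+|I_A|}} F_L(u_0,v)\, d(u_0,v)$. Here $F_L$ is the piecewise constant function equal to the summand of \eqref{eq:defofCpi2}, evaluated at the lattice point $([u_0]_L,[v]_L)$, on each cell of side $1/L$; the normalization $|\Lambda_L|^{-1}$ in \eqref{defofdiscretefouriersum} is exactly the cell volume. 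Since each $[M_A v]_j$ is a signed sum of lattice points, it again lies in $\Lambda_L^*$, so $\widehat B_\#$ is only ever evaluated on the dual lattice.

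The first step is pointwise convergence $F_L \to F_\infty$, where $F_\infty$ is the integrand of \eqref{eq:exoflimit2DOS}. For this I need $\widehat B_\#(p) \to \widehat B(p)$, locally uniformly in $p$ as $p$ ranges over lattice points. Since $B$ is Schwartz, $\widehat{B}_\#(p) = \int_{\Lambda_L} B(x)e^{-2\pi i p\cdot x}dx$ on $\Lambda_L^*$ differs from $\widehat B(p)$ by at most $\int_{\R^d\setminus\Lambda_L}|B| \to 0$, uniformly in $p$. (This is presumably the content of Lemma \ref{fourierseries}.) Continuity of $\widehat B$ and of $(\nu(\cdot)-z)^{-1}$, the latter bounded by ${\rm dist}(z,\R_\geq)^{-1}$, then gives pointwise convergence at every point.

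The second step, which I expect to be the main obstacle, is an $L$-uniform integrable majorant. Following the proof of Lemma \ref{boundonCsDOS}, I bound the middle $n-1$ resolvents by ${\rm dist}(z,\R_\geq)^{-1}$. By Lemma \ref{sum0} the first and last resolvents are both $(\nu(u_0)-z)^{-1}$, and they contribute $|\nu([u_0]_L)-z|^{-2}$. The factors $\widehat B_\#$ with $j\in J_A$ are bounded by $\sup_L\|\widehat B_\#\|_{*,\infty}$, while the factors with $j \in I_A$ are bounded by $|\widehat B_\#([v_j]_L)|$. Because $B$ is Schwartz (in the compact-support case, or in the reflection-symmetric case using the decay available from Lemma \ref{lemestfourdisc}), we have $|\widehat B_\#(p)|\le C_N\langle p\rangle^{-N}$ uniformly in $L\ge1$. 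We also have $|\nu([u_0]_L)-z|^{-2}\le C\langle u_0\rangle^{-4}$ for fixed $z \notin [0,\infty)$, since $|[u_0]_L-u_0|\le \sqrt d$. Replacing $\langle [p]_L\rangle$ by $\langle p\rangle$ costs only a constant factor, so the majorant is $C\langle u_0\rangle^{-4}\prod_{l\in I_A}\langle v_l\rangle^{-N}$, which is integrable for $d\le3$. Dominated convergence then gives the first limit together with absolute convergence of the integral.

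The second claim, about $\widetilde{\rm tr}\,T_{n,\infty}[z]$, follows by summing over the finitely many partitions, using Theorem \ref{thm:dosasy0-0-Chat}. For the $n=0$ term, I use \eqref{zerotraceimterm}, which is again a Riemann sum, now of $g(p)={\rm Im} z/|\nu(p)-z|^2$. I apply the same argument with the majorant $C\langle p\rangle^{-4}$, which is integrable since $d\le 3$; the case ${\rm Im}z=0$ is trivial, since then both sides vanish.
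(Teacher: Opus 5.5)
Your argument is the paper's argument. The paper also rewrites $\widetilde C_{n,A,L}[z]$ as an integral of the piecewise-constant extension $\mathcal{E}_L$ of the summand. It gets pointwise convergence from $\|\widehat B_\#-\widehat B\|_{*,\infty}\to 0$ (Lemma \ref{fourierseries}(c), which is your $\int_{\R^d\setminus\Lambda_L}|B|$ estimate) together with continuity. It builds the majorant exactly as you do: the middle resolvents by ${\rm dist}(z,\R_\geq)^{-1}$, the $J_A$-factors by $\|B\|_1$, the two outer resolvents by $C\langle u_0\rangle^{-4}$, and the $I_A$-factors via Lemma \ref{lemestfourdisc}. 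The $n=0$ term and the sum over partitions are handled the same way.

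One intermediate claim is false, however. In the reflection-symmetric case there is no bound $|\widehat B_\#(p)|\le C_N\langle p\rangle^{-N}$ for $N>2$, not even for a single fixed $L$. The periodic extension $B_\#$ is continuous across the faces of $\Lambda_L$, but $\partial_j B_\#$ jumps there by $-2\,\partial_j B$ evaluated at $x_j=L/2$. Consequently, along the coordinate axes, $\widehat B_\#$ decays only like $|p|^{-2}$; a Gaussian $B$ already shows this.

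What Lemma \ref{lemestfourdisc} actually gives, uniformly in $L\geq 1$, is $|\widehat B_\#(p)|\le c_B\prod_{i=1}^d\langle p_i\rangle^{-2}$. This is still integrable over $\R^d$, so you can take the majorant $C\langle u_0\rangle^{-4}\prod_{l\in I_A}\prod_{i=1}^d\langle (v_l)_i\rangle^{-2}$. Replacing $\langle [p_i]_L\rangle$ by $\langle p_i\rangle$ again costs only a constant, and with this change the argument goes through. The paper's stated majorant $\langle v_j\rangle^{-2d}$ does not follow from Lemma \ref{lemestfourdisc} either and should be read in this product form.

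In the compact-support case your Schwartz bound is correct, but only for $L\ge L_0$ with ${\rm supp}\,B\subset\Lambda_{L_0}$. That restriction is harmless for taking the limit.
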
 
\begin{proof}
This will be shown using the dominated convergence theorem writing the sum as an integral over  simple functions and by finding a suitable integrable majorant. 
   For a function $f : (\Lambda_L^*)^{m} \to \C$, $m \in \N$ we define an extension  $\mathcal{E}_L[f] :  (\R^d)^{m}  \to \C$ by $\mathcal{E}_L[f](x) = f(k)$ if 
$x_j = k_j + \xi_j$ for $k_j \in \Lambda_L^*$ and $\xi_j \in [-\frac{1}{2L},\frac{1}{2L})^d$, $j=1,...,m$.  With this definition we can express   the sum in \eqref{tracexpcoeff00-0-0}   as an integral over simple functions.

First we show  \eqref{tracexpcoeff00-0-0}.
With the function 
\begin{align} \label{eq:defofCpi2-1-00}
Q_1(u) &:=  \frac{ 1 }{ \nu ( u)   - z }  ,  \quad u \in \R^d , 
\end{align}
we obtain 
\begin{align} 
 \label{eq:defofCpi2-1-1}
{\rm Im} Q_1(u) &=  \frac{ {\rm Im} z   }{( \nu ( u)  - {\rm Re} z )^2 + ({\rm Im} z)^2 }  . 
\end{align}
By definition of the extension $\mathcal{E}_L$ we find 
\begin{align} \label{eq:defofCpi2-1-1-1}
 \int_{\Lambda_L^*}   {\rm Im} Q_1(u_0)  du_0 = 
 \int_{\R^d} \mathcal{E}_L[{\rm Im} Q_1](u_0)   du_0 . 
\end{align}
Since  $Q_1$ is continuous it follows that  $\mathcal{E}_L[{\rm Im} Q_1] \to {\rm Im} Q_1$ pointwise as $L \to \infty$ (in fact one has uniform convergence).  
 An application of Lemma \ref{lem:boundonsquaring}  or 
using elementary properties of polynomials, it is straightforward to see that for $a , b \in \R$   with $b \neq 0$ or $a < 0$  there 
exists a constant $c>0$, such that 
 for all $\xi \in [-\frac{1}{2},\frac{1}{2}]^d$ and $p \in \R^d$ 
\begin{align} \label{basicbound} 
(\nu(p - \xi)  - a)^2 + b^2 \geq  c ( p^4  + 1)  .
\end{align} 
Thus by monotonicity there exist $C$ such that for $L \geq 1$,  $\xi \in [-\frac{1}{2},\frac{1}{2}]^d$, and $u \in \R^d$
\begin{align} \label{EofQ1est} 
| \mathcal{E}_L [{\rm Im} Q_{1}](u) | \leq  \frac{C}{u^4 + 1} .
\end{align} 
Now the r.h.s. of \eqref{EofQ1est} is an integrable majorant.  
Thus it follows by dominated convergence that 
\begin{align} \label{eq:defofCpi2-1-1-1}
\lim_{L \to \infty}    \int_{\R^d}  \mathcal{E}_L[{\rm Im} Q_1](u_0) du_0   & =   \int_{\R^d}  {\rm Im} Q_1(u_0) du_0  , 
\end{align}
which is in view of \eqref{eq:defofCpi2-1-1}  and \eqref{eq:defofCpi2-1-1-1} exactly  \eqref{tracexpcoeff00-0-0}.

 Next  we show \eqref{eq:exoflimit2DOS}.  For this we  express   \eqref{eq:defofCpi2} similarly as an integral over simple functions. 
That is with   
$v = (v_1,....,v_{|I_A|}) \in \R^{d |I_A|}$ 
and 
$$
Q_{n+1,A}(u_0,v)  :=   \prod_{j=1}^{n+1}   \left( \nu \left( u +   \sum_{l=1}^{j-1} [M_A v]_l  \right)  - z  \right)^{-1} , \quad (u_0,v) \in \R^{d |I_A|} \times \R^d  , 
$$
it follows from the definition of the extension $E_L$ that 
\begin{align} \label{eq:defofCpi3}
\widetilde{C}_{n,A,L}[z ]  & =   \int_{\R^d}  du_0 \prod_{l \in I_A} \left( \int_{\R^d}  dv_l \right)   \prod_{j  = 1}^n 
  \mathcal{E}_L [\widehat{B}_\#(-[M_A( \cdot )]_j)](v)   
   \mathcal{E}_L[Q_{n+1,A}](u_0,v) .
\end{align}

First, we observe  that as $L \to \infty$ the integrand of \eqref{eq:defofCpi3} converges  to the integrand of  \eqref{eq:exoflimit2DOS} pointwise.  To see this, we first  note that $Q_{n+1,A}$ is 
continuous. Moreover,  from Lemma \ref{fourierseries} (c) and the continuity of $[M_A(\cdot)]_j$
 we see that $\mathcal{E}_L [\widehat{B}_\#(-[M_A( \cdot )]_j)]$ converges  to $\widehat{B}(-[M_A( \cdot )]_j) $ pointwise.  
Next,  we  want to give a convergent majorant. For this we show that the absolute value of the  integrand  of \eqref{eq:defofCpi3} is uniformly bounded  for  $L \geq 1$ by an integrable function.
Using the trivial bound  \eqref{ln-2} for  all 
resolvents except the first and the last (recall  that by Lemma  \ref{sum0}    $\sum_{l=1}^{n} [M_Av]_l = 0$)
we have
\begin{align}
 \label{eq:estobmaj}  
& \left| \prod_{j  = 1}^n   \mathcal{E}_L [\widehat{B}_\#(-[M_A( \cdot )]_j)](v)     \mathcal{E}_L[Q_{n+1,A}](u_0,v) \right|  \nn    \\
& \leq
\|{B}\|_1^{n-|I_A|}   \prod_{j \in I_A}  \left\{ c_B \mathcal{E}_L[ \langle v_j \rangle^{-2d} ] \right\} ({\rm dist}(z,\R_\geq))^{-n+1} \mathcal{E}_L[|Q_{1}|^2](u_0) \nn \\
& 
\leq  C  ({\rm dist}(z,\R_\geq))^{-n+1} \|{B}\|_1^{n-|I_A|}   \prod_{j \in I_A}  \left\{\langle v_j \rangle^{-2d} \right\}  \langle u_0 \rangle^{-4}   , 
\end{align}
where in the second inequality we used Lemma \ref{fourierseries} (b) to bound the terms involving $B$ for $j \in J_A$,  we used Lemma \ref{lemestfourdisc}
 to bound the terms involving $B$ for $j \in I_A$ with  some constant $c_B$.
Note the  last inequality in  \eqref{eq:estobmaj}   follows since by  \eqref{basicbound}  
there exist a constant such that 
\begin{align}
| \mathcal{E}_L[| Q_{1}|^2](u) | \leq  \frac{C}{u^4 + 1} .
\end{align} 
Furthermore, it follows from  
\eqref{elemboundonsquaring} 
 that 
 $ \langle v_j \rangle^{2d} \mathcal{E}_L[ \langle v_j \rangle^{-2d} ] $  is uniformly bounded. 
We see that  \eqref{eq:estobmaj}  is integrable with respect to  the measure $du_0 \prod_{j \in I_A} d v_j $.
Now applying  the dominated convergence theorem shows \eqref{eq:exoflimit2DOS}.
\end{proof}

 \label{secproofofthem28} \label{sec:9}

 \begin{proof}[Proof of Proposition  \ref{exinflim-bound}] The bound in \eqref{boundonTs}
 follows 
for $n \in \N$ from Lemma \ref{thm:dosasy0-0} and for $n =0$ from 
 \eqref{zerotraceimterm}  and Lemma \ref{lem:discest}.
\end{proof} 

As an immediate consequence of Proposition   \ref{propexofcinfDOS} and the   coordinate transformation \eqref{changevar1} back to  the original
 variables, we obtain the  result of the following corollary, which is interesting of its own but  will not be needed in the sequel.

\begin{corollary}\label{remdeltdirac} For $z \in \C \setminus  [0,\infty)$ and $n \in \N$ we have 
\begin{align} \label{eq:exoflimit3}
& 
\lim_{L\to \infty} {\bf E}_L  \widetilde{\rm tr}  R_L(z) [  V_L  R_L(z)  ]^n \\
& = \sum_{A \in \mathcal{A}_n } \int_{(\R^d)^{n+1}}   \mathcal{P}_{A,\infty}(k) \delta(k_1 -k_{n+1}) \prod_{j=1}^{n+1}  (\nu( k_j) - z )^{-1} d(k_1, \cdots,  k_{n+1})  , \nn
\end{align} 
where 
\begin{align} \label{eq:defofdeltainf}
\mathcal{P}_{A,\infty}(k) & :=  \prod_{a \in A} \left\{ m_{|a|}  \delta \left( \sum_{l \in a} (k_l - k_{l+1}) \right) 
  \prod_{l \in a}  \widehat{B}(k_l - k_{l+1} ) \right\}  .
 \end{align}
  Here $\delta$ denotes the usual  Dirac measure  at the origin in $\R^d$.    
  \end{corollary}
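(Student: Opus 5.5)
The plan is to start from Proposition \ref{propexofcinfDOS}, which gives for each $n \in \N$ the identity $\lim_{L\to\infty}{\bf E}_L\widetilde{\rm tr} R_L(z)[V_LR_L(z)]^n=\sum_{A\in\mathcal{A}_n}\widetilde{C}_{n,A,\infty}[z]$, with $\widetilde{C}_{n,A,\infty}[z]$ given as an absolutely convergent integral over $u_0\in\R^d$ and $v=(v_l)_{l\in I_A}\in(\R^d)^{|I_A|}$. It then suffices to show, for each fixed partition $A$, that this integral equals
\[
\int_{(\R^d)^{n+1}}\mathcal{P}_{A,\infty}(k)\,\delta(k_1-k_{n+1})\prod_{j=1}^{n+1}(\nu(k_j)-z)^{-1}\,dk ,
\]
where the product of Dirac measures is understood as the measure obtained by integrating out the delta constraints. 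This is simply the definition of such an expression, read in the coordinates where the constraints become coordinate hyperplanes.

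To make this precise, I would perform the change of variables \eqref{changevar1} on $(\R^d)^{n+1}$, namely $u_0=k_1$ and $u_s=k_{s+1}-k_s$. This map is linear and unimodular (triangular with identity blocks on the diagonal), so Lebesgue measure is preserved. In these coordinates two simplifications occur.
\begin{itemize}
\item The factor $\delta(k_1-k_{n+1})$ becomes $\delta(\sum_{s=1}^n u_s)$.
\item Each factor $\delta(\sum_{l\in a}(k_l-k_{l+1}))$ becomes $\delta(-\sum_{s\in a}u_s)$, and $\widehat B(k_l-k_{l+1})=\widehat B(-u_l)$.
\end{itemize}
For each block $a\in A$, I would integrate the delta over the variable $u_{\max a}$. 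This sets $u_{\max a}=-\sum_{l\in a\setminus\{\max a\}}u_l$, so the remaining free variables are exactly $u_0$ and $v_l=u_l$ for $l\in I_A$, and $u=(u_0,M_A(v))$ by \eqref{changevar1next}. By Lemma \ref{sum0}, $\sum_{j=1}^n[M_A(v)]_j=0$ identically. Hence the remaining delta $\delta(k_1-k_{n+1})$ is automatically satisfied on the support, and integrating it out contributes the factor $1$, in exactly the way the discrete $\delta_{*,L}(k_1-k_{n+1})$ was absorbed in passing from \eqref{eq:defofCpi} to \eqref{eq:defofCpi2}.

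With $k_j=u_0+\sum_{l=1}^{j-1}[M_A(v)]_l$, the resulting integrand is precisely that of \eqref{eq:exoflimit2DOS}, including the moment factors $\prod_{a}m_{|a|}$ carried along as in $\mathcal{P}_{A,L}$. Summing over $A\in\mathcal{A}_n$ then gives \eqref{eq:exoflimit3}.

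The main obstacle is not analytic, since absolute convergence is already supplied by the majorant \eqref{eq:estobmaj}. It is the bookkeeping needed to make sense of the formal product of Dirac deltas. One delta is redundant, because it is implied by the others through Lemma \ref{sum0}, so it must be interpreted as giving a factor one rather than $\delta(0)$. I would handle this by defining the right-hand side of \eqref{eq:exoflimit3} as the pushforward of the measure on the constraint subspace parametrized by $(u_0,v)$. I would then check that the resulting normalization matches the $L\to\infty$ limit of the normalized discrete deltas $\delta_{*,L}$, which is exactly what the Riemann-sum argument in Proposition \ref{propexofcinfDOS} establishes.
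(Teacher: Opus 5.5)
Your proposal is correct and follows the paper's argument: the corollary is Proposition \ref{propexofcinfDOS} read back through the change of variables \eqref{changevar1}, with the moment factors $\prod_{a}m_{|a|}$ restored and the redundant constraint $\delta(k_1-k_{n+1})$ (implied by the block deltas via Lemma \ref{sum0}) interpreted as a factor one, as you propose. One small correction: the delta absorbed between \eqref{eq:defofCpi} and \eqref{eq:defofCpi2} is the unnormalized Kronecker delta $\delta_{k_1,k_{n+1}}$ of \eqref{eq:defofCpi-0}, not $\delta_{*,L}$, and that is why it contributes $1$ rather than a divergent factor $|\Lambda_L|$, the finite-volume analogue of $\delta(0)$.
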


\section{Boundary Values of Expansion Coefficients} 

\label{sec:boundaryvalues}

 In this section we consider the boundary value of the expressions of Proposition  \ref{propexofcinfDOS}  as $z$ approaches the positive real axis. 
In particular,   we prove Theorem  \ref{thm:boundaryvelueexpcoeff}. 
Henceforth,  we assume that the spectral parameter is of the form 
$z = E + i \eta$ with $$
E > 0  \text{ and } \eta > 0  ,
$$
the case $z = E - i \eta$ will follow by taking complex conjugates. 
Moreover we assume that   Hypothesis \ref{H1} holds.

The main idea of the proof is to use a so-called analytic dilation (see for example \cite{CombesThomas.1973,SR4}) to control the limit as $\eta \downarrow 0$.  Using the 
right hand side of \eqref{eq:exoflimit2DOS} and  the change of variables $u_0 \mapsto e^{\theta} u_0$ and  $v_j \mapsto e^{\theta} v_j$ for $\theta \in \R$,  we find that 
\begin{align} \label{eq:defofCpi01}
&\widetilde{C}_{n,A,\infty}[E + i \eta ]  \\
& =   e^{ - d ( |I_A|+1 )\theta}  \int_{(\R^d)^{|I_A|+1}} du_0  \prod_{l \in I_A}  dv_l   \prod_{j = 1}^n  \widehat{B}(- e^{ -  \theta} [M_A(v)]_j )  \nonumber  \\
 & \quad \times  \prod_{j=1}^{n+1}  \left(e^{ - 2  \theta}  \nu \left( u_0 +   \sum_{l=1}^{j-1} [M_A v]_l  \right)  - E -  i \eta \right)^{-1}  . \nonumber
\end{align}
Note that the left hand side of \eqref{eq:defofCpi01} is independent of $\theta \in \R$.  By Hypothesis \ref{H1} and the argument presented below, we will first be   able to analytically extend the right hand side for  $\theta$ into  the  open set  
$$S := \left\{ \theta  \in \C :  - \frac{1}{2} \arctan(\eta/E)  <   {\rm Im} \theta <   \pi / 2  \right\} \cap \left\{ \theta \in \C : |  \theta | < \vartheta_{\rm B}  \right\},$$  
which is 
a  neighborhood of zero. As displayed in the Figure \ref{PictureDilatation} below, for $E,\eta > 0$, the point $E + i \eta$ is located in the first quadrant. At the same time $\nu \left( u_0 +   \sum_{l=1}^{j-1} [M_A (v)]_l  \right)$ is non-negative real-valued (for any $j=1, \ldots , n+1$), hence located on the non-negative real axis and gets turned by the factor $e^{ - 2  \theta}$ in negative direction by the angle $2 {\rm Im} \theta$. Since $\arctan(\eta/E)$  represents the angle of the polar form  of $E+i \eta$ and a rotation by $-\frac{3 \pi}{2}$ marks the transition between second and first quadrant,  
the first set in the intersection in the definition of $S$  ensures the invertibility of 
\begin{align} 
\label{TermPictureDilatation}
\left(e^{ - 2  \theta} \nu  \left( u_0 +   \sum_{l=1}^{j-1} [M_A v]_l  \right)  - E -  i \eta \right)
\end{align}
for any $j=1, \ldots , n+1$. 
Once we have analytically continued $\theta$ into the  complex lower half plane, in a second step,  we can 
analytically continue $z$ from the complex upper half plane to the complex lower half plane. 

\begin{figure}[H]
\begin{center}

\begin{tikzpicture}[scale=0.8]
  \draw[->] (-5, 0) -- (5, 0) node[right] {$\R$};
  \draw[->] (0,  - 4) -- (0,3) node[above] {$i \R$};
  \draw[ red, thick] (0, 0) -- (2,-2*3^0.5) ;  
\filldraw[blue]  (3,1)  circle[radius=2pt] node[right] {$E+i\eta$};
\draw  (-0.3,0)   node[below] {$0$};
     \draw[red, <-] (1,0) arc (0:-60:1) ;
         \draw[red] (0.6,-0.35)  node {$ 2 \vartheta$};
\end{tikzpicture}

\caption{\label{PictureDilatation}  \small The set $\{  e^{  2 \theta} r  : r \geq 0 \}$  for $\vartheta = {\rm Im} \theta$ is drawn in red. }
\end{center} 
\end{figure}

To estimate the resolvents we will use the following lemma. 

\begin{lemma}\label{lem:triivialomplex}  Let   $\alpha, E , \eta \in \R$ and $ \lambda  \geq 0$. Then    the following holds. 
\begin{enumerate}[(a)] 
\item \label{estofresanayz}  If $\alpha \in [0,\pi]$ and $\eta, E \in [ 0, \infty) $ then 
\begin{align} \label{estonresv2}  
|  e^{ - i \alpha } \lambda - E - i \eta  |  
 &  \geq  |\sin(\alpha) E|. 
 \end{align} 
\item \label{estofresanayz-0}  If $\alpha \in [0,\pi]$ and $\eta, E \in [ 0, \infty) $ then 
\begin{align} \label{estonresv2}  
|  e^{ - i \alpha } \lambda - E - i \eta  |  
 &  \geq \max\{  |\sin(\alpha) E|, \lambda - E  \}. 
 \end{align} 
 \item \label{estofresanayz-3}  If $\alpha \in [0,\pi]$ and $\eta, E \in [ 0, \infty) $ then 
\begin{align} \label{estonresv2}  
|  e^{ - i \alpha } \lambda - E - i \eta  |  
 &  \geq  \max\{ \eta , \lambda - |E-i\eta|\} . 
\end{align} 
 \item \label{estofresanayz-2}  If $\alpha \in (-\arctan(\eta/E),0]$ and $\eta, E \in ( 0, \infty) $ then 
\begin{align} \label{estonresv2}  
|  e^{ - i \alpha } \lambda - E - i \eta  |  
 &  \geq   \max\{ {\rm Im} ( e^{  i \alpha } (E + i \eta)  )  , \lambda - |E-i\eta|\} 
\end{align} 
and  ${\rm Im} ( e^{  i \alpha } (E + i \eta)) > 0$. 
\end{enumerate} 
\end{lemma}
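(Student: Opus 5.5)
The plan is to reduce each of the four estimates to an elementary geometric statement about the distance from the point $E+i\eta$ in the closed first quadrant to a point on the ray $\{e^{-i\alpha}r : r\ge 0\}$. The two tools are the reverse triangle inequality and the fact that the modulus of a complex number dominates the modulus of its imaginary part. Throughout I write $w := e^{-i\alpha}\lambda - E - i\eta$ and use $|w| = |\lambda - e^{i\alpha}(E+i\eta)|$, since $|e^{i\alpha}|=1$.

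For (a) and (b), let $\alpha\in[0,\pi]$ and set $q := e^{-i\alpha}\lambda$, so that ${\rm Im}\, q = -\lambda\sin\alpha \le 0$. Since $\eta\ge 0$ and ${\rm Im}\, q\le 0$ we have $({\rm Im}\, q-\eta)^2 \ge ({\rm Im}\, q)^2$. Hence $|q-E-i\eta|^2 \ge ({\rm Re}\, q - E)^2 + ({\rm Im}\, q)^2 = |q-E|^2$, that is, shifting the target point upward only increases its distance from $q$. It therefore suffices to bound $|e^{-i\alpha}\lambda - E|$ from below.

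For (a), note that $|e^{-i\alpha}\lambda - E| = |\lambda - e^{i\alpha}E| \ge |{\rm Im}(e^{i\alpha}E)| = |\sin(\alpha) E|$. This is just the distance from $E$ to the line $\R e^{-i\alpha}$.

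For (b), the reverse triangle inequality gives in addition $|e^{-i\alpha}\lambda - E| \ge \lambda - E$. Combining this with (a) yields the maximum.

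For (c), I would again use $|w|\ge |{\rm Im}\, w|$. Here ${\rm Im}\, w = -\lambda\sin\alpha - \eta$, and both terms have the same sign because $\sin\alpha\ge 0$ and $\eta\ge 0$. Hence $|{\rm Im}\, w| = \lambda\sin\alpha+\eta \ge \eta$. The second bound is the reverse triangle inequality $|w|\ge \lambda - |E+i\eta| = \lambda - |E-i\eta|$.

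For (d), let $\alpha\in(-\arctan(\eta/E),0]$ with $E,\eta>0$, and write $E+i\eta = |E+i\eta|e^{i\phi}$ with $\phi=\arctan(\eta/E)\in(0,\pi/2)$. Then $e^{i\alpha}(E+i\eta) = |E+i\eta|e^{i(\phi+\alpha)}$ with $\phi+\alpha\in(0,\pi/2)$. Consequently ${\rm Im}(e^{i\alpha}(E+i\eta)) = |E+i\eta|\sin(\phi+\alpha) > 0$, which is the positivity claim. Since $\lambda$ is real, $|w| = |\lambda - e^{i\alpha}(E+i\eta)| \ge |{\rm Im}(e^{i\alpha}(E+i\eta))|$, and this gives the first entry of the maximum. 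The second entry is again the reverse triangle inequality.

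There is no real obstacle; the only point requiring care is part (a) when $\cos\alpha<0$. In that case the naive bound by the imaginary part after rotation, namely $|E\sin\alpha+\eta\cos\alpha|$, does not dominate $E\sin\alpha$. This is why I first discard $\eta$ using ${\rm Im}\, q\le 0$, and only then rotate.
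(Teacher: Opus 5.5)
Your proposal is correct and matches the paper's proof: you drop $\eta$ using ${\rm Im}(e^{-i\alpha}\lambda)\le 0$ and then bound $|e^{-i\alpha}\lambda-E|$ from below, the only cosmetic difference in (a) being that you rotate and take the imaginary part, while the paper minimizes $\lambda^2+E^2-2\lambda E\cos\alpha$ over $\lambda$. Parts (b)--(d) use the same ingredients as the paper (reverse triangle inequality, $|{\rm Im}\,w|\ge\eta$, and the rotated point $e^{i\alpha}(E+i\eta)$ lying in the open first quadrant), except that in (d) you argue directly rather than invoking (c).
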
 
\begin{proof} 
 \ref{estofresanayz}  Using that $\alpha \in [0,\pi]$  and $\eta \geq 0$ we find 
\begin{align}  \label{estonres2} 
|  e^{ - i \alpha } \lambda - E - i \eta  |  
&  = \sqrt{ ( \cos(\alpha) \lambda - E)^2 + (\sin(\alpha ) \lambda + \eta)^2  }\nonumber  \\
&  \geq  \sqrt{ ( \cos(\alpha) \lambda - E)^2 + (\sin(\alpha ) \lambda )^2  }\nonumber  \\
&   =   \sqrt{  \lambda^2 + E^2 - 2 \lambda E \cos(\alpha)   } 
\end{align} 
Now the infimum is attained if $\lambda = E \cos(\alpha)$ 
so 
\begin{align}  \label{estonres2-2} 
|  e^{ - i \alpha } \lambda - E - i \eta  |  
\geq    E \sin(\alpha) 
\end{align} 
and the monotonicity of the square root. \\
 \ref{estofresanayz-0}  Using  \eqref{estonres2}  and   the triangle inequality we find 
\begin{align*}  
|  e^{ - i \alpha } \lambda - E - i \eta  |  \geq |e^{-i \alpha } \lambda - E | \geq \lambda - E .
\end{align*} 
Now the result follows with   \ref{estofresanayz}.\\
\ref{estofresanayz-3}   If $\alpha \in [0,\pi]$ then $e^{ - i \alpha } \lambda$ is in the complex lower half plane.
 Since $E + i \eta$ is in the first quadrant 
\begin{align*} 
|  e^{ - i \alpha } \lambda - E - i \eta  |   \geq|  {\rm Im}  (e^{ - i \alpha } \lambda - E - i \eta ) | \geq \eta . 
\end{align*} 
On the other hand by the triangle inequality 
\begin{align*} 
|  e^{ - i \alpha } \lambda - E - i \eta  |   \geq  |   \lambda | - |E  + i  \eta| . 
\end{align*} 
Combining the two estimates shows the claim. \\
  \ref{estofresanayz-2} First observe that   $E + i \eta = | E + i \eta | \exp( i\arctan (\eta / E))$ is in the first quadrant, and so  
  for  $\alpha \in (-\arctan(\eta/E),0]$  the number  $e^{  i \alpha } (E + i \eta)$ is still in 
the first quadrant.  In particular ${\rm Im} (e^{  i \alpha } (E + i \eta)) > 0$. Thus it follows from (c) that  
\begin{align*}  
|  e^{ - i \alpha } \lambda - E - i \eta  |   = | \lambda -  e^{  i \alpha } (E + i \eta)  |  
 &  \geq  \max\{ {\rm Im} ( e^{  i \alpha } (E + i \eta)  ) , \lambda - |E-i\eta|\} . 
\end{align*} 
\end{proof}

We are now ready to prove Theorem \ref{thm:boundaryvelueexpcoeff}.

\label{sec:10} 

\begin{proof}[Proof of Theorem \ref{thm:boundaryvelueexpcoeff}]
For $n=0$, the statement follows directly from  \eqref{tracexpcoeff00-0-0}. Hence, in the following let $n \in \N$. 

$$
$$
\underline{STEP 1:} In a first step we fix  $E , \eta > 0$   and show that we 
can analytically   continue  \eqref{eq:defofCpi01} from real $\theta$ to complex values in  $S$. 
For this  we have to show that  the  integrand has an analytic continuation and that the 
analytic continuation of the integrand is indeed integrable.
Analytic continuation of the integrand follows by means of Hypothesis \ref{H1} and the fact that 
quotients of analytic functions are analytic as long as they are well defined. 
Furthermore,  
we can bound the integrand in  \eqref{eq:defofCpi01} for $\theta \in S $  by means of Lemma \ref{lem:triivialomplex}.
 Specifically, 
it follows from Lemma \ref{lem:triivialomplex} Part  (d) that there exists a positive constant  $c_{E,\eta}$ 
depending on $E$ and $\eta$ such that for all $p \in \R^d$ and $\theta \in S$ 
$$
| e^{- 2 \theta} \nu(p) - E - i \eta | \geq  c_{E,\eta} ( e^{- 2 { \rm Re}  \theta } \nu(p) + 1 )  .
$$
Thus we obtain the following bound on the integrand  \eqref{eq:defofCpi01} for $\theta \in S$ 
\begin{align} \nn 
&\left|    \prod_{j = 1}^n  \widehat{B}(- e^{ -  \theta} [M_A(v)]_j )    \prod_{j=1}^{n+1} 
 \left(e^{ - 2  \theta}  \nu \left( u_0 +   \sum_{l=1}^{j-1} [M_A v]_l  \right)  - E -  i \eta \right)^{-1} 
\right| \\   
& \leq  \prod_{s \in I_A}     | \widehat{B}(- e^{ - i \vartheta} v_s) | 
 \| \widehat{B}(- e^{- i \vartheta } \cdot  )\|_\infty^{|A|}  c_{E,\eta}^{-n-1} \ ( e^{- 2 { \rm Re}  \theta } \nu(u_0) + 1 )^{-2}    .  
  \label{eq:defofCpi44456}
\end{align}
We see that the right hand side is integrable with respect to $ \int du_0  \prod_{s \in I_A}  dv_s $.
We conclude that  for fixed $E + i \eta$ we 
can analytically continue in $\theta$ from $\R$ to $S$, by means of  Hypothesis \ref{H1}  and the bound 
  \eqref{eq:defofCpi44456}. 
  Specifically we  choose $\theta = i \vartheta$ with 
   \begin{align} \label{anaastheta} \vartheta \in (0,{\rm min}(\vartheta_{B},\pi/4)) . 
\end{align}
Let 
 $m = |I_A|$ and let a single $\int$  be  a short hand notation  for the collection of all the respective integrals.
From  \eqref{eq:defofCpi01} we thus find 
\begin{align} \label{eq:defofCpi1}
& \widetilde{C}_{n,A,\infty }[E + i \eta]  \\
& =   e^{ -  i d ( m+1 )\vartheta}  \int du_0  \prod_{s \in I_A}  dv_s   \prod_{j = 1}^n  \widehat{B}(- e^{ - i \vartheta} [M_A(v)]_j )     \nonumber  \\
& \quad  \prod_{j=1}^{n+1}  \left(e^{ - 2 i \vartheta} \nu \left( u_0 +   \sum_{l=1}^{j-1} [M_A (v)]_l  \right)  - E -  i \eta \right)^{-1}  . \nonumber
\end{align}
$$
$$
\underline{STEP 2:}  Now for fixed $\vartheta$ as in   \eqref{anaastheta} we want to show 
that we  can  continue \eqref{eq:defofCpi1} for $z = E + i \eta$ from the 
upper half complex plane to the positive real axis. To show this we want to apply dominated 
convergence. For this we estimate the integrand of \eqref{eq:defofCpi1}
to find   an integrable majorant.
First we note that from Lemma    \ref{lem:triivialomplex} Part   \ref{estofresanayz-0} 
we see that for fixed  $\vartheta$ as in   \eqref{anaastheta} and $E$ in a compact subset $I$ of $(0,\infty)$  
 there exists a positive constant $c_{\vartheta,I}$ such that for all $p \in \R^d$, $E \in I$, and  $\eta \geq 0$ 
\begin{align}
| e^{ - 2 i \vartheta} \nu \left( p  \right)  - E -  i \eta | \geq c_{\vartheta,I} ( \nu(p) + 1 ) .
\end{align} 
 This gives 
\begin{align} \nn 
&  \prod_{j = 1}^n \left|  \widehat{B}(- e^{ - i \vartheta} [M_A(v)]_j )  \right|   \prod_{j=1}^{n+1} \left|  \left(e^{ - 2 i \vartheta} \nu \left( u_0 +   \sum_{l=1}^{j-1} [M_A (v)]_l  \right)  - E -  i \eta \right)^{-1} \right|  \\
& \leq \prod_{s \in I_A}     | \widehat{B}(- e^{ - i \vartheta} v_s) |  \| \widehat{B}(- e^{- i \vartheta } \cdot  )\|_\infty^{|A|} c_{\vartheta,I}^{-n-1}
 \left| \nu  \left( u_0  \right)  + 1   \right|^{-2}     
  \label{eq:defofCpi44456-2}
\end{align}

We see that the right hand side is integrable with respect to $ \int du_0  \prod_{s \in I_A}  dv_s $.
Thus by dominated convergence we can take the limit $\eta \downarrow 0$ in  \eqref{eq:defofCpi1}.
Analogously,  it follows that $\widetilde{\tr}C_{n,A,\infty }[z ]$ has  a continuous 
extension to $\{ w \in \C : {\rm Im } w \geq  0, \ {\rm Re } w > 0 \}$ for $z \in \{ w \in \C : {\rm Im } w > 0, \ {\rm Re } w > 0 \}$,  by choosing the same majorant and using the continuity of the integrand in $z = E + i \eta$. 
This completes the proof Theorem  \ref{thm:boundaryvelueexpcoeff}.
\end{proof}

\section{Asymptotic Error Estimate} 

\label{errorestimate} 

The goal of this  section is to  prove the asymptotic error estimate given in Theorem \ref{thm:maintec000}, which will take some 
steps as preparatory work.
We start with the following lemma.

\begin{proposition}  \label{uniformboundontrace} 
There exists a constant $C$  such that for all $E \in \R$, $\eta \neq 0$, and $\lambda \geq 0$ 
\begin{align}
  {\bf E }_L \widetilde{\tr}  |
{\rm Im} ( H_L  - E - i \eta )^{-1}   | \leq |\eta|^{-1}  (1 + 2 E^2 + 2 \eta^2)  C (1+\lambda^2) 
\end{align} 
for all $L \geq 1$. 
\end{proposition}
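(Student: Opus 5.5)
We need to bound $\mathbf{E}_L \widetilde{\tr}\, |{\rm Im}(H_L-z)^{-1}|$ with $z=E+i\eta$. Since ${\rm Im}(H_L-z)^{-1} = \eta\, (H_L-\bar z)^{-1}(H_L-z)^{-1}$ has a definite sign (the sign of $\eta$), its absolute value is $|\eta|\,|(H_L-z)^{-1}|^2$. So we must bound $|\eta|\,\mathbf{E}\,\widetilde{\tr}\,\big((H_L-z)^{-1}\big)^*(H_L-z)^{-1}$, i.e.\ the Hilbert--Schmidt norm squared of $(H_L-z)^{-1}$.

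The plan is to compare with the free resolvent at a fixed negative spectral point, say $-1$. Write $(H_L-z)^{-1} = (H_L-z)^{-1}(H_L+1)\,(H_L+1)^{-1}$. Then $\|(H_L-z)^{-1}(H_L+1)\|\le \sup_t |t+1|/|t-z|$, which is at most $C\,|\eta|^{-1}(1+|E|+|\eta|)$. Squaring gives at most $C|\eta|^{-2}(1+2E^2+2\eta^2)$ up to constants, and this explains the shape of the claimed bound.

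It therefore suffices to show that $\mathbf{E}\,\widetilde{\tr}\,|(H_L+1)^{-1}|^2\le C(1+\lambda^2)$ uniformly in $L$. For this I use the second resolvent identity $(H_L+1)^{-1} = R_L(-1) - \lambda R_L(-1)V_L (H_L+1)^{-1}$. The operator $H_L$ need not be positive since $V_L$ is not, so $H_L+1$ may fail to be invertible. I therefore shift instead to $-\kappa$ with $\kappa = 1+\lambda\|V_L\|_\infty$ (random), or, more cleanly, use $i$ in place of $-1$: $(H_L-i)^{-1} = R_L(i) - \lambda R_L(i)V_L(H_L-i)^{-1}$, where $\|(H_L-i)^{-1}\|\le 1$ and $\|(H_L-z)^{-1}(H_L-i)\|\le C|\eta|^{-1}\langle E\rangle\langle\eta\rangle$. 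Then
\[
\widetilde{\tr}\,|(H_L-i)^{-1}|^2 \le 2\,\widetilde{\tr}\,|R_L(i)|^2 + 2\lambda^2\|V_L\|_\infty^2\,\widetilde{\tr}\,|R_L(i)|^2 .
\]
Here I use that $\|AB\|_2\le \|A\|_2\|B\|$ for the ideal property. By Lemma \ref{HilbertSchmidt} and Lemma \ref{finitetracesumm}, $\widetilde{\tr}\,|R_L(i)|^2$ is bounded uniformly in $L\ge1$ (this is where $d\le3$ enters).

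It remains to bound $\mathbf{E}\|V_L\|_\infty^2$ uniformly in $L$. This is the main obstacle, since the naive bound $\sum_\gamma |v_\gamma|\,\|B_\#\|_\infty$ grows like $|\Lambda_L|$. Instead I exploit translation invariance. Rather than taking the sup, I keep $V_L$ inside the trace: $\widetilde{\tr}\,R_L(i)^* V_L^2 R_L(i) = |\Lambda_L|^{-1}\int V_L(x)^2 K(x)\,dx$, where $K(x)=\langle \delta_x,|R_L(i)|^2\delta_x\rangle = \int_{\Lambda_L^*}|\nu(p)-i|^{-2}dp$ is constant in $x$. To use this I reorder the identity as $(H_L-i)^{-1} = R_L(i) - \lambda (H_L-i)^{-1}V_L R_L(i)$, so that $\|\lambda(H_L-i)^{-1}V_LR_L(i)\|_2^2\le\lambda^2\|V_LR_L(i)\|_2^2$. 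This gives
\[
\mathbf{E}\,\widetilde{\tr}\,|(H_L-i)^{-1}|^2\le 2K\bigl(1+\lambda^2\,|\Lambda_L|^{-1}\mathbf{E}\|V_L\|_{L^2(\Lambda_L)}^2\bigr).
\]
Finally, $|\Lambda_L|^{-1}\mathbf{E}\|V_L\|_2^2$ is computed via Lemma \ref{lem:combprob2}. In position space it equals $m_2\|B_\#\|_{L^2(\Lambda_L)}^2 + m_1^2\,|\Lambda_L|^{-1}\big(\int_{\Lambda_L} B_\#\big)^2|\Lambda_L|$, which is bounded uniformly for $L\ge1$ because $B$ is Schwartz. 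This yields $C(1+\lambda^2)$ and completes the plan.
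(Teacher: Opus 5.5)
Your proof is correct, and its first step is the paper's. The paper likewise reduces, via $\eta^2/((x-E)^2+\eta^2)\le(1+2E^2+2\eta^2)\langle x\rangle^{-2}$ and the spectral theorem, to a uniform bound on ${\bf E}_L\widetilde{\rm tr}\,\langle H_L\rangle^{-2}={\bf E}_L\widetilde{\rm tr}\,|(H_L-i)^{-1}|^2$. Use this exact prefactor: your intermediate bound $C|\eta|^{-1}\langle E\rangle\langle\eta\rangle$ is too lossy, because $\langle E\rangle^2\langle\eta\rangle^2$ is not dominated by a multiple of $1+2E^2+2\eta^2$.

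After that the two routes diverge.

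\textbf{The paper's route.} It proves Proposition~\ref{lem:proofofexdens00-0} for $\langle -c\Delta_L+\lambda V\rangle^{-\alpha}$ with $\alpha\in(3/2,2]$. The ingredients are Neumann bracketing into unit cubes, trace monotonicity, explicit Neumann eigenvalues on each cube, and the local bound ${\bf E}_L\|V1_S\|_\infty^\beta\le C_{\rm P}(B)$ from the Poisson restriction property (Lemma~\ref{boundintraceexppot}). Localizing the supremum to unit cubes is how the paper sidesteps the non-uniformity of ${\bf E}_L\|V_L\|_\infty^2$, which you correctly identified as the obstacle.

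\textbf{Your route.} You take no supremum at all. You put the Hilbert--Schmidt norm on $V_LR_L(i)$ and use translation invariance of the periodic kernel $G$ of $R_L(i)$, which gives $\|V_LR_L(i)\|_2^2=\|V_L\|_{L^2(\Lambda_L)}^2\int_{\Lambda_L^*}|\nu(p)-i|^{-2}dp$. The remaining average $|\Lambda_L|^{-1}{\bf E}_L\|V_L\|_{L^2(\Lambda_L)}^2=m_2\|B\|_{L^2(\Lambda_L)}^2+m_1^2(\int_{\Lambda_L}B)^2$ is trivially uniform in $L$, since $B_\#=B$ on $\Lambda_L$. This is shorter, uses only second moments, and gives an explicit constant.

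\textbf{Comparison.} Your Hilbert--Schmidt trick only reaches the exponent $\alpha=2$, and it relies on periodic boundary conditions (the diagonal of $|R_L(i)|^2$ is constant). The paper needs $\alpha\in(3/2,2)$ elsewhere, in the proof of Lemma~\ref{exofdosmeasure}, and its bracketing argument does not depend on the boundary conditions. For the present statement, however, $\alpha=2$ is all that is required, so your argument is complete.
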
 
\begin{proof} 
This follows from Proposition  \ref{lem:proofofexdens00-0} by  observing  that  for all $x \in \R$ 
and $\eta \neq  0$  we have 
\begin{align} \label{tracialboundprep} 
| \eta|  \left| {\rm Im} \frac{1}{ x  - E - i \eta} \right| =  \frac{\eta^2}{(x-E)^2+\eta^2} \leq  ( 1 + 2 E^2 + 2 \eta^2) \langle x \rangle^{-2} .
\end{align} 
\end{proof}

\begin{lemma} \label{lemmerorbound}  Let $z \in \C \setminus \R$ and  $\lambda \geq 0$.
Then 
\begin{align} \label{eq:expansionU-0} 
 {\bf E }_L \widetilde{\tr}  
{\rm Im} ( H_L  - z )^{-1}   & = \sum_{j=0}^{n-1}  \lambda^j \widetilde{\tr}  {\rm Im}  T_{j,L} [z ]  +  \lambda^{n} U_{n,L}[z ;  \lambda]   , 
\end{align}
with 
\begin{align}\label{eq:expansionU2-0} 
 U_{n,L}[z  ; \lambda] &  :=  {\bf E }_L  \widetilde{\tr}  {\rm Im} ( [R_L(z)   V_L  ]^{n}  ( H_L - z  )^{-1}  )  .
\end{align}
Furthermore,
\begin{align}
|   U_{n,L}[z ; \lambda]   | 
& \leq   ( \mathfrak{E}_{n,L}[z ])^{1/2} \left( |{\rm Im} z |^{-1}| {\bf E}_L \widetilde{\tr} {\rm Im} (( H_L  - z)^{-1}) |\right)^{1/2}  \label{eq:boundonU-0}  , 
\end{align}
for 
\begin{align} \label{defofENL-0} 
\mathfrak{E}_{n,L}[z]   & :=   {\bf E}_L \widetilde{\tr}(( [R_L(z)  V_L  ]^{n})^*
 [R_L(z)   V_L  ]^{n}) .   
\end{align}
\end{lemma}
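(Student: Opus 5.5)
The identity \eqref{eq:expansionU-0} follows from Lemma \ref{lem:resexp} with $n-1$ in place of $n$. I will use the form of the resolvent expansion whose remainder has the potential on the left,
\begin{align*}
(H_L-z)^{-1} = \sum_{j=0}^{n-1} \lambda^j R_L(z)[V_L R_L(z)]^j + \lambda^n [R_L(z) V_L]^n (H_L-z)^{-1}.
\end{align*}
I take imaginary parts, i.e.\ $\mathrm{Im}\,X = (X-X^*)/(2i)$, apply $\widetilde{\tr}$ and take ${\bf E}_L$. The left side is finite by Lemma \ref{existenceoftracebaby}.

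For the $j=0$ term, the trace of $\mathrm{Im}\,R_L(z)$ exists by Proposition \ref{thm:dosasy0-0-0}. For $j\ge1$ the operators are trace class with integrable trace, again by Proposition \ref{thm:dosasy0-0-0}, so the imaginary part of the trace equals the trace of the imaginary part. Interchanging expectation and trace, which is justified by Lemma \ref{boundonexppotenyt} and Fubini, gives $\widetilde{\tr}\,\mathrm{Im}\,T_{j,L}[z]$.

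The remainder term is then trace class. Indeed, it is a difference of trace class operators, and also $[R_LV_L]^n$ is Hilbert--Schmidt because $R_L$ is Hilbert--Schmidt (Lemma \ref{HilbertSchmidt}), while $(H_L-z)^{-1}$ is bounded. This defines $U_{n,L}$. Strictly, \eqref{eq:expansionU2-0} should be read as the real-linear operation $X\mapsto \mathrm{Im}\,\widetilde{\tr}\,X$, which equals $\widetilde{\tr}\,\mathrm{Im}\,X$ for trace class $X$.

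For the bound \eqref{eq:boundonU-0}, write $G=(H_L-z)^{-1}$ and $X=[R_L(z)V_L]^n$. Then
\[
|U_{n,L}| \le {\bf E}_L |\widetilde{\tr}(XG)|.
\]
Next I factor $G = G\,|G^{*}|^{-1/2}\,|G^*|^{1/2}$, or more simply use the Hilbert--Schmidt Cauchy--Schwarz inequality $|\tr(XG)|\le \|X^*\|_2\,\|G\|_2$ for $G$ Hilbert--Schmidt. The key identity is $G^*G = (\mathrm{Im}\,z)^{-1}\,\mathrm{Im}\,G$, which follows from the resolvent identity exactly as in Lemma \ref{HilbertSchmidt}. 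This gives $\widetilde{\tr}(G^*G)=|\mathrm{Im}\,z|^{-1}|\widetilde{\tr}\,\mathrm{Im}\,G|$, which is finite almost surely by Lemma \ref{existenceoftracebaby}. Also $\|X^*\|_2^2=\tr(XX^*)$, while $\mathfrak{E}_{n,L}$ is defined with $X^*X$; the two have the same trace, so $\tr(XX^*)=\tr(X^*X)$.

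Then I apply Cauchy--Schwarz once more, now in expectation:
\[
{\bf E}_L\Big(\widetilde{\tr}(X^*X)^{1/2}\,\widetilde{\tr}(G^*G)^{1/2}\Big) \le \big({\bf E}_L\widetilde{\tr}\,X^*X\big)^{1/2}\big({\bf E}_L\widetilde{\tr}\,G^*G\big)^{1/2}.
\]
The normalizations by $|\Lambda_L|^{-1}$ split as $|\Lambda_L|^{-1/2}$ in each factor. Since $\mathrm{Im}\,G$ has a fixed sign, ${\bf E}_L|\widetilde{\tr}\,\mathrm{Im}\,G| = |{\bf E}_L\widetilde{\tr}\,\mathrm{Im}\,G|$, and this yields \eqref{eq:boundonU-0}.

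The main obstacle is bookkeeping rather than substance. One point is justifying finiteness of $\mathfrak{E}_{n,L}$, which I get from Lemma \ref{fourierspacekernel}-type estimates together with Lemma \ref{boundonexppotenyt}. The other is making sure the remainder's imaginary part is handled consistently. I will handle the latter by noting that $|\mathrm{Im}\,w|\le|w|$ for the scalar trace, so the bound on $|{\bf E}_L\widetilde{\tr}(XG)|$ suffices.
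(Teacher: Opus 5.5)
Your proposal is correct and follows essentially the paper's proof: the iterated resolvent identity, then $|\mathrm{Im}\,{\rm tr}\,A|\le {\rm tr}|A|$ with Cauchy--Schwarz/H\"older for Hilbert--Schmidt operators and again in ${\bf E}_L$, and the identity $G^*G=({\rm Im}\,z)^{-1}{\rm Im}\,G$ for $G=(H_L-z)^{-1}$. There are two caveats, the first shared with the paper's Lemma \ref{lem:resexp}: since $R_L(z)-G=\lambda R_L(z)V_LG$, the correct expansion is $G=\sum_{j=0}^{n-1}(-\lambda)^jR_L(z)[V_LR_L(z)]^j+(-\lambda)^n[R_L(z)V_L]^nG$, so \eqref{eq:expansionU-0} actually holds with $(-\lambda)^j$ and $(-\lambda)^n$, while the bound \eqref{eq:boundonU-0} on $|U_{n,L}|$ is unaffected; second, for $d=2,3$ neither $R_L(z)$ nor $G$ is trace class, so the trace-class property of $[R_L(z)V_L]^nG$ must come from both factors being Hilbert--Schmidt (which you do establish later via $G^*G=({\rm Im}\,z)^{-1}{\rm Im}\,G$), not from being a difference of trace class operators.
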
 
\begin{proof} W.l.o.g. we assume ${\rm Im} z  > 0$,  since the other case follows from  complex conjugation. 
Iterating the resolvent identity, cf.  Lemma \ref{resolventenformel_interiert}, we find 
\begin{align*} 
 ( H_L  - z)^{-1} & = \sum_{j=0}^{n-1} R_L(z) [ \lambda V_L R_L(z)  ]^j +  [R_L(z)  \lambda V_L  ]^{n}  ( H_L  - z)^{-1} . 
\end{align*}
Thus we find for the normalized trace and expectation that 
\begin{align*} 
&  {\bf E} {\rm \widetilde{tr}}{\rm Im} ( H_L  - z)^{-1} \\& =  {\bf E} {\rm \widetilde{tr}}{\rm Im} 
\sum_{j=0}^{n-1} R_L(z) [ \lambda V_L R_L(z)  ]^j + {\bf E} {\rm \widetilde{tr}}{\rm Im}
 [R_L(z)  \lambda V_L  ]^{n}  ( H_L  - z)^{-1} . 
\end{align*}
This shows 
 \eqref{eq:expansionU-0}  with \eqref{eq:expansionU2-0}.

To estimate the error term \eqref{eq:expansionU2-0} we use  ($|{\rm tr} {\rm Im } A| = | {\rm tr}  \frac{1}{2i} ( A - A^* ) | \leq \frac{1}{2} {\rm tr}  (|A| + |A| )  = {\rm tr} |A|$) and H\"older inequality (or Cauchy-Schwarz) for traces 
\begin{align*} 
  & | {\bf E} {\rm \widetilde{tr}}{\rm Im}
 [R_L(z)  \lambda V_L  ]^{n}  ( H_L  - z)^{-1}  |  \leq {\bf E} {\rm \widetilde{tr}} |
 [R_L(z)  \lambda V_L  ]^{n}  ( H_L  - z)^{-1} | \\
& \leq  \left( {\bf E} {\rm \widetilde{tr}}( [R_L(z)  \lambda V_L  ]^{n})^*
 [R_L(z)  \lambda V_L  ]^{n} \right)^{1/2}  
\left( {\bf E} {\rm \widetilde{tr}} (( H_L  - z)^{-1})^*  ( H_L  - z)^{-1} \right)^{1/2} 
\end{align*}
To estimate the second term we use   the following  identity
which follows from the resolvent identity 
\begin{align}
   {\rm Im}  ( H_L  - z)^{-1} & = \frac{1}{2i} ( ( H_L  - z)^{-1}  -  [( H_L  - z)^{-1} ]^*) \label{impartofiop} \\
&  = 
 \frac{1}{2i}  (( H_L  - z)^{-1})^* (z - \overline{z})  ( H_L  - z)^{-1} =    {\rm Im } z   (( H_L  - z)^{-1})^* ( H_L  - z)^{-1}  \nn
\end{align} 
Thus 
\begin{align}\label{identresforHL}
{\bf E}  \widetilde{\rm tr} (( H_L  - z)^{-1})^*  ( H_L  - z)^{-1}  = \frac{1}{ {\rm Im} z} {\bf E}  \widetilde{\rm tr} {\rm Im}  ( H_L  - z)^{-1} 
\end{align} 
Now the bound on the r.h.s follows from  Proposition \ref{uniformboundontrace} .
This shows  the bound \eqref{eq:boundonU-0}  with  \eqref{defofENL-0}. 
\end{proof}

Next we estimate \eqref{defofENL-0}. 
Inserting the onb  \eqref{eq:deofonb} we find with  \eqref{ftident2} (observing that each term
involving an interaction $V$ gives a contribution $|\Lambda_L|^{-1}$ 

\begin{lemma}\label{fourierspacekernelerror} Let $z \in \C \setminus  [0,\infty)$ and $n \in \N$. Then for almost every $\omega$  
the operator 
$ 
 [R_L(z)   V_L  ]^{n} ( [R_L(z)   V_L  ]^{n})^*
$ 
is  trace class. 
Furthermore,  for almost every $\omega$ 
\begin{align}  \nonumber 
& \widetilde{\tr}(
 [R_L(z)   V_L  ]^{n}  [R_L(z)   V_L  ]^{n})^*) 
\\ 
&  =  \int_{(\Lambda_L^*)^{n+1} }  \int_{(\Lambda_L^*)^{n+1} } 
 \prod_{j=1}^n  \widehat{V}_{L}(p_j - p_{j+1} ) \delta_{*,L}(p_{n+1} - \tilde{p}_1)  \prod_{j=1}^n  \widehat{V}_{L}(\tilde{p}_j - \tilde{p}_{j+1} )  
\nonumber  \\
& \times  \left(    \prod_{j=1}^{n}      \frac{1}{ \nu(p_j) - z}        \right)  \left(    \prod_{j=2}^{n+1}      \frac{1}{ \nu(\tilde{p}_j) - \overline{z}}        \right) \delta_{\tilde{p}_{n+1},  {p}_1}  d(p_1, \ldots , p_{n+1})  d(\tilde{p}_1, \ldots ,  \tilde{p}_{n+1}) 
\label{defofENL-0-0-0}  \\
&  = 
 \int_{(\Lambda_L^*)^{2n+1} }    \prod_{j=1}^{2n}  \widehat{V}_{L}(q_j - q_{j+1} )
\nonumber  \\
& \times  \left(    \prod_{j=1}^{n}      \frac{1}{ \nu(q_j) - z}        \right)  \left(    \prod_{j=n+2}^{2n+1}      \frac{1}{ \nu(q_j) - \overline{z}}        \right)  \delta_{q_{2n+1},{q}_1}  d (q_1 , \ldots , q_{2n+1}). \label{eq:expforTn} .
\end{align}
The sum converges absolutely and satisfies  the bound 
\begin{align} \label{defofENL-0-0} 
&  \int_{(\Lambda_L^*)^{n+1} }  \int_{(\Lambda_L^*)^{n+1} } 
 \prod_{j=1}^n  | \widehat{V}_{L}(p_j - p_{j+1} ) | \delta_{*,L}(p_{n+1} - \tilde{p}_1)  \prod_{j=1}^n   | \widehat{V}_{L}(\tilde{p}_j - \tilde{p}_{j+1} )   | 
\nonumber  \\
& \times  \left(    \prod_{j=1}^{n}      \frac{1}{| \nu(p_j) - z|}        \right)  \left(    \prod_{j=2}^{n+1}      \frac{1}{ | \nu(\tilde{p}_j) - \overline{z}|}        \right) \delta_{\tilde{p}_{n+1},  {p}_1}  d(p_1, \ldots , p_{n+1})  d(\tilde{p}_1, \ldots ,  \tilde{p}_{n+1})  
 \nonumber \\
&  \leq  {\rm dist}(z,\R_{\geq })^{-2n+2} \| \widehat{V}_L \|_{*,1}^{2n} \int_{\Lambda_L^*} |\nu(p) - z|^{-2} dp . 
\end{align}
\end{lemma}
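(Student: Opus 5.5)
The argument mirrors the proof of Lemma \ref{fourierspacekernel}, now with the operator $X:=[R_L(z)V_L]^n$ and its adjoint.

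\emph{Trace class.} For almost every $\omega$ the potential $V_L$ is bounded, because $B$ is Schwartz and $M_L<\infty$ almost surely. By Lemma \ref{HilbertSchmidt}, $R_L(z)$ is Hilbert--Schmidt. Since the Hilbert--Schmidt operators form a two-sided ideal, $X=R_L(z)\,V_L[R_L(z)V_L]^{n-1}$ is Hilbert--Schmidt. Hence $XX^*$ and $X^*X$ are trace class, with $\tr XX^*=\tr X^*X=\|X\|_{HS}^2$.

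\emph{Trace formula.} Compute $\tr XX^*=\sum_p\langle\varphi_p,XX^*\varphi_p\rangle$ and insert the resolution of the identity \eqref{fourierident} between all factors. Use the matrix elements $\langle\varphi_k,R_L(z)\varphi_{k'}\rangle=\delta_{k,k'}(\nu(k)-z)^{-1}$ and $\langle\varphi_k,V_L\varphi_{k'}\rangle=|\Lambda_L|^{-1}\widehat V_L(k-k')$ from \eqref{ftident2}. Also note $X^*=[V_LR_L(\bar z)]^n$, whose matrix elements give $\overline{\widehat V_L(k-k')}=\widehat V_L(k'-k)$ because $V_L$ is real, together with the factors $(\nu-\bar z)^{-1}$.

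Label the momenta of $X$ by $p_1,\dots,p_{n+1}$ and those of $X^*$ by $\tilde p_1,\dots,\tilde p_{n+1}$. The Kronecker deltas from the diagonal resolvents collapse the intermediate sums. The junction between $X$ and $X^*$ produces $\delta_{p_{n+1},\tilde p_1}$, which becomes $\delta_{*,L}$ after absorbing one factor $|\Lambda_L|$, and the outer trace sum produces $\delta_{\tilde p_{n+1},p_1}$. Each sum over momenta is then rewritten as $\int_{\Lambda_L^*}$ via \eqref{defofdiscretefouriersum}, with the powers of $|\Lambda_L|$ matched against the $|\Lambda_L|^{-1}$ from each $V$-matrix element and from $\widetilde\tr$. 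This gives \eqref{defofENL-0-0-0} once the order of the resolvents and potentials is fixed according to the placement in $X$ and $X^*$.

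Resolving the delta $\delta_{*,L}(p_{n+1}-\tilde p_1)$ and relabelling $q_j=p_j$ for $j\le n+1$ and $q_{n+1+j}=\tilde p_{j+1}$ turns the expression into a single chain $q_1,\dots,q_{2n+1}$ with potentials $\widehat V_L(q_j-q_{j+1})$. This is \eqref{eq:expforTn}, with the closing delta $\delta_{q_{2n+1},q_1}$. Sign and orientation conventions of the $X^*$ block are absorbed by the reality of $V_L$.

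\emph{Absolute convergence and bound.} This is the main bookkeeping step. In the absolute-value sum, bound every resolvent factor by ${\rm dist}(z,\R_\geq)^{-1}$, except the two adjacent to the closing delta. After using $\delta_{q_{2n+1},q_1}$, these two are $|\nu(q_1)-z|^{-1}$ and $|\nu(q_{2n+1})-\bar z|^{-1}=|\nu(q_1)-z|^{-1}$. This leaves $2n-2$ factors of ${\rm dist}^{-1}$ and the factor $|\nu(q_1)-z|^{-2}$.

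Next, change variables to the differences $u_j=q_{j+1}-q_j$ for $j=1,\dots,2n$, with $q_1$ kept. Summing over each $u_j$ freely gives $\|\widehat V_L\|_{*,1}$ per potential factor. The constraint imposed by the closing delta, $\sum u_j=0$, is only dropped, which is allowed since the terms are nonnegative, so this gives an upper bound. The remaining integral over $q_1$ is $\int|\nu(q_1)-z|^{-2}dq_1$, which is finite by Lemma \ref{finitetracesumm}. This yields \eqref{defofENL-0-0}. By Tonelli, absolute convergence justifies summing in any order, which legitimizes the earlier collapse of the deltas.
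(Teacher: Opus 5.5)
Your proposal is correct and follows essentially the same route as the paper. You use the Hilbert--Schmidt ideal argument for the trace-class property, and you insert the eigenbasis $\varphi_p$ with power counting in $|\Lambda_L|$ to turn the junction Kronecker delta into $\delta_{*,L}$. You then relabel to the single chain $q_1,\dots,q_{2n+1}$ and obtain the bound by keeping only the two resolvents adjacent to the closing delta. The only cosmetic difference is in the absolute-convergence estimate: the paper relaxes the junction delta via $\delta_{*,L}(p_{n+1}-\tilde p_1)\le|\Lambda_L|$ and keeps $\delta_{\tilde p_{n+1},p_1}$, whereas you resolve the junction delta exactly and relax the closing one after using it to identify the two end resolvents. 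Both give the same bound.
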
 
We note that the proof of  Lemma \ref{fourierspacekernelerror} is almost identical to 
that of Lemma  \ref{fourierspacekernel}.

\begin{proof}
Inserting iteratively the o.n.b.  \eqref{eq:deofonb}, we obtain as an iterated sum 
\begin{align*}
&  {\rm \tilde{tr}} (
 [R_L(z)   V_L  ]^{n}  ([R_L(z)  V_L  ]^{n})^* )  \\
& = |\Lambda_L|^{-1}  \sum_{\tilde{p}_{n+1}} \cdots  \sum_{p_1}\langle \varphi_{p_1} R_L(z) \varphi_{p_1} \rangle  \langle \varphi_{p_1} V_L  \varphi_{p_2}  \rangle \langle  \varphi_{p_2}  R_L(z)  \varphi_{p_2} \rangle 
\cdots \langle  \varphi_{p_{n}} , V_L \varphi_{p_{n+1}} \rangle \\
& \langle \varphi_{p_{n+1}} , \varphi_{\tilde{p}_1} \rangle   \langle  \varphi_{\tilde{p}_{1}} , V_L \varphi_{\tilde{p}_{2}} \rangle \langle  \varphi_{\tilde{p}_2}  R_L(z)  \varphi_{\tilde{p}_2} \rangle \cdots   
 \langle  \varphi_{\tilde{p}_{n}} , V_L \varphi_{\tilde{p}_{n+1}} \rangle \langle \varphi_{\tilde{p}_{n+1}}  R_L(z)  \varphi_{\tilde{p}_{n+1}} \rangle  \langle \varphi_{\tilde{p}_{n+1}} , \varphi_{{p}_1} \rangle \\
& = |\Lambda_L|^{-1}|\Lambda_L|^{-2n} \sum_{\tilde{p}_{n+1}} \cdots  \sum_{p_1}
   (\nu(p_1) - z)^{-1}   \widehat{V}_L(p_1-p_2)   (\nu(p_2) - z)^{-1}   
\cdots \widehat{V}_L(p_{n} - p_{n+1}) \\
&   \delta_{{p}_{n+1} ,\tilde{p}_1}   \widehat{V}_L(\tilde{p}_{1} - \tilde{p}_{2}) 
 (\nu(\tilde{p}_2) - z)^{-1}  \cdots     \widehat{V}_L(\tilde{p}_{n} - \tilde{p}_{n+1}) 
  (\nu(\tilde{p}_{n+1}) - z)^{-1}  \delta_{\tilde{p}_{n+1} ,{p}_1} .
\end{align*}
Thus the power counting in $|\Lambda_L|$ shows that we can write one of the Kronecker delta functions  in terms of  $\delta_{*,L}$
Thus we find as an iterated integral 
\begin{align}
 & \widetilde{\rm tr}  ( [R_L(z)   V_L  ]^{n})^*
 [R_L(z)   V_L  ]^{n} \\ 
&   \nonumber \\ 
&  = \prod_{l=1}^{n+1}  \int_{\Lambda_L^*  }  \prod_{\tilde{l}=1}^{n+1}  \int_{\Lambda_L^*} 
  \prod_{j=1}^n  \widehat{V}_{L}(p_j - p_{j+1} ) \delta_{*,L}(p_{n+1} - \tilde{p}_1)  \prod_{j=1}^n  \widehat{V}_{L}(\tilde{p}_j - \tilde{p}_{j+1} )  
\nonumber  \\
& \times  \left(    \prod_{j=1}^{n}      \frac{1}{ \nu(p_j) - z}        \right)  \left(    \prod_{j=2}^{n+1}      \frac{1}{ \nu(\tilde{p}_j) - \overline{z}}        \right) \delta_{\tilde{p}_{n+1},  {p}_1}  d p_1 \ldots  d p_{n+1}   d \tilde{p}_1 \ldots  d  \tilde{p}_{n+1} .  \nn
\end{align} 
To see that the sum converges absolutely we  first estimate  using   $|(\nu(\tilde{p}) - z)^{-1}| \leq   {\rm dist}(z,\R_{\geq })^{-1}$   and  then successively integrate out
\begin{align}
&  \prod_{l=1}^{n+1}  \int_{\Lambda_L^*  }  \prod_{\tilde{l}=1}^{n+1}  \int_{\Lambda_L^*} 
  \prod_{j=1}^n  |\widehat{V}_{L}(p_j - p_{j+1} )| \delta_{*,L}(p_{n+1} - \tilde{p}_1)  \prod_{j=1}^n  |\widehat{V}_{L}(\tilde{p}_j - \tilde{p}_{j+1} )  |
\nonumber  \\
& \times  \left(    \prod_{j=1}^{n}      \frac{1}{ |\nu(p_j) - z|}        \right)  \left(    \prod_{j=2}^{n+1}      \frac{1}{| \nu(\tilde{p}_j) - \overline{z}|}        \right) \delta_{\tilde{p}_{n+1},  {p}_1}  d p_1 \ldots  d p_{n+1}   d \tilde{p}_1 \ldots  d  \tilde{p}_{n+1}  \nn \\
& \leq   \prod_{l=1}^{n+1}  \int_{\Lambda_L^*  }  \prod_{\tilde{l}=1}^{n+1}  \int_{\Lambda_L^*} 
  \prod_{j=1}^n  |\widehat{V}_{L}(p_j - p_{j+1} )|  |\Lambda_L | \prod_{j=1}^n  |\widehat{V}_{L}(\tilde{p}_j - \tilde{p}_{j+1} )  |
\nonumber  \\
& \times {\rm dist}(z,\R_{\geq })^{-2n+2}    \frac{1}{ |\nu(p_1) - z|}            \frac{1}{| \nu(\tilde{p}_{n+1}) - \overline{z}|}         \delta_{\tilde{p}_{n+1},  {p}_1}  d p_1 \ldots  d p_{n+1}   d \tilde{p}_1 \ldots  d  \tilde{p}_{n+1} \nn \\
& \leq     \int_{\Lambda_L^*  }   \int_{\Lambda_L^*} 
 \|\widehat{V}_{L}\|_{1,*}^{2n}  |\Lambda_L | 
 {\rm dist}(z,\R_{\geq })^{-2n+2}    \frac{1}{ |\nu(p_1) - z|}            \frac{1}{| \nu(\tilde{p}_{n+1}) - \overline{z}|}         \delta_{\tilde{p}_{n+1},  {p}_1}  d p_1  \tilde{p}_{n+1} \nn  \\
& \leq    
 \|\widehat{V}_{L}\|_{1,*}^{2n}  
{\rm dist}(z,\R_{\geq })^{-2n+2}    \int_{\Lambda_L^*  }    \frac{1}{ |\nu(p) - z|^2}         dp . \nn
\end{align} 
Finally \eqref{eq:expforTn} follows by summing  out $p_{n+1}$ and  $\tilde{p}_{n+1}$ 
 using the delta function and the following simple relabeling of the integration variables 
$$
q=((q_1,...q_n),(q_{n+1},...,q_{2n+1})) = ((p_1,...,p_{n}),(\tilde{p}_1,...,\tilde{p}_{n+1})) =(p,\tilde{p}) .
$$
\end{proof}

To obtain an explicit expression for \eqref{defofENL-0}  we  calculate  the expectation of  \eqref{defofENL-0-0}.
For this we define  for $A  \in \mathcal{A}_{2n} $ and $z \in \C \setminus [0,\infty)$ 
\begin{align} \label{eq:defofCpitilde}
{D}_{n,A,L}[z] 
& :=      \int_{(\Lambda_L^*)^{2n+1}}  \mathcal{P}_{A,L}( (q_1 , \ldots , q_{2n+1})) 
\nn
\\
& \times\prod_{j=1}^{n}  (\nu( q_j) - z)^{-1}\prod_{j=n+2}^{2n+1}  (\nu( q_j) - \overline{z})^{-1} \delta_{q_{2n+1},{q}_1} d (q_1 , \ldots , q_{2n+1}) 
\end{align}
We  recall that by   the definition given in  \eqref{eq:defofdeltapi0} we have  for $A \in \mathcal{A}_{2n}$
\begin{align} \label{eq:defofdeltapi}
\mathcal{P}_{A,L}(q) & =  \prod_{a \in A} \left[ m_{|a|}   \delta_{*,L} \left( \sum_{l \in a} (q_l - q_{l+1}) \right) 
  \prod_{l \in a}  \widehat{B}_\#(q_l - q_{l+1} ) \right]  .
 \end{align}

\begin{lemma} 
Let  $z \in \C \setminus [0,\infty)$. Then the sum in  \eqref{eq:defofCpitilde}
converges absolutely.  Furthermore, we can write   \eqref{defofENL-0} in terms of \eqref{eq:defofCpitilde}
as 
\begin{align} \label{CtildeErel} 
\mathfrak{E}_{n,L}[ z  ] =  {\bf E}_L  {\rm \widetilde{tr}} (
 [R_L(z)   V_L  ]^{n}  ([R_L(z)   V_L  ]^{n})^* )   = \sum_{A \in \mathcal{A}_{2n}} {D}_{n, A,L}[ z ]  .  
\end{align} 

\end{lemma}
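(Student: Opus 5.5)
This follows the proof of Theorem \ref{lem:formulaforexpcoeff-1}, with Lemma \ref{fourierspacekernelerror} replacing Lemma \ref{fourierspacekernel}. The starting point is the almost sure representation \eqref{eq:expforTn} of $\widetilde{\tr}([R_L(z)V_L]^n([R_L(z)V_L]^n)^*)$ as an absolutely convergent sum over $(q_1,\ldots,q_{2n+1})\in(\Lambda_L^*)^{2n+1}$. The integrand is the product of the $2n$ potential factors $\widehat{V}_L(q_j-q_{j+1})$, the resolvent factors, and $\delta_{q_{2n+1},q_1}$. We then take $\mathbf{E}_L$ of both sides and interchange the expectation with the sum over $q$.

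The interchange is justified by Fubini's theorem. The bound \eqref{defofENL-0-0} gives, pointwise in $\omega$, that the absolute sum is at most
\[
{\rm dist}(z,\R_\geq)^{-2n+2}\,\|\widehat V_L\|_{*,1}^{2n}\int_{\Lambda_L^*}|\nu(p)-z|^{-2}dp .
\]
Lemma \ref{boundonexppotenyt} with $m=2n$ shows $\mathbf{E}_L\|\widehat V_L\|_{*,1}^{2n}<\infty$, using the moment assumption \eqref{2.4}. Lemma \ref{finitetracesumm} bounds the remaining momentum integral. Hence the sum over $q$ and $\omega$ together converges absolutely, and expectation and summation may be exchanged.

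With the expectation inside, apply Lemma \ref{expProd00123} with $n$ replaced by $2n$ and momenta $p_j=q_j$, $j=1,\ldots,2n+1$. This gives
\[
\mathbf{E}_L\prod_{j=1}^{2n}\widehat V_L(q_j-q_{j+1})=\sum_{A\in\mathcal{A}_{2n}}\mathcal{P}_{A,L}(q_1,\ldots,q_{2n+1}).
\]
Substituting this and pulling the finite sum over $A$ out of the $q$-sum yields $\sum_{A}D_{n,A,L}[z]$, which is \eqref{CtildeErel}. The middle equality in \eqref{CtildeErel} is just the definition \eqref{defofENL-0} of $\mathfrak{E}_{n,L}[z]$. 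It rests on cyclicity of the trace, $\widetilde{\tr}(X^*X)=\widetilde{\tr}(XX^*)$ for the Hilbert–Schmidt operator $X=[R_L(z)V_L]^n$.

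The statement also asks for absolute convergence of each individual sum \eqref{eq:defofCpitilde}. One could read it off the Fubini argument, but summing over $A$ might mask cancellations between partitions. So I would bound each $D_{n,A,L}$ directly, in the same way as Lemma \ref{boundonCsDOS}. First bound $|m_{|a|}|$ by moments. Then estimate every resolvent factor by ${\rm dist}(z,\R_\ge)^{-1}$, except $|\nu(q_1)-z|^{-1}$ and $|\nu(q_{2n+1})-\bar z|^{-1}$, which coincide at $q_1=q_{2n+1}$ through the Kronecker delta. Next, resolve the $\delta_{*,L}$ constraints with the change of variables \eqref{changevar1} and \eqref{changevar1next}, for the partition $A\in\mathcal{A}_{2n}$. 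For $j\in I_A$ the factors $\widehat B_\#$ are summable with norm $\|\widehat B_\#\|_{*,1}$; for $j\in J_A$ they are bounded by $\|\widehat B_\#\|_{*,\infty}$.

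The point I expect to need care is the closing delta $\delta_{q_{2n+1},q_1}$. It is unnormalized, i.e. it carries no factor $|\Lambda_L|$. It must be consistent with $\sum_j(q_j-q_{j+1})=0$, which is forced by Lemma \ref{sum0}. Since it is unnormalized, the $|\Lambda_L|$ powers match exactly as in \eqref{eq:defofCpi2}. The result is finite, bounded by a constant times $\int_{\Lambda_L^*}|\nu(q)-z|^{-2}dq$, which Lemma \ref{finitetracesumm} controls.
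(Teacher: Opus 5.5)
Your proposal is correct and follows the paper's proof. You justify the exchange of $\mathbf{E}_L$ and the momentum sum by Fubini, using the pointwise bound \eqref{defofENL-0-0} together with Lemma~\ref{boundonexppotenyt}, and then apply Lemma~\ref{expProd00123} with $2n$ in place of $n$; your direct per-partition bound on each $D_{n,A,L}$ (as in Lemma~\ref{boundonCsDOS}) and your use of $\widetilde{\tr}(X^*X)=\widetilde{\tr}(XX^*)$ usefully make explicit two points the paper leaves implicit.
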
 
\begin{proof}
Using the  definition given in  \eqref{defofENL-0}  and \eqref{defofENL-0-0} we find 
\begin{align} \label{defofENL-0-0} 
& \mathfrak{E}_{n,L}[ z  ] = {\bf E}   \widetilde{\tr}( [R_L(z)  V_L  ]^{n})^*
 [R_L(z)   V_L  ]^{n} ) \\
&  = 
 \int_{(\Lambda_L^*)^{2n+1} }  {\bf E}   \prod_{j=1}^{2n}  \widehat{V}_{L}(q_j - q_{j+1} )
\nonumber  \\
& \times  \left(    \prod_{j=1}^{n}      \frac{1}{ \nu(q_j) - z}        \right)  \left(    \prod_{j=n+2}^{2n+1}      \frac{1}{ \nu(q_j) - \overline{z}}        \right)  \delta_{q_{2n+1},{q}_1}  d (q_1 , \ldots , q_{2n+1}). \label{eq:expforTn} ,
\end{align}
where we  interchange the integrals using  Fubini, which is justified by the absolute convergence  
of the expression in \eqref{defofENL-0-0}  and the bound  obtained by Lemma  \ref{boundonexppotenyt}.
Now the claim follows using  Lemma  \ref{expProd00123}.
\end{proof}

To obtain a good   estimate  for \eqref{eq:defofCpitilde} 
we  introduce  analogously as in Section \ref{sec:transformU} the following change of variables  similar to   \eqref{changevar1}
\begin{align} \label{changevar1v2}
u_0  = q_1  \text{ and }  u_{l} = q_{l+1} - q_l \text{ for }  l=1,...,n 
\end{align}
and $u=(u_0 , \ldots ,  u_{2n})$.
 With this we can express the $q_j$'s as
$$
q_j = \sum_{l=0}^{j-1} u_l .
$$
Using the change of variables  \eqref{changevar1v2} in  \eqref{eq:defofCpitilde}  we find with \eqref{eq:defofdeltapi} expressed in the new variables 
that 
  for $A  \in \mathcal{A}_{2n} $ 
\begin{align}  \label{eq:defofCpiVarChange0}
{D}_{n,A,L}[z ] & =      \int_{(\Lambda_L^*)^{2n+1}}  \prod_{a \in A} \left[ m_{|a|}  \delta_{*,L} \left( \sum_{l \in a} u_l\right) 
  \prod_{l \in a}  \hat{B}_\#(- u_l) \right] 
\\
& \times    \prod_{j=1}^{n}  \left(\nu\left(  \sum_{l=0}^{j-1} u_l \right) - z \right)^{-1}\prod_{j=n+2}^{2n+1}  \left(\nu \left(  \sum_{l=0}^{j-1} u_l \right) - \overline{z} \right)^{-1} \delta_{0,\sum_{l=1}^{2n} u_l}  d(u_0 , \ldots , u_{2_n}).     \nn
\end{align}

We will use  the  notation introduced in Section  \ref{sec:transformU}, in particular \eqref{changevar1next},
to sum over the  delta  functions  in \eqref{eq:defofCpiVarChange0}.
 Note that for $A  \in \mathcal{A}_{2n} $  (note that in view of Lemma \ref{sum0}    we have by the delta functions 
that $\sum_{l=1}^{2n} u_l=0$  and so 
the  $\delta_{0,\sum_{l=1}^{2n} u_l} $ term  in   \eqref{eq:defofCpiVarChange0}  can be dropped). Thus we arrive at the following lemma. 

\begin{lemma} Let  $z \in \C \setminus  [0,\infty)$. Then 
\begin{align} \label{CtildeErel} 
\mathfrak{E}_{n,L}[ z   ]  & = \sum_{A \in \mathcal{A}_{2n}} {D}_{n, A,L}[ z ]  
\end{align} 
with 
\begin{align} 
& {D}_{n,A,L}[z ]\nn \\
&=      \int_{\Lambda_L^*} d u_0  \prod_{l \in I_A} \left( \int_{\Lambda_L^*} d u_l  \right)   \prod_{a \in A} \left[ m_{|a|}   \prod_{l \in a }  \hat{B}_\#(- [M_A (u)]_l) \right]  \nn
\\
& \times \prod_{j=1}^{n}  \left(\nu\left( u_0 + \sum_{l=1}^{j-1} [M_A (u)]_l  \right) - z  \right)^{-1}\prod_{j=n+2}^{2n+1}  \left (\nu \left( u_0 + \sum_{l=1}^{j-1} [M_A (u)]_l  \right) - \overline{z}  \right)^{-1}, \label{eq:defofCpiVarChange}
\end{align}
where the sum converges absolutely. We recall that  $M_A(\cdot )$ is defined in \eqref{changevar1next}.
\end{lemma}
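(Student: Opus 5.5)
The plan is to start from the preceding lemma, which already gives $\mathfrak{E}_{n,L}[z]=\sum_{A\in\mathcal{A}_{2n}} D_{n,A,L}[z]$ with $D_{n,A,L}$ in the form \eqref{eq:defofCpiVarChange0}. The only task is to resolve the discrete delta functions exactly as in the proof of Theorem \ref{thm:dosasy0-0-Chat}, and to justify absolute convergence of the resulting sum.

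\emph{Resolving the deltas.} Fix $A\in\mathcal{A}_{2n}$. For each $a\in A$, the factor $\delta_{*,L}(\sum_{l\in a}u_l)=|\Lambda_L|\,\delta_*(\sum_{l\in a}u_l)$ vanishes unless $u_{\max a}=-\sum_{l\in a\setminus\{\max a\}}u_l$. Summing over $u_{\max a}$ for each $a$, i.e.\ over all $u_j$ with $j\in J_A$, removes one $\int_{\Lambda_L^*}$ (which carries the factor $|\Lambda_L|^{-1}$) per block. This exactly cancels the $|\Lambda_L|$ in $\delta_{*,L}$. The remaining free variables are $u_0$ and $(u_l)_{l\in I_A}$, and every occurrence of $u_j$, $j\ge1$, is replaced by $[M_A(u)]_j$ by the definition \eqref{changevar1next}.

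\emph{The last delta and the moments.} By Lemma \ref{sum0}, $\sum_{l=1}^{2n}[M_A(u)]_l=0$, so the Kronecker factor $\delta_{0,\sum_l u_l}$ equals $1$ on the support and can be dropped. This leaves precisely the integrand \eqref{eq:defofCpiVarChange}. The moments $m_{|a|}$ are constants and simply stay in front.

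\emph{Absolute convergence.} I will bound the integrand as in Lemma \ref{boundonCsDOS}. All resolvent factors except $j=1$ and $j=2n+1$ are bounded by ${\rm dist}(z,\R_\geq)^{-1}$; by Lemma \ref{sum0} both of these remaining factors have argument $u_0$. For $j\in J_A$, use $|\hat B_\#|\le\|\hat B_\#\|_{*,\infty}$. For $j\in I_A$, summing $|\hat B_\#(-u_j)|$ gives $\|\hat B_\#\|_{*,1}$. What is left is $\int_{\Lambda_L^*}|\nu(u_0)-z|^{-2}du_0$, which is finite by Lemma \ref{finitetracesumm}. Absolute convergence then justifies doing the delta summations in any order (Fubini for counting measures), so the rewriting above is legitimate.

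The main (mild) obstacle is bookkeeping. One must check that the two groups of resolvents, with $z$ and with $\bar z$, both still have first and last arguments equal to $u_0$ after substitution, since index $n+1$ carries no resolvent. One must also check that the $|\Lambda_L|$ factors cancel exactly.
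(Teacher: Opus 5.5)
Your proposal is correct and follows the paper's argument: after the bijective change of variables \eqref{changevar1v2}, which takes \eqref{eq:defofCpitilde} to \eqref{eq:defofCpiVarChange0}, the block deltas are summed out via $M_A$ (the $|\Lambda_L|$ factors cancel), and the remaining Kronecker delta is dropped by Lemma \ref{sum0}. Your absolute-convergence bound keeps only the $j=1$ and $j=2n+1$ resolvents, both at $u_0$, and sums $|\hat B_\#|$ over $I_A$; this is the same estimate as in the proof of Lemma \ref{boundonCsDOS}, adapted to $2n$ potentials with $\overline{z}$ on the second half, and it spells out a step the paper only asserts here.
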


To estimate this expression we proceed similarly as in \cite{HaslerKoberstein.2025-2}.
We keep the first   resolvents as well as the last resolvent to obtain the necessary decay in $u_0$ to show  the tracial property. 
In view of Lemma  \ref{lemmalinindep}  we know that the expression  $u_0 + \sum_{l=1}^{j-1} [M_A (u)]_l$   in the $j$-th resolvents for $j=1,...,2n+1$ with  $j-1 \in I_A$ are as functions of the coordinates  linearly independent.
The remaining  resolvents except the first and last one, i.e., $j=2,...,2n$ for  $j - 1 \in J_A$ we estimate  by $\eta^{-1}$ using   \eqref{ln-1}, assuming $0 < \eta \leq 1$.  Recall that this may include   the case where  $j = n+1$ and $j-1 \in J_A$, that is 
where we have an identity instead of a  resolvent, we   estimate trivially by $\eta^{-1}$. 
Furthermore,   for $l \in J_A$ we  estimate  $|\hat{B}_\#(-[M_A(u)]_l)| \leq  \| \hat{B}_\#\|_{*,\infty} $ for $u \in \Lambda_L^*$. 

Thus as  outlined in the previous paragraph,  to estimate \eqref{eq:defofCpiVarChange} for $z \in  \C \setminus \R$
 we organize the factors occurring in its integrand  as 
\begin{align} \label{eq:defofCpiVarChange-2-3}
& D_{n,A,L}[z ] \\
&=      \int_{\Lambda_L^*} d u_0  \prod_{l \in I_A} \left( \int_{\Lambda_L^*} d u_l  \right)   \prod_{a \in A} \left[ m_{|a|}   \prod_{l \in a \cap I_A }  \hat{B}_\#(- [M_A (u)]_l)  \prod_{l \in a \cap J_A }  \hat{B}_\#(- [M_A (u)]_l) \right]  \nn
\\
& \times \prod_{j=1 : j-1 
\in I_A \cup \{ 0 \}}^{n}  \left(\nu\left( u_0 + \sum_{l=1}^{j-1} [M_A (u)]_l  \right) - z  \right)^{-1}\prod_{j=n+2: j-1 
\in I_A \cup \{2n\} }^{2n+1}  \left (\nu \left( u_0 + \sum_{l=1}^{j-1} [M_A (u)]_l  \right) - \overline{z}  \right)^{-1}, \nn \\
& \times \prod_{j=1: j-1 
\in  J_A}^{n}  \left(\nu\left( u_0 + \sum_{l=1}^{j-1} [M_A (u)]_l  \right) - z  \right)^{-1}\prod_{j=n+2:  j-1 
\in J_A\setminus \{2n\}}^{2n+1}  \left (\nu \left( u_0 + \sum_{l=1}^{j-1} [M_A (u)]_l  \right) - \overline{z}  \right)^{-1} , \nn
\end{align}
and  estimate the factors partitioned over $J_A$ as follows: 
\begin{itemize}
\item
   for $l \in J_A$ we  estimate  $|\hat{B}_\#(-[M_A(u)]_l)| \leq  \| \hat{B}_\#\|_{*,\infty} $ for $u \in \Lambda_L^*$,
  \item  for $j-1 \in J_A \setminus \{ 2n\}$ we estimate $|(\nu( \cdots ) - z )^{-1}| \leq |{\rm Im } z |^{-1}$. 
\end{itemize} 
This gives  for $E , \eta > 0$ 
\begin{align}  \nn 
& | D_{n,A,L}[E \pm  i \eta ] |  \\
& \leq   \eta^{-|A|+1}  \| \hat{B}_\# \|_{*,\infty}^{|A|} \left|  \prod_{a \in A} m_{|a|} \right| \int_{\Lambda_L^*} d u_0 \prod_{l \in I_A}  \left( \int_{\Lambda_L^*} d u_l  \right)   \bigg|    \prod_{l=1, l \in I_A }^{2n}  \hat{B}_\#( -u_l)  \bigg|
 \nonumber 
  \\
& \times \prod_{\substack{ j=1 \\ j - 1 \in I_A  }}^{n}  \left|\nu\left( u_0   + \sum_{l=1}^{j-1}[M_A (u)]_l  \right) - E \mp  i \eta \right|^{-1}\prod_{\substack{ j=n+2 \\  j - 1 \in I_A }}^{2n+1}  \left| \nu \left( u_0 +  \sum_{l=1}^{j-1} [M_A (u)]_l   \right) - E \pm  i \eta  \right|^{-1} , \nonumber 
  \\
& \times   \left|\nu\left( u_0    \right) - E \mp  i \eta \right|^{-2}   . \label{Dtoestim1-NEW} 
\end{align}

To further estimate \eqref{Dtoestim1-NEW}  we will use the following two lemmas, similarly
as in \cite{HaslerKoberstein.2025-2}.   The first lemma is about an algebraic 
relation, which is satisfied  by the  transformed variables.

\begin{lemma} \label{lemmalinindep} 
Let $A$ be a partition  of $\{ 1, \ldots ,  2n \}$.  Let $J_A,I_A$ and $M_A$ be defined as in \eqref{defofJA},   \eqref{defofIA} and  \eqref{changevar1next}. 
Let for $A$ and $j =1,...,2n+1$ the function  $\sigma_{A,j}$ be defined as
\begin{align*}
\sigma_{A,j} : \{ 1 ,  \ldots ,  2n \} \to \{ 0,1 \} \quad \text{via} \quad
\sigma_{A,j}  (l) =
 \begin{cases}
1 \quad : \quad  \max {a(l)} > j,
\\
0 \quad : \quad \text{otherwise}.
\end{cases}
\end{align*}

Then for all $j=2,...,2n$ with $j \in I_A$ we have
\begin{align} \label{linindep} 
 \sum_{l=1}^{j}[M_A (u)]_l   = u_{j} + \sum_{l \in \{1,...,j-1\} \cap    I_A } \sigma_{A , j-1}  (l) u_l .
\end{align}
\end{lemma}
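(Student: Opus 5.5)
The identity follows from direct bookkeeping with the definition \eqref{changevar1next} of $M_A$, organised by the blocks of the partition $A$. Fix $j\in\{2,\ldots,2n\}$ with $j\in I_A$. I split the partial sum according to $I_A$ and $J_A$:
\begin{align*}
\sum_{l=1}^{j}[M_A(u)]_l = \sum_{\substack{l\le j\\ l\in I_A}} u_l \;-\; \sum_{\substack{m\le j\\ m\in J_A}} \ \sum_{l\in a(m)\setminus\{m\}} u_l .
\end{align*}
Every $l$ in the inner sum lies in $I_A$, since $l\neq m=\max a(l)$. So the right-hand side is a linear combination of the variables $u_l$ with $l\in I_A$ only.

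Next I compute the coefficient of each $u_l$ with $l\in I_A$. If $l>j$, then $u_l$ appears in neither sum, because the second sum only involves blocks whose maximum is at most $j$. If $l\le j$, then $u_l$ appears in the first sum with coefficient $+1$. It appears in the second sum only for $m=\max a(l)$, and only if $\max a(l)\le j$; in that case it carries coefficient $-1$. Blocks are disjoint, so no $u_l$ is subtracted twice. Hence the coefficient of $u_l$ equals $1_{\{\max a(l)>j\}}$ for $l\in I_A$, $l\le j$, and $0$ otherwise.

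It remains to match this coefficient with $\sigma_{A,j-1}$ and to isolate $u_j$. For $l=j$: since $j\in I_A$, $j$ is not the maximum of its block, so $\max a(j)>j$ and the coefficient is $1$, which gives the term $u_j$. For $l\le j-1$ with $l\in I_A$: the indicator $\sigma_{A,j-1}(l)=1$ means $\max a(l)>j-1$, i.e.\ $\max a(l)\ge j$. The case $\max a(l)=j$ is impossible because $j\in I_A$ is not the maximum of any block. Thus $\sigma_{A,j-1}(l)=1_{\{\max a(l)>j\}}$, which is exactly the coefficient found above, and \eqref{linindep} follows.

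The only delicate point is this last matching of the strict inequality $\max a(l)>j$ with the shifted index $j-1$ in $\sigma_{A,j-1}$; it uses precisely the hypothesis $j\in I_A$. Everything else is routine cancellation using the disjointness of the blocks, in the same way as Lemma~\ref{sum0}.
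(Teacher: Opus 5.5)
Your proof is correct and complete. Splitting the partial sum over $I_A$ and $J_A$ shows that the coefficient of $u_l$ is $1_{\{\max a(l)>j\}}$ for $l\in I_A$ with $l\le j$ (and $0$ otherwise). The hypothesis $j\in I_A$ then does double duty: it forces this coefficient to be $1$ at $l=j$, and it rules out $\max a(l)=j$, so the coefficient agrees with $\sigma_{A,j-1}(l)$.

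The paper does not write the argument out; it only says the lemma follows from an algebraic argument based on the definition of $M_A$, deferring details to \cite{HaslerKoberstein.2025-2}. Your direct coefficient count is exactly that kind of argument.
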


 The proof of the lemma follows from an algebraic argument, for details see   \cite{HaslerKoberstein.2025-2}.
Furthermore,  we will use  the bound in the following Lemma to estimate 
\eqref{Dtoestim1-NEW}.This  is 
the  final step before we can prove Theorem \ref{thm:maintec000}.
To show the lemma, we will use that the  relation shown  in Lemma \ref{lemmalinindep}
allows us to   perform an induction to integrate over the $u_l$ variables one after another.
\label{sec:ausintegrieren}

\begin{lemma}
\label{ausintegrieren}
Let $n \in \N$,  $A$ be a partition of the set $\{ 1, \ldots ,  2n \}$,  and  $J_A,I_A$ and $M_A$ be defined as in \eqref{defofJA},   \eqref{defofIA} and \eqref{changevar1next},
respectively.  Let  $z = E \pm i \eta$
with $E > 0$ and $\eta  > 0$.  Define $\sigma_{A, j }$ as in Lemma   \ref{lemmalinindep}  and  ${C}(E,d,L,\eta,\hat{B}_\#)$ as in  Proposition \ref{ln000}. Then for all \(u_0\in\Lambda_L^*\)  
\begin{align}   \label{LHS}
&   \prod_{l \in I_A}  \left( \int_{\Lambda_L^*} d u_l  \right)   \bigg|    \prod_{l=1, l \in I_A }^{2n}  \hat{B}_\#( -u_l)  \bigg|
 \\
 &\times
 \prod_{\substack{ j=2 \\ j - 1 \in I_A }}^{n}  \left|\nu\left( u_{j-1}  + u_0 + \sum_{l \in \{1,...,j-2\} \cap    I_A } \sigma_{A , j-2}  (l) u_l \right) - z  \right|^{-1}  \nonumber   \\
  &\times
 \prod_{\substack{ j=n+2 \\ j - 1 \in I_A }}^{2n+1}  \left|\nu\left( u_{j-1}  + u_0 + \sum_{l \in \{1,...,j-2\} \cap    I_A } \sigma_{A , j-2}  (l) u_l \right) - \overline{z} \right|^{-1}
 \nonumber \\
 &\leq   [  {C}(E,d,L, \eta,\hat{B}_\#)   ]^{2n-|A|}  .\nonumber
\end{align}

\end{lemma}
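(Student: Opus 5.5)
The proof integrates out the variables $u_l$, $l\in I_A$, one at a time in decreasing order of $l$, using the triangular structure from Lemma \ref{lemmalinindep}. By that lemma, the resolvent with index $j$ (where $j-1\in I_A$) depends on $u_{j-1}$ only through the shift $u_{j-1}+w_{j-1}$. Here $w_{j-1}:=u_0+\sum_{l\in\{1,\dots,j-2\}\cap I_A}\sigma_{A,j-2}(l)u_l$ involves only $u_0$ and variables $u_l$ with $l<j-1$. Hence each $u_l$ with $l\in I_A$ appears in exactly one resolvent as the leading variable. In all later resolvents, i.e.\ those with larger index, it appears only as part of a shift. It also appears in one factor $|\hat B_\#(-u_l)|$.

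First I fix $u_0$ and all $u_l$ with $l<m:=\max I_A$, and integrate over $u_m$. The integrand depending on $u_m$ is
\[
|\hat B_\#(-u_m)|\,\bigl|\nu(u_m+w_m)-z'\bigr|^{-1},
\]
with $z'\in\{z,\overline z\}$; no other factor contains $u_m$. I bound this integral uniformly in the shift $w_m$ by the constant $C(E,d,L,\eta,\hat B_\#)$ of Proposition \ref{ln000}. That proposition is stated earlier, and I take it to give exactly
\[
\sup_{w\in\R^d}\int_{\Lambda_L^*}|\hat B_\#(-u)|\,|\nu(u+w)-E\mp i\eta|^{-1}\,du\le C(E,d,L,\eta,\hat B_\#).
\]
If $\nu(m+1)$ corresponds to the ``identity'' position $j=n+1$, so that there is no resolvent, I bound the integral instead by $\|\hat B_\#\|_{*,1}$. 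I arrange $C\ge \|\hat B_\#\|_{*,1}$, for instance by taking the maximum, so the same constant works. After this step the remaining integrand has the same form with $I_A$ replaced by $I_A\setminus\{m\}$.

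Iterating this by induction on $|I_A|$, each $l\in I_A$ contributes one factor $C$. The base case is $I_A=\emptyset$, where the left side equals $1$. Since $|J_A|=|A|$, we have $|I_A|=2n-|A|$, which gives the bound $C^{2n-|A|}$, uniformly in $u_0$.

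The main point to check carefully is the bookkeeping in Lemma \ref{lemmalinindep}. I must verify that the shift $w_{j-1}$ really contains no $u_l$ with $l\ge j-1$, which holds because $\sigma$ is only summed over $l\le j-2$. This guarantees that integrating the largest remaining index never touches an earlier resolvent. I also need the uniformity in the shift in Proposition \ref{ln000}, and that the index ranges in the statement (products from $j=2$, and the cases $j=n+1$, $j=2n+1$) match the set $I_A$. If Proposition \ref{ln000} only covers $z$ and not $\overline z$, I use the symmetry $|\nu-\overline z|=|\nu-z|$ to handle both signs.
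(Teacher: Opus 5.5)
Your argument is essentially the paper's. The paper proves the lemma by induction over the variables $u_l$, $l\in I_A$, using the triangular structure from Lemma~\ref{lemmalinindep} and one application of Proposition~\ref{ln000} per integrated variable, with the details deferred to the companion paper. That is exactly what you do by integrating out $m=\max I_A$ first and iterating, including the separate treatment of $m=n$, where there is no resolvent.

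Two points need correcting.

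First, Proposition~\ref{ln000} contains no shift, so it does not give the bound with $\sup_{w\in\R^d}$ that you attribute to it. You also do not need that bound. It suffices that $w_m\in\Lambda_L^*$, which holds because $w_m$ is a sum of $u_0$ and some of the $u_l$, all lying in the additive group $\Lambda_L^*=(\frac1L\Z)^d$. The substitution $q=u_m+w_m$ is then a bijection of $\Lambda_L^*$. Apply the proposition to $f(q)=\hat B_\#(w_m-q)$. Its norms $\|f\|_{*,\infty}$ and $\|f\|_{*,1}$ coincide with those of $\hat B_\#$, so the bound is exactly $C(E,d,L,\eta,\hat B_\#)$, uniformly in $u_0$ and the earlier variables.

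Second, in the case $m=n$ you should not ``take a maximum'', since that would change the constant in the statement. It is unnecessary anyway: by \eqref{C007}, $C(E,d,L,\eta,\hat B_\#)\ge 2\|\hat B_\#\|_{*,1}$ already holds.
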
 

This lemma can be shown via an induction using Lemma \ref{lemmalinindep} as well as Proposition \ref{ln000}.
The proof is  identical to the proof of   \cite[Lemma 6.2]{HaslerKoberstein.2025-2} up to the very last step, which can be omitted.
We are now able to prove Theorem \ref{thm:maintec000} as outlined below.

\label{sec:proooferror} 

\begin{proof}[Proof of Theorem \ref{thm:maintec000}]   Let $E, \eta > 0$. We assume that $z = E + i \eta$ (the case $z = E - i \eta$ is obtained by taking complex conjugates). 
We start from \eqref{Dtoestim1-NEW}. We insert    \eqref{LHS} of Lemma  \ref{ausintegrieren} into  \eqref{Dtoestim1-NEW}.
The remaining sum over $u_0$ is then estimated  by   Lemma  \ref{lem:discest} (using $d\leq 3$) giving  a constant 
$C_{E,0}$, uniformly bounded for $E$ in compact subsets of $(0,\infty)$,   such that 
\begin{align}  \label{Dtoestim1tracial} 
 C_{T,\eta}:= \sup_{L \geq 1} \int_{\Lambda_L^*} \frac{1}{|\nu\left( u_0    \right)^2 - E \pm  i \eta|^2} da  \leq C_{E,0} ( 1 + \eta^{-2} ) .
\end{align} 
This gives 
\begin{align} \label{0finalestonD}
|   D_{n,A,L}[E + i \eta ]|  & \leq    \eta^{-|A|-1}  C_{T,\eta} \| \hat{B}_\# \|_{*,\infty}^{|A|}    C(E,d, L, \eta,\hat{B}_\#)^{2n-|A|}   \left|  \prod_{a \in A} m_{|a|} \right|,
\end{align}
where $C(E,d, L, \eta,\hat{B}_\#) $ is by  Lemma \ref{ausintegrieren} the constant  \eqref{C007} given in Proposition \ref{ln000}
\begin{align} \label{CdEB0}
  &C(E,d, L, \eta,\hat{B}_\#) \\
  &=
2 \| \hat{B}_\#  \|_{*,\infty}  \left[  C_{1}(2 E,d)  \ln \left|  \eta^{-1}  +1    \right| +  2^d \sqrt{2} (4 E + 1)^{d/2} \right] + 2  \| \hat{B}_\# \|_{*,1}  .  \nn
\end{align} 

To estimate the $\| \widehat{B}_\# \|_{*,\infty}$    term appearing in \eqref{CdEB0}  we use  that by Lemma  \ref{fourierseries} (b) 
we have 
\begin{align*}
& \| \widehat{B}_{\#} \|_{*,\infty}  \leq \| B \|_1 < \infty  .
\end{align*}
Finally let us define $c_B$ as the RHS of  \eqref{secondRest}, which is an  $L$-independent upper bound on the $\| \hat{B}_\# \|_{*,1}$ term  appearing  in
 \eqref{CdEB0}.  
Now by \eqref{C007} this  shows that  \eqref{CdEB0}   is uniformly bounded in $L$.
Therefore we have
\begin{align} 
 & C(E,d, L, \eta,\hat{B}_\#) \nn \\
&\leq 2 \| B \|_1  C_{1}(2 E,d)  \ln \left|  \eta^{-1}  +1    \right| +  \left[ 2 \| B \|_12^d \sqrt{2} (4 E + 1)^{d/2}+ 2  c_B \right] \nn
\\
&\leq C^\sharp(d,E,B) \left( \ln \left|  \eta^{-1}  +1    \right|  +1 \right), \label{CdEB}
\end{align} 
for $ {C}^\sharp(d,E,B):= \max \{ 2 \| B \|_1  C_{1}(2 E,d),  2 \| B \|_12^d \sqrt{2} (4 E + 1)^{d/2}+ 2  c_B \}$.
Observe that $|A| \leq n$ since the first moment vanishes, i.e.,   $m_1 = 0$.  Now inserting \eqref{CdEB} in \eqref{0finalestonD} yields
\begin{align} 
\label{DCdEB}
 &\left( \sum_{A \in \mathcal{A}_{2n} }  {D}_{n,A,L}[E + i \eta] \right)^{1/2} \nn \\
 &\leq \left( \sum_{A \in \mathcal{A}_{2n} }    \eta^{-|A|-1}   C_{T,\eta}\| \hat{B}_\# \|_{*,\infty}^{|A|}   [  C(E,d, L, \eta,\hat{B}_\#)   ]^{2n-|A|}   \left|  \prod_{a \in A} m_{|a|}  \right|  \right)^{1/2} \nn \\
 &\leq  C_{T,\eta}^{1/2} \left( \sum_{A \in \mathcal{A}_{2n} }  \eta^{-|A|-1}   \| B \|_1^{|A|}    \left|  \prod_{a \in A}  m_{|a|}  \right|   \left[   C^\sharp(d,E,B) \left(1+ \ln \left(  \eta^{-1}  +1   \right)   \right) \right]^{2n-|A|}   \right)^{1/2} 
 \nn \\
&\leq  C_{T,\eta}^{1/2}  \eta^{-(n+1)/2} \left( 1+ \ln \left(  \eta^{-1}  +1    \right)  \right)^{n}  \left( \sum_{A \in \mathcal{A}_{2n} }   \| B \|_1^{|A|}    \left|  \prod_{a \in A}  m_{|a|} \right|    C^\sharp(d,E,B)  ^{2n-|A|}   \right)^{1/2} \nn \\
& \leq  \left(1 + \eta^{-2} \right)^{1/2}  \eta^{-(n+1)/2} \left( 1+ \ln \left(  \eta^{-1}  +1    \right)  \right)^{n} \tilde{K}^\sharp_{n,d,E,B}
\end{align} 
where we  defined 
\muu{
K^\sharp_{n,d,E,B} :=\left( C_{E,0}  \sum_{A \in \mathcal{A}_{2n} }   \| B \|_1^{|A|}    \left|  \prod_{a \in A}  m_{|a|}  \right|    C^\sharp(d,E,B)  ^{2n-|A|}   \right)^{1/2}. 
}
Note that  the notation of $K^\sharp_{n,d,E,B}$  does not reflect the dependence on the distribution ${\bf P}_v$ an hence the momenta $m_{|a|}$.

Next we use  Lemma  \ref{lemmerorbound}  and  Proposition  \ref{uniformboundontrace} 
to obtain 
\begin{align*}
& \left|  {\bf E } \widetilde{\tr}
{\rm Im} ( H_L  - z )^{-1}   - \sum_{j=0}^{n-1}  \lambda^j \widetilde{\tr}  T_{j,L} [z ]    \right| \\
& \leq   \lambda^n  \mathfrak{E}_{n,L}[z]^{1/2} \left( |{\rm Im} z |^{-1} {\bf E} \widetilde{\tr}  {\rm Im} (( H_L  - z)^{-1}) \right)^{1/2}    \\
& \leq   \lambda^n \left( \sum_{A \in \mathcal{A}_{2n} } {D}_{n,A,L}[z] \right)^{1/2}   \eta^{-1} ( \tilde{C}_E (1+\lambda^2) )^{1/2} 
\\
& \leq  \lambda^n \eta^{-(n+1)/2}   ( 1 + \eta^{-2} )^{1/2}  {K}^\sharp_{n,d,E,B}  \left(  1 + \ln (  \eta^{-1} + 1 )  \right)^{n}
\eta^{-1} ( \tilde{C}_E  (1+\lambda^2) )^{1/2}   ,
\end{align*} 
where in the last line we made use of  the bound  \eqref{Dtoestim1tracial}.
This shows (a) noting that the continuity statement in $E$ follows from the explicit expressions
of the constants.

(b)  This part follows for $\epsilon \in (0,2)$ by  inserting $\eta = \lambda^{2 - \epsilon}$ into (a). 
 In that case    we see that the RHS of 
\eqref{eq:boundonexp}
 tends to 
zero as $\lambda \to 0$ for $n$ sufficiently large,  since logarithms are bounded by  positive 
power. Thus we see that     there exists a constant $C$ such that for  $\eta = \lambda^{2 - \epsilon}$ we have for small $\lambda$ 
\begin{align}
  \left( \frac{ \lambda^2  }{\eta}   \right)^{n/2} \langle \eta^{-1} \rangle  \langle \lambda \rangle \left(1 +   \ln(   \eta^{-1} + 1 )\right)^n    \eta^{-{3/2}}   \leq C  \lambda^{  n - (2-\epsilon) (\frac{n}{2} +1) -     n  \frac{ \epsilon}{4}  - (2 - \epsilon) \frac{3}{2}    } = C  \lambda^{\epsilon (\frac{n}{4}+\frac{5}{2}) -5}.
\end{align} 
Now  $ \lambda^{\epsilon (\frac{n}{4}+\frac{5}{2}) -5} \leq   \lambda^N$ provided  
$
 \epsilon (\frac{n}{4}+\frac{5}{2}) -5   \geq N  ,
$
i.e., 
$
 n    \geq 4  \frac{ N +5}{\epsilon }  - 10   .
$ 
\end{proof}

\label{proofofsec27}

\section{Acknowledgements}

D.H. wants to thank Laszlo Erd\"os for valuable discussions about Feynman graphs. 
We thank Robert Hesse, Benjamin Hinrichs,  and Oliver Siebert for helpful comments.

\appendix

\section{Resolvent Expansion}

In this section we collect elementary resolvent identities, which are used throughout this paper.
For proofs we refer the reader to \cite[Appendix A]{HaslerKoberstein.2025-2}.
\begin{lemma}[Second resolvent identity]
\label{ResId}
Let $T,S$ be linear operators on the normed space $E$ with $D(S)=D(T)$. Let $z \in \rho (T) \cap \rho (S)$. Then 
\muu{
\frac{1}{T-z}-\frac{1}{S-z}=\frac{1}{T-z}(S-T)\frac{1}{S-z} \quad .
}
\end{lemma}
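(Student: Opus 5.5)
The claim is the second resolvent identity: $\frac{1}{T-z}-\frac{1}{S-z}=\frac{1}{T-z}(S-T)\frac{1}{S-z}$ for $z\in\rho(T)\cap\rho(S)$ and $D(S)=D(T)$. The plan is purely algebraic. We multiply by the identity written in two ways, keeping track of domains at each step.

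First, fix $\psi\in E$ and set $\phi:=(S-z)^{-1}\psi$. By definition of the resolvent set, $\phi\in D(S)$, and $D(S)=D(T)$ gives $\phi\in D(T)$. Hence $(T-z)\phi$ and $(S-z)\phi$ are both defined.

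The key step is the decomposition
\[
(S-T)\phi=(S-z)\phi-(T-z)\phi=\psi-(T-z)\phi .
\]
Applying the bounded operator $(T-z)^{-1}$, which is defined on all of $E$, yields
\[
(T-z)^{-1}(S-T)(S-z)^{-1}\psi=(T-z)^{-1}\psi-(T-z)^{-1}(T-z)\phi .
\]
Since $\phi\in D(T)$, we have $(T-z)^{-1}(T-z)\phi=\phi=(S-z)^{-1}\psi$. Substituting this gives exactly $\bigl[(T-z)^{-1}-(S-z)^{-1}\bigr]\psi$. As $\psi$ was arbitrary, the operator identity holds on all of $E$.

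The only potential obstacle is this domain bookkeeping. The product $(S-T)(S-z)^{-1}$ must make sense, and we must use $(T-z)^{-1}(T-z)=\mathrm{id}$ on $D(T)$, not only $(T-z)(T-z)^{-1}=\mathrm{id}$ on $E$. Both points are settled by the hypothesis $D(S)=D(T)$ together with the bijectivity of $T-z\colon D(T)\to E$ implied by $z\in\rho(T)$. No analytic estimates are needed.
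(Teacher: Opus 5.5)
Your proof is correct: writing $(S-T)\phi=(S-z)\phi-(T-z)\phi$ for $\phi=(S-z)^{-1}\psi\in D(S)=D(T)$, and using that $T-z$ maps $D(T)$ bijectively onto $E$, gives exactly the identity, and you handle the domain bookkeeping properly. The paper gives no argument of its own here and only refers to the appendix of a companion paper, so your direct algebraic verification is the standard proof one would expect and leaves nothing out.
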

The following lemma is an immediate consequence of Lemma \ref{ResId}.
\begin{lemma}
\label{Resolventen_Anwendung}
Let $A$ and $B$ be linear operators on a normed space $E$ with $D(A)=D(A + B)$ and $0 \in \rho (A) \cap \rho (A + B)$.
Then 
\muu{
\frac{1}{A+B} = \frac{1}{A} + \frac{1}{A+B}(-B)\frac{1}{A} \quad .
}
\end{lemma}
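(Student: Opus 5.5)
The statement is Lemma \ref{Resolventen_Anwendung}: it says that $\frac{1}{A+B} = \frac{1}{A} + \frac{1}{A+B}(-B)\frac{1}{A}$. I would prove it by applying Lemma \ref{ResId} at the spectral parameter $z=0$ to a suitable pair of operators.

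The natural choice is $T := A+B$ and $S := A$. The hypotheses of Lemma \ref{ResId} are then exactly what is assumed here. The domain condition holds because $D(S)=D(A)=D(A+B)=D(T)$. The spectral condition holds because $0\in\rho(A)\cap\rho(A+B)=\rho(S)\cap\rho(T)$.

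Lemma \ref{ResId} with $z=0$ then gives
\begin{align*}
\frac{1}{A+B}-\frac{1}{A}=\frac{1}{A+B}\bigl(A-(A+B)\bigr)\frac{1}{A}.
\end{align*}
Rearranging yields the claim, once the bracket is identified with $-B$.

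That identification is the only point needing care. The factor $\frac{1}{A}$ maps $E$ into $D(A)=D(A+B)$, and $A+B$ is defined on that set. Since $D(A+B)=D(A)\cap D(B)$, we also have $D(A)\subset D(B)$. So on the range of $\frac{1}{A}$ we may write $A-(A+B)=-B$ pointwise, and the composition $\frac{1}{A+B}(-B)\frac{1}{A}$ is everywhere defined on $E$.

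I do not expect a genuine obstacle beyond this domain bookkeeping. Alternatively, one can verify the identity directly: multiply the right-hand side on the left by $A+B$ and use $(A+B)\frac{1}{A}=1+B\frac{1}{A}$. This gives $1$, and the invertibility of $A+B$ then finishes the proof.
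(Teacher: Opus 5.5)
Your proposal is correct and is essentially the paper's argument: the paper presents this lemma as an immediate consequence of the second resolvent identity (Lemma \ref{ResId}), and your specialization $T=A+B$, $S=A$, $z=0$ is exactly that. Your domain bookkeeping fills in the one detail the paper leaves implicit: $D(A)=D(A+B)=D(A)\cap D(B)$ forces $D(A)\subset D(B)$, so $A-(A+B)=-B$ on the range of $A^{-1}$.
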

We can now perform the resolvent expansion using an iteration of the formula of the previous lemma.
\begin{lemma}
\label{resolventenformel_interiert}
Let $A$ and $B$ be linear operators on a normed space $E$ with $D(A)=D(A + B)$ and $0 \in \rho (A) \cap \rho (A + B)$.
Then for $m \in \N$ 
\muu{
\frac{1}{A+B} = \sum_{n=0}^{m-1} \frac{1}{A} \left[ (-B) \frac{1}{A}  \right]^n  + \frac{1}{A+B} \left[ (-B) \frac{1}{A}  \right]^m .
}
\end{lemma}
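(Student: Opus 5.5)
We argue by induction on $m$; the case $m=1$ is exactly Lemma \ref{Resolventen_Anwendung}. Note first that the hypothesis $0\in\rho(A)\cap\rho(A+B)$ makes $A^{-1}$ and $(A+B)^{-1}$ bounded operators defined on all of $E$. Moreover, since $A^{-1}$ maps $E$ into $D(A)=D(A+B)\subset D(B)$, the product $(-B)A^{-1}$ is an everywhere defined linear map. Hence every product appearing in the claimed identity is well defined on $E$, and we never need to worry about domains when composing.

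For the induction step, assume the formula for some $m\ge 1$:
\[
(A+B)^{-1} = \sum_{n=0}^{m-1} A^{-1}[(-B)A^{-1}]^n + (A+B)^{-1}[(-B)A^{-1}]^m .
\]
We substitute Lemma \ref{Resolventen_Anwendung}, i.e. $(A+B)^{-1}=A^{-1}+(A+B)^{-1}(-B)A^{-1}$, into the factor $(A+B)^{-1}$ of the remainder term. This gives
\[
(A+B)^{-1}[(-B)A^{-1}]^m = A^{-1}[(-B)A^{-1}]^m + (A+B)^{-1}[(-B)A^{-1}]^{m+1}.
\]
The first summand is the $n=m$ term of the sum and the second is the new remainder, which proves the formula for $m+1$. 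The only point needing care is that this substitution is an identity of everywhere defined operators, applied to vectors of the form $[(-B)A^{-1}]^m\psi$; this holds by the observation above.

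There is no real obstacle here; the one subtle point is the domain bookkeeping. One must check that $(-B)A^{-1}$ is defined on all of $E$, which relies on $D(A)=D(A+B)$ forcing $D(A)\subset D(B)$, at least in the intended application where $B$ is bounded. This is what lets Lemma \ref{Resolventen_Anwendung} be applied without restriction. Lemma \ref{lem:resexp} then follows by taking $A=R_L(z)^{-1}$ and $B=\lambda V_L$, after reversing the order of factors if needed using the symmetric version $(A+B)^{-1}=A^{-1}+A^{-1}(-B)(A+B)^{-1}$, which is proved in the same way.
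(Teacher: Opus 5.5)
Your proof is correct and is essentially the paper's argument: the paper obtains the formula by iterating Lemma~\ref{Resolventen_Anwendung}, deferring the details to the cited reference, which is exactly your induction of substituting that identity into the remainder term. Your domain bookkeeping is sound, and the hedge ``at least in the intended application where $B$ is bounded'' is unnecessary, since $D(A+B)=D(A)\cap D(B)$ by definition, so $D(A)=D(A+B)$ always gives $D(A)\subset D(B)$.
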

Lemma \ref{resolventenformel_interiert} follows from a straightforward algebraic identity, for details see   \cite{HaslerKoberstein.2025-2}.

\section{Estimates of Fourier Series and Fourier Integrals}

The following lemma will be applied in the proof of  Lemma  
\ref{thm:dosasy0-0}. 
\begin{lemma} \label{lemestfourdisc}    There exists a constant $C_d$ (explicitly given in the proof)
 and  a constant ${C}_d^\sharp$ such that for all $L \geq 1$ and  $f \in \mathcal{S}(\R^d)$ which are 
symmetric  with respect to the reflections $\rho_j : (x_1,...,x_j,...,x_d) \mapsto (x_1,...,-x_j,...,x_d)$ for all $j=1,...,d$ (i.e.  $f \circ \rho_j = f$ for all $j=1,...,d$), 
we have 
\begin{align}  \label{eq:dgenallk0} 
\langle  k_1 \rangle^2 \cdots  \langle k_d \rangle^2 | \widehat{f}_\#(k) |  \leq C_d  \sum_{\substack{\alpha \in \N_0^d : \\  \alpha_j \leq 2}} \sup_{x \in \R^d}  |  \langle x \rangle^{2d} \partial^\alpha f(x) | 
 \end{align} 
for all $k=(k_1, \ldots , k_d) \in \R^d$.
Furthermore $\widehat{f}_\# \in \ell_*^1(\Lambda_L^*)$,  and 
 \begin{align} \label{secondRest0} 
\| \widehat{f}_\# \|_{*,1} \leq {C}^\sharp_d \sum_{\substack{ \alpha \in \N_0^d :  \\ \alpha_j \leq 2}} \sup_{x \in \R^d}| \langle x \rangle^{2d} \partial^\alpha f(x) |  . \end{align} 
\end{lemma}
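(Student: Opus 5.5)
The plan is to compute $\widehat{f}_\#(k)$ for $k \in \Lambda_L^*$ directly and integrate by parts twice in each coordinate. By definition
\[
\widehat{f}_\#(k) = \int_{\Lambda_L} e^{-2\pi i k\cdot x} f(x)\,dx ,
\]
which is the Fourier series coefficient of the periodic extension of $f|_{\Lambda_L}$. Since $f|_{\Lambda_L}$ is not periodic, integration by parts produces boundary terms at $x_j = \pm L/2$. The reflection symmetry is used exactly here.

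Because $f\circ\rho_j = f$, the values at the two ends agree, $f(\ldots,L/2,\ldots) = f(\ldots,-L/2,\ldots)$. Also $e^{-2\pi i k_j x_j}$ takes equal values at $x_j=\pm L/2$ for $k_j \in \frac1L\Z$, since $e^{-2\pi i k_j L}=1$. Hence the first boundary term cancels.

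The derivative $\partial_j f$ is odd in $x_j$, so its boundary term does not cancel. It gives
\[
\frac{1}{2\pi i k_j}\bigl[e^{-2\pi i k_j x_j}\,\partial_j f\bigr]_{-L/2}^{L/2} \cdot \frac{1}{2\pi i k_j}
= \frac{2\,e^{-\pi i k_j L}\,\partial_j f(\ldots,L/2,\ldots)}{(2\pi i k_j)^2}.
\]
This term is harmless. It is bounded by $C k_j^{-2}\sup|\partial_j f|$ at the boundary, and together with the interior term $(2\pi i k_j)^{-2}\int e^{-2\pi i k\cdot x}\partial_j^2 f$ it already yields a factor $\langle k_j\rangle^{-2}$.

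To handle $k_j=0$ or small $|k_j|$, I combine the trivial bound with the twice-integrated formula. This gives
\[
\langle k_j\rangle^2|\cdot| \le C\bigl(|g| + |\partial_j g| + |\partial_j^2 g|\bigr)
\]
summed appropriately, where $g$ denotes the relevant function. Doing this successively in all $d$ coordinates, the resulting integrals are over $\Lambda_L$ or over lower-dimensional faces of $\Lambda_L$. Their integrands are mixed derivatives $\partial^\alpha f$ with $\alpha_j\le 2$, restricted to faces where $|x_j|=L/2$.

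Each face integral of dimension $d-m$ is bounded by
\[
\sup|\langle x\rangle^{2d}\partial^\alpha f| \int_{\R^{d-m}}\langle x\rangle^{-2d}\,dx ,
\]
which is finite and independent of $L$. This proves \eqref{eq:dgenallk0} with an explicit $C_d$.

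One must check that the oddness pattern is respected at each stage. After differentiating $f$ in coordinate $j$, the result is odd in $x_j$ but still even in the other coordinates. Therefore the argument can be run coordinate by coordinate on the restricted functions, since the restriction to a face $x_j=\pm L/2$ remains symmetric in the remaining coordinates. The bookkeeping of these face terms is the main obstacle, and I would organize it by induction on $d$, or by writing the one-dimensional identity as an operator acting in each variable and taking the tensor product.

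For \eqref{secondRest0}, I would sum the bound over $\Lambda_L^*$ using the definition \eqref{defofdiscretefouriersum}:
\[
\|\widehat f_\#\|_{*,1} \le L^{-d}\sum_{k\in(\Z/L)^d}\prod_j\langle k_j\rangle^{-2}\cdot(\ldots) = \prod_j\Bigl(L^{-1}\sum_{m\in\Z}\langle m/L\rangle^{-2}\Bigr)(\ldots).
\]
For $L\ge1$ each factor $L^{-1}\sum_{m\in\Z}\langle m/L\rangle^{-2}$ is bounded by $L^{-1} + \int_\R\langle t\rangle^{-2}dt \le 1+\pi$, by comparison with a monotone Riemann sum. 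This gives $C_d^\sharp = C_d(1+\pi)^d$.
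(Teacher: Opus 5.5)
Your proposal is correct and follows the argument the paper relies on. The paper defers the proof to earlier work and only notes that Lemma~\ref{ChSymbEst} is used, which fits your route: two integrations by parts per coordinate, with the reflection symmetry and $k_jL\in\Z$ cancelling the first boundary term. The interior and face terms are then bounded uniformly in $L$; your $\int\langle x\rangle^{-2d}$ bound amounts to applying Lemma~\ref{ChSymbEst} in each variable, since $\langle x\rangle^{2d}\ge\prod_j\langle x_j\rangle^2$. A monotone Riemann-sum comparison gives the $\ell^1_*$ bound.

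As your computation makes explicit, the pointwise bound holds only for $k\in\Lambda_L^*$, not for arbitrary $k\in\R^d$ as the statement literally says. For generic $k$ the first boundary term does not cancel and is only $O(|k_j|^{-1})$.
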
 
A proof of  Lemma \ref{eq:dgenallk0}  can be found in \cite{HaslerKoberstein.2024-2}[Lemma A.3]. In it's proof, one  makes use of Lemma \ref{ChSymbEst}.
The following  Remark will be used in the proof of Theorem \ref{thm:maintec000}.
\begin{remark} {\rm 
We note that if the Schwartz function  $B$  has compact support or if it is symmetric with respect to reflections as in  Lemma \ref{lemestfourdisc}, then there exists an $L_0 \geq 1$ such that for all $L \geq L_0$ we have 
\begin{align}  \label{eq:dgenallk} 
\langle  k_1 \rangle^2 \cdots  \langle k_d \rangle^2 | \widehat{B}_\#(k) |  \leq C_d  \sum_{\substack{\alpha \in \N_0^d : \\  \alpha_j \leq 2}} \sup_{x \in \R^d}  |  \langle x \rangle^{2d} \partial^\alpha B(x) |  .
 \end{align} 
In particular  for $L \geq L_0$ we have $\widehat{B}_\# \in \ell_*^1(\Lambda_L^*)$,  and 
 \begin{align} \label{secondRest} 
\| \widehat{B}_\# \|_{*,1} \leq {C}^\sharp_d \sum_{\substack{ \alpha \in \N_0^d :  \\ \alpha_j \leq 2}} \sup_{x \in \R^d}| \langle x \rangle^{2d} \partial^\alpha B(x) |  . 
\end{align} 
In the compact case \eqref{eq:dgenallk} and \eqref{secondRest} follow from choosing $L_0 \geq 1$ sufficiently large such that $\supp B \subset (-L_0/2,L_0/2)^d$ holds while using  well-known properties of  the Fourier transform of Schwartz functions.    In the symmetric case  \eqref{eq:dgenallk} and \eqref{secondRest} follow  from Lemma \ref{lemestfourdisc}.}
\end{remark}

For the next Lemma we recall the definition of the Fourier transform in  \eqref{defoffourier}  for any $f \in L^1(\R^d)$. 
The Lemma will be needed in the proof of  Proposition \ref{propexofcinfDOS},  showing   that  the limit $L \to \infty $ of   \eqref{eq:defofCpi2} exists.

\begin{lemma} \label{fourierseries}  The following holds. 
\begin{enumerate}[(a)]
\item 
Let $f , g \in L^2(\R^d)$, then 
$
\int_{\Lambda_L^*} \overline{\widehat{f}}_\#(k) \widehat{g}_\#(k) dk =  \int_{\Lambda_L} \overline{ f(x)} g(x) dx  .
$
\item 
If $f \in L^1(\R^d)$, then  
$
\| \hat{f}_\# \|_{*,\infty} \leq \int_{\Lambda_L}  |f(x)|  dx \leq \| f \|_1 .  
$
\item 
 \label{discconvgarten} If  $f \in L^1(\R^d)$,  then 
$
\| \hat{f}_\# - \hat{f} \|_{*,\infty} \to 0 \quad ( L \to \infty ) . 
$

\end{enumerate} 
\end{lemma}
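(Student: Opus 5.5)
For all three parts the key observation is that, by definition, $f_\#$ coincides with $f$ on $\Lambda_L$. Hence for every $k \in \Lambda_L^*$ the Fourier coefficient is simply
\begin{align*}
\widehat{f}_\#(k) = \int_{\Lambda_L} e^{-2\pi i k\cdot x} f(x)\, dx ,
\end{align*}
that is, the truncation to the box of the integral defining $\widehat{f}(k)$ in \eqref{defoffourier}. Each part then reduces to a standard fact about Fourier series on $\Lambda_L$ or to a tail estimate.

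For (a), I would use the orthonormal basis $\{\varphi_p : p \in \Lambda_L^*\}$ from \eqref{eq:deofonb}, together with the identity $\widehat{h}(p) = |\Lambda_L|^{1/2}\langle \varphi_p, h\rangle_{L^2(\Lambda_L)}$ noted after \eqref{eq:deofonb}, applied to $h = f|_{\Lambda_L}$ and $h = g|_{\Lambda_L}$. These restrictions lie in $L^2(\Lambda_L)$, and the coefficients are well defined because $\Lambda_L$ is bounded, so $L^2(\Lambda_L) \subset L^1(\Lambda_L)$. By the notation \eqref{defofdiscretefouriersum},
\begin{align*}
\int_{\Lambda_L^*} \overline{\widehat{f}_\#(k)}\, \widehat{g}_\#(k)\, dk
= |\Lambda_L|^{-1} \sum_{p \in \Lambda_L^*} |\Lambda_L|\, \overline{\langle \varphi_p, f\rangle}\, \langle \varphi_p, g\rangle
= \langle f, g\rangle_{L^2(\Lambda_L)} ,
\end{align*}
where the last equality is Parseval's identity for an ONB. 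The sum converges absolutely by Cauchy--Schwarz and Bessel's inequality.

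For (b), taking absolute values inside the truncated integral gives, for every $k \in \Lambda_L^*$,
\begin{align*}
|\widehat{f}_\#(k)| \leq \int_{\Lambda_L} |f(x)|\, dx \leq \|f\|_1 .
\end{align*}
Taking the supremum over $k$ yields the claim.

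For (c), I read $\|\widehat{f}_\# - \widehat{f}\|_{*,\infty}$ as the supremum over $k \in \Lambda_L^*$, with $\widehat{f}$ the infinite-volume transform \eqref{defoffourier} evaluated at the lattice points. Subtracting the two integrals gives, for all such $k$,
\begin{align*}
|\widehat{f}_\#(k) - \widehat{f}(k)| = \Bigl| \int_{\R^d \setminus \Lambda_L} e^{-2\pi i k\cdot x} f(x)\, dx \Bigr| \leq \int_{\R^d} 1_{\R^d \setminus \Lambda_L}(x)\, |f(x)|\, dx .
\end{align*}
The right-hand side is independent of $k$. It tends to $0$ as $L \to \infty$ by dominated convergence: the majorant is $|f| \in L^1$, and the indicator tends to $0$ pointwise because the boxes $\Lambda_L$ exhaust $\R^d$.

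The only step needing care is to make precise in (c) that the supremum is over the $L$-dependent lattice $\Lambda_L^*$. Since the tail bound is uniform in $k \in \R^d$, this causes no difficulty, so I expect no real obstacle.
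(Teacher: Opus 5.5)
Your proof is correct. The paper gives no proof of its own here: it says only that the lemma follows from elementary estimates and defers the details to a companion reference. Your argument supplies exactly such estimates: Parseval in the orthonormal basis $\varphi_p$ for (a), the triangle inequality on the truncated integral for (b), and the $k$-uniform tail bound $\int_{\R^d\setminus\Lambda_L}|f(x)|\,dx \to 0$ for (c).
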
 
The  proof of  Lemma \ref{fourierseries}  uses elementary estimates, for details we refer the reader to  \cite{HaslerKoberstein.2025-2}.

\section{Elementary Estimates}

In this appendix we collect elementary bounds, which are needed for the infinite volume limit 
of the expansion coefficients. 
The following lemma is used in  the proof of  Proposition \ref{propexofcinfDOS}.

\begin{lemma}\label{lem:boundonsquaring}   Let  $a,b\in\R$,  $b \neq 0$, and   $\delta  = \frac{b^2}{ b^2+ 2 a^2}$.
Then for all $x \in \R$ 
\begin{align} \label{elemboundonsquaring} 
(x- a)^2 +b^2 \geq   \delta x^2 + \frac{1}{2} b^2 . 
\end{align} 
\end{lemma}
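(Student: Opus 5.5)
We need to prove: for $a,b\in\R$, $b\neq 0$, $\delta = b^2/(b^2+2a^2)$, the inequality $(x-a)^2+b^2 \ge \delta x^2 + \tfrac12 b^2$ holds for all $x$. Equivalently we show that the quadratic
\[
q(x) := (x-a)^2 + \tfrac12 b^2 - \delta x^2 = (1-\delta)x^2 - 2ax + a^2 + \tfrac12 b^2
\]
is nonnegative on $\R$.

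The case $a=0$ is immediate. Then $\delta = 1$ and $q(x) = \tfrac12 b^2 \ge 0$.

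For $a\neq 0$ we have $1-\delta = 2a^2/(b^2+2a^2) \in (0,1)$. So $q$ has positive leading coefficient, and it suffices to show that its discriminant is nonpositive, i.e.
\[
4a^2 \le 4(1-\delta)\bigl(a^2+\tfrac12 b^2\bigr).
\]
Substituting $1-\delta = 2a^2/(b^2+2a^2)$ and $a^2+\tfrac12 b^2 = \tfrac12(2a^2+b^2)$, the right-hand side equals
\[
4\cdot \frac{2a^2}{b^2+2a^2}\cdot \frac{2a^2+b^2}{2} = 4a^2 .
\]
Hence the discriminant is exactly zero, so $q\ge 0$ on $\R$. Equality is attained at the single point $x = a/(1-\delta)$.

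Alternatively, one can complete the square directly. Writing $\epsilon = 1-\delta$ and using $a^2/\epsilon = a^2+\tfrac12 b^2$, we get
\[
q(x) = \epsilon\bigl(x - a/\epsilon\bigr)^2 + a^2 + \tfrac12 b^2 - a^2/\epsilon = \epsilon\bigl(x - a/\epsilon\bigr)^2 \ge 0 .
\]
This is a one-line verification; the only step needing care is this algebraic identity, which is routine.
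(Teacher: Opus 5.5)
Your proof is correct and is essentially the paper's argument. The paper bounds $-2ax \ge -(1-\delta)x^2 - (1-\delta)^{-1}a^2$ by the squaring inequality, which is exactly your completed square $\epsilon(x-a/\epsilon)^2 \ge 0$ with $\epsilon = 1-\delta$, and then uses $(1-\delta)^{-1}a^2 = a^2+\frac{1}{2}b^2$. Your separate treatment of $a=0$, where $1-\delta=0$, is a small improvement, since the paper divides by $1-\delta$ without addressing that case.
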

\begin{proof}
We estimate by means of the squaring inequality  and find by inserting the value of $\delta$ that for all $x \in \R$ 
\begin{align*}
(x- a)^2 +b^2  & =     x^2 - 2 a x + a^2 + b^2 \\
& \geq  x^2 - (1-\delta) x^2 +  ( 1 - (1-\delta)^{-1} ) a^2 + b^2 = \delta x^2 + \frac{1}{2} b^2. 
\end{align*} 
\end{proof}

\section{Bound on the Trace of the Resolvent}

\label{appendixDboundonres} 


For an open set $\Omega \subset \R^d$ we shall denote by $\Delta_{\Omega,N}$ the Neumann Laplacian
on $\Omega$.

\begin{proposition}\label{lem:proofofexdens00-0}  Let $d \leq 3$, $c > 0$, and $\alpha \in (3/2 ,2]$. Then there exists a constant $C$ 
such that  for all $L \geq 1$ and $\lambda \geq 0$ 
\begin{align} \label{tracialboundneededforasy}
|\Lambda_L|^{-1}{\bf E}_L {\rm tr}   \frac{1}{\langle - c \Delta_L + \lambda V \rangle^{\alpha}}\leq C (1+\lambda^2) .  
\end{align} 
\end{proposition}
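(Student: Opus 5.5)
We have to bound $|\Lambda_L|^{-1}{\bf E}_L {\rm tr}\, \langle H\rangle^{-\alpha}$ with $H = -c\Delta_L + \lambda V$, uniformly in $L\ge 1$, with at most quadratic growth in $\lambda$. The plan is to compare $\langle H\rangle^{-\alpha}$ with the free operator $\langle -c\Delta_L\rangle^{-\alpha}$. For the free operator the normalized trace equals $\int_{\Lambda_L^*}\langle c(2\pi)^2 p^2\rangle^{-\alpha}dp$. This is bounded uniformly in $L\ge1$ (by the Riemann-sum bound of Lemma \ref{lem:discest}) precisely because $2\alpha>3\ge d$.

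Since $V$ is bounded but not sign-definite, I would not compare operator inequalities directly; $\langle\cdot\rangle^{-\alpha}$ is not operator monotone in a convenient form. Instead I would use a resolvent-based route. Write $\langle x\rangle^{-2} = |x-i|^{-2}$, so
\[
\langle H\rangle^{-2} = (H+i)^{-1}(H-i)^{-1}.
\]
For $\alpha\in(3/2,2]$ I would use the representation
\[
\langle x\rangle^{-\alpha} = c_\alpha\int_0^\infty t^{-\alpha/2}\,(x^2+1+t)^{-1}\,dt .
\]
Alternatively, more simply, I would interpolate through a Hölder trace inequality, writing $\langle H\rangle^{-\alpha} = \langle H\rangle^{-\alpha/2}\langle H\rangle^{-\alpha/2}$. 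The key reduction is the identity
\[
(H- i)^{-1} = (H_0 - i)^{-1}\bigl(1 - \lambda V (H- i)^{-1}\bigr),
\]
which gives
\[
\|(H-i)^{-1}\|_{\mathcal{B}_p} \le \|(H_0-i)^{-1}\|_{\mathcal{B}_p}\,\bigl(1+\lambda\|V\|_\infty\bigr),
\]
since $\|(H-i)^{-1}\|\le1$. Consequently ${\rm tr}\,\langle H\rangle^{-2} = \|(H-i)^{-1}\|_2^2 \le (1+\lambda\|V\|_\infty)^2\,{\rm tr}\,\langle H_0\rangle^{-2}$. For general $\alpha$ I would use the same factorization with the Schatten norm $\mathcal{B}_\alpha$: ${\rm tr}\,\langle H\rangle^{-\alpha} = \|(H-i)^{-1}\|_{\alpha}^{\alpha}$. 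Hölder's inequality for Schatten norms then gives
\[
{\rm tr}\,\langle H\rangle^{-\alpha} \le (1+\lambda\|V\|_\infty)^{\alpha}\,{\rm tr}\,\langle H_0\rangle^{-\alpha},
\]
and $(1+\lambda x)^\alpha\le 2(1+\lambda^2x^2)$ for $\alpha\le 2$.

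Taking expectations then reduces everything to bounding ${\bf E}_L\|V_L\|_\infty^2$ uniformly in $L$, and this is the main obstacle. The sup-norm of a Poisson sum of bumps over a box of volume $L^d$ is not bounded uniformly in $L$, since it grows like the maximal local density. To circumvent this I would localize. Replace $\|V\|_\infty$ by a local quantity, i.e. cover $\Lambda_L$ by unit cubes $Q$ and use that the kernel of $(H_0-i)^{-1}$ decays (Combes–Thomas). Concretely, I would write
\[
{\rm tr}\,\langle H\rangle^{-\alpha} = \sum_Q {\rm tr}\,\bigl(1_Q\langle H\rangle^{-\alpha}1_Q\bigr),
\]
expand $1_Q(H-i)^{-1}$ via the resolvent identity as above, and estimate $\|1_Q (H_0-i)^{-1}V\|$ by a weighted sum over cubes $Q'$ of $e^{-c\,{\rm dist}(Q,Q')}\,\|V1_{Q'}\|_\infty$. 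By translation invariance of the Poisson process, ${\bf E}\|V1_{Q'}\|_\infty^2$ is bounded by a constant independent of $L$ and $Q'$. This uses finiteness of the moments $m_k$, the Poisson moment bound of Lemma \ref{lemP}, and the decay of the Schwartz function $B$ (summing $\sup_{Q'}|B(\cdot-y)|$ over points $y$).

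Summing over $Q$ and dividing by $|\Lambda_L|$ then yields $C(1+\lambda^2)$. If the Schatten–Hölder step for $\alpha<2$ becomes awkward in the localized version, I would first prove the case $\alpha=2$, and then pass to $\alpha\in(3/2,2)$ via the bound
\[
\langle x\rangle^{-\alpha}\le \langle x\rangle^{-2}\langle x\rangle^{2-\alpha}
\]
combined with a splitting into spectral regions $|H|\le R$ and $|H|>R$.
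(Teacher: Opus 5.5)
Your localized argument is correct for $\alpha=2$, provided the norm there is the Hilbert--Schmidt norm. The chain ${\rm tr}(1_Q\langle H\rangle^{-2}1_Q)=\|1_Q(H-i)^{-1}\|_2^2$, one resolvent identity step with $\|(H-i)^{-1}\|\le 1$, off-diagonal decay $\|1_Q(H_0-i)^{-1}1_{Q'}\|_2\le Ce^{-c\,{\rm dist}(Q,Q')}$ (torus distance; the kernel is locally square integrable since $d\le 3$), Cauchy--Schwarz in $Q'$, and ${\bf E}\|V1_{Q'}\|_\infty^2\le C$ gives $C(1+\lambda^2)$.

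The gap is $\alpha\in(3/2,2)$, which the statement covers and the paper actually uses in the proof of Lemma \ref{exofdosmeasure}.

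\emph{The fallback cannot work.} Since $\langle x\rangle^{-\alpha}\ge\langle x\rangle^{-2}$, the case $\alpha<2$ is the stronger statement. A bound on ${\rm tr}\langle H\rangle^{-2}$ is compatible with an eigenvalue count $N(E)\sim E^{2-\epsilon}|\Lambda_L|$, for which ${\rm tr}\langle H\rangle^{-\alpha}=\infty$ when $\alpha\le 2-\epsilon$. The region $|H|>R$ is exactly where the Weyl-type input $\alpha>d/2$ must enter, and splitting at a fixed $R$ gives nothing there.

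\emph{The localized Schatten route is not as you stated it.} ${\rm tr}(1_Q\langle H\rangle^{-\alpha}1_Q)$ is not equal to $\|(H-i)^{-1}1_Q\|_\alpha^\alpha$. You only get $\le$, via Jensen for the concave $t\mapsto t^{\alpha/2}$ in an eigenbasis of $1_Q\langle H\rangle^{-2}1_Q$. You would then need $\mathcal{B}_\alpha$-bounds with off-diagonal decay for $1_Q(H_0-i)^{-1}1_{Q'}$ with $\alpha<2$. Kernel or Combes--Thomas estimates do not give these; you would need Birman--Solomyak-type input on the torus.

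\emph{A repair.} Use the representation you mention first. Put $s=\sqrt{1+t}$; then for $\alpha\in(0,2)$, ${\rm tr}(1_Q\langle H\rangle^{-\alpha}1_Q)=c_\alpha\int_0^\infty t^{-\alpha/2}\|1_Q(H+is)^{-1}\|_2^2\,dt$. Run your $\alpha=2$ argument at spectral parameter $is$ while tracking $s$. You have $\|(H+is)^{-1}\|\le s^{-1}$ and, for $s\ge 1$ uniformly in $L\ge1$, $\|1_Q(H_0+is)^{-1}1_{Q'}\|_2\le Cs^{d/4-1}e^{-c\,{\rm dist}(Q,Q')}$. This bounds the integrand by $Ct^{-\alpha/2}s^{d/2-2}(1+\lambda^2s^{-2}W_Q^2)$, where $W_Q=\sum_{Q'}e^{-c\,{\rm dist}(Q,Q')}\|V1_{Q'}\|_\infty$. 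That is integrable at $t=0$ because $\alpha<2$, and at $t=\infty$ precisely because $\alpha>d/2$.

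\emph{How the paper differs.} It uses Neumann bracketing, $-c\Delta_{S_L,N}-M_{S_L}(\lambda V_-)\le -c\Delta_L+\lambda V$. It then takes a monotone decreasing majorant $\chi\ge\langle\cdot\rangle^{-\alpha}$, so that min--max gives trace monotonicity, and finishes with explicit Neumann eigenvalue sums on unit cubes. This treats all $\alpha>d/2$ at once and needs only the negative part of $V$. Your resolvent route avoids boundary conditions and min--max, at the price of needing the $s$-dependent Hilbert--Schmidt estimates.
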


\begin{proof} First we partition  the set $\Lambda_L$ into subcubes of size $1$ and boundary cubes. 
Explicitly, we 
define  $I_L := \{ m \in \Z^d : ( \Lambda_1 + m )  \cap \Lambda_L \neq \emptyset \}$, see  Figure \ref{Decomposition}.
 Then for 
$$
Q_{m,L} :=   ( \Lambda_1 + m ) \cap \Lambda_L 
$$
we have 
$$
\bigcup_{n \in I_L} Q_{m,L}  =  \Lambda_L . 
$$
For a set $S \subset \R^d$ we shall denote by $S^\circ$ its interior. 
We define 
$$
S_L :=  \bigcup_{n \in I_L} Q_{m,L}^\circ .
$$
It  follows  from  well-known properties \cite{SR4} \cite[Page 55]{BratelliRobinson.1997}  of boundary conditions that 
\begin{align} 
\Delta_{S_L,N}  :=   \sum_{ n \in  I_L  } \Delta_{ Q_{m,L}^\circ , N}    \leq  \Delta_L    . 
\end{align}
Furthermore for a function $f$ we  define the function giving the supremum on each individual block   
\begin{align}
M_{S_L}(f):=\sum_{m \in I_L} 1_{Q_{m,L} }  \sup_{\xi \in Q_{m,L}}  f(\xi)  .
\end{align} 
Furthermore we define the negative part of a function by $f_- := \max\{ 0 , - f \}$. 
Thus we obtain 
 \begin{align} \label{mainboundonboxes} 
- c \Delta_{S_L,N}  -   M_{S_L}( \lambda V_-)    \leq  -c \Delta_L   +  \lambda V  . 
\end{align}

We define the function  $\chi  : \R \to \R$ by 
\begin{align}
\chi(x) = \left\{ \begin{array}{ll} 1  & , \ x \leq  0 \\ \langle x \rangle^{-\alpha}  & ,  \ x > 0  . \end{array} \right. 
\end{align} 
Then  $\chi$ is continuous,  monotone decreasing, and for any $m \geq 0$, we find for $x \geq 0$ 
\begin{align} \label{auxfuncest} 
\chi(x - m ) \leq 2^{\alpha+1} \frac{1 + m^\alpha}{1+x^\alpha} .
\end{align} 
To see \eqref{auxfuncest}, we observe that for $0 \leq x \leq 2m$ 
\begin{align*}
\chi(x - m ) \leq 1  = \frac{1 + 2^\alpha m^\alpha}{1+(2m)^\alpha} \leq  2^\alpha \frac{1+  m^\alpha}{1+x^\alpha} ,
\end{align*} 
and for $x \geq 2m$ 
\begin{align*}
\chi(x - m ) =\frac{1 }{\langle x-m \rangle^\alpha} =  
\frac{1 }{(1+(x/2+x/2-m)^2)^{\alpha/2}} \leq \frac{1 }{(1+(x/2)^2)^{\alpha/2}}
 \leq \frac{2^{\alpha+1}}{1+x^\alpha} .
\end{align*} 
This shows \eqref{auxfuncest}. Now using $ \langle x \rangle^{-\alpha}  \leq \chi(x)$ we find by operator monotonicity 
\begin{align} \label{tracefirst:ghj-001} 
{\rm tr}  \langle  - c \Delta_L + \lambda V \rangle^{-\alpha }  \leq {\rm tr} \chi( - c \Delta_L + \lambda V) .
\end{align} 
Using trace monotonicity, or more precisely that $\chi$ is monotone decreasing and the min max principle we find from \eqref{mainboundonboxes} 
\begin{align} \label{tracefirst:ghj-002} 
 {\rm tr} \chi( - c \Delta_L + \lambda V)   \leq {\rm tr} \chi( - c \Delta_{S_L,N}  - M_{S_L}(  \lambda V_-) ) 
\end{align} 
Let $P_{L^2(Q_{L,n})}$ denote the projection  in $L^2(\R^d)$ or $L^2(\Lambda_L)$ onto the subspace $L^2(Q_{L,n})$.
Then 
\begin{align} \label{partitionhilbertsub} 
 \sum_{  n \in I_L } P_{L^2(Q_{L,n})} = {\rm id}_{L^2(\Lambda_L) }  . 
\end{align}  
We find 
with \eqref{partitionhilbertsub}  and \eqref{auxfuncest}
\begin{align} \label{tracefirst:ghj} 
 & {\rm tr} \chi( - c \Delta_{S_L,N}  - M_{S_L}( \lambda V_-) )  = 
  \sum_{  n \in I_L } {\rm tr}  P_{L^2(Q_{L,n} ) } \chi( - c \Delta_{S_L,N}  - M_{S_L}( \lambda V_-) ) \\
&\leq 2^{\alpha+1}  \sum_{  n \in I_L } {\rm tr}P_{L^2(Q_{L,n} ) }\frac{1+  M_{S_L}( \lambda V_-)^\alpha}{ 1 + ( - c \Delta_{S_L,N})^\alpha } 
 \leq  2^{\alpha+1} \sum_{  n \in I_L }  ( \lambda^\alpha \| V  1_{Q_{L,n}} \|_{\infty}^\alpha + 1) {\rm tr}_{L^2(Q_{L,n} ) } \frac{  1  }{  (- c \Delta_{Q_{L,n} , N})^\alpha + 1 }  
 \nn , 
\end{align}
where in the last equality we used that  $0 \leq  M_{S_L}( \lambda V_-) 1_{Q_{L,n}}  \leq \| \lambda V 1_{Q_{L,n}} \|_\infty$. 
Concatenating Ineq.  \eqref{tracefirst:ghj-001}, \eqref{tracefirst:ghj-002}, \eqref{tracefirst:ghj}  we arrive at  
\begin{align} \label{tracefirst:ghj-33} 
{\rm tr} \langle - c \Delta_L + \lambda V    \rangle^{- \alpha   } 
& \leq  2^{\alpha+2} \sum_{  n \in I_L }  ( \lambda^\alpha \| V  1_{Q_{L,n}} \|_{\infty}^\alpha + 1) {\rm tr}_{L^2(Q_{L,n} ) } \frac{  1  }{  (- c \Delta_{Q_{L,n} , N})^\alpha + 1 }  
\end{align}

Now we calculate  the normalized trace. We 
divide into  the interior cubes  and boundary cubes. 
For this we write 
$I_L^{\rm int} = \{ m \in \Z^d : \Lambda_1 + m \subset \Lambda_L \}$ and
$I_L^{\rm bdry} =  I_L \setminus I_L^{\rm int} $.

\begin{figure}[H]
\begin{center}

\begin{tikzpicture}[scale=0.8]
  \draw[->] (-2.5, 0.25) -- (3, 0.25) node[right] {$\R$};
  \draw[->] (0.25 ,  - 2.5) -- (0.25,3) node[above] {$\R$};
     \draw[-] (-2.5, 0) -- (3, 0) ;
     \draw[-] (-2.5, 0.5) -- (3, 0.5) ;
       \draw[-] (-2.5, 1) -- (3, 1) ;
   \draw[-] (-2.5, 1.5) -- (3, 1.5) ;
      \draw[-] (-2.5, 2) -- (3, 2) ;
      \draw[-] (-2.5, 2.5) -- (3, 2.5) ;
           \draw[-] (-2.5, -0.5) -- (3, -0.5) ;
       \draw[-] (-2.5, -1) -- (3, -1) ;
   \draw[-] (-2.5, -1.5) -- (3, -1.5) ;
      \draw[-] (-2.5, -2) -- (3, -2) ;
             \draw[-] (-2.5, 1) -- (3, 1) ;
   \draw[-] (-2.5, 1.5) -- (3, 1.5) ;
      \draw[-] (-2.5, 2) -- (3, 2) ;
      \draw[-] (-2.5, 2.5) -- (3, 2.5) ;
               \draw[-] ( 0,-2.5) -- (0,3) ;
           \draw[-] ( -0.5,-2.5) -- (-0.5,3) ;
       \draw[-] (-1,-2.5) -- (-1,3) ;
   \draw[-] (-1.5,-2.5) -- (-1.5,3) ;
      \draw[-] (-2,-2.5) -- ( -2,3) ;
                 \draw[-] ( 0.5,-2.5) -- (0.5,3) ;
       \draw[-] (1,-2.5) -- (1,3) ;
   \draw[-] (1.5,-2.5) -- (1.5,3) ;
      \draw[-] (2,-2.5) -- ( 2,3) ;
      \draw[-] ( 2.5,-2.5) -- ( 2.5,3) ;
      \draw[red,-] ( 2.25,-1.75) -- (2.25,2.25) node[right]{$\Lambda_L$}; 
            \draw[red,-] ( -1.75,-1.75) -- (-1.75, 2.25) ; 
            \draw[red,-] ( -1.75,2.25) -- (2.25, 2.25) ; 
            \draw[red,-] ( -1.75, -1.75) -- (2.25, - 1.75) ; 
            \draw[fill=magenta, opacity=0.1]  (-1.5,-1.5) rectangle (2,2);
                   \draw[fill=cyan, opacity=0.1]  (-1.75,2) rectangle (2.25,2.25);
                 \draw[fill=cyan, opacity=0.1]  (-1.75,-1.75) rectangle (2.25,-1.5);
                       \draw[fill=cyan, opacity=0.1]  (-1.75,-1.5) rectangle (-1.5,2);
                  \draw[fill=cyan, opacity=0.1]  (2,-1.5) rectangle (2.25,2);
         \foreach \x in {-3,-2,-1,0,1,2,3}
         {  \foreach \y in {-3,-2,-1,0,1,2,3}
         {
                \filldraw[red] (0.25+\x*0.5,0.25+\y*0.5) circle (0.5pt);
                }}
         \foreach \x in {-4,4}
         {  \foreach \y in {-4,-3,-2,-1,0,1,2,3,4}
         {
                \filldraw[blue] (0.25+\x*0.5,0.25+\y*0.5) circle (0.5pt);
                }}
          \foreach \y in {-4,4}
         {  \foreach \x in {-3,-2,-1,0,1,2,3}
         {
                \filldraw[blue] (0.25+\x*0.5,0.25+\y*0.5) circle (0.5pt);
                }}               
\end{tikzpicture}

\caption{\label{Decomposition} \small   The set   $I_L$ is indicated by the points.
The  points which are  coloured with red are the points in  
$I_L^{\rm int}$,  and the  points which are  coloured with blue are the points in  
$I_L^{\rm bdry} $.
The  rectangles   $Q_{m,L}$ with $m \in I_L^{\rm int}$ are coloured 
with red, and   rectangles   $Q_{m,L}$ with $m \in I_L^{\rm bdry}$ are coloured 
with blue.  }
\end{center} 
\end{figure}

We have   $$ | I_L^{\rm int} | |\Lambda_1|   \leq  |\Lambda_L| $$ and 
there exists a constant $C_d$ such that 
 for $L \geq 1$  
$$
| I_L^{\rm bdry}| |\Lambda_1|  \leq  C_d  |\Lambda_L|  . 
$$
 Thus for large $L$ we find from  \eqref{tracefirst:ghj-33} 
 \begin{align} \label{tracefirst:ghj-1fin} 
& \frac{1}{|\Lambda_L|} {\rm tr}  \langle - c \Delta_L + \lambda V    \rangle^{- \alpha   }   \nn  \\
&  \leq  
 \frac{2^{\alpha+2}}{|I_L^{\rm int} |}  
 \sum_{   n \in I_L^{\rm int}  }   ( \lambda^\alpha\| V  1_{Q_{L,n}} \|_{\infty}^\alpha + 1)   
\frac{1}{ |\Lambda_1|}    {\rm tr}_{L^2(Q_{L,n} ) } \frac{  1  }{  (- c  \Delta_{Q_{L,n} , N})^\alpha + 1 } 
 \nn \\
& +  \frac{ 2^{\alpha+2 }C_d}{|I_L^{\rm bdry} |}  
 \sum_{   n \in I_L^{\rm bdry}  }  ( \lambda^\alpha\| V  1_{Q_{L,n}} \|_{\infty}^\alpha + 1)   
\frac{1}{ |\Lambda_1|}    {\rm tr}_{L^2(Q_{L,n} ) } \frac{  1  }{  (- c  \Delta_{Q_{L,n} , N})^\alpha + 1 } 
\end{align} 
Now the proof follows using  Lemmas  \ref{boundintracfreeneumann}  and  
 \ref{boundintraceexppot} to estimate the right hand side of \eqref{tracefirst:ghj-1fin}. 
\end{proof} 

\begin{lemma}\label{boundintracfreeneumann} Let $a \in \R^d$ with $a_j > 0$ for $j=1,...,d$. 
Let  $S_a = \prod_{j=1}^d (-a_j/2,a_j/2)$. Then
\begin{align} \label{neumannboxestimate} 
 {\rm tr} \left( \frac{1}{ \left(- c \Delta_{S_a,N}\right)^\beta + 1 }   \right)
& =    \sum_{n \in \N_0^d} \frac{1}{  1 + 
\left(  c  \pi^2 \sum_{j=1}^d \left(n_j/a_j\right)^2 \right)^\beta } . 
\end{align} 
In particular if $a_j \leq 1$  and  $\beta > d/2$  we find 
 \begin{align} \label{neumannboxestimate-w2} 
 &     \sum_{n \in \N_0^d} \frac{1}{  1 + 
\left(  c  \pi^2 \sum_{j=1}^d \left(n_j/a_j\right)^2 \right)^\beta } \leq   \sum_{n \in \N_0^d} \frac{1}{  1 + 
\left(  c  \pi^2 \sum_{j=1}^d \left(n_j\right)^2 \right)^\beta } \nn \\
&  \leq  \int_{\Lambda_{1},*} \frac{ dp   }{1+ \left( c \left( \pi  p\right)^2\right)^\beta }   < \infty .
\end{align} 
\end{lemma}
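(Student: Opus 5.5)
The plan is to diagonalize the Neumann Laplacian on the box $S_a$ explicitly and then compare the resulting lattice sum with an integral. By separation of variables, $-\Delta_{S_a,N}$ has the orthonormal eigenbasis
\[
\psi_n(x)=\prod_{j=1}^d c_{n_j}\cos\bigl(\pi n_j (x_j+a_j/2)/a_j\bigr), \qquad n\in\N_0^d ,
\]
with $c_0=a_j^{-1/2}$ and $c_{m}=(2/a_j)^{1/2}$ for $m\geq 1$. The corresponding eigenvalues are $\pi^2\sum_j (n_j/a_j)^2$.

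Completeness follows because the one-dimensional cosine systems are complete in $L^2(-a_j/2,a_j/2)$, and tensor products of orthonormal bases form an orthonormal basis of the product space. The operator is the self-adjoint operator associated with the closed form $\int|\nabla u|^2$ on $H^1(S_a)$, and it is diagonalized by these functions.

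By the spectral theorem, $\bigl((-c\Delta_{S_a,N})^\beta+1\bigr)^{-1}$ is diagonal in the basis $\psi_n$ with positive eigenvalues $\bigl(1+(c\pi^2\sum_j(n_j/a_j)^2)^\beta\bigr)^{-1}$. Since the operator is positive, its trace equals the sum of these eigenvalues, finite or not. This gives \eqref{neumannboxestimate}.

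For \eqref{neumannboxestimate-w2}, the first inequality holds termwise: $a_j\le1$ gives $(n_j/a_j)^2\geq n_j^2$, and $t\mapsto (1+t^\beta)^{-1}$ is decreasing for $t\geq0$.

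For the second inequality, I would compare the lattice sum with an integral. This is where care is needed, and the step I expect to be the main obstacle is matching this with the paper's notation $\int_{\Lambda_1,*}$. Read per \eqref{defofdiscretefouriersum} with $L=1$, this expression is simply $\sum_{p\in\Z^d}$ of the summand. Under that reading the inequality is immediate, since $\N_0^d\subset\Z^d$ and all terms are positive.

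It then remains to show finiteness. I would group the points $p\in\Z^d$ by $|p|_\infty=k$; there are at most $C_d k^{d-1}$ such points. Each of them has summand bounded by $(c\pi^2k^2)^{-\beta}$, so the sum is dominated by $1+\sum_{k\ge1}C_d k^{d-1-2\beta}$. This series converges precisely because $2\beta>d$.

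If instead the integral is meant as a genuine integral over $\R^d$, I would argue differently. For $n\neq0$ in $\N_0^d$, the summand is bounded by the integral of $(1+(c\pi^2|p|^2)^\beta)^{-1}$ over the unit cube $n-[0,1]^d$, by radial monotonicity. Adding the single $n=0$ term then gives a bound by $1$ plus the integral. That integral is finite by polar coordinates, since $\int^\infty r^{d-1-2\beta}\,dr<\infty$ when $\beta>d/2$.
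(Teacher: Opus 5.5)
Your argument is correct, and its skeleton is the paper's. The shared steps are:
\begin{itemize}
\item an explicit Neumann eigenbasis (your shifted cosines agree up to sign with the paper's alternating $\sin(k\pi x)$/$\cos(k\pi x)$ on the symmetric interval);
\item the trace of the positive operator as the sum of its eigenvalues;
\item the first inequality termwise by monotonicity;
\item the second inequality as the trivial inclusion $\N_0^d\subset\Z^d$, once $\int_{\Lambda_1,*}$ is read as $\int_{\Lambda_1^*}=\sum_{p\in\Z^d}$ via \eqref{defofdiscretefouriersum}.
\end{itemize}

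The genuine difference is the finiteness step. The paper simply invokes Lemma \ref{lem:discest} with $L=1$, implicitly comparing the summand with $\langle p\rangle^{\tau}|p^2+1|^{-2}$ for a suitable $\tau\in[0,4-d)$. That keeps the proof short but ties it to $d\le 3$ and to a Riemann-sum estimate proved elsewhere. Your count over the shells $\{|p|_\infty=k\}$ uses at most $C_dk^{d-1}$ points per shell, each contributing at most $(c\pi^2k^2)^{-\beta}$. It is self-contained and gives finiteness for every $d$ under exactly the hypothesis $\beta>d/2$.

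You can drop your fallback for a genuine integral over $\R^d$, because that reading cannot be the intended one. The left-hand side is at least $1$ from the term $n=0$. On the other hand, $\int_{\R^d}(1+(c\pi^2p^2)^\beta)^{-1}\,dp=(\pi^2c)^{-d/2}\int_{\R^d}(1+|q|^{2\beta})^{-1}\,dq$, which is smaller than $1$ for large $c$. So the stated inequality fails under that reading.

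The fallback's cube comparison is also flawed as written. It does not follow from radial monotonicity when $n$ has a zero coordinate: for $n=(1,0)$ the cube $n-[0,1]^2$ contains $p=(1,-1)$ with $|p|>|n|$. You would need a constant-factor comparison there instead. This only concerns the discarded reading and does not affect your main line.
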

\begin{proof} 

An  eigenbasis for $-\Delta_{N,S_a}$ is given by  
\begin{align}
\Psi_{n,a}(x)  =\prod_{j=1}^d \{ (2/a_j)^{1/2} \psi_{n_j}(x_j/a_j) \}  , \quad n \in \N_0^d,
\end{align} 
where 
\begin{align}
\psi_{k} = \sin(k \pi x) ,  & \quad k=1,3,5,... \\
\psi_{k} = \cos(k \pi x) ,  & \quad k=2,4,6,... \\
\psi_{k} = \frac{1}{2} \sqrt{2} ,  & \quad k= 0  ,
\end{align} 
cf. \cite[Page 266]{SR4}.
The eigenvalues are readily calculated giving
\begin{align}
-\Delta_{N,(-a,a)^d} \Psi_{n,a}  = E_n(a) \Psi_{n,a} 
\end{align} 
\begin{align}
E_n(a) =  \pi^2 \sum_{j=1}^d (n_j/a_j)^2 .
\end{align} 
Thus using this basis to calculate the trace we find 
by first  observing  that  translation is a unitary transformation, we find    for $d \leq 3$ 
\begin{align}
&   {\rm tr}_{L^2(\Lambda_{2a}) } \frac{  1  }{  (-  c  \Delta_{S_a , N})^\beta + 1 } \\
& =   \sum_{n \in \N_0^d}  \SP{ \Psi_{n,a},  \left(1+\left( c E_n(a_L)  \right)^\beta \right)^{-1}\Psi_{n,a}}  \nonumber 
\\
& =    \sum_{n \in \N_0^d}  \left(1+\left( c \pi^2
 \sum_{j=1}^d \left(\frac{ n_j}{a }\right)^2 \right)^\beta  \right)^{-1}
\label{ineqddd-0} 
\end{align} 
This shows \eqref{neumannboxestimate}. Now the first inequality of \eqref{neumannboxestimate-w2}
follows by monotonicity  and finiteness   by  Lemma \ref{lem:discest}. 
\end{proof}

\begin{lemma}\label{boundintraceexppot}  Let  $\beta \in [0, 2]$ Then there  exists a constant $C_{\rm P}(B)$  such that 
for all $S \subset \Lambda_L$ with  $S \subset \Lambda_1 + p$ for some  $p \in \Z^d$ 
\begin{align} 
{\bf E}_L \| V 1_{S} \|_\infty^\beta  \leq   C_{\rm P}(B) .
\end{align} 
\end{lemma}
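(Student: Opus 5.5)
We need to bound ${\bf E}_L \| V 1_S\|_\infty^\beta$, where $V = V_{L,\omega}=\sum_{\gamma=1}^M v_\gamma B_\#(\cdot - y_{L,\gamma})$ and $S$ lies in a unit cube $\Lambda_1+p$.

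The plan is to reduce to $\beta=2$ and then estimate the supremum pointwise, bump by bump. First, by Jensen (or H\"older) we have ${\bf E}\,X^\beta \le ({\bf E}\,X^2)^{\beta/2}\le 1+{\bf E}\,X^2$ for $X\ge 0$ and $\beta\in[0,2]$. So it suffices to treat $\beta=2$.

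For $x\in S$, the triangle inequality gives
\begin{align*}
|V(x)| \le \sum_{\gamma=1}^M |v_\gamma|\, |B_\#(x-y_{L,\gamma})| \le \sum_{\gamma=1}^M |v_\gamma|\, W(y_{L,\gamma}),
\end{align*}
where $W(y):=\sup_{x\in \Lambda_1+p} |B_\#(x-y)|$. Since $B$ is Schwartz, $B_\#$ is the periodization restricted appropriately: $B_\#(x)=B(x')$ where $x'\in\Lambda_L$ is the representative of $x$ modulo $L\Z^d$. Hence $|B_\#(w)|\le C_N\langle {\rm dist}_{L}(w,0)\rangle^{-N}$, with ${\rm dist}_L$ the distance on the torus. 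It follows that $W(y)\le C_N \langle {\rm dist}_L(y,\Lambda_1+p)\rangle^{-N}$, up to an additive constant $\sqrt d$ in the distance. Consequently $\int_{\Lambda_L} W(y)\,dy$ and $\int_{\Lambda_L}W(y)^2dy$ are bounded by constants depending only on $B$ and $d$, uniformly in $L\ge1$ and $p$. This uses $N>d$ and that the torus distance ball volume grows at most like the Euclidean one.

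Next, square and take expectations, splitting into diagonal and off-diagonal terms:
\begin{align*}
{\bf E}_L\Big(\sum_{\gamma} |v_\gamma| W(y_\gamma)\Big)^2 = {\bf E}_M\Big[ M\, m_2' \,{\bf E}_y W^2 + M(M-1)\,(m_1')^2 ({\bf E}_y W)^2\Big],
\end{align*}
where $m_k'={\bf E}_v|v|^k<\infty$ by \eqref{2.4}, and ${\bf E}_y$ denotes the normalized integral $|\Lambda_L|^{-1}\int_{\Lambda_L}dy$. By Lemma \ref{lemP}, ${\bf E}_M M=|\Lambda_L|$ and ${\bf E}_M M(M-1)=|\Lambda_L|^2$. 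The factors of $|\Lambda_L|$ therefore cancel exactly against the normalizations, leaving
\begin{align*}
m_2'\int_{\Lambda_L} W^2\,dy + (m_1')^2\Big(\int_{\Lambda_L}W\,dy\Big)^2 .
\end{align*}
By the previous step this is bounded uniformly in $L$, $p$ and $S$; call the bound $C_{\rm P}(B)$. This is a Campbell-type formula for the Poisson process.

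The main obstacle is the uniform integrability bound on $W$ in the periodic setting: one must check that the periodic extension $B_\#$ of $B|_{\Lambda_L}$ inherits Schwartz decay measured in torus distance, uniformly for $L\ge1$. This holds because $|B(x')|\le C_N\langle x'\rangle^{-N}$ and $|x'|\ge{\rm dist}_L(x',0)$. Everything else is a direct computation using independence and Lemma \ref{lemP}.
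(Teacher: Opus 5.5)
Your proof is correct, but it takes a different route from the paper's. The paper also reduces to $\beta=2$ by H\"older. It then bounds $|B_\#(x-y)|$ by a sum over the $3^d$ periodic images, $\sum_{\|n\|_\infty\le 1} C_{B,a}\langle x-y+nL\rangle^{-a}$, and splits $\Lambda_L$ into unit cells $(\Lambda_1+m)\cap\Lambda_L$. The restriction property of the Poisson process replaces the configuration by independent Poisson counts $l_m$ with mean at most $1$ in each cell. A geometric inequality then bounds the supremum over $x\in S$ by $\langle p-m-nL\rangle^{-a}$. After squaring, the paper uses ${\bf E}|v_\gamma v_\delta|\le m_2$, ${\bf E}\, l_m^2\le 2$ and independence for $m\neq m'$, and ends with a lattice sum that is uniform in $p$ and $L$ by translation invariance.

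You keep the original representation, in which the marked points are i.i.d. conditionally on $M$. You compute the second moment exactly with the factorial moments of Lemma \ref{lemP}, so the paper's lattice sums become the integrals $\int_{\Lambda_L}W$ and $\int_{\Lambda_L}W^2$. You handle periodicity in one step with the torus-distance bound $|B_\#(w)|\le C_N\langle {\rm dist}_L(w,0)\rangle^{-N}$, which is valid because the representative $w'\in\Lambda_L$ satisfies $|w'|\ge {\rm dist}_L(w,0)$.

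Your version is shorter and needs neither the restriction theorem nor the cell decomposition. It also gives the explicit constant $m_2\int W^2+({\bf E}_v|v|)^2\bigl(\int W\bigr)^2$, with ${\bf E}_v|v|\le m_2^{1/2}$. The paper's cell-by-cell bookkeeping mirrors the cube decomposition in the proof of Proposition \ref{lem:proofofexdens00-0}, but this lemma does not need it.

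One small point: $W$ is a supremum over uncountably many $x$, so its measurability is not automatic. You can avoid the question by running the Campbell computation with the explicit measurable majorant $C_N'\langle {\rm dist}_L(y,p)\rangle^{-N}$ in place of $W$.
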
 
\begin{proof} 
Using the ${\bf E}_L$ is the expectation of a probability measure, we find by H\"older inequality
 ${\bf E}_L \| V 1_{S} \|_\infty^\beta   \leq ( {\bf E}_L \| V 1_{S} \|_\infty^2)^{\frac{\beta}{2}}$. Thus  
it suffices to consider without loss $\beta =2$. 
Let $p \in \Z^d$. Let   $S \subset \Lambda_1 + p$. 
Since $B$ is a Schwartz function there exists for any $a > 0$ a constant $C_{B,a}$ 
such that 
$$
|B(x) | \leq  C_{B,a} \langle x \rangle^{-a} 
$$
If $x , y \in \Lambda_L$, then there exists an $n \in \Z^d$ with $\| n \|_\infty \leq 1$ such that 
$$
x - y + n L \in \Lambda_L 
$$
Thus it follows that for $I = \{ n \in \Z^d : \| n \|_\infty \leq  1 \}$  
$$
|B_\#(x-y) | \leq \sum_{n \in I  }  C_{B,a} \langle x - y + n L  \rangle^{-a} .
$$

To take the expectation we use the restriction property of the Poisson process and introduce 
the following Processes. 
Let $(\hat{y}_{m,\gamma})_{\gamma=1,...,\hat{l}_m}$  denote Poisson points distributed in $(\Lambda_1+m) \cap \Lambda_L$ with uniform density. Let $v_{i,j}$ with $(i,j) \in \N^2$ denote 
i.i.d. random variables with common distribution  ${\bf P}_v$.   We denote the expectation
 with respect to these random variables by ${\bf E}^{(1)}$. 
By the restriction property of  the Poisson process  \cite{Kingman.1993} we find 
for a subset $S \subset \Lambda_L$ 
\begin{align} {\bf E}_L    \| V 1_{S} \|_\infty^2  
 & \leq {\bf E}_L  \left(  \sup_{x \in S } | \int_{\Lambda_L} B_{\#}(x-y) d\mu_{L,\omega}(y)  |
   \right)^2 
  \label{probestsupbox-1}  \\
    & \leq {\bf E}_L  \sup_{x \in  S}  \left(  \sum_{m \in \Z^d} | \int_{(\Lambda_1+m) \cap \Lambda_L} B_{\#}(x-y) d\mu_{L,\omega}(y)  |
   \right)^2   \nn \\
       & =  {\bf E}^{(1)}  \sup_{x \in  S}  \left(  \sum_{m \in \Z^d} | \sum_{\gamma_m=1}^{\hat{l}_m} v_{m,\gamma_m}   B_{\#}(x-\hat{y}_{m,\gamma_m})   |
   \right)^2   \nn \\
             & \leq   {\bf E}^{(1)}  \sup_{x \in  S}  \left(  \sum_{m \in \Z^d}  \sum_{\gamma_m=1}^{\hat{l}_m} |v_{m,\gamma_m}
             | \sum_{n \in I}   C_{B,a} \langle x   - \hat{y}_{m,\gamma_m} - n L ) \rangle^{-a}    
   \right)^2  \nn   
\end{align} 
Now we use that there 
exists a positive  constant $c$  depending on  the dimension  such that 
\begin{align} \label{geomesttriv-34}
\inf _{x \in S   , y \in  \Lambda_1 + m }  \langle x - y \rangle  \geq \inf _{x \in \Lambda_1   , y \in  \Lambda_1  }  \langle x +y + p  - m - n L  \rangle   \geq c   \langle  p - m - n L  \rangle  
\end{align} 
for all $p,m \in \Z^d$ and $L \geq 1$. 
Inserting  \eqref{geomesttriv-34} into  \eqref{probestsupbox-1} 
we find 
\begin{align} {\bf E}_L    \| V 1_{S} \|_\infty^2  
                & \leq   {\bf E}^{(1)}  \left(  \sum_{m \in \Z^d}  \sum_{n \in I}  \sum_{\gamma_m=1}^{l_m} |v_{m,\gamma_m} | c^{-a} C_{B,a} \langle m - p + n L  \rangle^{-a}    
   \right)^2  \nn
\end{align} 

Now we calculate  the square and find
\begin{align} & {\bf E}_L    \| V 1_{S} \|_\infty^2  \\
    & \leq  c^{-2a}C_{B,a}^2    {\bf E}_L    \sum_{m, m' \in \Z^d}   \sum_{n , n' \in I }   \sum_{\gamma_m=1}^{l_m} 
     \sum_{\gamma_{m'}'=1}^{l_{m'}} |v_{m,\gamma_m} | |v_{m',\gamma_{m'}'} | \langle m - p + n L  \rangle^{-a}  \langle m' - p + n' L  \rangle^{-a}  \nn \\
         & \leq c^{-2a} C_{B,a}^2    {\bf E}_L    \sum_{m, m' \in \Z^d} \sum_{n , n' \in I }   l_m l_{m'} m_2 \langle m - p + n L  \rangle^{-a}  \langle m'  - p +n' L \rangle^{-a}  , \nn 
\end{align} 
 where we used that $\E_v |v_{\gamma} v_{\delta}| \leq  \frac{1}{2} \E_v ( |v_\gamma|^2 + |v_\delta|^2) = m_2$ 
 for any $\gamma,\delta \in \{1,... , l \}$.
Observing that $l_m$ is  Poisson random variable with mean less or equal to   $|\Lambda_1|=1$ we  find ${\bf E} l_m \leq  1$ and ${\bf E} l_m^2 \leq  2$.
Thus using independence of $l_m$ and $l_{m'}$ if $m \neq m'$ we arrive at 
\begin{align} & {\bf E}_L    \| V 1_{S} \|_\infty^2  \label{estonexp}\\
                  & \leq  c^{-2a}C_{B,a}^2      \sum_{m, m' \in \Z^d} \sum_{n , n' \in I }   (1+ \delta_{m,m'})  m_2 \langle m - p + n L  \rangle^{-a}  \langle m' -p + n' L  \rangle^{-a}   =: C_{\rm P}(B,p,L)  , \nn 
\end{align} 
which is finite provided $a $ is sufficiently large. 
Now by translation invariance we see that 
$$
C_{\rm P}(B)   := \sup_{L\geq 1, p \in \Z^d} C_{\rm P}(B,p,L)  < \infty 
$$
This shows the lemma. 
\end{proof}

\section{Estimates on Integrals} 


In this Appendix, we present a collection of estimates on integrals as well as on discrete integrals, i.e.,  infinite sums.
Let us first mention a basic inequality.  For any $z \in \C \setminus \R$ we have the elementary bound 
\begin{equation}  \label{ln-1}
   | z^{-1}|   \leq   \left|   {\rm Im} z \right|^{-1}  .
\end{equation} 
Moreover, for $\lambda \geq 0$ and $z \in \C \setminus [0,\infty)$ the following bound is straightforward to see  
\begin{equation}  \label{ln-2}
   | (z - \lambda)^{-1}|   \leq  {\rm dist}(z,\R_\geq )^{-1}  .
\end{equation} 
The following lemma is a simple estimate,  which is applied in the proof of Lemma \ref{lemestfourdisc}.
\begin{lemma}
\label{ChSymbEst}
For  $a,b \in \R$ with $a \leq b$ and  $ \langle \cdot \rangle^2 f \in L^\infty$ with $\langle \cdot \rangle$ defined in \eqref{Klammern}.  Then  $f \in L^1(\R)$ and
 \begin{align*}
\int_{a}^b  \left|   f(x) \right|   dx \leq \pi \sup_{x \in \R} | \langle x \rangle^2 f(x) | .
\end{align*}
\end{lemma}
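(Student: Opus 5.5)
Since $f$ is measurable with $\sup_{x}|\langle x\rangle^2 f(x)|<\infty$, I will bound $|f|$ pointwise by an explicit integrable majorant. For every $x\in\R$ we have
\[
|f(x)| = \frac{|\langle x\rangle^2 f(x)|}{\langle x\rangle^2} \le \frac{\sup_{y\in\R}|\langle y\rangle^2 f(y)|}{1+x^2},
\]
where I use $\langle x\rangle^2 = 1+x^2$, which is the definition in \eqref{Klammern}. The right-hand side is a constant multiple of the Lorentzian $(1+x^2)^{-1}$.

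Next I integrate this pointwise inequality over $[a,b]$. Because the integrand is nonnegative, I can enlarge the domain to all of $\R$ and use $\int_\R (1+x^2)^{-1}\,dx = [\arctan x]_{-\infty}^{\infty} = \pi$, the same identity already used in the proof of Lemma \ref{trivialidentcauchy}. This gives
\[
\int_a^b |f(x)|\,dx \le \int_\R |f(x)|\,dx \le \sup_{y\in\R}|\langle y\rangle^2 f(y)| \int_\R \frac{dx}{1+x^2} = \pi\,\sup_{y\in\R}|\langle y\rangle^2 f(y)|,
\]
which is the claimed bound.

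The middle inequality, which bounds the integral over all of $\R$, already proves $f\in L^1(\R)$, since the majorant is integrable and $f$ is measurable. There is no genuine obstacle here. The only point needing care is that the hypothesis "$\langle\cdot\rangle^2 f\in L^\infty$" should be read with the essential supremum. In that case the pointwise bound holds almost everywhere, which is enough for the integral estimates. I also read $\sup_{x}$ in the statement as this essential supremum, or as the true supremum when $f$ is continuous, as it is in the applications.
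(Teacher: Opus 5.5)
Your proposal is correct and matches the paper's intended argument, which it only sketches as a ``straightforward estimate'' with a reference: bound $|f(x)|$ by $\sup_y|\langle y\rangle^2 f(y)|\,(1+x^2)^{-1}$, extend the integral to all of $\R$, and use $\int_\R (1+x^2)^{-1}\,dx=\pi$. Your remark about the essential supremum is harmless, since the stated bound with the true supremum then follows a fortiori.
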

The lemma follows from a straightforward estimate, for an explicit calculation  
we refer the reader to \cite{HaslerKoberstein.2024-2}[Lemma A.2].
Moreover we shall need to estimate tracial expressions of resolvents. 
Let us now  estimate a  finite volume expression. For this we shall make use of the following elementary lemma about Riemann integrals.

\begin{lemma}\label{lem:discrriem}  Let $I  = [-c,c]^d$ and $I_j$, $j=1,...,N$, be   a partition (up to boundaries) of $I$  into translates of a square  $Q$ which is centered at the 
origin. Let $\xi_j  \in  I_j$
and $ \Delta x_j = | I_j|$. Suppose $g \geq 0$ is a Riemann integrable function on $I$  and we have 
\begin{equation} \label{eq:estonint} 
\sup_{ \xi \in I_j} f(\xi) \leq \inf_{ \xi \in I_j} g(\xi)  . 
\end{equation} 
Then 
\begin{align*} 
  \left|  \sum_{i=1}^N f(\xi_j) \Delta x_j  \right| \leq  \int_I g(x) dx  .
\end{align*} 
\end{lemma}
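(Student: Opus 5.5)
The argument is a cell-by-cell comparison of the Riemann sum with the integral of $g$. I will read hypothesis \eqref{eq:estonint} as a bound on $|f|$, that is $\sup_{\xi\in I_j}|f(\xi)|\le \inf_{\xi\in I_j} g(\xi)$. This reading is automatic when $f\ge 0$, which is how the lemma is applied to positive resolvent-type expressions. It is needed: without it only the upper bound $\sum_j f(\xi_j)\Delta x_j\le\int_I g$ holds, not the bound on the absolute value.

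\textbf{Step 1 (estimate on a single cell).} Fix $j$. Since $\xi_j\in I_j$, for every $x\in I_j$ we have
\[
|f(\xi_j)|\le \sup_{\xi\in I_j}|f(\xi)|\le \inf_{\xi\in I_j} g(\xi)\le g(x).
\]
So the constant function $|f(\xi_j)|$ is dominated by $g$ pointwise on $I_j$. The restriction of $g$ to $I_j$ is Riemann integrable, because $g$ is Riemann integrable on $I$ and $I_j$ is a closed subcube. Integrating over $I_j$ gives
\[
|f(\xi_j)|\,\Delta x_j=\int_{I_j}|f(\xi_j)|\,dx\le \int_{I_j} g(x)\,dx .
\]

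\textbf{Step 2 (summation).} By the triangle inequality and Step 1,
\[
\Big|\sum_{j=1}^N f(\xi_j)\Delta x_j\Big|\le \sum_{j=1}^N |f(\xi_j)|\Delta x_j\le \sum_{j=1}^N\int_{I_j} g(x)\,dx .
\]
The cubes $I_j$ cover $I$ and overlap only on their boundary faces, which are finite unions of hyperplane pieces of Lebesgue measure zero. Hence additivity of the Riemann (equivalently Lebesgue) integral over almost disjoint rectangles gives $\sum_j\int_{I_j}g=\int_I g$, which is the claim. The nonnegativity of $g$ is used only to ensure that counting the boundary overlaps does not matter, and that $\int_I g$ is a meaningful upper bound.

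\textbf{Main obstacle.} There is no real analytic obstacle; the only delicate point is the one already noted, namely interpreting \eqref{eq:estonint} with $|f|$ so that the absolute value on the left is controlled. In the later applications (e.g.\ Lemma~\ref{lem:discest}, with $f=g$ shifted by at most one cell width), I would check that the cells are chosen so that the supremum of the summand over each cell is dominated by the infimum of the majorant. This is the actual content needed when the lemma is used.
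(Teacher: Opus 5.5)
Your proof is correct and is the same elementary Riemann-integration argument the paper invokes in one line: each term $|f(\xi_j)|\Delta x_j$ is bounded by the lower Darboux contribution $\inf_{I_j} g\cdot |I_j|\le \int_{I_j} g$, and you then sum over the cells. You are also right that the hypothesis must be read as $\sup_{I_j}|f|\le \inf_{I_j} g$ (equivalently $f\ge 0$, as in all the paper's applications), since as literally stated a sufficiently negative $f$ would violate the absolute-value conclusion.
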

\begin{proof}  The statement follows directly from the theory of Riemann integration. 
\end{proof}

We now want to show Proposition \ref{C007} which is used in Section  
\ref{sec:proooferror} to estimate integrals over resolvent-type-functions. To do so, we are first discussing the relation between the discrete sum and the Riemann integral in Lemma \ref{ln0}, to then give a bound on the Riemann integral in Lemma \ref{ln00},  with which we can then show Proposition \ref{C007}.

\begin{lemma} Let $f : \Lambda_L^* \to \C$ and $E \geq  0$. Then 
\label{ln0}
$$
\int_{\Lambda_L^*}   \left|  \frac{  f(q) }{  q^2 - E -  i \eta } \right|    dq 
  \leq C_0(E,d, L , \eta,f)  ,
$$
where 
\begin{align} \label{C0}
C_0(E,d, L, \eta,f) := \| f \|_{*,\infty} \int\limits_{ \| q \|_\infty \leq  \sqrt{ 2  E  + 1  }  }   |q^2 - E - i 2^{-1/2}  \eta  |^{-1}   dq  + (E +1 )^{-1} \| f \|_{*,1} .
\end{align}
where we introduced the notation $\| q \|_\infty = \max_{j=1}^d |q_j|$. 
\end{lemma}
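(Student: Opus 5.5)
We split $\Lambda_L^*$ into the bounded region $\{ q \in \Lambda_L^* : \|q\|_\infty \leq \sqrt{2E+1}\}$ and its complement, and estimate the two parts of the sum
$$
\int_{\Lambda_L^*} \left| \frac{f(q)}{q^2 - E - i\eta} \right| dq
$$
separately. The first part will produce the Riemann-integral term in \eqref{C0}, which carries $\|f\|_{*,\infty}$. The second part will produce the term $(E+1)^{-1}\|f\|_{*,1}$.

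\emph{Outer region.} If $\|q\|_\infty > \sqrt{2E+1}$, then $q^2 \geq \|q\|_\infty^2 > 2E+1$. Hence
$$
|q^2 - E - i\eta| \geq q^2 - E \geq E+1.
$$
Bounding the resolvent factor by $(E+1)^{-1}$ and summing $|f|$ over all of $\Lambda_L^*$ gives at most $(E+1)^{-1}\|f\|_{*,1}$.

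\emph{Inner region.} Pull out $\|f\|_{*,\infty}$. What remains is the Riemann sum
$$
|\Lambda_L|^{-1}\sum_{q \in \Lambda_L^*,\ \|q\|_\infty \leq \sqrt{2E+1}} |q^2 - E - i\eta|^{-1}.
$$
Each lattice point $q$ is the centre of a cell $q + [-\tfrac{1}{2L},\tfrac{1}{2L})^d$ of volume $|\Lambda_L|^{-1}$. We compare this sum with the integral in \eqref{C0} using Lemma \ref{lem:discrriem}. The function $g(x) = |x^2 - E - i2^{-1/2}\eta|^{-1}$ should dominate the summand on each cell, which is why the constant $2^{-1/2}$ appears in front of $\eta$.

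The key pointwise estimate to establish is
$$
|x^2 - E - i2^{-1/2}\eta| \leq |q^2 - E - i\eta| \quad \text{for } x \text{ in the cell of } q,
$$
or at least that it holds up to a harmless constant. The worry is that the real part $x^2 - E$ can vanish inside a cell even when $q^2 - E$ does not. So the plan is to avoid a crude pointwise comparison in this case.

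\emph{Main obstacle.} Controlling this discretisation error is the main obstacle. We would handle it in one of two ways.

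The first option is to exploit the fact that the integrand depends only on $|x|$. A radial monotone rearrangement argument then bounds the sum by the integral over a slightly enlarged region. This enlargement is why the domain of integration is taken as $\sqrt{2E+1}$ rather than $\sqrt{E}$.

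The second option is to interpret the claimed inequality through the $\sup/\inf$ condition \eqref{eq:estonint}, with the integral defining the constant absorbing the loss of the factor $2^{-1/2}$ in the imaginary part. For cells where $|q^2 - E|$ is small compared with $\eta$, we would use $\eta \geq 2^{-1/2}\eta$ directly. For the remaining cells, we would use that $x^2 - E$ has constant sign on the cell and is comparable to $q^2 - E$, since the cell diameter is $O(L^{-1})$.

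If neither route works cleanly, we would accept a dimensional constant and note that it is absorbed into $C_0$. This should suffice for the later applications in Proposition \ref{ln000}.
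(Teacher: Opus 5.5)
Your outer-region estimate is correct, and your overall plan matches the paper's sketch: split at $\|q\|_\infty\le\sqrt{2E+1}$, bound the resolvent by $(E+1)^{-1}$ outside, and compare the inner Riemann sum with the integral via Lemma~\ref{lem:discrriem}, i.e.\ by checking \eqref{eq:estonint} with $g(x)=|x^2-E-i2^{-1/2}\eta|^{-1}$. The paper calls that check ``a straightforward estimate''. The gap is exactly the step you flag, and it cannot be closed: with no relation between $\eta$ and $L$, the inequality is false. Take $d=1$ and $L=1$, so that $\int_{\Lambda_1^*}h(q)\,dq=\sum_{q\in\Z}h(q)$, together with $E=1$ and $f=1_{\{1\}}$. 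The left-hand side equals $1/\eta$. The right-hand side is $\int_{-\sqrt3}^{\sqrt3}|q^2-1-i2^{-1/2}\eta|^{-1}dq+\frac12=2\ln(1/\eta)+O(1)$ as $\eta\downarrow0$; already at $\eta=10^{-2}$ it is below $25$, against $100$ on the left. In general, any lattice point on the sphere $q^2=E$ contributes $L^{-d}\eta^{-1}|f(q)|$ on its own, while the integral grows only logarithmically in $1/\eta$. So the ratio of the two sides is unbounded, and no ``dimensional constant'' can absorb it. The same example refutes Proposition~\ref{ln000} as stated, and the paper's sketch breaks at the same spot.

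This is also why both of your repair routes fail. In your second route, the problematic cells are those with $|q^2-E|\lesssim\sqrt d\,|q|/L$ while $\eta\ll1/L$. On such a cell, $x^2-E$ sweeps an interval of length of order $|q|/L\gg\eta$, so neither ``$|q^2-E|$ small compared with $\eta$'' nor ``$x^2-E$ comparable to $q^2-E$'' applies. In your first route, $|x^2-E-i\eta|^{-1}$ is not radially monotone (it peaks on $|x|=\sqrt E$), and lattice points are not equidistributed near that sphere at scales below $1/L$, so a rearrangement argument has nothing to work with.

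What \emph{is} true is the estimate under a hypothesis such as $\eta\ge\frac{2d}{L}\bigl(\sqrt{2E+1}+\frac1{4L}\bigr)$. Then $|x^2-q^2|\le\eta/2$ on the cell $q+[-\frac1{2L},\frac1{2L})^d$, hence $(x^2-E)^2+\frac{\eta^2}{2}\le2(q^2-E)^2+\eta^2$. Equivalently, $|q^2-E-i\eta|^{-1}\le\sqrt2\,|x^2-E-i2^{-1/2}\eta|^{-1}$ on that cell. This gives the bound with an extra factor $\sqrt2$ and the integral taken over the cube enlarged by $\frac1{2L}$; presumably this is what the $2^{-1/2}$ in \eqref{C0} is meant to encode. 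Without such a condition, the inner sum is in general only $O(\|f\|_{*,\infty}\eta^{-1})$ uniformly in $L\ge1$, not $O(\|f\|_{*,\infty}\ln(\eta^{-1}+1))$.
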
 
 To prove Lemma \ref{ln0}  one can use  Lemma \ref{lem:discrriem} and  verify  \eqref{eq:estonint}
by a straightforward estimate. For details see \cite{HaslerKoberstein.2025-2}. 

\begin{lemma} 
\label{ln00}
For $f \in L^1 (\R^d) \cap L^\infty (\R^d)$,  $E >0 $,  $\eta >0$ and $d \in \N$ we have
\begin{align*} 
 \int_{\R^d}     \left|   \frac{ f(q) }{ q^2 - E \pm  i \eta } \right|  dq 
      \leq  C_1(E,d)  \| f \|_\infty    \ln \left(  \frac{1}{ \eta }  +1    \right) + \sqrt{2} \| f \|_1,
\end{align*}
where 
\begin{align} \label{C1}
C_1(E,d):=  \sqrt{2}  \left(  E^{-1/2} ( E+1)^{\frac{d-1}{2}}  + E^{\frac{d-2}{2}} \right)  |S_{d-1}|, 
\end{align}
 with  $|S_{d-1}|$  being the volume of the $d-1$-dimensional sphere with radius $1$.

\end{lemma}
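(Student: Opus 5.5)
Since both the integrand and the claimed bound are invariant under $\eta \mapsto -\eta$ (because $|q^2-E+i\eta| = |q^2-E-i\eta|$), we fix the sign and take $+$. The plan splits $\R^d$ into a region near the sphere $\{q^2 = E\}$ and its complement:
\[
\mathcal{N} := \{ q \in \R^d : |q^2 - E| \le E\},\qquad \mathcal{N}^c .
\]

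On $\mathcal{N}^c$ we have $|q^2 - E \pm i\eta| \ge |q^2-E| > E$. For large $|q|$ this gives a denominator of at least a constant, and one could bound the $\mathcal{N}^c$ part by $E^{-1}\|f\|_1$. The statement, however, has the $E$-independent constant $\sqrt2$ in front of $\|f\|_1$. So we split instead into $\{|q^2-E| \le 1/\sqrt2\}$ and its complement. On the complement $|q^2-E\pm i\eta|\ge |q^2-E| > 1/\sqrt2$, so that part contributes at most $\sqrt2\,\|f\|_1$.

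On the shell $\mathcal{S} := \{|q^2-E|\le 1/\sqrt2\}$ we bound $|f|\le \|f\|_\infty$ and pass to polar coordinates $q = r\omega$. This reduces the shell contribution to
\[
\|f\|_\infty |S_{d-1}| \int_{\{r\ge 0 : |r^2-E|\le 1/\sqrt2\}} \frac{r^{d-1}\,dr}{\sqrt{(r^2-E)^2+\eta^2}} .
\]
We substitute $t = r^2 - E$, so $dr = dt/(2r)$ and $r^{d-1}dr = \tfrac12 (t+E)^{(d-2)/2}\,dt$, with $t \in [\max(-E,-1/\sqrt2),\,1/\sqrt2]$.

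The Jacobian factor $(t+E)^{(d-2)/2}$ is handled by dimension. For $d\ge2$ it is bounded by $(E+1)^{(d-2)/2}$, which is dominated by the $C_1(E,d)$ terms. For $d=1$ it equals $(t+E)^{-1/2}$, which is singular at $t=-E$. There we split the $t$-range at $t = -E/2$: near $t=-E$ we have $|t|\ge E/2$, so $1/\sqrt{t^2+\eta^2}\le 2/E$ and $\int (t+E)^{-1/2}\,dt$ is finite, producing terms of the form $E^{-1/2}$. Away from $t=-E$ the factor is at most $\sqrt2\,E^{-1/2}$. This is where the $E^{-1/2}(E+1)^{(d-1)/2}$ term in $C_1$ comes from. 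The remaining one-dimensional integral is
\[
\int_{-1}^{1}\frac{dt}{\sqrt{t^2+\eta^2}} = 2\,\mathrm{arsinh}(1/\eta) \le 2\ln\!\big(2/\eta+1\big) \le 2\ln 2 + 2\ln(1/\eta+1),
\]
which yields the $\ln(\eta^{-1}+1)$ dependence.

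The main obstacle is bookkeeping rather than ideas: absorbing the additive constant $\ln 2$ into the stated form. For $\eta\le1$ we have $\ln(1/\eta+1)\ge\ln2$, so the constant is absorbed by doubling the coefficient of the logarithm; the factor $\sqrt2$ in $C_1$ should cover this. For large $\eta$ the absorption fails, so there we bound the shell contribution directly via $|q^2-E\pm i\eta|\ge\eta$, which gives $\eta^{-1}|\mathcal{S}|\|f\|_\infty$. This is again controlled by $\|f\|_\infty\ln(1/\eta+1)$ up to $C_1$-type constants, using $\ln(1+x)\ge x/2$ for $x\le1$. Matching the exact constants in $C_1(E,d)$ may require adjusting the split points, but the structure of the bound is as above.
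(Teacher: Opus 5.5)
You have the right decomposition. The complement of the shell $\{|q^2-E|\le 1/\sqrt2\}$ contributes at most $\sqrt2\|f\|_1$, the shell reduces via polar coordinates and $t=r^2-E$ to a one-dimensional integral that produces the logarithm, and your large-$\eta$ fallback $\eta^{-1}|\mathcal S|\,\|f\|_\infty$ is fine. This proves a bound $C(E,d)\|f\|_\infty\ln(\eta^{-1}+1)+\sqrt2\|f\|_1$ with $C$ continuous in $E$, which is all the paper uses downstream; the paper itself defers the computation to its companion work.

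However, the lemma asserts the specific constant $C_1(E,d)$, and your argument does not reach it. The weak step is ``the factor $\sqrt2$ in $C_1$ should cover this.'' Your estimates create $\eta$-independent additive terms that must be absorbed into $C_1\ln(\eta^{-1}+1)$:
\begin{itemize}
\item the $\ln 2$ from $\mathrm{arsinh}(1/\eta)\le\ln 2+\ln(\eta^{-1}+1)$;
\item in $d=1$, the piece near $t=-E$, which you bound by the constant $\sqrt2E^{-1/2}$.
\end{itemize}
For $d=2,3$ this works, since the bracket in $C_1$ is at least $2(E+1)^{(d-2)/2}$. For $d=1$ it fails. Take $E\le\sqrt2$ and $\eta=1$. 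Your two pieces add up to $\sqrt2E^{-1/2}(1+2\ln2)\,|S_0|\,\|f\|_\infty\approx 2.39\,\sqrt2E^{-1/2}|S_0|\,\|f\|_\infty$. But $C_1(E,1)\ln2=2\ln 2\,\sqrt2E^{-1/2}|S_0|\approx 1.39\,\sqrt2E^{-1/2}|S_0|$. The doubled piece-B term alone already uses up all of $C_1(E,1)$, leaving nothing for piece A. Similarly, for large $d$ the doubled coefficient $2(E+1)^{(d-2)/2}$ exceeds $\sqrt2\bigl(E^{-1/2}(E+1)^{(d-1)/2}+E^{(d-2)/2}\bigr)$, for example at $d=12$, $E=2$. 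Moving the switch point in $\eta$ can rescue $d=1$, but only with tight numerics that you have not done.

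The clean route avoids additive constants altogether; the $\sqrt2$ factors in $C_1$ point to it.
\begin{itemize}
\item Use $|q^2-E\pm i\eta|\ge 2^{-1/2}\bigl(|q^2-E|+\eta\bigr)$.
\item On $\{|q^2-E|\ge 1\}$ the integrand is at most $|f|$, so this region gives at most $\|f\|_1$.
\item On the shell $\{|q^2-E|<1\}$, split at $r=\sqrt E$ instead of bounding the Jacobian globally.
\item For $r\ge\sqrt E$: $t=r^2-E\in[0,1)$ and $r^{d-1}dr=\tfrac12 r^{d-2}dt$, with $r^{d-2}=r^{d-1}/r\le E^{-1/2}(E+1)^{(d-1)/2}$.
\item For $r<\sqrt E$ and $d\ge 2$: put $t=E-r^2$ and use $r^{d-2}\le E^{(d-2)/2}$.
\item For $r<\sqrt E$ and $d=1$: use $E-r^2\ge\sqrt E(\sqrt E-r)$, and note that $\sqrt E(\sqrt E-r)<1$ on the shell.
\end{itemize}
Each resulting one-dimensional integral is at most $\int_0^1\frac{dt}{t+\eta}=\ln(1+\eta^{-1})$, times $E^{-1/2}$ in the last case. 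Summing the pieces gives the shell bound $C_1(E,d)\|f\|_\infty\ln(\eta^{-1}+1)$, with room to spare: $\tfrac12 C_1$ for $d\ge2$ and $\tfrac34 C_1$ for $d=1$. This holds for every $\eta>0$ at once, with no case split and no separate large-$\eta$ argument.
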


The lemma follows from a straightforward estimate, for an explicit calculation  
we refer the reader to \cite{HaslerKoberstein.2025-2}, where we note that the absolute value can be taken into the integral by absorbing the phase into the function $f$.  
The following proposition will be needed in Section \ref{sec:ausintegrieren}.

\begin{proposition}  
\label{ln000}
Let $E > 0$ and $\eta > 0$. Then for all $L \geq 1$ and $f : \Lambda_L^* \to \C$ we have  

$$
  \int_{\Lambda_L^*}   \left|  f(q)  \left( \frac{1}{2} q^2 - E \pm  i \eta \right)^{-1} \right|   dq   
  \leq C(E,d, L , \eta,f) 
$$
where 
\begin{align} \label{C007}
C(E,d, L , \eta,f) :=  2 \| f \|_{*,\infty}  \left[  C_{1}(2 E,d)  \ln \left(  \eta^{-1}  +1    \right)+  2^d \sqrt{2} (4 E + 1)^{d/2} \right] + 2  \| f \|_{*,1} 
\end{align}
with $C_1(E,d)$ defined in  \eqref{C1}. 
\end{proposition}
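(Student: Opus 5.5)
Write $\nu(q)=\frac12 q^2$, so $\frac12 q^2 - E \pm i\eta = \frac12\,(q^2 - 2E \pm 2i\eta)$. This gives
\[
\int_{\Lambda_L^*} \left| f(q)\left(\tfrac12 q^2 - E \pm i\eta\right)^{-1}\right| dq = 2\int_{\Lambda_L^*} \left|\frac{f(q)}{q^2 - 2E \pm 2i\eta}\right| dq .
\]
This is exactly the quantity bounded in Lemma \ref{ln0}, applied with $E$ replaced by $2E$ and $\eta$ replaced by $2\eta$. Since $|q^2-2E\mp 2i\eta|=|q^2-2E\pm 2i\eta|$ by conjugation, the sign does not matter. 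Lemma \ref{ln0} therefore yields
\[
2\|f\|_{*,\infty}\int_{\|q\|_\infty\le \sqrt{4E+1}} |q^2-2E-i\sqrt2\,\eta|^{-1}\,dq \;+\; 2(2E+1)^{-1}\|f\|_{*,1}.
\]
Since $(2E+1)^{-1}\le 1$, the second term is at most $2\|f\|_{*,1}$, which matches \eqref{C007}.

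It remains to bound the continuum integral over the cube $Q:=\{\|q\|_\infty\le\sqrt{4E+1}\}$. Apply Lemma \ref{ln00} with $f=1_Q$, with $2E$ in place of $E$, and with $\eta'=\sqrt2\,\eta$. Then $\|1_Q\|_\infty=1$ and $\|1_Q\|_1=(2\sqrt{4E+1})^d=2^d(4E+1)^{d/2}$, so
\[
\int_Q |q^2-2E-i\sqrt2\eta|^{-1}\,dq\le C_1(2E,d)\ln\bigl((\sqrt2\eta)^{-1}+1\bigr)+\sqrt2\,2^d(4E+1)^{d/2}.
\]
Because $\sqrt2\,\eta\ge\eta$, the logarithm is at most $\ln(\eta^{-1}+1)$. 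Multiplying by $2\|f\|_{*,\infty}$ and adding the $\|f\|_{*,1}$ term gives precisely $C(E,d,L,\eta,f)$.

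There is no real obstacle here. The only care needed is bookkeeping of the rescaling constants, namely the factor $2$ and the substitutions $E\to 2E$ and $\eta\to 2\eta$ inside Lemma \ref{ln0}, and checking that $1_Q\in L^1\cap L^\infty$ so that Lemma \ref{ln00} applies. If Lemma \ref{ln0} is instead read with $\eta$ itself in place of $2\eta$, the same argument goes through, since monotonicity of $\ln(\eta^{-1}+1)$ in $\eta$ handles any such constant.
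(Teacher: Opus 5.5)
Your reduction is exactly the route the paper indicates: it only says that the proposition follows by combining Lemma~\ref{ln0} and Lemma~\ref{ln00}. Your bookkeeping is correct. The factor $2$ comes from $\tfrac12 q^2-E\pm i\eta=\tfrac12(q^2-2E\pm 2i\eta)$. Lemma~\ref{ln0} with $(2E,2\eta)$ gives the cube $\|q\|_\infty\le\sqrt{4E+1}$ and the shift $\sqrt2\,\eta$. Lemma~\ref{ln00} with $f=1_Q$ and $\|1_Q\|_1=2^d(4E+1)^{d/2}$, together with monotonicity of $\ln(\eta^{-1}+1)$ and $(2E+1)^{-1}\le1$, then reproduces \eqref{C007} term by term.

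The genuine gap is in the black box you and the paper both rely on. Lemma~\ref{ln0}, and therefore the proposition itself, cannot hold for all $\eta>0$ uniformly in $L\ge1$, because nothing ties $\eta$ to the mesh $1/L$. To see this, take any $L\ge1$ and any $q_0\in\Lambda_L^*\setminus\{0\}$, and set $E=\tfrac12 q_0^2$ and $f=1_{\{q_0\}}$. Then $\|f\|_{*,\infty}=1$, $\|f\|_{*,1}=|\Lambda_L|^{-1}$, and
\begin{align*}
\int_{\Lambda_L^*}\Bigl|f(q)\bigl(\tfrac12 q^2-E\pm i\eta\bigr)^{-1}\Bigr|\,dq=\frac{1}{|\Lambda_L|\,\eta},
\end{align*}
whereas $C(E,d,L,\eta,f)$ grows only like $\ln(\eta^{-1})$ as $\eta\downarrow0$. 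For instance, with $d=1$, $L=1$, $q_0=1$, you get $\eta^{-1}$ on the left against $2\bigl[C_1(1,1)\ln(\eta^{-1}+1)+2\sqrt6\bigr]+2$ on the right. So the statement as written is false.

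The failure sits in the Riemann-sum comparison \eqref{eq:estonint} underlying Lemma~\ref{ln0}. Consider a cell of side $1/L$ whose lattice point $\xi$ lies on the shell $\xi^2=E$. The value $\eta^{-1}$ there is dominated by $|x^2-E-i\eta/\sqrt2|^{-1}$ throughout the cell only if the oscillation of $x^2$ over the cell, which is of order $\sqrt{E}/L$, is at most $\eta/\sqrt2$. A correct version therefore needs an extra hypothesis such as $\eta\ge C(E,d)/L$, with modified constants. Under it your argument goes through verbatim. Downstream, this means taking $L$ large depending on $\eta$. That is harmless when $L\to\infty$ is taken first, but it does not support the uniformity in $L\ge L_0$ for all $\eta\in(0,1]$ claimed in Theorem~\ref{thm:maintec000}.
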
 
 Proposition \ref{ln000} can be proven  by  combining Lemmas  \ref{ln0} and  \ref{ln00}.
For details, see again  \cite{HaslerKoberstein.2025-2}.
The following Lemma will be  used  for the proofs of   
 Lemmas  \ref{boundintracfreeneumann} and  \ref{finitetracesumm}.

\begin{lemma} \label{lem:discest} 
Let $d\leq 3$ and $\tau\in[0,4-d)$.
 For all compact subsets $I \subset \R$ there exists a constant $C_{I,\tau}$ such that 
    \[
        \int_{\Lambda_L^*}
        \frac{\langle a\rangle^\tau}
             {|a^2- z |^2}\,da
        \leq C_{I,\tau}\bigl(1+ ({\rm dist}(z,[0,\infty)))^{-2}\bigr)
    \]
for all $L \geq 1$ and $z \in \C \setminus [0,\infty)$ with ${\rm Re} z \in I$. 
\end{lemma}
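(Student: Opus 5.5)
We would reduce the discrete sum to a Riemann-sum comparison with an integral, using Lemma \ref{lem:discrriem}. We split $\Lambda_L^*$ into two regions: a low-momentum region $\{|a|\le R\}$ with $R$ chosen depending only on $I$ (say $R^2 \ge 2\sup_{E\in I}|E| + 2$), and its complement. On the complement we have $a^2 - {\rm Re}\, z \ge a^2/2 + 1$. Hence $|a^2-z|^{-2}\le C\langle a\rangle^{-4}$, and the summand is bounded by $C\langle a\rangle^{\tau-4}$. Since $\tau-4<-d$, this function is integrable on $\R^d$. Because $\langle\cdot\rangle^{\tau-4}$ is radially decreasing, on each cell $k+[-\frac{1}{2L},\frac{1}{2L})^d$ (cells of side $1/L\le 1$) its value at the lattice point is bounded by a constant multiple of its infimum over the cell. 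Lemma \ref{lem:discrriem} then bounds the normalized sum over the far region by $C\int_{\R^d}\langle a\rangle^{\tau-4}\,da$, uniformly in $L\ge1$ and ${\rm Re}\, z\in I$.

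On the low-momentum region the weight $\langle a\rangle^\tau$ is bounded by $\langle R\rangle^\tau$. The number of lattice points of $\Lambda_L^*$ in the ball of radius $R$ is at most $C(RL+1)^d$, and after the normalization $|\Lambda_L|^{-1}=L^{-d}$ this contributes a total weight of at most $C(R+1)^d$, uniformly for $L\ge1$. Each term satisfies $|a^2-z|^{-2}\le {\rm dist}(z,[0,\infty))^{-2}$ by \eqref{ln-2}, since $a^2\ge0$. The near region therefore contributes at most $C_{I,\tau}\,{\rm dist}(z,[0,\infty))^{-2}$. Adding the two regions gives the bound $C_{I,\tau}(1+{\rm dist}(z,[0,\infty))^{-2})$.

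The only delicate point is the far region. There one must check that the cell-wise comparison \eqref{eq:estonint} holds uniformly in $L\ge1$, which is the elementary inequality $\langle a\rangle\le \sqrt{2}\,\langle a+\xi\rangle$ for $|\xi|\le\sqrt d/2$ up to constants. One must also make sure the threshold $R$ depends only on $I$, so that the constant is uniform in ${\rm Re}\, z\in I$. Note that the near region deliberately uses the crude bound ${\rm dist}^{-2}$ rather than a logarithmic estimate; this is exactly what the stated right-hand side allows.
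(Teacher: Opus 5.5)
Your proof is correct and follows essentially the same route as the paper. You split momentum space into a bounded region, where $|a^2-z|^{-2}\le {\rm dist}(z,[0,\infty))^{-2}$, and a far region, where $|a^2-z|^{-2}\le 4\langle a\rangle^{-4}$ makes $\langle a\rangle^{\tau-4}$ summable for $\tau<4-d$, and then compare the lattice sums with integrals uniformly in $L\ge 1$. The differences are only streamlining: you avoid the paper's case distinction ${\rm Re}\,z\ge 0$ versus ${\rm Re}\,z\le 0$ by applying \eqref{ln-2} directly with a single cutoff $R$ depending only on $I$, and you spell out the $L$-uniformity (lattice-point count near the origin, cell-wise comparison far away) that the paper only sketches.
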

\begin{proof} Let $z   = E + i \eta$.
$$
$$
Suppose first that $E \geq 0$. 
We split the integral into the regions
\[
    a^2\leq 4  E +4
    \qquad\text{and}\qquad
    a^2>4 E +4.
\]
On the first region,
\[
    |a^2-E-i\eta|^2
    =(a^2-E)^2+\eta^2
    \geq \eta^2,
\]
and on  the complementary region we have
\[
 |a^2-E-i\eta| = \sqrt{(a^2 - E)^2 + \eta^2 }  \geq    |a^2-E|
    \geq a^2- E
    \geq \frac12a^2+1 . 
\]
Hence
\begin{align} 
 &    \int_{\Lambda_L^*}
    \frac{\langle a\rangle^\tau}
         {|a^2-E-i\eta|^2}\,da \nn  \\
 & = 
    \int_{\Lambda_L^*}
    \frac{\langle a\rangle^\tau
          \mathbf{1}_{\{a^2\leq4 E +4\}}}
         {|a^2-E-i\eta|^2}\,da 
+     \int_{\Lambda_L^*}
    \frac{
          \mathbf{1}_{\{a^2 > 4 E +4\}}\langle a\rangle^\tau  }
         {|a^2-E-i\eta|^2}\,da \nn  \\
& \leq   |\eta|^{-2} \int_{\Lambda_L^*} \mathbf{1}_{\{a^2\leq4 E +4\}} 
    \langle a\rangle^\tau
         \,da 
+  \int_{\Lambda_L^*}  \mathbf{1}_{\{a^2 > 4 E+4\}} 4 \langle a \rangle^{\tau - 4} 
         \,da  \label{firstmainbound67} 
\end{align} 
We define  for $E \geq 0$ 
\begin{align} \label{boundonIntegralsmall-3} 
C_{+,E,\tau,L}  :=  \int_{\Lambda_L^*} \mathbf{1}_{\{a^2\leq4 E+4\}} 
    \langle a\rangle^\tau
         \,da  
+   \int_{\Lambda_L^*}  \mathbf{1}_{\{a^2 > 4 E +4\}} 4 \langle a \rangle^{\tau - 4} 
         \,da  
\end{align} 
Thus for  ${\rm Re} z \geq 0$, we have ${\rm dist}(z,[0,\infty)) = |{\rm Im} z |$ and so   \eqref{firstmainbound67} 
implies that 
\begin{align} \label{boundonIntegralsmall-0} 
 &    \int_{\Lambda_L^*}
    \frac{\langle a\rangle^\tau}
         {|a^2-E-i\eta|^2}\,da  \leq  
C_{+,E,\tau,L}( ({\rm dist}(z,[0,\infty)))^{-2} + 1 )  .
\end{align} 
for all $L \geq 1$. 

$$
$$
Now assume that  $E \leq  0$ and $|z| \neq 0$. Then  
\[
    |a^2-E-i\eta|^2
    =(a^2+|E|)^2+\eta^2
\]
Consequently,
\begin{align}  \label{estEsmallerthanzero} 
  &   \int_{\Lambda_L^*}
    \frac{\langle a\rangle^\tau}
         {|a^2-E-i\eta|^2}\,da
   \nn  \\ &  =
    \int_{\Lambda_L^*}
    \frac{\langle a\rangle^\tau}
         {   (a^2+|E|)^2+\eta^2     }\,da \nn  \\
&  =  \int_{\Lambda_L^*} 1_{|a| \leq 1}   
  \frac{\langle a\rangle^\tau}
         {   (a^2+|E|)^2+\eta^2     } +
 \int_{\Lambda_L^*} 1_{|a|  >  1}  
  \frac{\langle a\rangle^\tau}
         {   (a^2+|E|)^2+\eta^2     } \nn \\
& \leq  \frac{1}{|E|^2+ \eta^2} \int_{\Lambda_L^*}  2^{\tau/2} 1_{|a|  \leq   1}  \,da  +
 \int_{\Lambda_L^*} 1_{|a|  >  1}   \frac{4 \langle a\rangle^\tau}{(a^2+1)^2}\,da  .
 \end{align} 
Now we use that ${\rm dist}(z,[0,\infty)) =   | z|$, if ${\rm Re} z  \leq  0$
and define
\begin{align} \label{defofC-tau} 
C_{-,\tau,L} :=   \int_{\Lambda_L^*} 2^{\tau/2} 1_{|a| \leq 1}  \,da  +
 \int_{\Lambda_L^*} 1_{|a|  >  1}   \frac{4 \langle a\rangle^\tau}{(a^2+1)^2}\,da  
\end{align} 
Thus we see from  \eqref{estEsmallerthanzero}  that  for all $z \in \C \setminus [ 0,\infty)$ with 
  ${\rm Re} z \leq  0$  we have for all $L \geq 1$ 
     \begin{align} \label{boundonIntegralsmall} 
        \int_{\Lambda_L^*}
        \frac{\langle a\rangle^\tau}
             {|a^2- z |^2}\,da
        \leq C_{-,\tau,L}\bigl(1+ ({\rm dist}(z,[0,\infty)))^{-2}\bigr)
    \end{align} 
The desired estimate now follows using  \eqref{boundonIntegralsmall-0} and \eqref{boundonIntegralsmall} 
on the respective regions. To this end, we note that 
using Lemma     \ref{lem:discrriem} it is straightforward to  show  that 
 $C_{-,\tau,L}$,  cf.  \eqref{defofC-tau},  can be bounded uniformly in $L \geq 1$
and 
 $C_{+,E,\tau,L}$,  cf. \eqref{boundonIntegralsmall-3},  can be  bounded uniformly in $L \geq 1$  for $E$ in compact subsets of $[0,\infty)$. 
 For details we refer the reader to    \cite{HaslerKoberstein.2025-2}, where analogous estimates have been explicitly carried out. 
\end{proof}

\section{Statements}

The authors have no relevant financial or non-financial interests to disclose.
No funding was received to assist with the preparation of this manuscript.
Data sharing not applicable to this article as no datasets were generated or analysed during the current study.

\printbibliography


\end{document}